\journal{Journal of \LaTeX\ Templates}
\newcounter{resume}
\newcommand{\ignore}[1]{}
\newcommand{\Q}{\mathbb{Q}}
\newcommand{\R}{\mathbb{R}}
\newcommand{\Lset}{\ensuremath{\mathcal L}\xspace}
\newcommand{\boundfun}{\ensuremath{\mathcal{B}}\xspace}
\newcommand{\B}{\boundfun}
\newcommand{\minimals}[1]{\ensuremath{\mathnormal{minimals}(#1)}}
\newcommand{\clos}[2][]{\ensuremath{#2^{\uparrow_{#1}}}\xspace}
\newcommand{\sat}[1]{\ensuremath{\llbracket #1 \rrbracket}\xspace}
\newcommand{\false} {\ensuremath{\mathbf{ff}}\xspace}
\newcommand{\true}  {\ensuremath{\mathbf{t\hspace{-.65mm}t}}\xspace}
\newcommand{\zeroFormula} {\ensuremath{\mathbf{0}}\xspace}
\newcommand{\zeroProcess} {\ensuremath{\mathtt{0}}\xspace}
\newcommand{\actionProcess}[1] {\ensuremath{#1}\xspace}
\newcommand{\aProcess} {\actionProcess{a}\xspace}
\newcommand{\exist}[1] {\ensuremath{\langle #1 \rangle}\xspace}
\newcommand{\univer}[1] {\ensuremath{[ #1 ]}\xspace}
\newcommand{\conf}[1] {\ensuremath{\llparenthesis #1 \rrparenthesis}}
\newcommand{\existuniver}[1] {\ensuremath{{\langle \! [} #1 {] \! \rangle}}}
\newcommand{\genericSemantics}[1]       {\ensuremath{#1}}
\newcommand{\genericSemanticsX}         {\genericSemantics{X}}
\newcommand{\simulation}	        {\ensuremath{\mathsf{S}}}
\newcommand{\completeSim}	        {\ensuremath{\mathsf{CS}}}
\newcommand{\readySim}		        {\ensuremath{\mathsf{RS}}}
\newcommand{\traceSim}		        {\ensuremath{\mathsf{TS}}}
\newcommand{\twoSim}		        {\ensuremath{\mathsf{2S}}}
\newcommand{\bisim}			{\ensuremath{\mathsf{BS}}}
\newcommand{\trace}                     {\ensuremath{\mathsf{T}}}
\newcommand{\completeTrace}             {\ensuremath{\mathsf{CT}}}
\newcommand{\failure}                   {\ensuremath{\mathsf{F}}}
\newcommand{\failureTrace}              {\ensuremath{\mathsf{FT}}}
\newcommand{\ready}                     {\ensuremath{\mathsf{R}}}
\newcommand{\readyTrace}                {\ensuremath{\mathsf{RT}}}
\newcommand{\impossibleFuture}          {\ensuremath{\mathsf{IF}}}
\newcommand{\impossibleFutureTrace}     {\ensuremath{\mathsf{IFT}}}
\newcommand{\possibleFuture}            {\ensuremath{\mathsf{PF}}}
\newcommand{\possibleFutureTrace}       {\ensuremath{\mathsf{PFT}}}
\newcommand{\impossibleSimulation}      {\ensuremath{\mathsf{I2}}}
\newcommand{\impossibleSimulationTrace} {\ensuremath{\mathsf{I2T}}}
\newcommand{\possibleSimulation}        {\ensuremath{\mathsf{P2}}}
\newcommand{\possibleSimulationTrace}   {\ensuremath{\mathsf{P2T}}}
\newcommand{\eqClass}[2]{\ensuremath{[#1]_{\scalebox{.5}{\bisim}}}}
\newcommand{\eqClassbisim}[1]{\eqClass{#1}{\equiv_{\bisim}}}
\newcommand{\eqClassPbisim}{\eqClassbisim{P}}
\newcommand{\eqClassQbisim}{\eqClassbisim{Q}}
\newcommand{\eqClasspbisim}{\eqClassbisim{p}}
\newcommand{\eqClassqbisim}{\eqClassbisim{q}}
\newcommand{\genericSemanticsFinite}[1]  {\ensuremath{\genericSemantics{#1}^{\mathit{fin}}}}
\newcommand{\genericSemanticsFiniteX}    {\genericSemanticsFinite{X}}
\newcommand{\genericSemanticsFiniteXbar} {\genericSemanticsFinite{\bar X}}
\newcommand{\traceFinite}                     {\ensuremath{\trace^{\mathit{fin}}}}
\newcommand{\completeTraceFinite}             {\ensuremath{\completeTrace^{\mathit{fin}}}}
\newcommand{\failureFinite}                   {\ensuremath{\failure^{\mathit{fin}}}}
\newcommand{\failureTraceFinite}              {\ensuremath{\failureTrace^{\mathit{fin}}}}
\newcommand{\readyFinite}                     {\ensuremath{\ready^{\mathit{fin}}}}
\newcommand{\readyTraceFinite}                {\ensuremath{\readyTrace^{\mathit{fin}}}}
\newcommand{\impossibleFutureFinite}          {\ensuremath{\impossibleFuture^{\mathit{fin}}}}
\newcommand{\impossibleFutureTraceFinite}     {\ensuremath{\impossibleFutureTrace^{\mathit{fin}}}}
\newcommand{\possibleFutureFinite}            {\ensuremath{\possibleFuture^{\mathit{fin}}}}
\newcommand{\possibleFutureTraceFinite}       {\ensuremath{\possibleFutureTrace^{\mathit{fin}}}}
\newcommand{\impossibleSimulationFinite}      {\ensuremath{\impossibleSimulation^{\mathit{fin}}}}
\newcommand{\impossibleSimulationTraceFinite} {\ensuremath{\impossibleSimulationTrace^{\mathit{fin}}}}
\newcommand{\possibleSimulationFinite}        {\ensuremath{\possibleSimulation^{\mathit{fin}}}}
\newcommand{\possibleSimulationTraceFinite}   {\ensuremath{\possibleSimulationTrace^{\mathit{fin}}}}
\newcommand{\maxsucc}{\ensuremath{\textit{max-succ}}}
\newcommand{\conformance}		{\ensuremath{\mathsf{C}}}
\newcommand{\depth} {\ensuremath{\mathit{depth}}\xspace}
\newcommand{\<}{\ensuremath{\lesssim}}
\newcommand{\strict}{\ensuremath{\lnsim}}
\newcommand{\pow}{{\cal P}}
\newcommand{\fun}{\,\rightarrow\,}
\newcommand{\rep}{\ensuremath{\mathit{rep}}}
\newcommand{\fin}{\ensuremath{\mathit{fin}}}
\newcommand{\powerset}[1]{\ensuremath{\mathcal{P}(#1)}}
\newcommand{\refusalarrow}[1]{\ensuremath{\stackrel{#1}{\text{\large\bf \----}\!\texttt{\footnotesize \#}}}}
\newcommand{\readyarrow}[1]{\ensuremath{\stackrel{#1}{\text{\large\bf \----}\!\bullet}}}
\newcommand{\impossiblearrow}[1]{\ensuremath{\stackrel{#1}{\text{\large\bf \----}}{\!\!\!\texttt{\footnotesize \#}\!\texttt{\footnotesize \#}\ }}}
\newcommand{\possiblearrow}[1]{\ensuremath{\stackrel{#1}{\text{\large\bf \----}}{\!\!\!\!\bullet\!\bullet\ }}}
\newcommand{\impossiblesimulationarrow}[1]{\ensuremath{\stackrel{#1}{\text{\large\bf \----}}{\!\!\!\texttt{\footnotesize \#}\!\texttt{\footnotesize \#}\!\texttt{\footnotesize \#}\ }}}
\newcommand{\possiblesimulationarrow}[1]{\ensuremath{\stackrel{#1}{\text{\large\bf \----}}{\!\!\!\!\bullet\!\!\bullet\!\!\bullet\ }}}
\newcommand{\traces}{\ensuremath{\mathit{traces}}\xspace}
\newcommand{\vgspectrumSet}{\ensuremath{\mathsf{vanG\text{-}spectrum}}\xspace}
\newcommand{\bspectrumSet}{\ensuremath{\mathsf{Btime\text{-}spectrum}}\xspace}
\newcommand{\lspectrumSet}{\ensuremath{\mathsf{Ltime\text{-}spectrum}}\xspace}
\newcommand{\specialitemOne}{\scriptsize\text{\FiveStarOpen}\xspace}
\newcommand{\specialitemTwo}{\ensuremath{\bowtie}\xspace}
\newcommand{\specialitemThree}{\ensuremath{\circ}\xspace}
\newcommand{\Sequivclass}[1]{\ensuremath{#1_{\downarrow_{\simulation}}}\xspace}
\newcommand{\barchiX}[1]{\ensuremath{\bar{\chi}_{_{#1}}}\xspace}
\newcommand{\barchiS}   {\barchiX{\simulation}}
\newcommand{\barchiCS}  {\barchiX{\completeSim}}
\newcommand{\barchiRS}  {\barchiX{\readySim}}
\newcommand{\barchiTS}  {\barchiX{\traceSim}}
\newcommand{\barchitwoS}{\barchiX{\twoSim}}
\newcommand{\barchiBS}  {\barchiX{\bisim}}
\newcommand{\chiX}[1]{\ensuremath{\chi_{_{#1}}}\xspace}
\newcommand{\chiS}   {\chiX{\simulation}}
\newcommand{\chiBS}  {\chiX{\bisim}}
\newcommand{\getFormula}[2]{\ensuremath{\mathit{formula}_{#1}(#2)}\xspace}
\newcommand{\getFormulatrace}[1]{\getFormula{\trace}{#1}}
\newcommand{\getFormulacompleteTrace}[1]{\getFormula{\completeTrace}{#1}}
\newcommand{\getFormulaready}[1]{\getFormula{\ready}{#1}}
\newcommand{\getFormulafailure}[1]{\getFormula{\failure}{#1}}
\newcommand{\getFormulafailureTrace}[1]{\getFormula{\failureTrace}{#1}}
\newcommand{\getFormulareadyTrace}[1]{\getFormula{\readyTrace}{#1}}
\newcommand{\getFormulapossibleFuture}[1]{\getFormula{\possibleFuture}{#1}}
\newcommand{\getFormulaimpossibleFuture}[1]{\getFormula{\impossibleFuture}{#1}}
\newcommand{\getFormulaimpossibleFutureTrace}[1]{\getFormula{\impossibleFutureTrace}{#1}}
\newcommand{\getFormulapossibleFutureTrace}[1]{\getFormula{\possibleFutureTrace}{#1}}
\newcommand{\getFormulaimpossibleSimulation}[1]{\getFormula{\impossibleSimulation}{#1}}
\newcommand{\getFormulapossibleSimulation}[1]{\getFormula{\possibleSimulation}{#1}}
\newcommand{\getFormulaimpossibleSimulationTrace}[1]{\getFormula{\impossibleSimulationTrace}{#1}}
\newcommand{\getFormulapossibleSimulationTrace}[1]{\getFormula{\possibleSimulationTrace}{#1}}
\newcommand{\simulatedby}[1]{\ensuremath{simulated\text{-}by(#1)}}
\newcommand{\smallset}[2]{\setWithRelation{#1}{\leq}{#2}}
\newcommand{\equalset}[2]{\setWithRelation{#1}{=}{#2}}
\newcommand{\setWithRelation}[3]{\ensuremath{#1_{#2 #3}}}
\newtheorem{proposition}{Proposition}
\newtheorem{definition}{Definition}
\newtheorem{lemma}{Lemma}
\newtheorem{theorem}{Theorem}
\newtheorem{corollary}{Corollary}
\newtheorem{remark}{Remark}
\begin{document}

\begin{frontmatter}

  \title{When are prime formulae characteristic?\tnoteref{mytitlenote1}\tnoteref{mytitlenote}}

  \tnotetext[mytitlenote1]{This paper is an extended version of~\cite{mfcs15}.}
  \tnotetext[mytitlenote]{Research supported by the Spanish projects \emph{TRACES} (TIN2015-67522-C3-3-R) and 
  \emph{RISCO} (TIN2015-71819-P), the project \emph{MATHADOR} (COGS 724.464) of the European Research Council, 
  the Spanish addition to \emph{MATHADOR} (TIN2016-81699-ERC), and the projects \emph{Nominal SOS} (project 
  nr.~141558-051)
 %
 % \emph{Processes and Modal Logics}
 % (project nr.~100048021)
 %
    and \emph{Decidability and Expressiveness for Interval Temporal Logics}
    (project nr.~130802-051) of the Icelandic Research Fund.
    In addition, D. Della Monica acknowledges the financial support from a Marie
    Curie INdAM-COFUND-2012 Outgoing Fellowship (contract n.~UFMBAZ-2016/0000914, grant n.~PCCOFUND-GA-2012-600198).
  }
\author[aquila,iceland]{Luca Aceto}
\ead{luca.aceto@gssi.it, luca@ru.is}

\author[indam,madrid,napoli]{Dario Della Monica}
\ead{dario.dellamonica@unina.it}

\author[imdea]{Ignacio F\'abregas}
\ead{ignacio.fabregas@imdea.org}

\author[iceland]{Anna Ing\'olfsd\'ottir}
\ead{annai@ru.is}

\address[aquila]{Gran Sasso Science Institute (GSSI), L'Aquila, Italy}
\address[iceland]{ICE-TCS, School of Computer Science,
Reykjavik University, Reykjavik, Iceland}
\address[indam]{Istituto Nazionale di Alta Matematica ``F. Severi'' (INdAM), Italy}
\address[madrid]{Departamento de Sistemas Inform\'aticos y Computaci\'on,
Universidad Complutense de Madrid, Madrid, Spain}
\address[napoli]{Department of Electrical Engineering
and Information Technology, University of Naples ``Federico II'', Naples, Italy}
\address[imdea]{IMDEA Software Institute, Madrid, Spain}

\begin{abstract}
%Characteristic formulae provide a complete logical characterization
%of the behaviour of processes modulo some notion of behavioural
%equivalence or preorder. Such formulae reduce equivalence/preorder checking to model checking. 
%
In the setting of the modal logic that characterizes modal refinement over modal transition systems, Boudol and Larsen showed that the formulae for which model checking can be reduced to preorder checking, that is, the characteristic formulae,  are exactly the consistent and prime ones. This paper presents general, sufficient conditions guaranteeing that characteristic formulae are exactly the consistent and prime ones. It is shown that the given conditions apply to various behavioural relations in the literature. In particular, characteristic formulae are exactly the prime and consistent ones for all the semantics in van Glabbeek's linear time-branching time spectrum. 
\end{abstract}

\begin{keyword}
%\texttt{elsarticle.cls}\sep \LaTeX\sep Elsevier \sep template
%\MSC[2010] 00-01\sep  99-00
%
  process semantics \sep logics\sep characteristic formulae\sep (bi)simulation
%
%\MSC[2010] 00-01\sep  99-00
\end{keyword}

\end{frontmatter}

%\linenumbers

\section{Introduction} \label{sec:introduction}
Model checking and equivalence/preorder checking are the two main
approaches to the computer-aided verification of reactive
systems~\cite{AcetoEtAl07b,DBLP:books/daglib/0020348,ClarkeEtAl-book}. In model
checking, one typically describes the behaviour of a computing system
using a state-transition model, such as a labelled transition
system~\cite{Keller76}, and specifications of properties systems
should exhibit are expressed using some modal or temporal
logic. In this approach, system verification amounts to checking
whether a system is a model of the formulae describing a given
specification. When using equivalence/preorder checking instead,
systems and their specifications are both expressed in the same
state-machine-based formalism. In this approach, checking whether a
system correctly implements its specification amounts to verifying
whether the state machines describing them are related by some
suitable notion of behavioural equivalence/preorder.
(See~\cite{dFGPR13,VanGlabbeek01} for taxonomic studies of the plethora of behavioural relations that have
been considered in the field of concurrency theory.)

%%%%% May be renmoved%%%%%%%%%%%%%%%%%%%%
The study of the connections between the above-mentioned approaches to
system verification and, more generally, between behavioural and
logical approaches to defining the semantics of reactive systems has
been one of the classic topics of research in concurrency theory since
the work of Hennessy and Milner, who showed in~\cite{HennessyMilner85}
that bisimilarity~\cite{Milner89,Park81} coincides with the
equivalence induced by a multi-modal logic, which is now commonly
called Hennessy-Milner Logic, over labelled transition systems
satisfying a mild finiteness constraint. Similar modal
characterization results have been established for all the semantics
in van Glabbeek's linear-time/branching-time
spectrum~\cite{VanGlabbeek01}. Such results are truly fundamental in
concurrency theory and have found a variety of applications in
process theory---see, for instance, the
papers~\cite{AcetoH92,BloomFG04}. However, they do not provide a
``practically useful''
%% shed much light on the connections
connection between model checking and equivalence/preorder checking;
indeed, using a modal characterization theorem to
prove that two labelled transition systems are equated by some notion
of behavioural equivalence would involve showing that the two systems
satisfy exactly the same (typically infinite) set of formulae in the modal logic that
characterizes the equivalence of interest.
%%%%%%%%%%%%%%to here%%%%%%%%%%%%%%%%%%

A bridge between model checking and equivalence/preorder checking is
provided by the notion of {\em characteristic
formula}~\cite{GrafS86a,SteffenI94}. Intuitively,
a characteristic formula provides a complete logical characterization
of the behaviour of a process modulo some notion of behavioural
equivalence or preorder.
The complexity of the problem of deciding whether a formula is characteristic
for a process has been studied in~\cite{pls17,10.1007/978-3-319-72056-2_1} in
the setting of several modal logics and $\mu$-calculus.
At least for finite labelled transition
systems, characteristic formulae can be used to
reduce equivalence/preorder checking to model checking effectively~\cite{CleavelandS91}.
A natural question to ask is for what kinds of logical specifications
model checking can be reduced to establishing a behavioural relation
between an implementation and a labelled transition system that
suitably encodes the specification. To the best of our knowledge, this
question was first addressed by Boudol and Larsen, who showed
in~\cite{BoudolL1992} that, in the context of the modal logic that
characterizes modal refinement over modal transition systems, the
formulae that are ``graphically representable'' (that is, the ones
that are characteristic for some process) are exactly the
consistent and prime ones. (A formula is {\em prime} if whenever it
implies a disjunction of two formulae, it implies one of the
disjuncts.) A similar result is given in~\cite{AcetoEtAl11b} in the
setting of covariant-contravariant simulation. Moreover, each formula
in the logics considered in~\cite{AcetoEtAl11b,BoudolL1992} can be
``graphically represented'' by a (possibly empty) finite set of
processes.

%% Question: What is the message of the paper?
To our mind, those are very pleasing results that show the very close
connection between logical and behavioural approaches to verification
in two specific settings. But, how general are they? Do similar
results hold for the plethora of other process semantics and their
modal characterizations studied in the literature? And, if so, are
there general sufficient conditions guaranteeing that characteristic
formulae are exactly the consistent and prime ones?
%and that the
%collection of characteristic formulae completely characterizes a logic
%for processes? 
The purpose of this article is to provide answers to those
questions.
In particular, we aim to understand when the notions of characteristic and prime
formulae coincide (we refer to such a correspondence as \emph{characterization
  by primality}), thus providing a characterization of logically defined processes
by means of prime formulae.

%%%%% Changes start from here%%%%%%%%%%%%
We work in an abstract setting (described in
Section~\ref{sec:preliminary}), and, instead of investigating each behavioural 
semantics separately, we define the process semantics as the preorder induced by some logic, i.e. a process $p$ is smaller than a process $q$ if the set of
logical properties of $p$ is strictly included in that of $q$.
By investigating preorders defined in this way, we can identify common properties for all logically characterized preorders. 
%
%given that the corresponding logics satisfy some conditions that the results may rely on. In the applications we use these properties but add to them properties that might be useful for each behavioural preorder under investigation. 
%
%In our general framework we assume as little structure as possible for both processes and formulae to derive as general results as possible.
%
It turns out that characteristic formulae are always consistent and prime (Theorem~\ref{theo:char_prime}).
Therefore our main task is to provide sufficiently general conditions guaranteeing
that consistent and prime formulae are characteristic formulae for
some process. 
%%%%%%%%% changes finish here %%%%%%%%%%%%%%%%

In Section~\ref{sec:characterization_by_primality}, we introduce the notion of
{\em decomposable logic} and show that, for such logics, consistent and prime
formulae are characteristic for some process
(Theorem~\ref{theo:decomp}). (Intuitively, a logic is decomposable if, for each
consistent formula, the set of processes satisfying it includes the set of
processes satisfying a characteristic formula and the logic is sufficiently
expressive to witness this inclusion.) We then proceed to identify features that
make a logic decomposable, thus paving the way to showing the decomposability of
a number of logical formalisms (Section~\ref{subsec:logic_characterization}). In
particular, we provide two paths to decomposability, namely, a logic is
decomposable if
\begin{itemize}
\item the set of formulae satisfied by each process can be finitely
  characterized in a suitable technical sense (see
  Definition~\ref{def:characterized}) and some additional mild assumptions are met
  (Corollary~\ref{cor:decomposability_condition_for_spectrum}); or

\item each formula can be expressed as the union of the meaning of
  characteristic formulae (see Definition~\ref{def:representation}) and some
  additional assumptions are met
  (Proposition~\ref{prop:decomposability-for-finite-modal-covariant}).

\end{itemize}

In order to show the applicability of our general framework, we use it in
Sections~\ref{sec:finitely_many_processes}--\ref{sec:spectrum} to prove
characterization by primality (i.e., characteristic formulae are exactly the
consistent and prime ones) for a variety of logical characterizations of process
semantics. In particular, this applies to all the semantics in van Glabbeek's
linear time-branching time spectrum. In all these cases, there is a perfect
match between the behavioural and logical view of processes: not only do the
logics characterize processes up to the chosen notion of behavioural relation,
but processes represent all the consistent and prime formulae in the logics.

Finally, in Section~\ref{sec:conclusions} we provide an assessment of the work
done and outline future research directions.
In particular, in Section~\ref{sec:conformance}, we give evidence
(Proposition~\ref{prop:no_characterization_conformance}) that the first path to
decomposability we provide (based on
Corollary~\ref{cor:decomposability_condition_for_spectrum}) cannot be used to
show the decomposability of the logic characterizing conformance
simulation~\cite{FabregasEtAl10-logics}.
However, we are confident that the alternative path to decomposability (through
Proposition~\ref{prop:decomposability-for-finite-modal-covariant}) might serve
the purpose, and we plan to address the issue in the near future.

% Proofs of most of the technical results can be found in the appendices and in~\cite{this_extended}.
This paper is an extended version of~\cite{mfcs15}.
Compared with that preliminary study, the present work contains the following
new material.

\begin{itemize}

\item We provide a more elegant proof of characterization by primality for the
  case when there are only finitely many processes in
  Section~\ref{sec:finitely_many_processes}.
  Such a proof uses an approach that allows us to deal, in a uniform way, with
  modal refinement and covariant-contravariant simulation as well; it involves the
  notion of graphical representation of a formula (which is the inverse of
  characteristic formula of a process) and the one of finitely representability
  (Section~\ref{sec:finitely_many_processes}).

\item We prove characterization by primality for all of the logics
  characterizing the linear semantics in van Glabbeek's spectrum
  (Section~\ref{sec:linear-time-spectrum}).
  Such a result does not appear in~\cite{mfcs15} and is thus original.

\item We include, in Section~\ref{sec:conformance}
  (Proposition~\ref{prop:no_characterization_conformance}), a counterexample
  showing that conformance simulation cannot be dealt with using
  Corollary~\ref{cor:decomposability_condition_for_spectrum} ( the first path to
  decomposability we provide).

\item We give detailed proofs for all the results appearing in the conference
  version.

\end{itemize}

%%%%%%%%%%%%%%%%%%%

%%% Local Variables:
%%% mode: latex
%%% TeX-master: "logical_graphical-ext"
%%% End:

\section{Process semantics defined logically} \label{sec:preliminary}
%%%%%%%%%%%%%%%%%%%%%%%%%%%%%%%%%%%
We assume that $\Lset$ is a language interpreted over a non-empty set $P$, which we  refer to as a set of processes. Thus, $\Lset$ is equipped
with a semantic function $\sat{ \cdot }_\Lset: \Lset \fun\pow(P)$ (where $\pow(P)$ denotes the powerset of $P$), and we say that  $p\in P$ satisfies  $\phi\in\Lset$  whenever $p\in \sat{\phi}_\Lset$.
For all $p,q \in P$, we define the following notions:
\begin{compactitem}[\hspace{3mm}$\bullet$]
\item 
$\Lset(p) = \{ \phi \in \Lset \mid p \in\sat{\phi}_\Lset\}$: the set of formulae in $\Lset$ that $p$ satisfies;
we assume $\Lset(p) \neq \emptyset$, for each $p \in P$;
%%%%%%%%%%%%%%%%%%%%%%
\item $\clos[\Lset]{p} = \{ p' \in P \mid \Lset(p)\subseteq\Lset(p') \}$: the \emph{upwards closure} of $p$ (with respect to $\Lset$);
\item $p$ and $q$ are \emph{logically equivalent} (with respect to $\Lset$), denoted by $p \equiv_\Lset q$, iff $\Lset(p)=\Lset(q)$;
\item  $p$ and $q$ are \emph{incomparable} (with respect to $\Lset$) iff neither $\Lset(p)\subseteq\Lset(q)$ nor $\Lset(q)\subseteq\Lset(p)$ holds. 
\end{compactitem}

%%%%%%%%%%%%%
We say that a formula $\phi \in \Lset$ is \emph{consistent} iff $\sat{\phi} _\Lset \neq \emptyset$.
Formulae $\phi, \psi \in \Lset$ are said to be \emph{logically equivalent}) (or simply \emph{equivalent}) iff
$\llbracket \phi \rrbracket_\Lset = \llbracket \psi \rrbracket_\Lset$.
When it is clear from the context, we omit the logic \Lset in the subscript
(and in the text).
For example, we write $\sat{\phi}$, $\equiv$, and $\clos{p}$
instead of $\sat{\phi}_\Lset$, $\equiv_\Lset$, and $\clos[\Lset]{p}$,
respectively.
We note that $\Lset(p)\subseteq\Lset(q)$ defines a preorder between processes, which we refer to as the \emph{logical preorder} characterized by $\Lset$. We say that a preorder over $P$ is \emph{logically characterized} or simply \emph{logical} if it is characterized by some logic $\Lset$.

For a subset $S\subseteq P$ we say that:
\begin{compactitem}[\hspace{3mm}$\bullet$]
\item $S$ is \emph{upwards closed } iff for all $p\in P$, 
if $p\in S$ then $\clos{p}\subseteq S$;
\item
 $p\in S$ is \emph{minimal} in $S$ iff 
for each $q \in S$, if $\Lset(q)\subseteq\Lset(p)$ then $\Lset(q)=\Lset(p)$;
\item
 $p\in S$ is a \emph{least element} in $S$ iff $\Lset(p)\subseteq\Lset(q)$
 for each $q \in S$.
\end{compactitem}
Clearly, if $p$ is a least element in a set $S$, then
$p$ is also minimal in $S$.
Notice that, if a set $S$ contains a least element, then
it is the unique minimal element in $S$, up to equivalence. 
%%%%%%%%%%%%%%%%%%%%%%%%%%%%%%%%%%%
%%%%%%%%%%%%%%%%%%%%%%%%%%%%%%%%%%%
\subsection{Characteristic and prime formulae, and graphical representation}
\label{subsec:characteristic_prime_preliminary}
%%%%%%%%%%%%%%%%%%%%%%%%%%%%%%%%%%%
We introduce here the crucial notion of
\emph{characteristic formula} for a  process~\cite{AcetoEtAl07b,GrafS86a,SteffenI94}
(along with the inverse notion of \emph{graphical representation}
of a formula)
and the one of \emph{prime formula}~\cite{AcetoEtAl11b,BoudolL1992}, in the setting of logical preorders over processes.
Our aim in this study  is to  investigate when these notions coincide,
thus providing a  characterization of logically defined processes
by means of prime formulae, which we will often refer to as \emph{characterization by primality}. To begin with, in this section we study such a connection between the above-mentioned notions in
a very general setting. As it turns out, for logically characterized preorders, the property of being
characteristic always implies primality (Theorem~\ref{theo:char_prime}).
The main focus of this paper becomes  therefore to investigate under what conditions a consistent and prime formula is characteristic for some process with respect to a logical preorder (Section~\ref{sec:characterization_by_primality}). 

%%%%%%%%%%%%%%%%%%%%%%%%%%%%%%%%%

\begin{definition}[Characteristic formula] \label{def:characteristic_formula}
  A formula $\phi\in\Lset$ is \emph{characteristic (within logic \Lset)} for $p
  \in P$ iff, for all $q\in P$, it holds that $q\in\sat{\phi}$ if and only if
  $\Lset(p)\subseteq\Lset(q).$
\end{definition}
It is worth observing that if $\phi$ is characteristic for some process $p$,
then $p \in \sat{\phi}$.

The following simple properties related to characteristic formulae will be useful in what follows.
%%%%%%%%%%%%%%%%%%
\begin{proposition} \label{prop:general_properties_of_formulae}
The following properties hold for all $p,q\in P$ and $\phi\in\Lset$:
\begin{compactenum}[$(i)$]%[label=(\roman*)]
\item \label{item:char1} $\phi$ is characteristic for $p$ if and only if $\sat{\phi}=\clos{p}$;
\item \label{item:char2} a characteristic formula for $p$, if it exists, is unique up to logical equivalence (and can therefore be referred to as $\chi(p)$);
\item\label{item:chareq}if the characteristic formulae for $p$ and $q$, namely $\chi(p)$ and $\chi(q)$, exist then  $\sat{\chi(p)}\subseteq\sat{\chi(q)}$ if and only if $\Lset(q)\subseteq\Lset(p)$.
\end{compactenum}
\end{proposition}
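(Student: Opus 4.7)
The plan is to prove each of the three items by unfolding the definition of characteristic formula (Definition~\ref{def:characteristic_formula}) together with the set comprehension defining $\clos{p}$; each part reduces to a routine manipulation of set inclusions between $\Lset$-theories.

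For item~\ref{item:char1}, I would simply compare the two statements side by side. The defining biconditional ``$q \in \sat{\phi}$ iff $\Lset(p) \subseteq \Lset(q)$ for every $q \in P$'' is, by definition, the extensional equality between the set $\sat{\phi}$ and the set $\{q \in P \mid \Lset(p) \subseteq \Lset(q)\}$, which is exactly $\clos{p}$. Thus item~\ref{item:char1} is a direct reformulation of Definition~\ref{def:characteristic_formula}.

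Item~\ref{item:char2} follows at once from item~\ref{item:char1}: if both $\phi$ and $\psi$ are characteristic for $p$, then $\sat{\phi} = \clos{p} = \sat{\psi}$, so $\phi \equiv \psi$ by the definition of logical equivalence of formulae. This justifies the notation $\chi(p)$ for ``the'' characteristic formula of $p$, taken up to $\equiv$.

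For item~\ref{item:chareq}, applying item~\ref{item:char1} on both sides rewrites the claim as $\clos{p} \subseteq \clos{q}$ iff $\Lset(q) \subseteq \Lset(p)$. The $(\Leftarrow)$ direction is a transitivity argument: if $\Lset(q) \subseteq \Lset(p)$ and $r \in \clos{p}$, so that $\Lset(p) \subseteq \Lset(r)$, then $\Lset(q) \subseteq \Lset(r)$, giving $r \in \clos{q}$. The $(\Rightarrow)$ direction exploits the fact that $p \in \clos{p}$ trivially, since $\Lset(p) \subseteq \Lset(p)$; hence $\clos{p} \subseteq \clos{q}$ forces $p \in \clos{q}$, which unfolds to $\Lset(q) \subseteq \Lset(p)$. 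There is no real obstacle in this proof: the entire argument amounts to rewriting set-membership conditions as inclusions between logical theories, and the only subtle bookkeeping is the observation that every process belongs to its own upward closure, which powers the $(\Rightarrow)$ direction of item~\ref{item:chareq}.
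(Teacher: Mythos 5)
Your proof is correct and follows exactly the route the paper intends: the paper merely asserts that (i) is immediate from the definitions and that (ii) and (iii) follow easily from (i), and your unfolding of those claims (extensional equality for (i), equality of satisfaction sets for (ii), transitivity plus $p\in\clos{p}$ for (iii)) supplies precisely the omitted routine details.
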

%%%%%%%%%%%%%%%%%%%%
 \begin{proof}
Property
  (\ref{item:char1}) follows directly from the definition of the characteristic formula
  and the one of upward closure of $p$,
  while (\ref{item:char2}) and (\ref{item:chareq}) follow easily from (\ref{item:char1}). 
%  \qed
 \end{proof}
%%%%%%%%%%%%%%%%%
% As the characteristic formula for a process is unique (up to logical equivalence), we will from now on refer to the characteristic formula for  a given $p\in P$ as $\chi(p)$ if it exists.
Next we state two useful properties.
\begin{proposition}\label{prop:upward_closure} 
The following properties hold for each $\phi\in\Lset$:
\begin{compactenum}[$(i)$]%[label=(\roman*)]
\item \label{item:upclose}
$\sat{\phi}$ is upwards closed, and
\item \label{item:char3} for every $p \in P$, if $\chi(p)$ exists and $p\in\sat{\phi}\subseteq\sat{\chi(p)}$, then $\sat{\phi}=\sat{\chi(p)}$.
\end{compactenum}
\end{proposition}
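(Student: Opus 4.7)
The plan is to prove the two parts in sequence, with part (i) providing the key fact used in part (ii).

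For part (i), I would unfold the definitions. To show $\sat{\phi}$ is upwards closed, take any $p \in \sat{\phi}$ and any $q \in \clos{p}$; I need $q \in \sat{\phi}$. By definition of $\sat{\cdot}$, $p \in \sat{\phi}$ is equivalent to $\phi \in \Lset(p)$. By definition of upwards closure, $q \in \clos{p}$ means $\Lset(p) \subseteq \Lset(q)$, hence $\phi \in \Lset(q)$, i.e., $q \in \sat{\phi}$. This step is essentially a definition chase and should be immediate.

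For part (ii), I would reduce to part (i) together with Proposition~\ref{prop:general_properties_of_formulae}(\ref{item:char1}). The nontrivial inclusion is $\sat{\chi(p)} \subseteq \sat{\phi}$, since the other inclusion is assumed. Using item (\ref{item:char1}) of Proposition~\ref{prop:general_properties_of_formulae}, I can rewrite $\sat{\chi(p)}$ as $\clos{p}$. Then, since $p \in \sat{\phi}$ by hypothesis and $\sat{\phi}$ is upwards closed by part (i), I obtain $\clos{p} \subseteq \sat{\phi}$, which is precisely what is needed. Combined with the hypothesis $\sat{\phi} \subseteq \sat{\chi(p)}$, equality follows.

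Neither step presents a real obstacle: part (i) is an unwrapping of the definitions of $\sat{\cdot}$ and $\clos{p}$, and part (ii) is just the observation that the upward closure of $p$ is the smallest upward-closed set containing $p$, which here is witnessed on the logical side by $\chi(p)$. The only point worth stating carefully is the appeal to Proposition~\ref{prop:general_properties_of_formulae}(\ref{item:char1}) so that the reader sees why $\sat{\chi(p)} = \clos{p}$ gives the required bridge.
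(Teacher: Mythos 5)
Your proposal is correct and follows essentially the same route as the paper: part (i) is the same definition chase ($\phi\in\Lset(p)\subseteq\Lset(q)$ gives $q\in\sat{\phi}$), and part (ii) combines upward closedness of $\sat{\phi}$ with the identity $\sat{\chi(p)}=\clos{p}$ from Proposition~\ref{prop:general_properties_of_formulae}(\ref{item:char1}), exactly as in the paper's argument.
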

 \begin{proof}
 We prove the two claims separately.
 \begin{compactenum}[$(i)$]%[label=(\roman*)]
 \item
 Assume that $p \in \sat{\phi}$ and that $q\in\clos{p}$, or equivalently that $\Lset(p)\subseteq\Lset(q)$. As, by assumption,  
$\phi\in\Lset(p)$, we have that $\phi\in\Lset(q)$, and therefore that $q\in\sat{\phi}$ as we wanted to prove.
 \item
 This statement follows from the fact that $\sat{\phi}$ is upwards closed and, since
 $p \in \sat{\phi}$ by assumption, it includes $\clos{p}=\sat{\chi(p)}$.
 \qedhere
 \end{compactenum}
\end{proof}

The inverse of the notion of characteristic formula
is the one of \emph{graphical representation} of a formula.  For the sake of simplicity, in what follows we simply write ``represents'' instead of ``graphically represents''.

\begin{definition}[Graphical representation] \label{def:representation}
We say that $S\subseteq P$ \emph{represents} $\phi$ iff the elements in $S$ are pairwise incomparable and $\sat{\phi}=\bigcup_{p\in S} \clos{p}$. 
If $S$ is finite  we say that $\phi$ is \emph{finitely represented} by $S$ or simply finitely represented.
If $S=\{p\}$  we say that $p$ \emph{represents} $\phi$ (or that $\phi$ \emph{is represented by} $p$).
\end{definition}
Observe that there is possibly more than one graphical representation of a formula.
% (this is the case when, e.g., $p \neq q$ and $\Lset(p) = \Lset(q)$:
% both $p$ and $q$ are graphical representation of $\chi(p)$).
However, any two graphical representations of a formula are equivalent in the following sense,
so the representation is unique up to process equivalence.
\begin{proposition}
If $S,T\subseteq P$ represent $\phi$ then for all $p\in S$ there is some $q\in T$ such that $\Lset(p)=\Lset(q)$ and vice versa.
\end{proposition}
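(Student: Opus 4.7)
My plan is to exploit the ``bidirectional coverage'' afforded by the two representations and deduce, via a two-step chase through upwards closures, that every element of $S$ must be matched, up to logical equivalence, by an element of $T$.

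First, I would fix an arbitrary $p \in S$. Since $p \in \clos{p} \subseteq \bigcup_{p'' \in S} \clos{p''} = \sat{\phi}$, and by the assumption on $T$ we also have $\sat{\phi} = \bigcup_{q' \in T} \clos{q'}$, there exists $q \in T$ with $p \in \clos{q}$; that is, $\Lset(q) \subseteq \Lset(p)$.

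Next, I would apply the same reasoning to this $q$: since $q \in \clos{q} \subseteq \sat{\phi} = \bigcup_{p' \in S} \clos{p'}$, there is some $p' \in S$ with $\Lset(p') \subseteq \Lset(q)$. Chaining the two inclusions yields $\Lset(p') \subseteq \Lset(p)$. Pairwise incomparability of $S$ then forces $p = p'$, since otherwise $p$ and $p'$ would be two distinct elements of $S$ with one included in the other, contradicting the definition of incomparability. Consequently $\Lset(p) = \Lset(p') \subseteq \Lset(q) \subseteq \Lset(p)$, giving $\Lset(p) = \Lset(q)$. The ``vice versa'' direction follows immediately by swapping the roles of $S$ and $T$.

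The only subtle point I expect to need care with is the step invoking pairwise incomparability: one must be clear that ``pairwise incomparable'' in Definition~\ref{def:representation} rules out any strict inclusion between distinct elements of the set, so that $\Lset(p') \subseteq \Lset(p)$ with $p, p' \in S$ really does imply $p = p'$. Beyond this observation, the proof is essentially a double application of the definition of graphical representation, and does not rely on any further structural property of $\Lset$.
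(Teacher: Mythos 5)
Your proof is correct and follows essentially the same route as the paper's: a two-step chase through the union decompositions $\sat{\phi}=\bigcup_{p\in S}\clos{p}=\bigcup_{q\in T}\clos{q}$ to obtain $\Lset(p')\subseteq\Lset(q)\subseteq\Lset(p)$, followed by an appeal to pairwise incomparability of $S$ to conclude $p=p'$ and hence $\Lset(p)=\Lset(q)$. The subtle point you flag about incomparability ruling out any inclusion between distinct elements is exactly the step the paper relies on as well.
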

\begin{proof}
Assume that both $S$ and $T$ represent $\phi$ and therefore that
$\sat{\phi}=\bigcup_{p\in S} \clos{p} =\bigcup_{q\in T} \clos{q}$.
Assume that $p\in S$. Then $p\in \clos{p} \subseteq\bigcup_{q\in T} \clos{q}$. This implies that
$p\in \clos{q}$ for some $q \in T$, and therefore that $\Lset(q) \subseteq \Lset(p)$.
Symmetrically, we have that $\Lset(p') \subseteq \Lset(q)$, for some $p' \in S$.
Since elements of $S$ are pairwise incomparable,
$\Lset(p') \subseteq \Lset(q) \subseteq \Lset(p)$ implies $p=p'$,
and therefore we conclude that $\Lset(p)=\Lset(q)$.
Following the same argument it is possible to show that for each
$q \in T$ there exists some $p \in S$ such that $\Lset(p)=\Lset(q)$.
%\qed
\end{proof}

We now define what it means for a formula to be prime.
\begin{definition}[Prime formula]
 We say that $\phi \in \Lset$ is \emph{prime} iff
 for each non-empty, finite subset of formulae $\Psi \subseteq \Lset$ it holds that
 $\sat{\phi} \subseteq \bigcup_{\psi \in \Psi} \sat{\psi}$ implies
 $\sat{\phi} \subseteq \sat{\psi}$ for some $\psi \in \Psi$.
\end{definition}
Observe that our definition is an equivalent, semantic version of the
one given in~\cite{BoudolL1992}. This serves our purpose to keep the
discussion as abstract as possible.
%
%%%% Text reinserted
In this perspective, we want to abstract (at least at this point
of the discussion) from the syntactic details of the logical formalism,
while the classic definition of prime formula tacitly applies only to languages that
feature at least the Boolean connective $\vee$.

We provide here the first piece of our characterization by primality,
by showing that the property of being
characteristic implies primality without any extra assumption on the language \Lset or its interpretation. 
%%%%%%%%%%%%%%%%
%%%%%%%%%%%%%%%%%%%%%%%%%
\begin{theorem} \label{theo:char_prime}
 Let $\phi \in \Lset$. If $\phi$ is a characteristic formula for some $p \in P$, then
 $\phi$ is prime and consistent.
\end{theorem}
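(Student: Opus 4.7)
The plan is to prove consistency directly and then handle primality by combining the characterization of $\sat{\phi}$ as $\clos{p}$ with the upward closure property established in Proposition~\ref{prop:upward_closure}.

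For consistency, I would start from the observation (noted in the paragraph after Definition~\ref{def:characteristic_formula}) that any formula characteristic for $p$ contains $p$ in its semantics. This is immediate since $\Lset(p) \subseteq \Lset(p)$ trivially, hence by definition $p \in \sat{\phi}$, so $\sat{\phi} \neq \emptyset$.

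For primality, fix a non-empty finite $\Psi \subseteq \Lset$ with $\sat{\phi} \subseteq \bigcup_{\psi \in \Psi} \sat{\psi}$. Since $p \in \sat{\phi}$, there must exist some $\psi \in \Psi$ with $p \in \sat{\psi}$, equivalently $\psi \in \Lset(p)$. The goal is then to upgrade this pointwise membership to the set inclusion $\sat{\phi} \subseteq \sat{\psi}$. By Proposition~\ref{prop:general_properties_of_formulae}(\ref{item:char1}) we know $\sat{\phi} = \clos{p}$, and by Proposition~\ref{prop:upward_closure}(\ref{item:upclose}) the set $\sat{\psi}$ is upwards closed, so from $p \in \sat{\psi}$ we obtain $\clos{p} \subseteq \sat{\psi}$, which is precisely the inclusion we need.

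Since the argument is short and uses only the two propositions already established, there is no real obstacle. The only subtlety worth flagging is that the definition of primality refers to arbitrary non-empty finite families $\Psi$ (not just pairs), but the argument above is insensitive to the cardinality of $\Psi$: a single element witnessing $p$'s membership in the union suffices, and the rest is handled uniformly by upward closure.
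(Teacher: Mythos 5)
Your proposal is correct and follows essentially the same route as the paper's own proof: consistency from $p\in\sat{\phi}$, then, for primality, picking a disjunct $\psi$ containing $p$ and concluding $\sat{\phi}=\clos{p}\subseteq\sat{\psi}$ via Proposition~\ref{prop:general_properties_of_formulae}(\ref{item:char1}) and the upward closure of $\sat{\psi}$ from Proposition~\ref{prop:upward_closure}(\ref{item:upclose}). No gaps.
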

 \begin{proof}
The formula $\phi$ is obviously consistent because $p \in \sat{\chi(p)} = \sat{\phi}$.
Towards proving that $\chi(p)$ is prime,
we assume that $\sat{\chi(p)} \subseteq \bigcup_{i\in I}\sat{\psi_i}$, where $I$ is finite and non-empty.
By our assumption, since $p \in \sat{\chi(p)}$, then for some $i\in I$,
$p \in \sat{\psi_i}$  holds.  As, by Proposition \ref{prop:upward_closure}(\ref{item:upclose}), 
\sat{\psi_i} is upwards closed, using Proposition~\ref{prop:general_properties_of_formulae}(\ref{item:char1}) we can conclude that $\sat{\chi(p)}=\clos{p}\subseteq\sat{\psi_i}$ as we wanted to prove.
%\qed
\end{proof}

Notice that the converse is not true in general, that is, there exist formulae
that are consistent and prime but not characteristic.
To see this, let $P=\Q$, $\Lset=\R$ and $\sat{\phi}=\{p\in\Q\mid \phi\leq p\}$.
Clearly, all formulae are consistent.
Then, $\Lset(p)=\{\phi\in\R\mid \phi\leq p\}$ which implies that
$\Lset(p)\subseteq\Lset(q)$ iff $p\leq q$ iff $q\in\sat{p}$.
This means that, for each $p \in \Q$, $\phi=p$ is characteristic for $p$ and
therefore the characteristic formula is well-defined for all $p \in P$.
Furthermore $\sat{\phi}\cup\sat{\psi}=\{p \in \Q \mid \min \{ \phi,\psi \} \leq
p \}$ for all $\phi, \psi \in \Lset$, which implies that all formulae are prime.
On the other hand $\phi=\sqrt{2}\not\in\Q$ cannot be characteristic for any process as $\sat{\sqrt{2}}$ does not have a least element.

%%% Local Variables:
%%% mode: latex
%%% TeX-master: "logical_graphical-ext"
%%% End:

\section{Characterization by primality for logical preorders}\label{sec:characterization_by_primality}
In this section we introduce sufficient conditions under which the converse of Theorem~\ref{theo:char_prime}
is also true for logical preorders, that is, conditions guaranteeing that any consistent, prime formula
is characteristic. 

As a first step, we introduce the notion of \emph{decomposable} logic.
We show that if a logic is decomposable, then we have a logical characterization
of processes by primality. 
Some of the results involve the Boolean connectives $\wedge$ and $\vee$,
whose intended semantics is the standard one.
%, that is,
%%$\sat{\phi \wedge \psi}=\sat{\phi}\cap\sat{\psi}$ and
%$\sat{\phi \vee \psi}=\sat{\phi} \cup \sat{\psi}$.

% \subsection{Decomposability} \label{subsec:decomposability}
\begin{definition}[Decomposability]\label{def:decomposability}
We say that a formula $\phi\in\Lset$ is \emph{decomposable} iff
$\sat{\phi}=\sat{\chi(p)}\cup\sat{\psi_p}$ for some $p\in P$ and $\psi_p\in\Lset$, with $p\not\in\sat{\psi_p}$.
We say that $\Lset$ is \emph{decomposable} iff each consistent formula $\phi\in\Lset$ is
decomposable or characteristic for some $p \in P$.
% \marginpar{\footnotesize We definitely need this last part (or characteristic for some $p \in P$)
% for the proof of Proposition~\ref{prop:path1}}
\end{definition}

The attentive reader will have noticed that a characteristic formula is also
decomposable provided that ``false'' is expressible in the logic
$\mathcal{L}$. At this point of the paper we are not making any assumption on
the logic (not even that it includes false, negation and conjunction), so
defining decomposability as we do yields a more abstract and general framework.

\begin{proposition}\label{prop:decomp}
For all $\phi,\psi\in\Lset$ and $p\in P$, where $p\not\in\sat{\psi}$ and $\phi$ is prime,
if $\sat{\phi}=\sat{\chi(p)}\cup\sat{\psi}$, then $\sat{\phi}=\sat{\chi(p)}$.
\end{proposition}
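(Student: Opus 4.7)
The plan is to apply the definition of primality directly to the two-element family $\Psi = \{\chi(p), \psi\}$. From the hypothesis $\sat{\phi} = \sat{\chi(p)} \cup \sat{\psi}$ one immediately reads off the inclusion $\sat{\phi} \subseteq \sat{\chi(p)} \cup \sat{\psi}$, so primality of $\phi$ forces one of two cases: either $\sat{\phi} \subseteq \sat{\chi(p)}$ or $\sat{\phi} \subseteq \sat{\psi}$.

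First I would dispatch the second case by a point-chasing argument using the distinguished element $p$. Since $\chi(p)$ is characteristic for $p$, we have $p \in \sat{\chi(p)}$, hence $p \in \sat{\chi(p)} \cup \sat{\psi} = \sat{\phi}$. If the inclusion $\sat{\phi} \subseteq \sat{\psi}$ held, this would give $p \in \sat{\psi}$, directly contradicting the assumption $p \notin \sat{\psi}$. So this case is ruled out.

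In the remaining case $\sat{\phi} \subseteq \sat{\chi(p)}$, the reverse inclusion $\sat{\chi(p)} \subseteq \sat{\chi(p)} \cup \sat{\psi} = \sat{\phi}$ is immediate, which together yield $\sat{\phi} = \sat{\chi(p)}$, as required. There is no real obstacle here: the proof is essentially a one-line application of primality, with the role of $p$ being solely to witness, through the assumption $p \notin \sat{\psi}$, that the ``wrong'' disjunct of primality cannot occur. I would not need to invoke any deeper property of characteristic formulae beyond $p \in \sat{\chi(p)}$, which is recorded in the observation following Definition~\ref{def:characteristic_formula}.
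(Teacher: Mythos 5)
Your proof is correct and follows essentially the same route as the paper's: apply primality to the two-set cover $\{\chi(p),\psi\}$ and use $p\in\sat{\chi(p)}\subseteq\sat{\phi}$ together with $p\notin\sat{\psi}$ to exclude the disjunct $\sat{\phi}\subseteq\sat{\psi}$. The only (harmless) difference is in the last step: you obtain the reverse inclusion $\sat{\chi(p)}\subseteq\sat{\phi}$ directly from the hypothesis $\sat{\phi}=\sat{\chi(p)}\cup\sat{\psi}$, which is slightly more economical than the paper's appeal to Proposition~\ref{prop:upward_closure}(\ref{item:char3}).
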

 \begin{proof}
 Assume that $\phi$ is prime, that $\sat{\phi}=\sat{\chi(p)}\cup\sat{\psi}$ and that $p\not\in\sat{\psi}$. As $p\in\sat{\phi}$, it is clear that $\sat{\phi}\not\subseteq\sat{\psi}$ and, as $\phi$ is prime, we have that $\sat{\phi}\subseteq\sat{\chi(p)}$ which, thanks to Proposition~\ref{prop:upward_closure}(\ref{item:char3}), implies that $\sat{\phi}=\sat{\chi(p)}$.
% \qed
 \end{proof}

%Moved the explanation before to avoid talking about the Appendix
%The following theorem is proved in the Appendix (\ref{}).
The following theorem allows us to reduce the problem of relating the notions
of prime and characteristic formulae in a given logic to the problem of establishing the
decomposability property for that logic.
This provides us with a  very general setting towards characterization by primality.

\begin{theorem}\label{theo:decomp}
If $\Lset$ is decomposable then every formula in $\Lset$ that is consistent and prime is also characteristic for some $p\in P$.
\end{theorem}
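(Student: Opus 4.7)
The plan is to derive the statement almost directly from the definition of decomposability and from Proposition~\ref{prop:decomp}, which has already done the essential work. Let $\phi\in\Lset$ be consistent and prime; the goal is to produce some $p\in P$ such that $\phi$ is characteristic for $p$.

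First I would unfold the decomposability of $\Lset$ applied to the consistent formula $\phi$. By Definition~\ref{def:decomposability}, there are two cases: either $\phi$ is already characteristic for some $p\in P$, in which case there is nothing to prove; or $\phi$ is decomposable, meaning that there exist $p\in P$ and $\psi_p\in\Lset$ with $p\notin\sat{\psi_p}$ such that
\[
\sat{\phi}=\sat{\chi(p)}\cup\sat{\psi_p}.
\]
Note that this form presupposes that $\chi(p)$ exists for the particular $p$ witnessing the decomposition, which is exactly what Definition~\ref{def:decomposability} gives us.

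In the second case I would invoke Proposition~\ref{prop:decomp} with the chosen $p$ and with $\psi=\psi_p$: since $\phi$ is prime, $\sat{\phi}=\sat{\chi(p)}\cup\sat{\psi_p}$ and $p\notin\sat{\psi_p}$, the proposition yields $\sat{\phi}=\sat{\chi(p)}$. By Proposition~\ref{prop:general_properties_of_formulae}(\ref{item:char1}) applied to $\chi(p)$, this common denotation equals $\clos{p}$, and applying the same proposition in the reverse direction to $\phi$ gives that $\phi$ itself is characteristic for $p$. Combining both cases, a consistent and prime formula is always characteristic for some process.

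The argument is essentially bookkeeping once decomposability and Proposition~\ref{prop:decomp} are in place, so there is no genuine obstacle at this stage; the only thing to be careful about is to remember to handle the trivial case in which decomposability provides a characteristic formula directly, rather than a proper decomposition, so that the appeal to Proposition~\ref{prop:decomp} is made only when its hypotheses are actually available. The real conceptual work has been shifted to showing decomposability of concrete logics, which is addressed in the subsequent sections.
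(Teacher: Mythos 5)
Your proposal is correct and follows exactly the paper's route: the paper's proof is the one-liner ``follows directly from Proposition~\ref{prop:decomp}'', and you have simply spelled out the case split in Definition~\ref{def:decomposability} and the application of that proposition. Nothing is missing.
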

 \begin{proof}
 The claim follows directly from Proposition \ref{prop:decomp}.
% \qed
 \end{proof}

\subsection{Paths to decomposability} \label{subsec:logic_characterization}
% \subsection{Logic Characterization} \label{subsec:logic_characterization}

The aim of this section is to identify features that make a logic decomposable,
thus paving the way towards showing the decomposability of a number of logical
formalisms in the next sections.
% 
% To begin with, we state the following proposition.
% % 
% %which will be used to
% %prove decomposability of the logics considered in Section~\ref{sec:working_finite}
% %and Section~\ref{sec:working_others}.
% % 
% \begin{proposition} \label{prop:path1}
% Let \Lset be such that:
% \begin{inparaenum}[$(i)$]
% \item it features at least the Boolean connective $\vee$,
% \item $\chi(p)$ exists in \Lset for every $p \in P$, and
% \item each of its formulae $\phi$ is finitely represented (see Definition~\ref{def:representation}).
% \end{inparaenum}
% Then, \Lset is decomposable.
% \end{proposition}
%  \begin{proof}
%   Let $\phi \in \Lset$ and let $S_\phi$ be a finite set that represents $\phi$.
%   By Definition~\ref{def:representation}, $\sat{\phi} = \bigcup_{p \in S_\phi} \clos{p}$.
%   Since, by assumption, $\chi(p)$ exists for every $p \in P$, this implies
%   that $\sat{\phi} = \bigcup_{p \in S_\phi} \sat{\chi(p)}$.
%   In order to prove that \Lset is decomposable, we need to show that if $\phi$
%   is consistent, then it is either characteristic or decomposable.
%   Since $\phi$ is consistent, we have that $|S_\phi| \geq 1$.
%   If $|S_\phi| = 1$, then $\phi$ is clearly a characteristic formula.
%   Otherwise, we select an element $p \in S_\phi$ and we
%   let $\psi = \bigvee_{q \in S_\phi \setminus \{ p \}} \chi(q)$.
%   Clearly, $\sat{\phi} = \sat{\chi(p)} \cup \sat{\psi}$.
%   Since, the elements of $S_\psi$ are pairwise incomparable,
%   $p \not\in \sat{\psi}$.
%   Hence, \Lset is decomposable.
%  \qed
%  \end{proof}
% 
% 
%%Explanation added
First, we observe that if a characteristic formula $\chi(p)$ exists for every $p \in P$,
then what we are left to do is to define, for each $\phi \in \Lset$, a formula $\psi_p$, for some $p \in P$, with the properties mentioned in Definition~\ref{def:decomposability}, as captured by the following proposition.
%
%We start with the following proposition.
\begin{proposition} \label{prop:requirements_for_bisimulation}
 Let \Lset be a logic such that
 \begin{inparaenum}[$(i)$]%[label=(\roman*)]
  \item \label{item:req_all_p_char1} $\chi(p)$ exists for each $p \in P$, and
  \item \label{item:req_neg_chi} for each consistent formula $\phi$
  there exist $p \in \sat{\phi}$ and $\psi_p \in \Lset$
  such that $p \notin\sat{\psi_p}$ and $\sat{\phi} \setminus \sat{\chi(p)} \subseteq \sat{\psi_p} \subseteq \sat{\phi}$.
 \end{inparaenum}
 Then \Lset is decomposable.
\end{proposition}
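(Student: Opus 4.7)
The plan is to fix an arbitrary consistent formula $\phi\in\Lset$ and show that it is decomposable in the sense of Definition~\ref{def:decomposability}; since decomposability of $\Lset$ requires each consistent formula to be either decomposable or characteristic, establishing the former unconditionally is enough. Hypothesis~(\ref{item:req_all_p_char1}) is used only to guarantee that $\chi(p)$ is well defined for every $p\in P$, so once the correct witness $p$ is chosen the characteristic component of the decomposition is automatically available.

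First I would invoke hypothesis~(\ref{item:req_neg_chi}) applied to $\phi$ to obtain a process $p\in\sat{\phi}$ and a formula $\psi_p\in\Lset$ such that $p\notin\sat{\psi_p}$ together with the inclusions $\sat{\phi}\setminus\sat{\chi(p)}\subseteq\sat{\psi_p}\subseteq\sat{\phi}$. The key technical step is then to verify the set-theoretic identity $\sat{\phi}=\sat{\chi(p)}\cup\sat{\psi_p}$. The $\supseteq$ direction is immediate from the two hypotheses just collected: the inclusion $\sat{\psi_p}\subseteq\sat{\phi}$ is assumed, while $\sat{\chi(p)}\subseteq\sat{\phi}$ follows by combining Proposition~\ref{prop:general_properties_of_formulae}(\ref{item:char1}), which rewrites $\sat{\chi(p)}$ as $\clos{p}$, with the upward closure of $\sat{\phi}$ given by Proposition~\ref{prop:upward_closure}(\ref{item:upclose}), applied to the fact that $p\in\sat{\phi}$.

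For the $\subseteq$ direction I would take an arbitrary $q\in\sat{\phi}$ and distinguish two cases according to whether $q\in\sat{\chi(p)}$ or not. In the first case $q$ already lies in the left disjunct and there is nothing to prove; in the second case $q\in\sat{\phi}\setminus\sat{\chi(p)}\subseteq\sat{\psi_p}$ by the assumed inclusion, so $q$ lies in the right disjunct. Combining the identity thus established with the remaining condition $p\notin\sat{\psi_p}$ exhibits the pair $(p,\psi_p)$ as a witness that $\phi$ is decomposable; since $\phi$ was an arbitrary consistent formula, $\Lset$ itself is decomposable.

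I do not expect any genuine obstacle here: the argument is essentially bookkeeping around the two chain inclusions supplied by hypothesis~(\ref{item:req_neg_chi}). The only delicate point to keep clearly in mind is that the process $p$ used to build the characteristic component $\chi(p)$ of the decomposition must be exactly the one provided by~(\ref{item:req_neg_chi}), so that the membership $p\in\sat{\phi}$ is what powers the upward-closure step and the non-membership $p\notin\sat{\psi_p}$ matches the side condition required by Definition~\ref{def:decomposability}.
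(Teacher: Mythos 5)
Your proposal is correct and follows essentially the same route as the paper's proof: pick the witness $p$ and $\psi_p$ from hypothesis~(\emph{ii}), derive $\sat{\chi(p)}=\clos{p}\subseteq\sat{\phi}$ from upward closure together with $p\in\sat{\phi}$, and combine the two inclusions to obtain $\sat{\phi}=\sat{\chi(p)}\cup\sat{\psi_p}$. Your explicit case split for the $\subseteq$ direction merely spells out the one-line set-theoretic step in the paper, so there is no substantive difference.
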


\begin{proof}
  Let $\phi \in \Lset$ be consistent and let us consider the formula $\psi_p$
  (with $p \in \sat{\phi}$) whose existence is guaranteed by proviso
  (\ref{item:req_neg_chi}) of the proposition.
  By proviso (\ref{item:req_all_p_char1}), the formula $\chi(p)$ exists in \Lset
  as well and, since $\phi$ is upwards closed and $p \in \sat{\phi}$, we have
  $\sat{\chi(p)} = \clos{p}\subseteq \sat{\phi}$.
  It immediately follows that $\sat{\chi(p)}\cup\sat{\psi_p} \subseteq
  \sat{\phi}$ as, by hypothesis, $\sat{\psi_p} \subseteq \sat{\phi}$ holds as
  well.
  Moreover, from $\sat{\phi} \setminus \sat{\chi(p)} \subseteq \sat{\psi_p}$ and
  $\sat{\chi(p)} \subseteq \sat{\phi}$, we have that $\sat{\phi} \subseteq
  \sat{\psi_p} \cup \sat{\chi(p)}$.
  Hence, $\sat{\phi} = \sat{\psi_p} \cup \sat{\chi(p)}$, and
%
%  Since $\sat{\phi}=\sat{\chi(p)}\cup\sat{\psi_p}$ clearly holds and $p \notin
%  \sat{\psi_p}$ holds by assumption,
%
   we are done.
%\qed
\end{proof}

Clearly, when dealing with formalisms featuring at least the Boolean operators
$\neg$ and $\wedge$, as it is the case with the logic for the bisimulation
semantics (Section~\ref{sec:branch-time-spectr}), such a formula $\psi_p$ is
easily defined as $\neg \chi(p) \wedge \phi$.
% , whenever that formula is consistent (otherwise $\phi$ is equivalent to $\chi(p)$), 
%However, all the other formalisms
%considered in Section~\ref{sec:working_spectrum} do not feature (unrestricted) negation. 
This is stated in the following corollary.
%%%%%%%%%%%%%%%%%%%%%%%%%%
\begin{corollary}\label{cor:negation}
Let \Lset be a logic featuring at least the Boolean connective $\wedge$
such that $\chi(p)$ exists for each $p \in P$, and there exists some formula $\bar{\chi}(p)\in\Lset$ where $\sat{\bar{\chi}(p)}= P \setminus \sat{\chi(p)}$. Then \Lset is decomposable.
\end{corollary}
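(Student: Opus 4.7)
The plan is to apply Proposition~\ref{prop:requirements_for_bisimulation}, since the corollary's hypotheses almost literally match its two provisos. Condition~(\ref{item:req_all_p_char1}) of the proposition is given directly in the statement of the corollary, so the work reduces to establishing condition~(\ref{item:req_neg_chi}): for each consistent $\phi\in\Lset$ I must exhibit some $p\in\sat{\phi}$ and some $\psi_p\in\Lset$ with $p\notin\sat{\psi_p}$ and $\sat{\phi}\setminus\sat{\chi(p)}\subseteq\sat{\psi_p}\subseteq\sat{\phi}$.

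First I would pick any $p\in\sat{\phi}$; such a $p$ exists precisely because $\phi$ is consistent. The natural candidate witness is
\[
\psi_p \;=\; \bar{\chi}(p) \wedge \phi,
\]
which belongs to $\Lset$ because the logic contains $\wedge$ by assumption and both conjuncts are in $\Lset$ by hypothesis. By the standard interpretation of $\wedge$ we have $\sat{\psi_p}=\sat{\bar{\chi}(p)}\cap\sat{\phi}$, so $\sat{\psi_p}\subseteq\sat{\phi}$ is immediate.

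Next I would verify the remaining two properties. For $p\notin\sat{\psi_p}$: since $p\in\sat{\chi(p)}$ (an observation made just after Definition~\ref{def:characteristic_formula}) and $\sat{\bar{\chi}(p)}=P\setminus\sat{\chi(p)}$, we get $p\notin\sat{\bar{\chi}(p)}$, hence $p\notin\sat{\psi_p}$. For $\sat{\phi}\setminus\sat{\chi(p)}\subseteq\sat{\psi_p}$: any $q$ in the left-hand set satisfies $q\in\sat{\phi}$ and $q\notin\sat{\chi(p)}$, so $q\in\sat{\bar{\chi}(p)}$, giving $q\in\sat{\bar{\chi}(p)}\cap\sat{\phi}=\sat{\psi_p}$.

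With both provisos of Proposition~\ref{prop:requirements_for_bisimulation} established, the proposition yields that $\Lset$ is decomposable, which concludes the proof. No step poses any real obstacle: the corollary is essentially a concrete instantiation of Proposition~\ref{prop:requirements_for_bisimulation} in which the required ``complement of the characteristic formula inside $\phi$'' is built syntactically using $\wedge$ and the formula $\bar{\chi}(p)$ whose existence is posited. The only point worth being careful about is verifying that $p\notin\sat{\psi_p}$, which relies on the nontrivial but elementary fact that $p$ satisfies its own characteristic formula.
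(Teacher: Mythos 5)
Your proof is correct and follows exactly the route the paper intends: the text immediately preceding Corollary~\ref{cor:negation} indicates that the witness for proviso~(\emph{ii}) of Proposition~\ref{prop:requirements_for_bisimulation} is $\bar{\chi}(p)\wedge\phi$, and your verification of the three required properties of $\psi_p$ (using $p\in\sat{\chi(p)}$ and $\sat{\bar{\chi}(p)}=P\setminus\sat{\chi(p)}$) is precisely the argument the paper leaves implicit. No gaps.
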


The situation is more complicated for the other logics for the
semantics in the branching time-linear time spectrum (which we consider in
Section~\ref{sec:spectrum}), as negation is in general not expressible in
these logics, not even for characteristic formulae.
Therefore, instead we will prove a slightly stronger statement than the one in Corollary \ref{cor:negation} by identifying a weaker condition than the existence of a negation of the characteristic formulae (that we assume to exist) that also leads to decomposability of the logic.
This is described in the following proposition.
\begin{proposition} \label{prop:barchi}
 Let $\phi \in \Lset$, $p$ be a minimal element in $\sat{\phi}$
 such that $\chi(p)$ exists in \Lset, and let $\bar\chi(p)$ be a formula in $\Lset$ such that
 $\{ q \in P \mid \Lset(q) \not\subseteq \Lset(p) \} \subseteq \sat{\bar\chi(p)}$.
 Then, $\sat{\phi} \setminus \sat{\chi(p)} \subseteq \sat{\bar\chi(p)}$ holds.
\end{proposition}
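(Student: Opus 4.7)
The plan is to prove the inclusion pointwise by contraposition on the hypothesis satisfied by $\bar\chi(p)$. Fix an arbitrary $q\in\sat{\phi}\setminus\sat{\chi(p)}$; the goal is to show that $q\in\sat{\bar\chi(p)}$. By the assumed property of $\bar\chi(p)$, it suffices to establish that $\Lset(q)\not\subseteq\Lset(p)$, so the core of the argument is to rule out the possibility $\Lset(q)\subseteq\Lset(p)$.

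Suppose, toward a contradiction, that $\Lset(q)\subseteq\Lset(p)$. Since $q\in\sat{\phi}$ and $p$ is minimal in $\sat{\phi}$, the definition of minimality forces $\Lset(q)=\Lset(p)$. In particular, this yields the reverse inclusion $\Lset(p)\subseteq\Lset(q)$, which, by the defining property of the characteristic formula (Definition~\ref{def:characteristic_formula}), means $q\in\sat{\chi(p)}$. This contradicts the choice of $q$ as an element of $\sat{\phi}\setminus\sat{\chi(p)}$. Hence $\Lset(q)\not\subseteq\Lset(p)$, and we conclude $q\in\sat{\bar\chi(p)}$ by the hypothesis on $\bar\chi(p)$.

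The argument is short, and the only delicate point I expect is bookkeeping the direction of set inclusion between $\Lset(p)$ and $\Lset(q)$: the failure of $q\in\sat{\chi(p)}$ is the statement $\Lset(p)\not\subseteq\Lset(q)$, whereas what the property of $\bar\chi(p)$ needs is $\Lset(q)\not\subseteq\Lset(p)$. The minimality of $p$ in $\sat{\phi}$ is exactly what bridges these two directions, by collapsing $\Lset(q)\subseteq\Lset(p)$ into the equality $\Lset(q)=\Lset(p)$. With minimality of $p$ and the definition of $\chi(p)$ in hand, no further machinery or additional assumption on $\Lset$ (such as the availability of Boolean connectives) is needed.
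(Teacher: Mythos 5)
Your proof is correct and follows essentially the same route as the paper: both arguments use the minimality of $p$ in $\sat{\phi}$ to collapse $\Lset(q)\subseteq\Lset(p)$ into equality and the definition of $\chi(p)$ to derive the contradiction with $q\notin\sat{\chi(p)}$, differing only in that you phrase it as a proof by contradiction while the paper combines the two non-inclusions directly.
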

 \begin{proof}
   Let us consider an element $q'$ such that $q' \in \sat{\phi}$ and $q' \notin \sat{\chi(p)}$.
   By the latter assumption, we have that $\Lset(p) \not\subseteq \Lset(q')$.
  Moreover, since $p$ is minimal in $\sat{\phi}$ and $q' \in \sat{\phi}$, it also holds that
  $\Lset(q') \not\subset \Lset(p)$. Thus, $\Lset(q') \not\subseteq \Lset(p)$, which implies
  $q' \in \{ q \in P \mid \Lset(q) \not\subseteq \Lset(p) \} \subseteq \sat{\bar\chi(p)}$,
  as we wanted to prove.
% \qed
 \end{proof}

 In the next proposition, we build on the above result, and establish some
 conditions, which are met by the logics characterizing the semantics in van Glabbeek's linear time-branching time
 spectrum (see Section~\ref{sec:spectrum}), and which immediately lead to
 decomposability.
%%%%%%%%%%%%%%%%%%%%%%%%%%%%%%%%%%%
\begin{proposition} \label{prop:requirements}
%%%%%%%%%%%%%%%%%%%%%%%%%%%%%%%%%
 Let \Lset be a logic that features at least the Boolean connective $\wedge$ and such that:
 \begin{compactenum}[$(i)$]%[label=(\roman*)]
%   \item \label{item:req_and} it ,
  \item \label{item:req_all_p_char} $\chi(p)$ exists for each $p \in P$,
  \item \label{item:req_minimal} for each consistent $\phi$, the set $\sat{\phi}$ has a minimal element, and
  \item \label{item:req_chibar} for each $p\in P$, there exists in \Lset a formula $\bar\chi(p)$ such that either
    \begin{itemize}
    \item $\sat{\bar{\chi}(p)}= P \setminus \sat{\chi(p)}$ or
    \item $p \notin \sat{\bar\chi(p)}$ and $\{ q \in P \mid \Lset(q)
      \not\subseteq \Lset(p) \} \subseteq \sat{\bar\chi(p)}$.
    \end{itemize}
 \end{compactenum}
 Then, \Lset is decomposable.
\end{proposition}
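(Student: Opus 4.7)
The plan is to reduce the statement to Proposition~\ref{prop:requirements_for_bisimulation}, which already packages the right sufficient conditions for decomposability. Concretely, given a consistent $\phi\in\Lset$, I need to exhibit a process $p\in\sat{\phi}$ and a formula $\psi_p\in\Lset$ with $p\notin\sat{\psi_p}$ and
\[
\sat{\phi}\setminus\sat{\chi(p)}\;\subseteq\;\sat{\psi_p}\;\subseteq\;\sat{\phi}.
\]
Since hypothesis~(\ref{item:req_all_p_char}) already gives condition~(\ref{item:req_all_p_char1}) of Proposition~\ref{prop:requirements_for_bisimulation}, establishing the displayed inclusion is the only thing left.

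The natural choice of $p$ is a minimal element of $\sat{\phi}$, whose existence is guaranteed by hypothesis~(\ref{item:req_minimal}). The natural choice of $\psi_p$ is $\bar\chi(p)\wedge\phi$, where $\bar\chi(p)$ is the formula supplied by hypothesis~(\ref{item:req_chibar}); the conjunction is available because $\Lset$ contains $\wedge$. With these choices, $\sat{\psi_p}\subseteq\sat{\phi}$ is immediate from the semantics of $\wedge$, and $p\notin\sat{\psi_p}$ follows because in both alternatives of~(\ref{item:req_chibar}) we have $p\notin\sat{\bar\chi(p)}$: in the first alternative, $p\in\sat{\chi(p)}$ and $\sat{\bar\chi(p)}=P\setminus\sat{\chi(p)}$; in the second, this is given directly.

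The inclusion $\sat{\phi}\setminus\sat{\chi(p)}\subseteq\sat{\psi_p}$ is where the two alternatives in hypothesis~(\ref{item:req_chibar}) need to be treated. In the first alternative, it follows at once that
\[
\sat{\phi}\setminus\sat{\chi(p)}\;\subseteq\;P\setminus\sat{\chi(p)}\;=\;\sat{\bar\chi(p)},
\]
and intersecting with $\sat{\phi}$ gives the desired inclusion into $\sat{\bar\chi(p)\wedge\phi}=\sat{\psi_p}$. In the second alternative, the minimality of $p$ in $\sat{\phi}$ is exactly the hypothesis needed to invoke Proposition~\ref{prop:barchi}, which yields $\sat{\phi}\setminus\sat{\chi(p)}\subseteq\sat{\bar\chi(p)}$; again, intersecting with $\sat{\phi}$ finishes the step.

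Having produced such a $\psi_p$ for each consistent $\phi$, Proposition~\ref{prop:requirements_for_bisimulation} applies and yields decomposability of $\Lset$. The only delicate point is the second alternative of~(\ref{item:req_chibar}), which is weaker than an actual negation of $\chi(p)$; here the minimality of $p$ in $\sat{\phi}$ is essential, since without it there could be $q\in\sat{\phi}$ with $\Lset(q)\subsetneq\Lset(p)$, and for such a $q$ neither $\chi(p)$ nor $\bar\chi(p)$ is guaranteed to hold, so $q$ would escape both $\sat{\chi(p)}$ and $\sat{\psi_p}$. Invoking minimality together with Proposition~\ref{prop:barchi} is thus the key move of the argument.
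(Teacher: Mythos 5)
Your proof is correct and uses essentially the same argument as the paper: pick a minimal $p\in\sat{\phi}$, set $\psi_p=\bar\chi(p)\wedge\phi$, and split on the two alternatives for $\bar\chi(p)$, invoking Proposition~\ref{prop:barchi} in the second case. The only difference is presentational --- you route the conclusion through Proposition~\ref{prop:requirements_for_bisimulation}, whereas the paper verifies the decomposition $\sat{\phi}=\sat{\chi(p)}\cup\sat{\psi_p}$ directly --- but the underlying content is identical.
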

 \begin{proof}
 Let $\phi \in \Lset$ be consistent. We choose a minimal element $p$ in $\sat{\phi}$,
 which exists by proviso (\ref{item:req_minimal}) of the proposition,
 and we define $\psi=\bar \chi(p)\wedge \phi$. Clearly, $p \notin \sat{\bar\chi(p) \wedge \phi}$
 since $p \notin \sat{\bar\chi(p)}$.
 We show that $\sat{\phi}=\sat{\chi(p)}\cup\sat{\psi}$. The inclusion from right to left
 immediately follows from the definition of $\psi$ and from the fact that $\sat{\phi}$ is upward closed.
 The converse inclusion is straightforward: if $\sat{\bar\chi(p)} = P \setminus
 \sat{\chi(p)}$, then it follows from the obvious observation that $\sat{\phi}
 \setminus \sat{\chi(p)} = \sat{\phi} \cap (P \setminus \sat{\chi(p)}) =
 \sat{\psi}$; otherwise, it immediately follows from
 Proposition~\ref{prop:barchi}.
% \qed
 \end{proof}
 
In order to apply the above result to prove decomposability for a logic
\Lset, we now develop a general framework ensuring
conditions~(\ref{item:req_all_p_char}) and~(\ref{item:req_minimal})
in Proposition~\ref{prop:requirements}.
To this end, we exhibit a finite characterization of the (possibly) infinite
set $\Lset(p)$ of true facts associated with every $p \in P$.
(In order to ensure condition~(\ref{item:req_chibar}) of the proposition,
we will explicitly construct the formula $\bar\chi(p)$ in each of the languages considered in Section~\ref{sec:spectrum}.)

\begin{definition}[Characterization] \label{def:characterized}
 We say that the logic $\Lset$ is \emph{characterized} by a function $\B: P\rightarrow \pow(\Lset)$ iff for each $p\in P$ we have $\emptyset \subset \B(p)\subseteq\Lset(p)$ 
and for each  $\phi\in\Lset(p)$ it holds that
$\bigcap_{\psi\in\B(p)}\sat{\psi}\subseteq\sat{\phi}$.
%$\psi\in\B(p)$ such that $\sat{\psi}\subseteq\sat{\phi}$.
% 
We say that $\Lset$ is  \emph{finitely characterized} by $\B$ iff
$\Lset$ is characterized by a function $\B$ such that $\B(p)$ is finite for each $p\in P$.
Finally, we say that $\B$ is monotonic iff $\Lset(p)\subseteq\Lset(q)$ implies $\B(p)\subseteq\B(q)$ for all $p,q\in P$.
\end{definition}

In what follows, we show that if a logic \Lset features at least the Boolean
connective $\wedge$ and it is finitely characterized by $\B$, for some monotonic $\B$,
then it fulfils conditions~(\ref{item:req_all_p_char}) and~(\ref{item:req_minimal})
in Proposition~\ref{prop:requirements}
(see Proposition~\ref{prop:characteristic} and
Corollary~\ref{cor:inf_des_sequence} to follow).

%%%%%%%%%%%%%%%%
\begin{proposition}\label{prop:characterized1}
If $\Lset$ is  characterized by $\B$, then for each $p,q \in P$,
$\B(p) \subseteq \Lset(q)$ implies
$\Lset(p) \subseteq \Lset(q)$.
\end{proposition}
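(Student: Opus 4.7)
The plan is to unfold the definitions and chase an arbitrary element. Assume $\B(p) \subseteq \Lset(q)$, and take any $\phi \in \Lset(p)$; the goal is $\phi \in \Lset(q)$, i.e.\ $q \in \sat{\phi}$.

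First I would use the characterization hypothesis applied to $p$: since $\phi \in \Lset(p)$, Definition~\ref{def:characterized} gives
\[
\bigcap_{\psi \in \B(p)} \sat{\psi} \;\subseteq\; \sat{\phi}.
\]
So it suffices to show that $q$ belongs to the left-hand side. But this is exactly what the assumption $\B(p) \subseteq \Lset(q)$ says: for every $\psi \in \B(p)$, $\psi \in \Lset(q)$, which by the definition of $\Lset(q)$ means $q \in \sat{\psi}$. Hence $q \in \bigcap_{\psi \in \B(p)} \sat{\psi} \subseteq \sat{\phi}$, which gives $\phi \in \Lset(q)$ as required.

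There is no real obstacle here; the whole statement is essentially a tautology once the definition of ``characterized by $\B$'' is unfolded. The only minor care is to note that $\B(p)$ is nonempty (as required by Definition~\ref{def:characterized}) so the intersection is well-defined over a nonempty index set, though even if it were empty the argument would still go through with the convention $\bigcap_{\psi \in \emptyset} \sat{\psi} = P$. Since $\phi \in \Lset(p)$ was arbitrary, this establishes $\Lset(p) \subseteq \Lset(q)$.
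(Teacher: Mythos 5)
Your proof is correct and follows exactly the same argument as the paper: unfold Definition~\ref{def:characterized} to get $\bigcap_{\psi\in\B(p)}\sat{\psi}\subseteq\sat{\phi}$, then use $\B(p)\subseteq\Lset(q)$ to place $q$ in that intersection. The side remark about nonemptiness of $\B(p)$ is harmless and already guaranteed by the definition.
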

 \begin{proof}
 Assume that $\B(p)\subseteq\Lset(q)$ and that $\phi\in\Lset(p)$.  As $\Lset$ is characterized by $\B$, $\bigcap_{\psi\in\B(p)}\sat{\psi}\subseteq\sat{\phi}$
 holds. Since $\B(p) \subseteq \Lset(q)$, we have $q \in \bigcap_{\psi\in\B(p)}\sat{\psi}\subseteq\sat{\phi}$, which means that $\phi\in\Lset(q)$, as we wanted to prove.
% \qed
 \end{proof}
%%%%%%%%%%%%%%%
\begin{proposition}\label{prop:characterized2}
If $\Lset$ is  characterized by $\B$, then for each $p,q \in P$,
$\B(p) \subseteq \B(q)$ implies
$\Lset(p) \subseteq \Lset(q)$.
\end{proposition}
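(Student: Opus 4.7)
The plan is to derive this as a straightforward corollary of the immediately preceding Proposition (the one I will call \emph{characterized1} here), which asserts that $\B(p)\subseteq\Lset(q)$ implies $\Lset(p)\subseteq\Lset(q)$. So the strategy is: reduce the hypothesis $\B(p)\subseteq\B(q)$ to the hypothesis $\B(p)\subseteq\Lset(q)$, and then quote the earlier result.

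The reduction is immediate from the definition of characterization itself. Indeed, Definition~\ref{def:characterized} requires that $\B(q)\subseteq\Lset(q)$ for every $q\in P$. Hence, assuming $\B(p)\subseteq\B(q)$, we can simply chain the inclusions to obtain
\[
\B(p) \;\subseteq\; \B(q) \;\subseteq\; \Lset(q).
\]
Applying Proposition~\ref{prop:characterized1} to this inclusion then yields the desired conclusion $\Lset(p)\subseteq\Lset(q)$.

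There is really no obstacle here: the proof is a two-step chain using only the defining inclusion $\B(q)\subseteq\Lset(q)$ plus the already-established Proposition~\ref{prop:characterized1}. The only thing to be careful about in the write-up is to state explicitly that this is where the definitional inclusion $\B(q)\subseteq\Lset(q)$ is being used, so that the reader sees why $\B(p)\subseteq\B(q)$ is at least as strong as the hypothesis of Proposition~\ref{prop:characterized1}. I would therefore keep the proof to two or three lines and not introduce any further machinery; in particular, monotonicity of $\B$ is not needed at this step, and neither is the finiteness of $\B(p)$.
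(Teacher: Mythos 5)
Your proof is correct and is exactly the argument the paper gives: chain $\B(p)\subseteq\B(q)\subseteq\Lset(q)$ using the definitional inclusion $\B(q)\subseteq\Lset(q)$, then invoke Proposition~\ref{prop:characterized1}. Nothing is missing, and your observation that neither monotonicity nor finiteness is needed here is also accurate.
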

 \begin{proof}
 Assume that $\B(p)\subseteq\B(q)$. As $\B(q)\subseteq\Lset(q)$ the result follows by Proposition \ref{prop:characterized1}.
% \qed
 \end{proof}

Every finitely
characterized logic featuring at least the Boolean connective
$\land$ enjoys
the pleasing property that every $p \in P$
admits a characteristic formula in $\Lset$.
%%%%%%%%%%%%%%%%%
\begin{proposition} \label{prop:characteristic}
 Let $\Lset$ be a logic  that features at least the Boolean connective $\wedge$ and that is finitely characterized by $\B$. Then, each $p \in P$ has a characteristic formula in $\Lset$ given by  $\chi(p) = \bigwedge_{\phi \in \B(p)} \phi$.
\end{proposition}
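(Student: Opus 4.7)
The plan is straightforward and relies entirely on the machinery already in place. First I would check that the proposed formula $\chi(p) = \bigwedge_{\phi \in \B(p)} \phi$ is actually a well-formed formula of $\Lset$: since $\Lset$ is finitely characterized by $\B$, the set $\B(p)$ is finite, and by the definition of characterization it is non-empty ($\emptyset \subset \B(p)$); together with the assumption that $\Lset$ features the Boolean connective $\wedge$, this guarantees $\chi(p) \in \Lset$.

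Next I would verify that $\chi(p)$ satisfies the characteristic formula property of Definition~\ref{def:characteristic_formula}, namely that for every $q \in P$, $q \in \sat{\chi(p)}$ if and only if $\Lset(p) \subseteq \Lset(q)$. For the forward direction, suppose $q \in \sat{\chi(p)}$. By the standard semantics of $\wedge$, this means $q \in \sat{\phi}$ for every $\phi \in \B(p)$, which is equivalent to saying $\B(p) \subseteq \Lset(q)$. Applying Proposition~\ref{prop:characterized1} directly yields $\Lset(p) \subseteq \Lset(q)$.

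For the converse direction, suppose $\Lset(p) \subseteq \Lset(q)$. Since $\B(p) \subseteq \Lset(p)$ by the definition of characterization, we obtain $\B(p) \subseteq \Lset(q)$, which unfolds to $q \in \sat{\phi}$ for every $\phi \in \B(p)$, and hence $q \in \sat{\chi(p)}$.

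I do not expect any real obstacle here: the statement is essentially a packaging of Proposition~\ref{prop:characterized1} together with the semantics of finite conjunction. The only subtle point is making sure $\B(p)$ is both finite (so that the conjunction is a legal formula) and non-empty (so that the conjunction is not vacuous), but both are provided immediately by the hypothesis that $\Lset$ is finitely characterized by $\B$ and by the definitional requirement $\emptyset \subset \B(p)$.
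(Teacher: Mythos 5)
Your proposal is correct and follows essentially the same route as the paper: both directions reduce to the observation that $q\in\sat{\chi(p)}$ iff $\B(p)\subseteq\Lset(q)$, with the forward direction closed off by Proposition~\ref{prop:characterized1}. The only cosmetic difference is that the paper states the goal as $\sat{\chi(p)}=\clos{p}$ and invokes upward-closedness (Proposition~\ref{prop:upward_closure}) for the easy inclusion, whereas you inline that one-line argument directly from $\B(p)\subseteq\Lset(p)\subseteq\Lset(q)$.
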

%%%%%%%%%%%%%%%%%%%%
 \begin{proof}
 Assume that $p\in P$.
  First, we observe that $\bigwedge_{\phi \in \B(p)} \phi$
  is a well-formed formula, as $\B(p)$ is finite. 
  Next, we prove that $\sat{\bigwedge_{\phi \in \B(p)} \phi} = \clos{p}$ holds,
  from which the thesis immediately follows by Proposition~\ref{prop:general_properties_of_formulae}(\ref{item:char1}).
 
 Since $\B(p) \subseteq \Lset(p)$, it follows that $p\in\sat{\bigwedge_{\phi \in \B(p)} \phi}$. As $\sat{\bigwedge_{\phi \in \B(p)} \phi}$ is upwards closed, we have that $\clos{p}\subseteq\sat{\bigwedge_{\phi \in \B(p)} \phi}$.
 
  Towards proving the converse inclusion,
 %  that$\sat{\bigwedge_{\phi \in \B(p)} \phi}\subseteq \clos{p}$,
  let us assume that
  $q \in \sat{\bigwedge_{\phi \in \B(p)} \phi}$.
  Since $\sat{\bigwedge_{\phi \in \B(p)} \phi} =
  \bigcap_{\phi \in \B(p)} \sat{\phi}$, we have
  that $q \in\sat{\phi}$, for each $\phi \in \B(p)$.
  Thus, $\B(p) \subseteq \Lset(q)$ holds. From Proposition \ref{prop:characterized1},
  it follows that $\Lset(p) \subseteq \Lset(q)$ and, by the definition of
  upwards closure, we conclude that $q \in \clos{p}$.
% \qed
 \end{proof}
 
\begin{proposition}\label{prop:inf_des_sequence}
 Let $\Lset$ be a logic that is finitely characterized by $\B$, for some monotonic $\B$.
 Then, for each $\phi \in \Lset$ and $q \in \sat{\phi}$, there exists some $p
 \in P$ such that $\Lset(p) \subseteq \Lset(q)$ and $p$ is minimal in
 $\sat{\phi}$.
\end{proposition}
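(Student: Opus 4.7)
The plan is to start from the given $q$ and descend along the logical preorder inside $\sat{\phi}$ as long as possible; the finiteness of $\B(q)$ will force the descent to terminate at a minimal element. Concretely, I will build a sequence $p_0, p_1, p_2, \ldots$ of elements of $\sat{\phi}$ such that $\Lset(p_0) \supseteq \Lset(p_1) \supseteq \Lset(p_2) \supseteq \cdots$, each inclusion being strict, and all satisfying $\Lset(p_i) \subseteq \Lset(q)$.

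First I set $p_0 := q$, which lies in $\sat{\phi}$ by hypothesis and trivially satisfies $\Lset(p_0) \subseteq \Lset(q)$. Given $p_i \in \sat{\phi}$ with $\Lset(p_i) \subseteq \Lset(q)$, if $p_i$ is minimal in $\sat{\phi}$ I stop and take $p := p_i$; otherwise the definition of minimality yields some $p_{i+1} \in \sat{\phi}$ with $\Lset(p_{i+1}) \subseteq \Lset(p_i)$ and $\Lset(p_{i+1}) \neq \Lset(p_i)$, hence $\Lset(p_{i+1}) \subsetneq \Lset(p_i) \subseteq \Lset(q)$, and the invariant is preserved.

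The core step is to show that this construction must halt. By monotonicity of $\B$, $\Lset(p_{i+1}) \subseteq \Lset(p_i)$ gives $\B(p_{i+1}) \subseteq \B(p_i)$; moreover this inclusion is strict, because if also $\B(p_i) \subseteq \B(p_{i+1})$ held, then Proposition~\ref{prop:characterized2} would force $\Lset(p_i) \subseteq \Lset(p_{i+1})$, contradicting $\Lset(p_{i+1}) \subsetneq \Lset(p_i)$. Hence $\B(p_0) \supsetneq \B(p_1) \supsetneq \B(p_2) \supsetneq \cdots$ is a strictly descending chain of subsets of the finite set $\B(q) = \B(p_0)$, so it has length at most $|\B(q)|$ and the process terminates with a minimal $p \in \sat{\phi}$ satisfying $\Lset(p) \subseteq \Lset(q)$. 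The only real subtlety, and the crux of the argument, is the passage from a strict drop in $\Lset$ to a strict drop in $\B$: monotonicity supplies one containment, the characterization property (via Proposition~\ref{prop:characterized2}) supplies the converse whenever the $\B$-values coincide, and together they yield the descending chain condition on the finite sets $\B(p_i)$.
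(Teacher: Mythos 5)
Your proof is correct and uses essentially the same ingredients as the paper's: a descending chain in $\sat{\phi}$, monotonicity of $\B$ to transfer the descent to the sets $\B(p_i)$, Proposition~\ref{prop:characterized2} to rule out stabilization of the $\B$-values, and finiteness of $\B(q)$ to force termination. The only difference is presentational --- you argue directly that the chain of $\B$-sets is strictly decreasing (even getting the bound $|\B(q)|$ on its length), whereas the paper argues by contradiction that an infinite weakly decreasing chain of subsets of the finite set $\B(q)$ must stabilize --- so this is the same proof in a slightly sharper, constructive form.
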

\begin{proof}
% Assume that $\phi\in\Lset$ and that $q\in\sat{\phi}$.
%\marginpar{\scriptsize We should do something to guarantee that elements of $\rep(\phi)$
%are pairwise incomparable. As it is, two minimal elements $p$ and $q$ such that
%$p \equiv q$ would belong to $\rep(\phi)$. Maybe we should change
%Definition 2 accepting elements $p$ and $q$ such that $p \equiv q$,
%but not elements $r$ and $s$ such that $r < s$.}
% We will prove that, for each $\phi\in\Lset$ and $q\in\sat{\phi}$, there is an element $p$ that is minimal in $\sat{\phi}$ and such that $\Lset(p)\subseteq\Lset(q)$.
% Let $\phi \in \Lset$ be consistent.
Towards a contradiction, let us assume that there exist $\phi \in \Lset$ and $q\in\sat{\phi}$
such that, for each $p \in \sat{\phi}$, with $\Lset(p)\subseteq\Lset(q)$,
$p$ is not minimal in $\sat{\phi}$.
Notice that $q$ is not minimal in $\sat{\phi}$ itself.

Then, there exists an infinite sequence $q_0, q_1, \ldots \in \sat{\phi}$, with $q = q_0$, such that $\Lset(q_{i+1}) \subsetneq \Lset(q_{i})$ for each $i \geq 0$. As $\B$ is monotonic, we have that $\B(q_{i+1})\subseteq\B(q_{i})$ for each $i \geq 0$. Since $\B(q)$ is finite, there exists some $k\geq 0$ such that $\B(q_k)=\B(q_{k+\ell})$ for each $\ell > 0$.
By Proposition~\ref{prop:characterized2}, $\Lset(q_k) = \Lset(q_{k+\ell})$ holds for each $\ell > 0$.
This contradicts the fact that $\Lset(q_{i+1}) \subsetneq \Lset(q_{i})$ for each $i \geq 0$,
which means that for each $q \in \sat{\phi}$ there exists some $p \in \sat{\phi}$, with $\Lset(p)\subseteq\Lset(q)$,
such that $p$ is minimal in $\sat{\phi}$.
% thus proving the thesis.
% existence of an element $p$ that is minimal in $\sat{\phi}$ and such that $\Lset(p)\subseteq\Lset(q)$. 
%\qed
\end{proof}

\begin{corollary}\label{cor:inf_des_sequence}
 Let $\Lset$ be a logic that is finitely characterized by $\B$, for some monotonic $\B$.
 Then, for each consistent formula $\phi\in\Lset$, $\sat{\phi}$ has a minimal element.
\end{corollary}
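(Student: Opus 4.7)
The plan is to derive the corollary as an immediate specialization of Proposition~\ref{prop:inf_des_sequence}. The proposition already does essentially all the work: it asserts that for \emph{every} $q \in \sat{\phi}$ there is some $p \in \sat{\phi}$ with $\Lset(p) \subseteq \Lset(q)$ that is minimal in $\sat{\phi}$. To obtain the corollary, I only need to produce some element of $\sat{\phi}$ to which the proposition can be applied.

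Concretely, I would argue as follows. Assume $\phi \in \Lset$ is consistent. By the definition of consistency, $\sat{\phi} \neq \emptyset$, so we may fix an arbitrary $q \in \sat{\phi}$. Applying Proposition~\ref{prop:inf_des_sequence} to $\phi$ and this $q$ yields some $p \in P$ such that $\Lset(p) \subseteq \Lset(q)$ and $p$ is minimal in $\sat{\phi}$. This $p$ is the required minimal element, which establishes the claim.

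There is no real obstacle here: all the technical content (the use of finiteness of $\B(p)$, the monotonicity of $\B$, and the descending-chain argument via Proposition~\ref{prop:characterized2}) is already encapsulated in the proposition that the corollary invokes. The only tiny substantive observation is the translation of ``consistent'' into ``$\sat{\phi}$ is non-empty,'' which is immediate from the definition given in Section~\ref{sec:preliminary}.
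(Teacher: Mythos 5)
Your proof is correct and follows exactly the paper's own route: the paper also derives the corollary as an immediate consequence of Proposition~\ref{prop:inf_des_sequence}, and your explicit step of using consistency to pick some $q \in \sat{\phi}$ is just the detail the paper leaves implicit.
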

\begin{proof}
The thesis immediately follows from Proposition~\ref{prop:inf_des_sequence}.
%\qed
\end{proof}

The next result will be useful in the next section.

\begin{corollary}\label{cor:rep}
Let $\Lset$ be a logic that features at least the Boolean connective $\wedge$ and that is finitely characterized by $\B$, for some monotonic $\B$.
Then, each formula $\phi\in\Lset$ is represented (uniquely up to equivalence) by a set $\rep(\phi)$ (in the sense of Definition~\ref{def:representation}).
\end{corollary}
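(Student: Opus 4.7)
The plan is to take $\rep(\phi)$ to be a transversal of the logical equivalence classes of the minimal elements of $\sat{\phi}$, i.e., one representative per $\equiv_\Lset$-class of minimal processes in $\sat{\phi}$ (and $\rep(\phi)=\emptyset$ in case $\phi$ is inconsistent). All the required machinery is in place: Proposition~\ref{prop:characteristic} ensures that $\chi(p)$ exists and $\clos{p}=\sat{\chi(p)}$ for every $p\in P$; Corollary~\ref{cor:inf_des_sequence} guarantees a minimal element of $\sat{\phi}$ whenever $\phi$ is consistent; and, crucially, Proposition~\ref{prop:inf_des_sequence} supplies for each $q\in\sat{\phi}$ a minimal $p\in\sat{\phi}$ with $\Lset(p)\subseteq\Lset(q)$.

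I would then verify the two clauses of Definition~\ref{def:representation}. Pairwise incomparability is essentially by construction: if $p,p'\in\rep(\phi)$ satisfy $\Lset(p)\subseteq\Lset(p')$, then minimality of $p'$ in $\sat{\phi}$ (together with $p\in\sat{\phi}$) forces $\Lset(p)=\Lset(p')$, so $p\equiv_\Lset p'$ and hence $p=p'$ by the choice of representatives. The inclusion $\bigcup_{p\in\rep(\phi)}\clos{p}\subseteq\sat{\phi}$ is immediate from $\rep(\phi)\subseteq\sat{\phi}$ combined with upward closure of $\sat{\phi}$ (Proposition~\ref{prop:upward_closure}(\ref{item:upclose})); the reverse inclusion is the only real step, and Proposition~\ref{prop:inf_des_sequence} delivers it directly, since any $q\in\sat{\phi}$ lies in $\clos{p}$ for some minimal $p\in\sat{\phi}$, and replacing $p$ by its chosen $\equiv_\Lset$-representative preserves $\clos{p}$.

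Uniqueness up to equivalence is then immediate from the unnumbered proposition stated right after Definition~\ref{def:representation}, which asserts that any two representations of the same formula coincide pointwise modulo $\equiv_\Lset$. The crux of the whole argument is thus Proposition~\ref{prop:inf_des_sequence}: without finiteness and monotonicity of $\B$, infinite strictly descending chains inside $\sat{\phi}$ could prevent every $q\in\sat{\phi}$ from being dominated by a minimal element, and the strategy of indexing a representation by minimals would break down. Everything else is bookkeeping, so I do not anticipate any further obstacle.
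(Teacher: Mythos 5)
Your proposal is correct and follows essentially the same route as the paper: both construct $\rep(\phi)$ as a set of pairwise-incomparable representatives of the $\equiv$-classes of minimal elements of $\sat{\phi}$, obtain $\sat{\phi}\subseteq\bigcup_{p\in\rep(\phi)}\clos{p}$ from Proposition~\ref{prop:inf_des_sequence}, and the reverse inclusion from upward-closedness of $\sat{\phi}$. Your additional remarks on pairwise incomparability and on uniqueness via the unnumbered proposition after Definition~\ref{def:representation} only make explicit what the paper leaves implicit.
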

 \begin{proof}
 Let $\minimals{\phi} = \{ p \in P \mid p \text{ is minimal in } \sat{\phi} \}$.
 Notice that the elements of
 \minimals{\phi} are not necessarily pairwise incomparable (there can exist
 $p,q \in \minimals{\phi}$ such that $p \neq q$ and $p \equiv q$).
 
 Now, we define $\rep(\phi)$ as any subset of \minimals{\phi} such that
 \begin{inparaenum}[\it (i)]
  \item elements in $\rep(\phi)$ are pairwise incomparable and
  \item for each $p \in \minimals{\phi}$ there exists some $q \in \rep(\phi)$
  with $p \equiv q$.
 \end{inparaenum}
 Intuitively, $\rep(\phi)$ contains a representative element for each
 equivalence class of minimal elements in $\sat{\phi}$ modulo $\equiv$.
 First of all, we observe that, by Proposition~\ref{prop:characteristic},
 $\chi(p)$ exists for each $p \in P$.
 We show that $\sat{\phi} = \bigcup_{p\in\rep(\phi)}\clos{p}$, from which the
 thesis immediately follows.
 
 % =\{p\in\sat{\phi} \mid p\text{ is minimal in }\sat{\phi}\}$. 
 From Proposition~\ref{prop:inf_des_sequence}, we have that each $q \in \sat{\phi}$ belongs to $\clos{p}$,
 for some $p \in \rep(\phi)$, and thus $\sat{\phi} \subseteq \bigcup_{p\in\rep(\phi)}\clos{p}$.
 Moreover, as $\sat{\phi}$ is upwards closed, we get that $\sat{\phi} \supseteq \bigcup_{p\in\rep(\phi)}\clos{p}$.
 Therefore, we can conclude that $\sat{\phi} = \bigcup_{p\in\rep(\phi)}\clos{p}$,
 which completes the proof.
% \qed
 \end{proof}

%%%%%%%%%%%%%%%%%%
%%%%%%%%%%%%%%%

% \begin{proposition}\label{prop:characterized_all_together}
% The following statements hold.
% % If $\Lset$ is  characterized by $\B$, the following holds.
% \begin{compactenum}[$(i)$]
% \item\label{item:characterized2-1} If $\Lset$ is  characterized by $\B$, then for each $p,q \in P$, $\B(p) \subseteq \B(q)$ implies $\Lset(p) \subseteq \Lset(q)$.
% \item\label{item:characterized2-2} If $\Lset$ features at least the Boolean connective $\wedge$ and is finitely characterized by $\B$, then each $p \in P$ has a characteristic formula in $\Lset$ given by  $\chi(p) = \bigwedge_{\phi \in \B(p)} \phi$.
% \item\label{item:characterized2-3} If $\Lset$ is finitely characterized by $\B$, for some monotonic $\B$, then for each consistent $\phi\in\Lset$, $\sat{\phi}$ has a minimal element.
% \end{compactenum}
% \end{proposition}
%%%%%%%%%%%%%%%

We summarize the results we provided so far in the following corollary. 

\begin{corollary} \label{cor:decomposability_condition_for_spectrum}
 Let $\Lset$ be a logic that features at least the Boolean connective $\wedge$
 and such that:
 \begin{compactenum}[$(i)$]%[label=(\roman*)]
  \item \label{item:final_req_characteristic} \Lset is finitely characterized by $\B$, for some monotonic $\B$, and
  \item \label{item:final_req_chibar} for each $\chi(p)$, there exists in \Lset a formula $\bar\chi(p)$ such that either 
  \begin{itemize}
 \item $\sat{\bar\chi(p)}=P \setminus \sat{\chi(p)}$, or
 \item   $p \notin \sat{\bar\chi(p)}$ and $\{ q \in P \mid \Lset(q) \not\subseteq \Lset(p) \} \subseteq \sat{\bar\chi(p)}$.
 \end{itemize}
 \end{compactenum}
Then, \Lset is decomposable.
\end{corollary}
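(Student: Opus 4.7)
The plan is to show that the hypotheses of the corollary entail the three hypotheses of Proposition~\ref{prop:requirements}, from which the decomposability of $\Lset$ will immediately follow. The corollary gives us $\wedge$ in the logic plus two conditions, and these should combine with previously established propositions in a nearly mechanical way; the work has essentially been distributed over the preceding results, so the proof reduces to a short assembly argument.

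First I would unpack hypothesis~(\ref{item:req_all_p_char}) of Proposition~\ref{prop:requirements}, namely that $\chi(p)$ exists for each $p \in P$. This is precisely the content of Proposition~\ref{prop:characteristic}: since $\Lset$ features $\wedge$ and is finitely characterized by $\B$ (which follows from monotonicity of $\B$ being a specialization of the finite characterization hypothesis~(\ref{item:final_req_characteristic})), each $p \in P$ admits a characteristic formula $\chi(p) = \bigwedge_{\phi \in \B(p)} \phi$. The fact that $\B(p)$ is finite is what makes this conjunction well-formed.

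Next I would discharge hypothesis~(\ref{item:req_minimal}) of Proposition~\ref{prop:requirements}, namely that every consistent formula has a minimal element in its semantics. This is exactly Corollary~\ref{cor:inf_des_sequence}, which requires only that $\Lset$ be finitely characterized by a monotonic $\B$ --- again, this is hypothesis~(\ref{item:final_req_characteristic}) of the corollary verbatim. Finally, hypothesis~(\ref{item:req_chibar}) of Proposition~\ref{prop:requirements} is essentially identical to hypothesis~(\ref{item:final_req_chibar}) of the corollary: both give either an exact complement $\bar\chi(p)$ of $\chi(p)$, or a formula $\bar\chi(p)$ that excludes $p$ and contains every process not in the upward closure of $p$. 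Having verified all three premises, I would invoke Proposition~\ref{prop:requirements} to conclude that $\Lset$ is decomposable.

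Since every step is a direct citation of earlier results, there is no real obstacle to overcome; the main care is simply to match the hypothesis labels correctly and to observe that the slight mismatch between ``$\chi(p)$ exists for each $p$'' (used in Proposition~\ref{prop:requirements}) and the formulation in hypothesis~(\ref{item:final_req_chibar}) of the corollary (which quantifies over ``each $\chi(p)$'') is harmless, because the existence of $\chi(p)$ for every $p$ has already been secured in the first step of the argument.
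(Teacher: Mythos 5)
Your proof is correct and follows exactly the paper's own argument: both establish conditions (i) and (ii) of Proposition~\ref{prop:requirements} via Proposition~\ref{prop:characteristic} and Corollary~\ref{cor:inf_des_sequence} respectively, note that condition (iii) coincides with hypothesis (\ref{item:final_req_chibar}) of the corollary, and then invoke Proposition~\ref{prop:requirements}. No issues.
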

\begin{proof}
 By proviso (\ref{item:final_req_characteristic}) of the corollary and by
 Proposition~\ref{prop:characteristic}, $\chi(p)$ exists for each
 $p \in P$ (condition~(\ref{item:req_all_p_char}) in Proposition~\ref{prop:requirements}).
 By Corollary~\ref{cor:inf_des_sequence}, there exists an element $p \in P$ that is
 minimal in $\sat{\phi}$, for each consistent formula $\phi \in \Lset$
 (condition~(\ref{item:req_minimal}) in Proposition~\ref{prop:requirements}).
 The thesis immediately follows from Proposition~\ref{prop:requirements}, given
 that proviso (\ref{item:final_req_chibar}) of the corollary is equivalent to
 condition~(\ref{item:req_chibar}) in Proposition~\ref{prop:requirements}.
% 
% Now, if the formula $\bar \chi(p)$ (whose existence is guaranteed by
% proviso (\ref{item:final_req_chibar}) of the corollary) is
% such that $\sat{\bar\chi(p)}=\sat{\chi(p)}^c$, the thesis immediately follows
% from Corollary~\ref{cor:negation}.
% 
% If, on the other hand, $\bar \chi(p)$ is such that
% $p \notin \sat{\bar\chi(p)}$ and
% $\{ q \in P \mid \Lset(q) \not\subseteq \Lset(p) \} \subseteq \sat{\bar\chi(p)}$
% (condition~(\ref{item:req_chibar}) in Proposition~\ref{prop:requirements}),
% the thesis follows from Proposition~\ref{prop:requirements}.
%\qed
\end{proof}

 \begin{remark} \label{rem:disjunction}
It is worth pointing out that the Boolean connective $\wedge$ plays
a minor role in (the proof of) Proposition~\ref{prop:characteristic}
(and thus Corollaries~\ref{cor:rep} and~\ref{cor:decomposability_condition_for_spectrum}).
% \ref{prop:characterized_all_together}(\ref{item:characterized2-2}).
Indeed, it is applied to formulae in $\B(p)$ only.
Thus, such a result can be used also to deal with logics that allow for a
limited use of such a connective, such as the logic for trace semantics and
other linear-time semantics (see Section~\ref{sec:linear-time-spectrum}).
 \end{remark}

% \begin{proof}
%  The claim follows from Proposition~\ref{prop:requirements},
%  Proposition~\ref{prop:characteristic}, and Proposition~\ref{prop:inf_des_seq}.
% \qed
% \end{proof}

%Now we will look at some applications for our general results. This will be the subject of the remainder of the paper.

%Sections~\ref{sec:working_finite}-\ref{}

 As another path towards decomposability, we show the following result, which we
 will use in Section~\ref{sec:finitely_many_processes} to deal with logical
 settings requiring a special treatment.

\begin{proposition} \label{prop:decomposability-for-finite-modal-covariant}
  If $\Lset$ is a logic that
  \begin{inparaenum}[$(i)$]
  \item \label{item:or} features the Boolean connective $\vee$ and such that
  \item \label{item:finitely-represented} every $\phi \in \Lset$ is finitely
    represented and
  \item \label{item:characteristic} a characteristic formula exists in \Lset for
    every process $p \in P$,
  \end{inparaenum}
  then \Lset is decomposable.
\end{proposition}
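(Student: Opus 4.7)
The plan is to argue by a simple case analysis on the size of a finite graphical representation of an arbitrary consistent formula, using disjunction to glue together the ``leftover'' minimal processes into the auxiliary formula $\psi_p$ required by the definition of decomposability.

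Fix any consistent $\phi \in \Lset$. By hypothesis~(\ref{item:finitely-represented}), there is a finite set $S \subseteq P$ of pairwise incomparable processes that represents $\phi$, so
\[
\sat{\phi} \;=\; \bigcup_{p \in S} \clos{p}.
\]
Since $\phi$ is consistent, $S$ is non-empty; write $S = \{p_1,\ldots,p_n\}$ with $n \geq 1$. By hypothesis~(\ref{item:characteristic}), the characteristic formula $\chi(p_i)$ exists in $\Lset$ for every $i$, and by Proposition~\ref{prop:general_properties_of_formulae}(\ref{item:char1}) we have $\sat{\chi(p_i)} = \clos{p_i}$.

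If $n = 1$, then $\sat{\phi} = \clos{p_1} = \sat{\chi(p_1)}$, so $\phi$ is itself characteristic for $p_1$, and there is nothing more to prove. Otherwise $n \geq 2$, and I would pick $p = p_1$ and set
\[
\psi_p \;=\; \chi(p_2) \vee \cdots \vee \chi(p_n),
\]
which is a well-formed formula of $\Lset$ by hypothesis~(\ref{item:or}) and the finiteness of $S$. Then $\sat{\psi_p} = \bigcup_{i=2}^{n} \clos{p_i}$, whence
\[
\sat{\chi(p_1)} \cup \sat{\psi_p} \;=\; \bigcup_{i=1}^{n} \clos{p_i} \;=\; \sat{\phi}.
\]

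The only remaining thing to check—and the one subtlety of the argument—is that $p_1 \notin \sat{\psi_p}$, i.e.\ $p_1 \notin \clos{p_i}$ for each $i \geq 2$. This is where the \emph{pairwise incomparability} of the elements of a graphical representation (Definition~\ref{def:representation}) is essential: if $p_1$ belonged to $\clos{p_i}$ for some $i \geq 2$, then $\Lset(p_i) \subseteq \Lset(p_1)$, contradicting the incomparability of $p_1$ and $p_i$. Hence $\phi$ is decomposable in the sense of Definition~\ref{def:decomposability}, and since $\phi$ was an arbitrary consistent formula, $\Lset$ is decomposable. The main conceptual step is just recognising that hypotheses~(\ref{item:or})--(\ref{item:characteristic}) together turn any graphical representation directly into the disjunctive witness required by decomposability; no heavier machinery (such as finite characterisation via a bounding function $\B$) is needed.
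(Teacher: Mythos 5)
Your proof is correct and follows essentially the same route as the paper's: both express $\sat{\phi}$ as the finite union of the $\sat{\chi(p_i)}$ over a graphical representation, handle the singleton case by noting $\phi$ is then characteristic, and otherwise take $\psi_p$ to be the disjunction of the remaining characteristic formulae. Your explicit justification that $p_1\notin\sat{\psi_p}$ via pairwise incomparability is a detail the paper's proof leaves implicit, but the argument is the same.
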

\begin{proof}
  Let $\phi \in \Lset$ be a consistent formula which is not characteristic for
  any $p \in P$.
  We show that $\phi$ is decomposable, that is,
  $\sat{\phi}=\sat{\chi(p)}\cup\sat{\psi_p}$ for some $p\in P$ and
  $\psi_p\in\Lset$, with $p\not\in\sat{\psi_p}$.

  By assumption~(\ref{item:finitely-represented}), $\phi$ is finitely
  represented by some set $\rep(\phi) \subseteq P$; by
  assumption~(\ref{item:characteristic}), $\chi(p)$ exists in \Lset for all $p \in
  P$; thus, by
  Proposition~\ref{prop:general_properties_of_formulae}(\ref{item:char1}), we
  have:

  \smallskip

  {\centering

    $\sat{\phi} = \bigcup_{p \in \rep(\phi)} \clos{p} = \bigcup_{p \in
      \rep(\phi)} \sat{\chi(p)}$.

  }

  \smallskip

  Since $\phi$ is consistent, $\sat{\phi} \neq \emptyset$ and therefore
  $\rep(\phi) \neq \emptyset$.
  Pick some $p \in \rep(\phi)$.
  Notice that $|\rep(\phi)| > 1$ otherwise $\phi$ would be characteristic for
  $p$.
  Thus, using assumption~(\ref{item:or}), we have:

  \smallskip

  {\centering

    $
    \begin{array}{ll}
      \sat{\phi}
      & = \sat{\chi(p)} \cup \bigcup_{q \in \rep(\phi) \setminus \{ p \}}
        \sat{\chi(q)} \\
      & = \sat{\chi(p)} \cup \sat{\bigvee_{q \in \rep(\phi) \setminus \{ p \}}
        \chi(q)}.
    \end{array}$

  }

  \smallskip

  \noindent
  Since $p \notin \sat{\bigvee_{q \in \rep(\phi) \setminus \{ p \}} \chi(q)}$,
  we can conclude that \Lset is decomposable.
%
%  \qed
\end{proof}

In the remainder of the paper we will examine some applications for our general
results; in particular, we will use them to prove characterization by primality
for several well-known process semantics.
First, we consider some cases that can be dealt with using
Proposition~\ref{prop:decomposability-for-finite-modal-covariant}, and then we
analyze semantics in van Glabbeek's spectrum.

%%% Local Variables:
%%% mode: latex
%%% TeX-master: "logical_graphical-ext"
%%% End:

\section{Applications to finitely-represented logics}
%
%\section{Application to finitely many processes, modal refinement, and
%  covariant-contravariant simulation }
%
\label{sec:finitely_many_processes}

As a first application of our results, we investigate three kinds of logics:
\begin{enumerate}
\item the set of processes $P$ is finite and the logic $\Lset$ features at least
  the Boolean connectives $\wedge$ and $\vee$;

\item the logic characterizing modal refinement semantics~\cite{BoudolL1992};

\item the logic characterizing covariant-contravariant simulation
  semantics~\cite{AcetoEtAl11b}.
\end{enumerate}
Note that in case 1 although $P$ itself is finite, it can contain processes with
infinite behaviours, e.g., when $p \in P$ represents a labelled transition
system with loops.
To deal with these logics we use
Proposition~\ref{prop:decomposability-for-finite-modal-covariant} and thus we
show that the logics satisfy its hypothesis.

\begin{proposition} \label{prop:finitely_represented_finite_processes}
  Let $\Lset$ be a logic that features at
  least the Boolean connective $\wedge$ and is interpreted over a finite set $P$.
  Then:
  \begin{compactenum}[$(a)$]
  \item \label{item:characterized}
    $\Lset$ is finitely characterized by $\B$, for some monotonic $\B$, and
  \item \label{item:represented}
    every formula $\phi \in \Lset$ is finitely represented.
  \end{compactenum}
\end{proposition}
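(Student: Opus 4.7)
The plan is to verify part (a) by constructing a suitable monotonic $\B$ that exploits the finiteness of $P$, and then to derive part (b) as a direct consequence of (a) via Corollary~\ref{cor:rep}, using the additional observation that any subset of a finite $P$ is itself finite.

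For part (a), since $P$ is finite the powerset $\pow(P)$ is finite, hence the collection $\mathcal{C} = \{\sat{\phi} \mid \phi \in \Lset\} \subseteq \pow(P)$ of \emph{expressible} semantic sets is finite. For each $A \in \mathcal{C}$, I would pick a canonical representative formula $\phi_A \in \Lset$ with $\sat{\phi_A} = A$, and then define $\B(p) = \{\phi_A \mid A \in \mathcal{C},\ p \in A\}$. The key idea is to index the canonical formulae uniformly by the semantic sets $A \in \mathcal{C}$, rather than per process, so that compatibility between $\B(p)$ and $\B(q)$ is built in from the start.

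It then remains to check the four required conditions. Finiteness of $\B(p)$ follows from finiteness of $\mathcal{C}$; non-emptiness follows from the standing assumption $\Lset(p) \neq \emptyset$, since any $\phi \in \Lset(p)$ has $\sat{\phi} = A \in \mathcal{C}$ with $p \in A$, whence $\phi_A \in \B(p)$. The containment $\B(p) \subseteq \Lset(p)$ holds by construction: $\phi_A \in \B(p)$ means $p \in A = \sat{\phi_A}$. For the characterization property, given any $\phi \in \Lset(p)$ one sets $A = \sat{\phi}$, so that $\phi_A \in \B(p)$ and $\sat{\phi_A} = \sat{\phi}$, which yields $\bigcap_{\psi \in \B(p)} \sat{\psi} \subseteq \sat{\phi_A} = \sat{\phi}$. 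For monotonicity, if $\Lset(p) \subseteq \Lset(q)$ and $\phi_A \in \B(p)$, then $\phi_A \in \Lset(p) \subseteq \Lset(q)$, hence $q \in \sat{\phi_A} = A$, that is, $\phi_A \in \B(q)$.

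For part (b), having established (a) and using the hypothesis that $\Lset$ features at least $\wedge$, Corollary~\ref{cor:rep} yields a representing set $\rep(\phi) \subseteq P$ for each $\phi \in \Lset$. Since $P$ is finite, $\rep(\phi)$ is automatically finite, as required. I do not foresee any serious obstacle; the only mildly delicate point is the coherent choice of canonical formulae $\phi_A$, and this is handled uniformly by indexing over $\mathcal{C}$ rather than over individual $\Lset(p)$'s, which is precisely what makes monotonicity hold for free.
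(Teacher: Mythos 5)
Your proof is correct and follows essentially the same route as the paper: the paper also observes that $\Lset$ is finite up to logical equivalence (precisely because $\pow(P)$ is finite), picks a set $\Lset^{\fin}$ of representatives of the equivalence classes, and sets $\B(p)=\Lset(p)\cap\Lset^{\fin}$, which coincides with your $\{\phi_A \mid p\in A\}$; part (b) is likewise obtained from Corollary~\ref{cor:rep} plus finiteness of $P$. Your write-up just spells out the verification of the four conditions that the paper dismisses as ``easy to see.''
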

\begin{proof}
  \begin{compactenum}[$(a)$]
  \item If $P$ is finite, so is $\Lset$, up to logical equivalence.
  Let $\Lset^{\fin}$ be a set of representatives of the equivalence classes of
  $\Lset$ modulo logical equivalence, and define
  $\B^{\fin}(p)=\Lset^{\fin}(p)=\Lset(p)\cap\Lset^{\fin}$, for each $p \in P$.
  It is easy to see that \Lset is finitely characterized by $\B^{\fin}$,
  according to Definition~\ref{def:characterized}.
  Moreover, $\B^{\fin}$ is clearly monotonic.
\item The claim immediately follows from Corollary~\ref{cor:rep}, making use of
  Proposition~\ref{prop:finitely_represented_finite_processes}(\ref{item:characterized})
  and observing that $\rep(\phi) \subseteq P$ is finite for each $\phi \in \Lset$.
  \qedhere
  \end{compactenum}
\end{proof}

\begin{proposition} \label{prop:finitely_represented_implies_decomposability}
  Every logic $\Lset$ that is interpreted over a finite set $P$ and that
  features at least the Boolean connectives $\wedge$ and $\vee$ is decomposable.
\end{proposition}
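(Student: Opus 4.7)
The plan is to verify the three hypotheses of Proposition~\ref{prop:decomposability-for-finite-modal-covariant} and then invoke it directly. That proposition requires (i) the availability of $\vee$ in the logic, (ii) that every formula be finitely represented, and (iii) that a characteristic formula $\chi(p)$ exist for every $p \in P$. Hypothesis (i) is given by assumption.

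For (ii), I would simply appeal to Proposition~\ref{prop:finitely_represented_finite_processes}(\ref{item:represented}), which guarantees that every $\phi \in \Lset$ is finitely represented whenever $P$ is finite and $\Lset$ contains $\wedge$. Both prerequisites hold under our hypotheses, so this step is immediate.

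For (iii), I would combine Proposition~\ref{prop:finitely_represented_finite_processes}(\ref{item:characterized}) with Proposition~\ref{prop:characteristic}: the former provides a monotonic function $\B$ finitely characterizing $\Lset$, and the latter then yields $\chi(p) = \bigwedge_{\phi \in \B(p)} \phi$ as a characteristic formula for each $p \in P$ (a well-formed formula, since $\B(p)$ is finite and $\wedge$ is in the logic).

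Since all three hypotheses of Proposition~\ref{prop:decomposability-for-finite-modal-covariant} are met, its conclusion yields decomposability of $\Lset$. There is no real obstacle here; the proof is essentially a bookkeeping step that stitches together previously established results, and the main conceptual work has already been done in Propositions~\ref{prop:finitely_represented_finite_processes}, \ref{prop:characteristic}, and \ref{prop:decomposability-for-finite-modal-covariant}.
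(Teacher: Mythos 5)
Your proof is correct and follows exactly the same route as the paper: it checks the three hypotheses of Proposition~\ref{prop:decomposability-for-finite-modal-covariant} using Proposition~\ref{prop:finitely_represented_finite_processes} (both parts) and Proposition~\ref{prop:characteristic}, then concludes. No differences worth noting.
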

\begin{proof}
%
%  Let $\phi \in \Lset$ be consistent and represented by finite and non-empty set
%  $S \subseteq P$
%  (Proposition~\ref{prop:finitely_represented_finite_processes}\ref{label:represented})
%  and let $p \in S$.
%  If $|S| = 1$, then $\sat{\phi} = \clos{p} = \sat{\chi(p)}$.
%  Otherwise, $\sat{\phi} = \bigcup_{q \in S} \clos{q} = \sat{\chi(p)} \cup
%  \sat{\bigvee_{q \in S \setminus \{ p \}}\chi(q)}$ (notice that $\chi(p)$ exists
%  in $\Lset$ for all $p \in P$, due to Proposition~\ref{prop:characteristic} and
%  Proposition~\ref{prop:finitely_represented_finite_processes}\ref{label:characterized}).
%
  By hypothesis, \Lset features the Boolean connective $\vee$; by
  Proposition~\ref{prop:finitely_represented_finite_processes}(\ref{item:represented}),
  every $\phi \in \Lset$ is finitely represented; by
  Proposition~\ref{prop:characteristic} and
  Proposition~\ref{prop:finitely_represented_finite_processes}(\ref{item:characterized}),
  $\chi(p)$ exists in $\Lset$ for all $p \in P$.
  Finally, by Proposition~\ref{prop:decomposability-for-finite-modal-covariant},
  \Lset is decomposable.
%  \qed
\end{proof}

\begin{proposition} \label{prop:modal_covariant_decomposability}
  The logics characterizing modal refinement semantics or
  covariant-contravariant simulation semantics given
  in~\cite{AcetoEtAl11b,BoudolL1992} are decomposable.
\end{proposition}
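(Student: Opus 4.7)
The plan is to apply Proposition~\ref{prop:decomposability-for-finite-modal-covariant} to both logics, which reduces the task to verifying three conditions for each of them: $(i)$ the logic features the Boolean connective $\vee$; $(ii)$ every formula is finitely represented; and $(iii)$ a characteristic formula exists for every process. Conditions $(i)$ and $(iii)$ are essentially bookkeeping: the logics of~\cite{BoudolL1992} and~\cite{AcetoEtAl11b} explicitly include disjunction in their syntax, and the existence of characteristic formulae for every process (i.e., every modal transition system / labelled transition system in the appropriate class) is one of the main results in each of those papers. I would recall those constructions briefly, pointing to the relevant definitions so that the reader knows which $\chi(p)$ we are using.

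The substantive step is condition $(ii)$, finite representability, and this is where I expect the main obstacle to lie. The strategy I would use mirrors the one already invoked in the discussion of Section~\ref{sec:finitely_many_processes}: show that each formula $\phi$ of the logic can be unfolded (by structural recursion on its syntax) into a finite set of ``minimal witnesses'' --- processes in $P$ that are pairwise incomparable and whose union of upward closures equals $\sat{\phi}$. For the Boudol--Larsen logic for modal refinement this is exactly the graphical-representation result of~\cite{BoudolL1992}, which associates to every formula a finite collection of modal transition systems; for the covariant-contravariant simulation logic of~\cite{AcetoEtAl11b} the analogous construction is carried out there. In both cases the construction is by induction on the formula, with the base cases (including $\textbf{tt}$, $\textbf{ff}$, and the modal/contravariant atoms) producing finite graphical representations, and the inductive cases (conjunction, disjunction, and the modalities) combining finitely many representations into a finitely many new ones; pairwise incomparability is then enforced by discarding processes dominated by others in the resulting set.

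Having verified $(i)$--$(iii)$, the conclusion that both logics are decomposable follows immediately by Proposition~\ref{prop:decomposability-for-finite-modal-covariant}. The hard part will be to present the graphical representation construction compactly enough that the reader is convinced of finite representability in both settings, without reproducing the full technical development of~\cite{BoudolL1992,AcetoEtAl11b}; I would therefore rely on citing those constructions directly and only spell out the bookkeeping that shows the resulting representative sets are finite and pairwise incomparable, which is precisely what Definition~\ref{def:representation} requires. No new ingredients beyond what is already developed in Sections~\ref{sec:preliminary}--\ref{subsec:logic_characterization} and in the cited papers are needed.
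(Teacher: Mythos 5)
Your proposal is correct and matches the paper's own proof: both verify the three hypotheses of Proposition~\ref{prop:decomposability-for-finite-modal-covariant} by citing the existence of characteristic formulae and the finite (graphical) representability of every formula from~\cite{BoudolL1992} (Propositions~3.2 and~4.2) and~\cite{AcetoEtAl11b} (Lemma~2 and Theorem~3). The extra detail you sketch about how the representation construction proceeds by induction on formulae is not needed, since the paper simply delegates that work to the cited results.
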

\begin{proof} The claim follows immediately from the definition of the logics and
  related results shown in~\cite{AcetoEtAl11b,BoudolL1992}, and by applying
  Proposition~\ref{prop:decomposability-for-finite-modal-covariant}.

  More precisely, for covariant-contravariant simulation, Lemma~2
  in~\cite{AcetoEtAl11b} yields the existence of characteristic formulae and
  Theorem~3 in~\cite{AcetoEtAl11b} yields finite representability of each
  formula.

  On the other hand, for modal refinement, Proposition~3.2 in~\cite{BoudolL1992}
  yields the existence of characteristic formulae and Proposition~4.2
  in~\cite{BoudolL1992} gives finite representability of each formula.
%
%  \qed
\end{proof}

Thus, we have the following theorem, resulting from
Theorems~\ref{theo:char_prime} and~\ref{theo:decomp}, along with
Propositions~\ref{prop:finitely_represented_implies_decomposability}
and~\ref{prop:modal_covariant_decomposability}.
%
%\begin{theorem} [Characterization by primality] \label{theo:char_by_prim_finite}
%  Let $\Lset$ be a logic such that
%  \begin{itemize}
%  \item \Lset is interpreted over a finite set $P$ that features at least the
%    Boolean connectives $\wedge$ and $\vee$,
%  \item \Lset is the logic that characterizes modal refinement semantics given in, or
%  \item \Lset characterizes covariant-contravariant simulation semantics.
%  \end{itemize}
%%
%  Then, each formula $\phi \in \Lset$ is consistent and prime if and only if
%  $\phi$ is characteristic for some $p \in P$.
%\end{theorem}

\begin{theorem} [Characterization by primality] \label{theo:char_by_prim_finite}
  Let $\Lset$ be:
  \begin{itemize}
  \item the logic from~\cite{BoudolL1992} that characterizes modal refinement
    semantics,
  \item the one given in~\cite{AcetoEtAl11b} characterizing
    covariant-contravariant simulation semantics, or
  \item any logic that features at least the
    Boolean connectives $\wedge$ and $\vee$, and is interpreted over a finite set $P$ .

  \end{itemize}
  Then, each formula $\phi \in \Lset$ is consistent and prime if and only if
  $\phi$ is characteristic for some $p \in P$.
\end{theorem}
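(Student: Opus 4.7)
The plan is essentially a routine assembly of results already established in the excerpt; the theorem is a ``compilation'' statement packaging three separate decomposability results into a single characterization-by-primality conclusion. The overall strategy splits along the biconditional: the ``only if'' direction is universal (requires no hypothesis on \Lset beyond what is assumed in Section~\ref{sec:preliminary}), while the ``if'' direction is proved case by case, in each case by verifying that the logic falls under the scope of Theorem~\ref{theo:decomp}.

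For the ``only if'' direction, I would simply invoke Theorem~\ref{theo:char_prime}, which states that in any logic interpreted over a set of processes, a characteristic formula is automatically consistent and prime. This step applies uniformly to all three logics in the statement with no additional argument.

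For the ``if'' direction, the key reduction is Theorem~\ref{theo:decomp}: once we know that the logic \Lset is decomposable, every consistent prime formula is characteristic for some process. Hence it suffices to argue that each of the three logics is decomposable. I would split into three cases: $(a)$ for the logic characterizing modal refinement semantics and $(b)$ for the logic characterizing covariant-contravariant simulation, decomposability is exactly the content of Proposition~\ref{prop:modal_covariant_decomposability}, which in turn applies Proposition~\ref{prop:decomposability-for-finite-modal-covariant} using the existence of characteristic formulae and finite representability established in~\cite{AcetoEtAl11b,BoudolL1992}; $(c)$ for an arbitrary logic with $\wedge$ and $\vee$ interpreted over a finite $P$, decomposability is exactly Proposition~\ref{prop:finitely_represented_implies_decomposability}.

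There is no genuine obstacle here: all of the real work has been done upstream, in the development of the notion of decomposable logic (Definition~\ref{def:decomposability}), in the path to decomposability via finite representation (Proposition~\ref{prop:decomposability-for-finite-modal-covariant}), and in the case-specific analyses of modal refinement and covariant-contravariant simulation. The only care needed in the write-up is to be explicit about which proposition applies to which of the three logics and to note that Theorem~\ref{theo:char_prime} handles the forward direction uniformly. Accordingly, the proof should be short: one sentence for the forward direction citing Theorem~\ref{theo:char_prime}, and a second sentence combining Theorem~\ref{theo:decomp} with Propositions~\ref{prop:modal_covariant_decomposability} and~\ref{prop:finitely_represented_implies_decomposability} for the converse.
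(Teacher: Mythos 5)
Your proposal is correct and matches the paper's own proof, which likewise assembles Theorem~\ref{theo:char_prime} for one direction and Theorem~\ref{theo:decomp} together with Propositions~\ref{prop:finitely_represented_implies_decomposability} and~\ref{prop:modal_covariant_decomposability} for the other. The only slip is that you have swapped the labels ``if'' and ``only if'': Theorem~\ref{theo:char_prime} gives the \emph{if} direction (characteristic $\Rightarrow$ consistent and prime), while the decomposability argument gives the \emph{only if} direction; the substance is unaffected.
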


%%% Local Variables:
%%% mode: latex
%%% TeX-master: "logical_graphical-ext"
%%% End:

\section{Application to semantics in van Glabbeek's spectrum} \label{sec:spectrum}
%%%%%%%%%%%%%%%%%%%%%%%%%%%%%%%%%%%%%%%
Our next task is to apply the result described in Corollary \ref{cor:decomposability_condition_for_spectrum} to the semantics in the linear time-branching time spectrum~\cite{dFGPR13,VanGlabbeek01}, over finite trees and with a finite set of actions. All those semantics have been shown to be  characterized by specific logics and therefore inherit all the properties of logically defined preorders.
We reason about characterization by primality
(Theorems~\ref{theo:char_by_prim_branching_spectrum}
and~\ref{theo:char_by_prim_linear_spectrum}) by showing that each logic is
finitely characterized by some monotonic $\B$, and by building, for each
characteristic formula $\chi(p)$, a formula $\bar\chi(p)$ with the properties
specified in
Corollary~\ref{cor:decomposability_condition_for_spectrum}(\ref{item:final_req_chibar}).

\smallskip

\noindent{\bf Processes.} \label{page:process-definition}
To begin with, we give a formal definition of the notion of process.
%
% The definition of the set of processes is the same for all these
% formalisms.
The set of processes $P$ over a finite set of actions $Act$
is given by the following grammar:

{\centering{
$ p ::= \zeroProcess \mid \aProcess p \mid p + p,$

}}
\noindent where $\aProcess \in Act$.
Given a process $p$, we say that $p$ can perform the action
$\aProcess$ and evolve into $p'$, denoted $p \stackrel{a}{\rightarrow} p'$,
iff
\begin{inparaenum}[$(i)$]
 \item $p = \aProcess p'$ or
 \item $p = p_1 + p_2$ and $p_1 \stackrel{a}{\rightarrow} p'$
  or $p_2 \stackrel{a}{\rightarrow} p'$ holds.
\end{inparaenum}
Note that every process $p$ denotes a finite loop-free labelled transition
system whose states are those that are reachable from $p$ via transitions
$\stackrel{a}{\rightarrow}$, $a \in Act$, and whose initial state is
$p$~\cite{Keller76}.

% Notice that every process $p$ is \emph{image-finite}, that is, for each $a\in Act$ there are only finitely many $p'$ such that $p\stackrel{a}{\rightarrow}p'$.

We define the set of \emph{initials} of $p$, denoted $I(p)$,
as the set $\{ \aProcess \in Act \mid p~\stackrel{\aProcess}{\rightarrow}~p'$
for some $p' \in P \}$.
We write $p \stackrel{a}{\rightarrow}$ if $\aProcess \in I(p)$,
 $p \not \stackrel{a}{\rightarrow}$ if $\aProcess \not\in I(p)$, and $p \not \rightarrow$ if $I(p) = \emptyset$. We define $\traces(p)$ as follows (we use $\varepsilon$ to denote the empty string):

%{\centering
  \begin{equation}
    \label{eq:trace_definition}
    \traces(p) = \{ \varepsilon \} \cup \{ a\tau \mid \exists p' \in P \ . \
    p~\stackrel{\aProcess}{\rightarrow}~p' \text{ and } \tau \in \traces(p') \}.
  \end{equation}
%}
% 
\noindent For each trace $\tau = a_1 \ldots a_n$, we write
$p~\stackrel{\tau}{\rightarrow}~p'$ for
$p~\stackrel{a_1}{\rightarrow}~p_1~\stackrel{a_2}{\rightarrow}~p_2 \ldots
p_{n-1}~\stackrel{a_n}{\rightarrow}~p'$.
Finally, for each $p\in P$, $\depth(p)$ is the length of a longest trace in $\traces(p)$. 

\begin{figure}[ht!]
  \centering

  \begin{tikzpicture}
    \node at (0,0) (simulation){\simulation};
    \node at (simulation) [above=15mm] (completeSim){\completeSim};
    \node at (completeSim) [above=15mm] (readySim){\readySim};
    \node at (readySim) [above=15mm] (traceSim){\traceSim};
    \node at (traceSim) [above=15mm] (twoSim){\twoSim};
    \node at (twoSim) [above=15mm] (bisim){\bisim};

    \node at (simulation) [right=6cm] (trace){\trace};
    \node at (trace) [above=15mm] (completeTrace){\completeTrace};

    \node at (completeTrace) [above left=15mm and 16.5mm] (readyTrace){\readyTrace};
    \node at (readyTrace) [above right=2.5mm and 10mm] (failureTrace){\failureTrace};
    \node at (readyTrace) [below right=2.5mm and 25mm] (ready){\ready};
    \node at (readyTrace) [right=37.5mm] (failure){\failure};

%    \node at (completeTrace) [above right=7.5mm and 10mm] (ready){\ready};
%    \node at (ready) [above left=10mm and 10mm] (failureTrace){\failureTrace};
%    \node at (completeTrace) [above right=15mm and 20mm] (failure){\failure};
    \path[draw,->] (readyTrace) -- (failureTrace);
    \path[draw,->] (readyTrace) -- (ready);
    \path[draw,->] (failureTrace) -- (failure);
    \path[draw,->] (ready) -- (failure);

    \node at (failureTrace) [draw=gray!70,rounded corners,above=15mm] (impossibleFutureTrace){\impossibleFutureTrace};
    \node at (readyTrace) [draw=gray!70,rounded corners,above=15mm] (possibleFutureTrace){\possibleFutureTrace};
    \node at (ready) [above=15mm] (possibleFuture){\possibleFuture};
    \node at (failure) [above=15mm] (impossibleFuture){\impossibleFuture};
    \path[draw,->] (possibleFutureTrace) -- (impossibleFutureTrace);
    \path[draw,->] (possibleFutureTrace) -- (possibleFuture);
    \path[draw,->] (impossibleFutureTrace) -- (impossibleFuture);
    \path[draw,->] (possibleFuture) -- (impossibleFuture);

    \node at (impossibleFutureTrace) [draw=gray!70,rounded corners,above=15mm] (impossibleSimulationTrace){\impossibleSimulationTrace};
    \node at (possibleFutureTrace) [draw=gray!70,rounded corners,above=15mm] (possibleSimulationTrace){\possibleSimulationTrace};
    \node at (possibleFuture) [draw=gray!70,rounded corners,above=15mm] (possibleSimulation){\possibleSimulation};
    \node at (impossibleFuture) [draw=gray!70,rounded corners,above=15mm] (impossibleSimulation){\impossibleSimulation};
    \path[draw,->] (possibleSimulationTrace) -- (impossibleSimulationTrace);
    \path[draw,->] (possibleSimulationTrace) -- (possibleSimulation);
    \path[draw,->] (impossibleSimulationTrace) -- (impossibleSimulation);
    \path[draw,->] (possibleSimulation) -- (impossibleSimulation);

    \path[draw,->] (bisim) -- (twoSim);
    \path[draw,->] (twoSim) -- (traceSim);
    \path[draw,->] (traceSim) -- (readySim);
    \path[draw,->] (readySim) -- (completeSim);
    \path[draw,->] (completeSim) -- (simulation);

    \path[draw,->] (twoSim) -- (possibleSimulationTrace);
    \path[draw,->] (traceSim) -- (possibleFutureTrace);
    \path[draw,->] (readySim) -- (readyTrace);
    \path[draw,->] (completeSim) -- (completeTrace);
    \path[draw,->] (simulation) -- (trace);

    \path[draw,->] (possibleSimulationTrace) -- (possibleFutureTrace);
    \path[draw,->] (possibleFutureTrace) -- (readyTrace);
    \path[draw,->] (impossibleSimulationTrace) -- (impossibleFutureTrace);
    \path[draw,->] (impossibleFutureTrace) -- (failureTrace);
    \path[draw,->] (possibleSimulation) -- (possibleFuture);
    \path[draw,->] (possibleFuture) -- (ready);
    \path[draw,->] (impossibleSimulation) -- (impossibleFuture);
    \path[draw,->] (impossibleFuture) -- (failure);

    \path[draw,->] (readyTrace) -- (completeTrace);
    \path[draw,->] (ready) -- (completeTrace);
    \path[draw,->] (failure) -- (completeTrace);
    \path[draw,->] (failureTrace) -- (completeTrace);

    \path[draw,->] (completeTrace) -- (trace);

    \draw[dashed] (2.5,-.25) -- ++(0,9.25);

    \draw[rounded corners,draw=black,draw opacity=.0,fill=gray!50,fill
    opacity=.3] (readyTrace) ++(-.5,.7) rectangle ++(8,-1.4);
    \draw (readyTrace) ++(5.3,0)node[right]{1st diamond};

    \draw[rounded corners,draw=black,draw opacity=.0,fill=gray!50,fill
    opacity=.3] (possibleFutureTrace) ++(-.5,.7) rectangle ++(8,-1.4);
    \draw (possibleFutureTrace) ++(5.3,0)node[right]{2nd diamond};

    \draw[rounded corners,draw=black,draw opacity=.0,fill=gray!50,fill
    opacity=.3] (possibleSimulationTrace) ++(-.5,.7) rectangle ++(8,-1.4);
    \draw (possibleSimulationTrace) ++(5.3,0)node[right]{3rd diamond};

  \end{tikzpicture}

  \caption{Semantic relations in van Glabbeek's linear time-branching time
    spectrum (branching semantics are on the left, linear ones are on the
    right---the six framed names are introduced here: these semantics were studied
    in~\cite{dFGPR13} but no name was assigned to them).  }
  \label{fig:VGspectrum}
\end{figure}
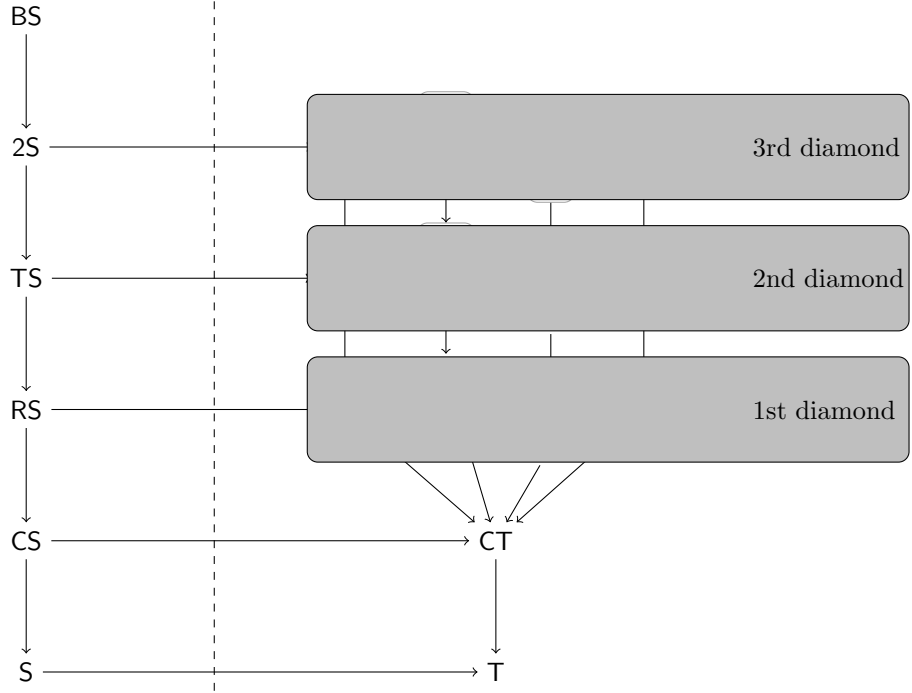

\noindent{\bf Behavioural preorders for process semantics.}
The semantics of processes is expressed by means of preorders,
which, intuitively, classify processes according to their possible
behaviours. Roughly speaking, a process \emph{follows} another in the preorder
(or it is \emph{above} it)
if it exhibits at least the same behaviours as the latter.
The semantic relations in van Glabbeek's linear time-branching time spectrum
present different levels of granularity: a finer relation is able to distinguish processes that are indistinguishable by a
coarser one.
%They are defined as follows.
Those semantics are as follows (see~\figurename~\ref{fig:VGspectrum}):
\begin{itemize}
\item branching time semantics (\figurename~\ref{fig:VGspectrum}, left-hand
side): \emph{simulation} (\simulation),
  \emph{complete simulation} (\completeSim),
  \emph{ready simulation} (\readySim),
  \emph{trace simulation} (\traceSim),
  \emph{2-nested simulation} (\twoSim), and
  \emph{bisimulation} (\bisim);
\item linear time semantics (\figurename~\ref{fig:VGspectrum}, right-hand
side): \emph{trace} (\trace),
  \emph{complete trace} (\completeTrace),
  \emph{failure} (\failure),
  \emph{failure trace} (\failureTrace),
  \emph{ready} (\ready),
  \emph{ready trace} (\readyTrace),
  \emph{impossible future} (\impossibleFuture),
  \emph{possible future} (\possibleFuture),
  \emph{always impossible future} (\impossibleFutureTrace),
  \emph{always possible future} (\possibleFutureTrace),
  \emph{impossible 2-simulation} (\impossibleSimulation),
  \emph{possible 2-simulation} (\possibleSimulation),
  \emph{always impossible 2-simulation} (\impossibleSimulationTrace),
  \emph{always possible 2-simulation} (\possibleSimulationTrace).
\end{itemize}
%Their syntax and semantics are briefly described in what follows.
%For a comprehensive overview, we refer the reader to the corresponding
%literature.
%
In the rest of this section $\bspectrumSet$ denotes the set $\{ \simulation,
\completeSim, \readySim, \traceSim, \twoSim, \bisim \}$, $\lspectrumSet$ \label{lspectrumSet} denotes
the set $\{ \trace, \completeTrace, \failure, \failureTrace, \ready,
\readyTrace, \impossibleFuture, \possibleFuture, \impossibleFutureTrace,
\possibleFutureTrace, \impossibleSimulation, \possibleSimulation,
\impossibleSimulationTrace, \possibleSimulationTrace \}$, and we let
$\vgspectrumSet = \bspectrumSet \cup \lspectrumSet$.

%\marginpar{\scriptsize \color{red} \bf Decide what to do with Possible world
%  semantics. This is one of the so-called Deterministic branching time semantics. Currently, we do not deal with it.}

The remainder of the section is organized as follows.
%we first introduce, in
%Section~\ref{sec:process-logics}, the notions that are common to both semantics,
%such as process syntax and semantics, and provide a unified and compact
%definition of syntax and semantics of logics we will use for the
%characterization of the several semantics.
In Section~\ref{sec:branch-time-spectr}, we establish characterization by
primality for the semantics in van Glabbeek's branching time spectrum.
Then, in Section~\ref{sec:linear-time-spectrum}, we deal with the ones in van
Glabbeek's linear time spectrum; at the beginning of this last section, we also
give a short account on the conceptual differences between the two semantics;
for a comprehensive account, we refer to~\cite{dFGPR13,VanGlabbeek01}.
%
%Their syntax and semantics are briefly described in what follows.
%For a comprehensive overview, we refer the reader to the corresponding
%literature.
% 

%%% Local Variables:
%%% mode: latex
%%% TeX-master: "logical_graphical-ext"
%%% End:

\subsection{The branching time spectrum} \label{sec:branch-time-spectr}
%Let $X$ range in $\bspectrumSet$.

%\noindent{\bf Languages.}
In this sub-section, we focus on the semantics in \emph{van Glabbeek's branching
  time spectrum}, originally introduced in~\cite{VanGlabbeek01} and successively
generalized in~\cite{dFGPR13}, and their corresponding logical formalisms.

\begin{definition}[Branching time semantic relations~\cite{dFGPR13,VanGlabbeek01}]
  \label{def:branching-semantics}
For each $X \in \bspectrumSet$, $\<_X$ is the largest relation over $P$
satisfying the following conditions for each $p,q\in P$.
%in \tablename~\ref{tab:semantics}.

\begin{description}

\item[simulation (\simulation):] $p\<_\simulation q
  \Leftrightarrow$ for all $p\stackrel{a}{\rightarrow}p'$ there exists some
  $q\stackrel{a}{\rightarrow}q'$ such that $p'\<_\simulation q'$;

\item [complete simulation (\completeSim):] $p\<_\completeSim q \Leftrightarrow$
  \begin{inparaenum}[$(i)$]
  \item for all $p\stackrel{a}{\rightarrow}p'$ there exists some
    $q\stackrel{a}{\rightarrow}q'$ such that $p'\<_\completeSim q'$, and
  \item $I(p)=\emptyset$ iff $I(q)=\emptyset$;
  \end{inparaenum}

\item [ready simulation (\readySim):] $p\<_\readySim q \Leftrightarrow$
  \begin{inparaenum}[$(i)$]
  \item for all $p\stackrel{a}{\rightarrow}p'$ there exists some
    $q\stackrel{a}{\rightarrow}q'$ such that $p'\<_\readySim q'$, and
  \item $I(p)=I(q)$;
  \end{inparaenum}

\item [trace simulation (\traceSim):] $p\<_\traceSim q \Leftrightarrow$
  \begin{inparaenum}[$(i)$]
  \item for all $p\stackrel{a}{\rightarrow}p'$ there exists some
    $q\stackrel{a}{\rightarrow}q'$ such that $p'\<_\traceSim q'$, and
  \item $\traces(p)=\traces(q)$;
  \end{inparaenum}

\item [2-nested simulation (\twoSim):] $p\<_\twoSim q \Leftrightarrow$
  \begin{inparaenum}[$(i)$]
  \item for all $p\stackrel{a}{\rightarrow}p'$ there exists some
    $q\stackrel{a}{\rightarrow}q'$ such that $p'\<_\twoSim q'$, and
  \item $q\<_\simulation p$;
  \end{inparaenum}

\item [bisimulation (\bisim):] $p\<_\bisim q \Leftrightarrow$
  \begin{inparaenum}[$(i)$]
  \item for all $p\stackrel{a}{\rightarrow}p'$ there exists some
    $q\stackrel{a}{\rightarrow}q'$ such that $p'\<_\bisim q'$, and
  \item for all $q\stackrel{a}{\rightarrow}q'$ there exists some
    $p\stackrel{a}{\rightarrow}p'$ such that $p'\<_\bisim q'$.
  \end{inparaenum}
\end{description}

% 
% \begin{tabular}{l@{\hspace{3mm}}l@{\hspace{3mm}}l}
% $p\<_\simulation q$ & iff & for all $p\stackrel{a}{\rightarrow}p'$ exists $q\stackrel{a}{\rightarrow}q'$ such that $p'\<_\simulation q'$;\\
% 
% $p\<_\completeSim q$ & iff & for all $p\stackrel{a}{\rightarrow}p'$ exists $q\stackrel{a}{\rightarrow}q'$ such that $p'\<_\completeSim q'$, and\\
% & &  $I(p)=\emptyset$ iff $I(q)=\emptyset$;\\
% 
% $p\<_\readySim q$ & iff & for all $p\stackrel{a}{\rightarrow}p'$ exists $q\stackrel{a}{\rightarrow}q'$ such that $p'\<_\readySim q'$, and\\
% & &  $I(p)=I(q)$;\\
% 
% $p\<_\traceSim q$ & iff & for all $p\stackrel{a}{\rightarrow}p'$ exists $q\stackrel{a}{\rightarrow}q'$ such that $p'\<_\traceSim q'$, and\\
% & &  $trace(p)=trace(q)$;\\
% 
% $p\<_\twoSim q$ & iff & for all $p\stackrel{a}{\rightarrow}p'$ exists $q\stackrel{a}{\rightarrow}q'$ such that $p'\<_\twoSim q'$, and\\
% & &  $q\<_\simulation p$;\\
% 
% $p\<_\bisim q$ & iff & for all $p\stackrel{a}{\rightarrow}p'$ exists $q\stackrel{a}{\rightarrow}q'$ such that $p'\<_\bisim q'$, and\\
% & &  for all $q\stackrel{a}{\rightarrow}q'$ exists $p\stackrel{a}{\rightarrow}p'$ such that $p'\<_\bisim q'$.\\
% \end{tabular}

\noindent For each $X \in \bspectrumSet$, the equivalence relation $\equiv_X$ is
defined as expected, i.e.,

{\centering
  $p \equiv_X q \Leftrightarrow p \<_X q$ and $q \<_X p$.

}
\end{definition}

\smallskip

\noindent{\bf Branching time logics.}
%\tablename~\ref{tab:syntax_spectrum_only} provides the definition
%of the syntax of the logics that capture exactly the above mentioned
%process semantic relations.
The languages of the different logics yield the following chain of strict
inclusions (Table~\ref{fig:VGspectrum}, left-hand side):
$ \Lset_\simulation \subset \Lset_\completeSim \subset
\Lset_\readySim \subset \Lset_\traceSim \subset \Lset_\twoSim \subset
\Lset_\bisim,$
corresponding to a chain of formalisms with strictly increasing expressive power.
Notice that, as it will appear clear after the definition of the satisfaction relation below,
some of the languages present some redundancy, in the sense that they could be replaced
with smaller ones, without any loss in expressiveness.
For instance, a disjunction is expressible in $\Lset_\bisim$ using
conjunction and negation, and suitably replacing $\true$ with $\false$ and vice versa.
We followed this approach because we find it helpful to have
syntactically larger languages corresponding to more expressive semantics.

The syntax of the logics characterizing the semantics in van Glabbeek's
branching time spectrum is given in the following definition.
There, we treat formulae of the form $\univer{a} \psi$ and $\zeroFormula$ as
syntactic shorthands for, respectively, $\neg \exist{a} \neg \psi$ and $\bigwedge_{a \in Act} \univer{a} \false$.

\begin{definition}[Syntax~\cite{dFGPR13,VanGlabbeek01}]\label{def:syntax_branching_spectrum}
  For each $X \in \bspectrumSet$, $\Lset_X$ is the language defined by the
  corresponding grammar given below:

$\Lset_\simulation$:

{\centering
$\phi_\simulation ::= \true \mid \false \mid \phi_\simulation \wedge \phi_\simulation
\mid \phi_\simulation \vee \phi_\simulation \mid \exist{a} \phi_\simulation$.

}

$\Lset_\completeSim$:

{\centering
$\phi_\completeSim ::= \true \mid \false \mid \phi_\completeSim \wedge \phi_\completeSim
\mid \phi_\completeSim \vee \phi_\completeSim \mid \exist{a} \phi_\completeSim \mid
\zeroFormula$.

}

$\Lset_\readySim$:

{\centering
$\phi_\readySim ::= \true \mid \false \mid \phi_\readySim \wedge \phi_\readySim
\mid \phi_\readySim \vee \phi_\readySim \mid \exist{a} \phi_\readySim \mid
\univer{a} \false$.

}

$\Lset_\traceSim$:

{\centering
\begin{tabular}{lll}
$\phi_\traceSim$ & $::=$ & $\true \mid \false \mid \phi_\traceSim \wedge \phi_\traceSim
\mid \phi_\traceSim \vee \phi_\traceSim \mid \exist{a} \phi_\traceSim \mid
\psi_\traceSim$.\\

$\psi_\traceSim$ & $::=$ & $\false \mid \univer{a}\psi_\traceSim$.
\end{tabular}

}

$\Lset_\twoSim$:

{\centering
\begin{tabular}{lll}
$\phi_\twoSim$ & $::=$ & $\true \mid \false \mid \phi_\twoSim \wedge \phi_\twoSim
\mid \phi_\twoSim \vee \phi_\twoSim \mid \exist{a} \phi_\twoSim \mid
\neg\phi_\simulation$.%\\

%$\psi_\twoSim$ & $::=$ & $\true \mid \false \mid \psi_\twoSim \wedge \psi_\twoSim
%\mid \psi_\twoSim \vee \psi_\twoSim \mid \exist{a} \psi_\twoSim$.
\end{tabular}

}

$\Lset_\bisim$:

{\centering
$\phi_\bisim ::= \true \mid \false \mid \phi_\bisim \wedge \phi_\bisim
\mid \phi_\bisim \vee \phi_\bisim \mid \exist{a} \phi_\bisim \mid
\neg\phi_\bisim$.

}
\end{definition}

\begin{table}[t!]
\scriptsize
\everymath{\displaystyle}
 \begin{tabular}{|l|l|@{\hspace{1mm}}l|}
   \hline
   & & \\ [-2mm]
& \multicolumn{1}{l|@{\hspace{1mm}}}{\small \specialitemThree Syntax}
& \multicolumn{1}{l@{\hspace{1mm}}|}{
\begin{tabular}{l@{\hspace{2mm}}l}
  \specialitemOne
  & \small Monotonic function \B for finite characterization \\
  \small \specialitemTwo
  & \small Formula $\bar\chi(p)$
\end{tabular}
} \\
   \hline
\multicolumn{3}{|l@{\hspace{1mm}}|}{} \\ [-1mm]
 \multicolumn{3}{|l@{\hspace{1mm}}|}{
   $\begin{array}{ll@{\hspace{10mm}}r}
      \specialitemThree
      & \phi^\exists_X ::= \true \mid \false \mid
    \phi_X \wedge \phi_X \mid
    \phi_X \vee \phi_X \mid
      \exist a \phi_X
      & X \in \bspectrumSet \\ [3mm]
      \text{\specialitemOne}
      & \B_X(p) = \B^+_X(p)\cup\B^-_X(p) \\
      \text{\specialitemOne}

      & \B^+_X(p)=\{\true\}\cup\{\exist{a}\varphi\mid a \in Act,
        \varphi=\bigwedge_{\psi\in\Psi}\psi, \Psi \subseteq \B_{\genericSemanticsX}(p'),
        p\stackrel{a}{\rightarrow}p'\} \\

    \end{array}$
   \hfill

   } \\
   \hline
   & & \\ [-2mm]
   \simulation
       & $
         \begin{array}[t]{@{\specialitemThree\hspace{1mm}}l}
           \phi_\simulation ::= \phi^\exists_\simulation
         \end{array}$
         & $\begin{array}[t]{l@{\hspace{1mm}}l}
              \text{\specialitemOne}
              &\B^-_{\simulation}(p) = \emptyset \\
              \specialitemTwo
              & \barchiS(p)= \bigvee_{a\in Act} \exist{a}\bigwedge_{p\stackrel{a}{\rightarrow}p'} \barchiS(p')
            \end{array}$ \\
   \hline
   & & \\ [-2.5mm]
   \completeSim
   &
     $
     \begin{array}[t]{@{\specialitemThree\hspace{1mm}}l}
       \phi_\completeSim ::= \phi^\exists_\completeSim \mid \phi^\forall_\completeSim \\
       \phi_\completeSim^\forall ::= \zeroFormula
     \end{array}
   $
   &
     $\begin{array}[t]{l@{\hspace{1mm}}l@{\hspace{1cm}}l}
       \text{\specialitemOne}
       & \B^-_{\completeSim}(p) = \{\zeroFormula\mid p\stackrel{a}{\not\rightarrow}\forall a\in Act\} \\

       \specialitemTwo
       & \barchiCS(p)= \big( \bigvee_{a\in Act} \exist{a}\bigwedge_{p\stackrel{a}{\rightarrow}p'} \barchiCS(p') \big)\vee \zeroFormula & \text{if } I(p)\neq\emptyset \\
       \specialitemTwo
       &   \barchiCS(p)= \bigvee_{a\in Act} \exist{a}\bigwedge_{p\stackrel{a}{\rightarrow}p'} \barchiCS(p') & \text{if } I(p)=\emptyset
     \end{array}$ \\
\hline
   & & \\ [-2.5mm]
   \readySim
   &
   $\begin{array}[t]{@{\specialitemThree\hspace{1mm}}l}
    \phi_\readySim ::= \phi_\readySim^\exists \mid \phi_\readySim^\forall \\
    \phi_\readySim^\forall ::= \univer a \false
   \end{array}$
   &
     $\begin{array}[t]{l@{\hspace{1mm}}l}
        \text{\specialitemOne}
        & \B^-_{\readySim}(p)= \{\univer{a}\false\mid a\in Act, p\stackrel{a}{\not\rightarrow} \} \\
        \specialitemTwo
        & \barchiRS(p)= \big( \bigvee_{a\in Act} \exist{a}\bigwedge_{p\stackrel{a}{\rightarrow}p'} \barchiRS(p') \big) \vee \bigvee_{a\in I(p)}\univer{a}\false
     \end{array}$ \\
\hline
   & & \\ [-2mm]
   \traceSim
   &
   $\begin{array}[t]{@{\specialitemThree\hspace{1mm}}l}
      \phi_\traceSim ::= \phi_\traceSim^\exists \mid \phi_\traceSim^\forall \\
      \phi_\traceSim^\forall ::= \false \mid \univer a \phi_\traceSim^\forall
   \end{array}$
   &
     $\begin{array}[t]{l@{\hspace{1mm}}l}
        \text{\specialitemOne}
        & \B^-_{\traceSim}(p) = \{\univer{\tau a}\false\mid \tau\in \traces(p), a \in Act, \tau a \notin \traces(p) \}
%         \text{ and } |\tau|\leq\depth(p)\}
        \\
        \specialitemTwo
        & \barchiTS(p)= \big( \bigvee_{a\in Act} \exist{a}\bigwedge_{p\stackrel{a}{\rightarrow}p'} \barchiTS(p') \big) \vee
          \bigvee_{\tau \in \traces(p), \tau a \notin \traces(p)}\exist{\tau a}\true \vee \bigvee_{p\stackrel{\tau a}{\rightarrow}p'}\univer{\tau a}\false
      \end{array}$ \\
          \hline
   & & \\ [-2mm]
   \twoSim
   &
   $\begin{array}[t]{@{\specialitemThree\hspace{1mm}}l}
    \phi_\twoSim ::= \phi_\twoSim^\exists \mid \phi_\twoSim^\forall \\
    \phi_\twoSim^\forall ::= \neg \phi_\simulation
   \end{array}$
   &
     $\begin{array}[t]{l@{\hspace{1mm}}l}
        \text{\specialitemOne}
        & \B^-_{\twoSim}(p)= \{\univer{a}\varphi\in\Lset_{\twoSim}(p)\mid a \in Act, \varphi=\bigvee_{p' \in \maxsucc(p,a)} \bigwedge_{\psi \in \B^-_{\twoSim}(p')}\psi \} \\

        & \text{where } \maxsucc(p,a) = \{ p' \in P \mid
          p\stackrel{a}{\rightarrow}p' \text{ and } \nexists p'' .
          p\stackrel{a}{\rightarrow}p'' \text{ and } p' <_{\simulation} p'' \}
        \\

%        & \B^-_{\twoSim}(p)= \{\univer{a}\varphi\in\Lset_{\twoSim}(p)\mid a \in Act, \varphi=\bigvee_{\psi \in \Psi}\psi, \Psi \subseteq \bigcup_{p\stackrel{a}{\rightarrow}p'} \B_{\twoSim}^-(p'), \Psi\cap\B_{\twoSim}^-(p')\neq\emptyset \ \forall p\stackrel{a}{\rightarrow}p' \} \\
%        & \B^-_{\twoSim}(p)= \{\univer{a}\varphi\in\Lset_{\twoSim}(p)\mid a \in Act, \varphi=\bigvee_{\psi \in \Psi} \psi, \Psi\subseteq\bigcup_{p\stackrel{a}{\rightarrow}p'}\B^-_{\twoSim}(p')\} \\

        \specialitemTwo
        & \barchitwoS(p)= \big( \bigvee_{a\in Act} \exist{a}\bigwedge_{p\stackrel{a}{\rightarrow}p'} \barchitwoS(p') \big) \vee \bar\Phi(p) \\

        & \text{where }
        \bar\Phi(p) = \bigvee_{a\in I(p)}\univer{a}\false \vee \bigvee_{a\in I(p)} \bigvee_{p\stackrel{a}{\rightarrow}p'}\univer{a} \bar\Phi(p')
      \end{array}$ \\
   \hline
   & & \\ [-2mm]
   \bisim
   &
   $\begin{array}[t]{@{\specialitemThree\hspace{1mm}}l}
    \phi_\bisim ::= \phi_\bisim^\exists \mid \phi_\bisim^\forall \\
    \phi_\bisim^\forall ::= \neg \phi_\bisim
   \end{array}$
   &

     $\begin{array}[t]{l@{\hspace{1mm}}l} \text{\specialitemOne}

        & \B^-_{\bisim}(p)= \{\univer{a}\varphi \in \Lset_{\bisim}(p) \mid a \in
          Act, \varphi= \bigvee_{p\stackrel{a}{\rightarrow}p'} \bigwedge_{\psi \in
          \B_{\bisim}(p')} \psi \} \\

%        & \text{where } \maxsuccBS(p,a) = \{ p' \in P \mid
%          p\stackrel{a}{\rightarrow}p' \text{ and } \nexists p'' .
%          p\stackrel{a}{\rightarrow}p'' \text{ and } p' =_{\bisim} p'' \}
%        \\

%        \specialitemTwo
%
%        & \B^-_{\bisim}(p)= \{\univer{a}\varphi\mid a \in Act, \varphi=\bigvee_{\psi\in\Psi}\psi,\Psi\subseteq\bigcup_{p\stackrel{a}{\rightarrow}p'}\B_{\bisim}(p'), \Psi\cap\B_{\bisim}(p')\neq\emptyset \ \forall p\stackrel{a}{\rightarrow}p'\} \\

        \specialitemTwo

        & \barchiBS(p)=\neg\chiBS(p) \text{\hspace{7mm} ($\chiBS(p)$ is
          characteristic for $p$ within $\Lset_{\bisim}$)}
      \end{array}$ \\
\hline
 \end{tabular}
 \caption{Syntax, monotonic function \B for finite characterization, and formula
   $\bar\chi(p)$, relative to the logics for the semantics in van Glabbeek's
   branching time spectrum.}
\label{tab:syntax_monotonicity_finite_char_branching_spectrum}
\end{table}

\begin{remark}
  It is important to notice that disjunction does not appear in the original
  formulation found in~\cite{VanGlabbeek01} of the logics characterizing the
  branching semantics in the spectrum.
  However, adding it does not affect the expressive power of these logics with
  respect to the corresponding semantics, as shown in~\cite[Proposition 6.2 at
  page 42]{dFGPR13}.
\end{remark}

%\noindent{\bf Satisfaction relation.}
We give here the semantics of the logics, by describing the satisfaction
relation for the most expressive one, namely $\Lset_\bisim$,
that characterizes bisimulation semantics.
%
%In doing so, we include a semantic clause for the Boolean operator $\vee$ (which could be
%omitted as definable) so that it thoroughly reflects the proposed syntax.
% Formulae are evaluated over processes,
% according to the satisfaction relation $\models \subseteq \Lset \times P$,
% defined as follows:
The semantics for the other logics can be obtained by considering
the corresponding subset of clauses.

\begin{definition}[Satisfaction relation] \label{def:semantics_branching}
  The satisfaction relation for the logic $\Lset_\bisim$ is defined as follows:
  \begin{itemize}
  \item $p \in \sat{\true}$, for every $p \in P$,
  \item $p \notin \sat{\false}$, for every $p \in P$,
 \item $p \in \sat{\phi_1 \wedge \phi_2}$ iff $p \in \sat{\phi_1}$ and
  $p \in \sat{\phi_2}$,
 \item $p \in \sat{\phi_1 \vee \phi_2}$ iff $p \in \sat{\phi_1}$ or
  $p \in \sat{\phi_2}$,
 \item $p \in \sat{\exist{a} \phi}$ iff $p' \in \sat{\phi}$
  for some $p' \in P$ such that $p \stackrel{a}{\rightarrow} p'$,
%  \item $p \in \sat{\univer{a} \phi}$ iff $p' \in \sat{\phi}$
%   for each $p' \in P$ such that $p \stackrel{a}{\rightarrow} p'$,
 \item $p \in \sat{\neg \phi}$ iff $p \notin \sat{\phi}$.
\end{itemize}
\end{definition}
We say that a process $p$ \emph{satisfies} a formula $\phi \in \Lset_{\bisim}$
if, and only if, $p \in \sat{\phi}$.

% We observe that such a formula is not definable in $\Lset_\completeSim$,
% and thus it is defined as primitive in its language.

%\begin{definition}[normal form] \label{def:normal_form}\marginpar{\scriptsize Ignore the part on normal form, from this definition until next theorem}
%Let $\phi \in \Lset_X$, for some $X \in \spectrumSet$.
%\begin{itemize}
% \item We say that $\phi$ is in \emph{$X$-normal form} (or simply
%\emph{normal form}, when no ambiguity arises) iff
%$\phi$ is of the form indicated in the
%third column, and satisfied the syntactic requirements listed
%in the fourth column, of the corresponding row of
%\tablename~\ref{tab:syntax_spectrum}.
%
%\item If $\phi = \bigvee_{i \in I} \phi_i$ is in normal form, then
%for each $i \in I$, $\phi_i$ is said to be in \emph{unary normal form}.
%
%\item We say that $\phi$ is in \emph{consistent normal form} iff
%it is in normal form and it is a disjunction of consistent
%formulae in unary normal form (semantic consistency requirement).
%\end{itemize}
%% 
%\end{definition}
%Notice that for some of the logics ($\Lset_\simulation$,
%$\Lset_\completeSim$, and $\Lset_\readySim$) the semantic consistency
%requirement is redundant because either every unary normal form formula
%is consistent ($\Lset_\simulation$) or consistency is
%implied by syntactic conditions ($\Lset_\completeSim$ and $\Lset_\readySim$).
%
%\begin{theorem}
%Let $\phi \in \Lset_X$, for some $X \in \spectrumSet$.
%There exists a formula $\phi_{NF} \in \Lset_X$ in normal form
%such that $\phi$ and $\phi_{NF}$ are logical equivalent.
%\end{theorem}
%\begin{proof}
% \ldots
% \qed
%\end{proof}

\smallskip

The following well-known theorem states the relationship between logics and
process semantics that allows us to use our general results about logically characterized semantics.
\begin{theorem}[Logical characterization of branching time semantics~\cite{dFGPR13,VanGlabbeek01}]\label{theo:branching-semantics} 
For each $X\in\bspectrumSet$ and for all $p,q \in P$, $p\<_Xq$ iff $\Lset_X(p)\subseteq\Lset_X(q)$.
\end{theorem}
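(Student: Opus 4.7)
The plan is to prove the two implications separately and in a uniform way across all $X \in \bspectrumSet$, exploiting the syntactic ``layered'' structure of the logics listed in Definition~\ref{def:syntax_branching_spectrum} (existential-modal core $+$ semantics-specific extra constructors).

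For the soundness direction (``$p \<_X q \Rightarrow \Lset_X(p) \subseteq \Lset_X(q)$''), I would argue by structural induction on $\phi \in \Lset_X$, proving the stronger statement that if $p \<_X q$ and $p \in \sat{\phi}$ then $q \in \sat{\phi}$. The boolean cases ($\true$, $\false$, $\wedge$, $\vee$) are immediate. The existential-modal case $\phi = \exist{a}\psi$ is handled uniformly using the ``forward'' clause shared by every definition in Definition~\ref{def:branching-semantics}: pick $p \stackrel{a}{\to} p'$ witnessing $p \in \sat{\exist{a}\psi}$, apply the simulation clause to obtain $q \stackrel{a}{\to} q'$ with $p' \<_X q'$, and conclude by the inductive hypothesis. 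The remaining constructors are precisely the syntactic witnesses of the semantic side-conditions, so each is dispatched by the corresponding clause of Definition~\ref{def:branching-semantics}: $\zeroFormula$ matches the condition $I(p) = \emptyset \Leftrightarrow I(q) = \emptyset$ for \completeSim; $\univer{a}\false$ matches $I(p) = I(q)$ for \readySim; the chain $\univer{a_1}\cdots\univer{a_n}\false$ matches $\traces(p) = \traces(q)$ for \traceSim; and the negated subformulas $\neg \phi_{\simulation}$ and $\neg \phi_{\bisim}$ are handled using the ``backward'' clauses of \twoSim\ and \bisim, respectively (the \bisim\ case being the standard Hennessy--Milner argument, and the \twoSim\ case invoking the soundness already established for $\Lset_{\simulation}$).

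For the completeness direction (``$\Lset_X(p) \subseteq \Lset_X(q) \Rightarrow p \<_X q$''), I would argue by contrapositive: assuming $p \not\<_X q$, I build a distinguishing formula $\phi_X \in \Lset_X$ with $p \in \sat{\phi_X}$ and $q \notin \sat{\phi_X}$. The argument proceeds by well-founded induction on $\depth(p) + \depth(q)$, which is finite since processes are finite trees. By unfolding the definition of $\<_X$, one of two situations must hold: either (i) there exists $p \stackrel{a}{\to} p'$ such that for every $q \stackrel{a}{\to} q'$ we have $p' \not\<_X q'$, in which case the inductive hypothesis yields, for each such $q'$, a formula $\psi_{q'}$ distinguishing $p'$ from $q'$, and $\phi_X := \exist{a}\bigwedge_{q \stackrel{a}{\to} q'} \psi_{q'}$ does the job (with the empty conjunction read as $\true$); or (ii) the semantics-specific side condition fails, and one reads off an appropriate ``constant'' formula as in the soundness direction ($\zeroFormula$ if $I(q)\neq \emptyset = I(p)$ or vice versa, $\univer{a}\false$ for some $a \in I(q) \setminus I(p)$, a $\univer{\cdot}\false$ chain along a trace in the symmetric difference of $\traces(p)$ and $\traces(q)$, a negated simulation formula for \twoSim, or a negated bisimulation formula for \bisim).

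The main obstacle will be handling \twoSim\ and \bisim\ cleanly, because the grammar allows $\neg \phi_{\simulation}$ and $\neg \phi_{\bisim}$ inside formulas and the completeness proof must close under this. For \twoSim, I would prove completeness for \simulation\ first (a single induction on depth suffices), and then use the \simulation-distinguishing formula to negate for the backward clause $q \<_{\simulation} p$. For \bisim, the proof is the classical symmetric Hennessy--Milner argument: when the failure is on the backward side $q \stackrel{a}{\to} q'$, swap the roles of $p$ and $q$, apply the inductive hypothesis, and take the negation. Since the paper ultimately cites this result as folklore from \cite{dFGPR13,VanGlabbeek01}, I would expect the authors to either invoke those references directly or provide just the key inductive step, rather than redoing all six case analyses in detail.
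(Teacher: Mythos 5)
The paper does not actually prove this theorem: it is stated as a well-known result and imported wholesale from~\cite{dFGPR13,VanGlabbeek01}, so there is no in-paper proof to compare yours against. Your sketch is the standard Hennessy--Milner-style argument that those references carry out, and it is essentially sound: soundness by structural induction on formulae, using the shared forward clause for $\exist{a}$ and the semantics-specific side conditions for the extra constructors; completeness by contraposition, building a distinguishing formula by induction on depth. Two small points are worth tightening. First, to pass from ``$p \not\<_X q$'' to ``one of the defining clauses fails for $(p,q)$ with $\<_X$ in the recursive positions,'' you need the observation that $\<_X$, being the largest relation satisfying the clauses, is the greatest (hence a) fixed point of the associated monotone functional; this is standard but should be stated, as the definition by itself only gives one inclusion. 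Second, in the trace-simulation case of the completeness argument the shape of the distinguishing formula depends on which side of the symmetric difference the witnessing trace lies: for $\tau \in \traces(q) \setminus \traces(p)$ the chain $\univer{\tau}\false$ works (satisfied by $p$, refuted by $q$), but for $\tau \in \traces(p) \setminus \traces(q)$ you must use $\exist{\tau}\true$ instead, since a $\univer{\cdot}\false$ chain would separate the two processes in the wrong direction; your text mentions only the box-chain. With those caveats, the proposal matches the proofs in the cited sources, and your closing expectation is accurate --- the authors simply invoke those references.
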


We observe that all the logics defined above feature the
Boolean connective $\wedge$, as required by one of the assumptions of
Corollary~\ref{cor:decomposability_condition_for_spectrum}.
In what follows, we show that every logic meets also the other conditions of the
corollary, that is, it is finitely characterized by some monotonic $\B$, and for
each $\chi(p)$ there exists a formula $\bar\chi(p)$ such that either
$\sat{\bar\chi(p)}=P \setminus \sat{\chi(p)}$ (as it is the case for
$\Lset_{\bisim}$) or $p \notin \sat{\bar\chi(p)}$ and $\{ q \in P \mid \Lset(q)
\not\subseteq \Lset(p) \} \subseteq \sat{\bar\chi(p)}$ (which holds in all the
other cases).
This yields the characterization by primality for the logics for the semantics
in \bspectrumSet (Theorem~\ref{theo:char_by_prim_branching_spectrum}).

To this end, we first summarize, in
Table~\ref{tab:syntax_monotonicity_finite_char_branching_spectrum}, both the functions $\B$
and the formulae $\bar\chi(p)$ for all the branching time semantics (rightmost
column),
%
%---in the middle column we just re-state in a compact way the syntax for
%all corresponding logic fragments),
%
and then we prove their correctness.
In particular, in Lemma~\ref{lem:characterization_branching}, we prove the
finite characterization result,
%
%for ready simulation (i.e., correctness of functions $\B_{\readySim}$),
%
while in Lemma~\ref{lem:anti-char_branching} we show the correctness of the formula
$\bar\chi(p)$.
%
%for ready simulation; the other cases
%are dealt with in Lemmas~\ref{lem:characterized} and~\ref{lem:anti-char} in
%Appendix~\ref{app:branching}.
%
We have already pointed out the connection between our function \B and the
notion of characteristic formulae (see Proposition~\ref{prop:characteristic}).
As a matter of fact, roughly speaking, the definition of $\B_X(p)$ (with $X \in
\bspectrumSet$) provided in
Table~\ref{tab:syntax_monotonicity_finite_char_branching_spectrum} somehow
correspond to breaking the characteristic formula of $p$ (within logic
$\Lset_X$) into its conjuncts.
The same applies for the semantics in the linear time spectrum we will consider in Section~\ref{sec:linear-time-spectrum}.

For the sake of readability, besides including the functions $\B$ and the formulae
$\bar\chi(p)$,
Table~\ref{tab:syntax_monotonicity_finite_char_branching_spectrum} also recalls,
in a compact but equivalent way, the syntax of the different logical formalisms.
Roughly speaking, each language consists of an ``existential'' and a
``universal'' sub-language, as highlighted by the definitions in the second
column of \tablename~\ref{tab:syntax_monotonicity_finite_char_branching_spectrum} ($\phi_X
::= \phi_X^\exists \mid \phi_X^\forall$ for each $X \in \bspectrumSet$ apart
from simulation).
The ``existential'' sub-language (formulae derivable from the non-terminal
$\phi_X^\exists$) is common to all the logics and so is its definition (top of
\tablename~\ref{tab:syntax_monotonicity_finite_char_branching_spectrum}).
The ``universal'' sub-language (formulae derivable from
the non-terminal $\phi_X^\forall$) is what actually distinguishes
the various languages: its definition is provided for each logic
in the corresponding row.
%(see~\cite{this_extended} for further explanations and expanded definitions).
%
The operators $\exist{\tau}$ and $\univer{\tau}$ (with $\tau = a_1 a_2 \ldots a_k
\in Act^*$),
occurring in the definition of $\B$ for trace simulation (\traceSim),
are abbreviations for
$\exist{a_1}\exist{a_2} \ldots \exist{a_k}$
and $\univer{a_1}\univer{a_2} \ldots \univer{a_k}$, respectively
(notice that, in particular, both $\exist{\varepsilon} \varphi$ and $\univer{\varepsilon} \varphi$ will be treated as
$\varphi$).
%
%In addition, for each $\phi \in \Lset$, $\md(\phi)$ is defined by induction on the structure of $\phi$:
%
%{\centering
%$\md(\phi) =
%\left\{
%\begin{array}{ll}
% 0 & \text{if } \phi = \true \text{ or } \phi = \false \\
% \max \{ \md(\phi_1), \md(\phi_2) \} & \text{if } \phi = \phi_1 \wedge \phi_2 \text{ or } \phi = \phi_1 \vee \phi_2 \\
% \md(\phi_1) + 1 & \text{if } \phi = \exist{a} \phi_1 \\
% \md(\phi_1) & \text{if } \phi = \neg\phi_1 \\
%\end{array}
%\right.$
%
%}
%
%\hrule

%We provide here a detailed proof
%for the illustrative case of \emph{ready simulation} (\readySim)~\cite{Bloom95,VanGlabbeek01,LarsenSkou91} only,
%and refer the interested reader to the appendix (where in Table~\ref{tab:syntax_monotonicity_finite_char_spectrum} we define both the functions $\B$ and formulae $\bar\chi(p)$, for all the other semantics)
%for further details.
%We recall the ``expanded'' syntax of the corresponding logic
%$\Lset_\readySim$:
%
%{\centering
%$\phi_\readySim = \true \mid \false \mid \phi_\readySim \wedge \phi_\readySim
%\mid \phi_\readySim \vee \phi_\readySim \mid \exist{a} \phi_\readySim \mid
%\univer{a} \false$.
%
%}

% \subsubsection{Characterization of the formalisms}
%\subsubsection{Finite characterization}

\newcounter{lemcharacterizationready}
\setcounter{lemcharacterizationready}{\value{lemma}}
\newcommand{\lemcharacterizationready}{
  Let $X \in \bspectrumSet$. $\Lset_X$ is finitely characterized by $\B_X$, for
  some monotonic $\B_X$ (see
  Table~\ref{tab:syntax_monotonicity_finite_char_branching_spectrum}).
}

\begin{lemma} \label{lem:characterization_branching}
  \lemcharacterizationready
\end{lemma}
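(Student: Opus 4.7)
The plan is to verify, for each $X \in \bspectrumSet$, the four defining conditions of Definition~\ref{def:characterized} together with monotonicity of $\B_X$. Since every process in $P$ denotes a finite, loop-free labelled transition system, the arguments all proceed by induction on $\depth(p)$.

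Finiteness and non-emptiness of $\B_X(p)$ follow directly from the recursive definition: $\B_X^+(p)$ adds only formulas $\exist{a}\varphi$ with $a$ ranging over the finite set $Act$ and $\varphi$ a conjunction over a subset of the (finite, by induction) set $\B_X(p')$ for each $a$-successor $p'$, while $\B_X^-(p)$ is visibly finite in each case (bounded by $|Act|$ for complete and ready simulation, by $|\traces(p)|\cdot|Act|$ for trace simulation, and by the finite transition structure of $p$ for $\twoSim$ and $\bisim$). Non-emptiness is witnessed by $\true \in \B_X^+(p)$. The inclusion $\B_X(p) \subseteq \Lset_X(p)$ is verified case by case: for $\exist{a}\varphi \in \B_X^+(p)$ one uses the induction hypothesis together with the semantics of $\exist{a}$ from Definition~\ref{def:semantics_branching}; for the formulas in $\B_X^-(p)$ one unfolds the meaning of $\zeroFormula$, $\univer{a}\false$ or $\univer{\tau a}\false$ against the corresponding condition ($p \not\rightarrow$, $a \notin I(p)$, or $\tau a \notin \traces(p)$), while for $\twoSim$ and $\bisim$ the inclusion is built into the definition of $\B_X^-(p)$.

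The core of the proof is the entailment condition, i.e., $\bigcap_{\psi \in \B_X(p)} \sat{\psi} \subseteq \sat{\phi}$ for every $\phi \in \Lset_X(p)$. I would reduce this to the stronger claim that $q \in \bigcap_{\psi \in \B_X(p)} \sat{\psi}$ implies $p \<_X q$; the entailment then follows via Theorem~\ref{theo:branching-semantics} since $\phi \in \Lset_X(p) \subseteq \Lset_X(q)$. The reduction is proved by induction on $\depth(p)$: for each transition $p \stackrel{a}{\rightarrow} p'$, the formula $\exist{a}\bigwedge_{\psi \in \B_X(p')}\psi$ belongs to $\B_X^+(p)$, so $q$ must have an $a$-transition to some $q'$ satisfying every formula in $\B_X(p')$, and the induction hypothesis yields $p' \<_X q'$; this delivers the simulation clause common to all $X \in \bspectrumSet$. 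The supplementary formulas in $\B_X^-(p)$ then enforce, by direct semantic inspection, the extra requirements of each relation (empty initials for $\completeSim$, equal initials for $\readySim$, equal trace sets for $\traceSim$, the converse simulation clause for $\twoSim$, and the backward-matching clause for $\bisim$).

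Monotonicity is proved analogously by induction on $\depth(p)$: assuming $\Lset_X(p) \subseteq \Lset_X(q)$, i.e., $p \<_X q$, any $\exist{a}\bigwedge_{\psi \in \Psi}\psi \in \B_X^+(p)$ with $\Psi \subseteq \B_X(p')$ and $p \stackrel{a}{\rightarrow} p'$ lifts to $\B_X^+(q)$ by combining the simulation clause of $\<_X$ with the induction hypothesis applied to the matching $a$-successor. Formulas in $\B_X^-(p)$ transfer to $\B_X^-(q)$ via the semantics-specific invariants preserved by $\<_X$ ($\<_\completeSim$ preserves ``has no transitions'', $\<_\readySim$ preserves $I$, $\<_\traceSim$ preserves $\traces$). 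The principal obstacle I expect is in the cases of 2-nested simulation and bisimulation, where $\B_X^-(p)$ is defined by simultaneously specifying the syntactic shape of the formula in terms of $p$'s successors and requiring membership in $\Lset_X(p)$. For bisimulation this is benign since $\<_\bisim$ is symmetric (an immediate consequence of the symmetric shape of its defining clauses in Definition~\ref{def:branching-semantics}), so $p \<_\bisim q$ already forces $\B_\bisim(p) = \B_\bisim(q)$; for 2-nested simulation one exploits that $p \<_\twoSim q$ implies $q \<_\simulation p$, giving a tight enough correspondence between the maximal-successor structures of $p$ and $q$ to transfer each formula in $\B_\twoSim^-(p)$ to $\B_\twoSim^-(q)$.
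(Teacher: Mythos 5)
Your proposal is correct, and its treatment of non-emptiness, finiteness, soundness ($\B_X(p) \subseteq \Lset_X(p)$) and monotonicity follows essentially the same lines as the paper's proof: induction on $\depth(p)$, with the same special care for $\B^-_{\twoSim}$ and $\B^-_{\bisim}$ (for $\twoSim$ one in fact needs $p \equiv_\simulation q$, which follows from both clauses of $p \<_\twoSim q$, to match the maximal-successor structures up to simulation equivalence — the paper spells this out as a separate preliminary claim). Where you genuinely diverge is the entailment condition. The paper proves $\bigcap_{\psi\in\B_X(p)}\sat{\psi}\subseteq\sat{\phi}$ by structural induction on $\phi$: for each syntactic shape of $\phi\in\Lset_X(p)$ it manufactures a witness $\zeta\in\B_X(p)$ with $\sat{\zeta}\subseteq\sat{\phi}$ (e.g.\ $\zeta=\exist{a}\bigwedge_{\psi\in\B_X(p')}\psi$ when $\phi=\exist{a}\varphi$, with an inner induction to handle the negated simulation formulae of $\Lset_{\twoSim}$ and the $\univer{a}\varphi$ formulae of $\Lset_{\bisim}$). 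You instead prove the single semantic statement that $q\in\bigcap_{\psi\in\B_X(p)}\sat{\psi}$ forces $p\<_X q$, and then discharge every $\phi\in\Lset_X(p)$ at once via Theorem~\ref{theo:branching-semantics}; this amounts to showing directly that $\bigwedge_{\psi\in\B_X(p)}\psi$ is characteristic for $p$, which is exactly the strategy the paper adopts for the linear-time spectrum (Lemma~\ref{lem:characterization_linear} via Lemma~\ref{lem:linear-B-is-characteristic}). Your route buys uniformity — no case analysis on formula syntax and no inner induction for negation — at the price of leaning on the modal characterization theorem as an exact biconditional; the paper's structural induction keeps the entailment argument self-contained within the logic. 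Both are sound, and I see no gap in your outline beyond the expected level of detail of a sketch.
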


%\begin{lemma} \label{lem:characterization_ready}
% $\Lset_{\readySim}$ is finitely characterized by $\B$, for some monotonic $\B$.
%\end{lemma}

\begin{proof}
  We detail the case of ready simulation only (the proof for the other cases can
  be found in~\ref{sec:full-proof-finite-char}).
  For the sake of clarity we recall
  from \tablename~\ref{tab:syntax_monotonicity_finite_char_branching_spectrum}
  that $\boundfun_{\readySim}$ is defined as $\boundfun^+_{\readySim}(p)\cup
  \boundfun^-_{\readySim}(p)$, where
  \begin{compactitem}
  \item $\boundfun^+_{\readySim}(p)=\{\true\}\cup\{\exist{a}\varphi\mid a \in
    Act, \varphi= \bigwedge_{\psi\in\Psi}\psi, \Psi \subseteq
    \B_{\genericSemanticsX}(p'), p\stackrel{a}{\rightarrow}p'\}$, and

  \item $\boundfun^-_{\readySim}(p)=\{\univer{a}\false\mid
    a\in Act, p\stackrel{a}{\not\rightarrow} \}$.
  \end{compactitem}

  For the sake of a lighter notation, we omit the subscript $_{\readySim}$,
  i.e., we write $\boundfun$ (resp., $\boundfun^+$, $\boundfun^-$, $\<$) for
  $\boundfun_{\readySim}$ (resp., $\boundfun^+_{\readySim}$,
  $\boundfun^-_{\readySim}$, $\<_{\readySim}$) since there is no risk of
  ambiguity.

We show that, for every $p \in P$,
\begin{enumerate}[label=(\emph{\roman*})]
\item \label{item:Bsoundness} $\emptyset \subset \B(p)\subseteq\Lset(p)$,
\item \label{item:Bentailment} for each $\phi\in\Lset(p)$, it holds
  $\bigcap_{\psi\in\B(p)}\sat{\psi}\subseteq\sat{\phi}$,
\item \label{item:Bfiniteness} $\B(p)$ is finite, and
\item \label{item:Bmonotonicity} for each $q\in P$, if $\Lset(p)\subseteq\Lset(q)$ then $\B(p)\subseteq\B(q)$. 
\end{enumerate}

To begin with, we prove property~\ref{item:Bfiniteness}, which also tells us
that $\B$ is well-defined.
It is immediate to see that, since $Act$ is finite, so is $\B^-(p)$ for all $p
\in P$.
We show that also $\B(p)$ is finite for every $p \in P$ by induction on the
depth of $p$.
When $I(p) = \emptyset$ (base case),
%
% , that is, $p=\zeroProcess$, 
%
$\B(p)= \{ \true \} \cup \B^-(p)$, which is clearly finite.
Let us deal now with the inductive step ($I(p)\neq\emptyset$).
By the construction of $\B^+(p)$, a formula belongs to $\B^+(p)$ if, and only
if, it is either \true or $\exist{a}\varphi$, where $a \in Act$ and
$\varphi=\bigwedge_{\psi\in\Psi}\psi$, for some $\Psi \subseteq \B(p')$ and some
$p'$ such that $p\stackrel{a}{\rightarrow}p'$.
By the inductive hypothesis, $\B(p')$ is finite, and thus $\varphi$ is well
defined.
Since $Act$ is also finite and processes are finitely branching, there are only
finitely many such formulae $\exist{a}\varphi$, meaning that $\B^+(p)$ is
finite.
Therefore $\boundfun(p) = \boundfun^+(p)\cup \boundfun^-$ is finite as well.

In order to prove property~\ref{item:Bsoundness}, we preliminarily observe that $\true \in \B(p)$ for every $p \in P$, and thus $\emptyset \subset
\B(p)$.
Now, to prove that $\B(p) \subseteq \Lset(p)$ holds for every $p \in P$, we
first observe that $\B^-(p) \subseteq \Lset(p)$ trivially holds, by definition
of $\B^-(p)$, and then we show that $\B(p) \subseteq \Lset(p)$ also holds for
every $p \in P$, by induction on the depth of $p$.
When $I(p) = \emptyset$ (base case), we have $\B^+(p)=\{\true\} \subseteq
\Lset(p)$, and therefore $\B(p) \subseteq \Lset(p)$ holds as well.
To deal with the inductive step ($I(p) \neq \emptyset$), let $\phi\in\B^+(p)$.
If $\phi = \true$, then $\phi \in \Lset(p)$, and we are done.
Assume $\phi = \exist{a}\bigwedge_{\psi\in\Psi}\psi$, where $a \in Act$ and
$\Psi \subseteq \B(p')$ for some $p'$ such that $p\stackrel{a}{\rightarrow}p'$.
By the inductive hypothesis, we have that $\B(p')\subseteq\Lset(p')$, meaning
that $p' \in \sat{\psi}$ for all $\psi \in \Psi$.
Since $p\stackrel{a}{\rightarrow}p'$, we have that
$p\in\sat{\exist{a}\bigwedge_{\psi\in\Psi}\psi}$, which amounts to $\phi \in
\Lset(p)$.

%
%In order to prove property $(\ref{item:B1})$, let $\phi\in\B(p)$.
%We show that $\phi \in \Lset(p)$.
%If $\phi\in\B^-(p)$, $\phi\in\Lset(p)$ by definition of $\B^-(p)$. Now, if $\phi\in\B^+(p)$, we prove by induction on the depth of $p$ that also $\phi\in\Lset(p)$ holds. When $I(p) = \emptyset$, $\B^+(p)=\{\true\}$ which is also in $\Lset(p)$. Now, for the inductive step, let $\exist{a}\varphi\in\B^+(p)$. By definition, we have that $\varphi=\bigwedge_{\psi\in\Psi}\psi$, $\Psi\subseteq\B(p')$ and $p\stackrel{a}{\rightarrow}p'$, for some $a\in Act$, and by the inductive hypothesis that $\B(p')\subseteq\Lset(p')$. Thus, we have that $p\in\sat{\exist{a}\bigwedge_{\psi\in\Psi}\psi}$, since $\Psi\subseteq\Lset(p')$ and $p\stackrel{a}{\rightarrow}p'$, for some $a\in Act$. 
%

In order to prove property~\ref{item:Bentailment}, we let $\phi \in \Lset(p)$,
for a generic $p \in P$, and we proceed by induction on the structure of $\phi$
(notice that we can ignore the case $\phi = \false$, as $\phi \in \Lset(p)$
implies $\phi \neq \false$).
\begin{compactitem}
\item $\phi=\true$ or $\phi=\univer{a}\false$: it is enough to observe that
  $\phi\in\B(p)$, which implies $\bigcap_{\psi\in\B(p)} \sat{\psi} \subseteq
  \sat{\phi}$.

\item $\phi= \varphi_1 \vee \varphi_2$: it holds that $\varphi_i\in\Lset(p)$ for
  some $i\in \{ 1,2 \}$. By the inductive hypothesis, we have that
  $\bigcap_{\psi\in\B(p)}\sat{\psi}\subseteq \sat{\varphi_i}$ and, since
  $\sat{\varphi_i}\subseteq\sat{\phi}$, we obtain the claim.

\item $\phi= \varphi_1 \wedge \varphi_2$: it holds that $\varphi_i\in\Lset(p)$
  for all $i\in \{ 1,2 \}$.
  By the inductive hypothesis, we have that
  $\bigcap_{\psi\in\B(p)}\sat{\psi}\subseteq \sat{\varphi_i}$ for all $i \in \{
  1,2 \}$.
  This implies that $\bigcap_{\psi \in \B(p)} \sat{\psi} \subseteq
  \sat{\varphi_1} \cap \sat{\varphi_2} = \sat{\phi}$.

\item $\phi=\exist{a}\varphi$: by definition we have that $\varphi\in\Lset(p')$
  for some $p\stackrel{a}{\rightarrow}p'$.
  By the inductive hypothesis, we have that
  $\bigcap_{\psi\in\B(p')}\sat{\psi}\subseteq \sat{\varphi}$.
  We define $\zeta=\exist{a}\bigwedge_{\psi\in \B(p')}\psi$.
  Clearly, $\zeta$ belongs to $\B^+(p)$ (by construction---notice that $\zeta$
  is well defined due to the finiteness of $\B(p')$) and
  $\sat{\zeta}\subseteq\sat{\phi}$ (because $\bigcap_{\psi\in\B(p')}\sat{\psi}
  \subseteq \sat{\varphi}$).
  Hence, $\bigcap_{\psi\in\B(p)}\sat{\psi}\subseteq \sat{\zeta} \subseteq
  \sat{\phi}$ holds.
\end{compactitem}

Finally, we show that $\B$ is monotonic (property~\ref{item:Bmonotonicity}).
Consider $p,q\in P$, with $\Lset(p)\subseteq\Lset(q)$.
We want to show that $\phi\in\B(p)$ implies $\phi\in\B(q)$, for each $\phi$.
Firstly, we observe that, by $\Lset(p)\subseteq\Lset(q)$ and
Theorem~\ref{theo:branching-semantics}, $p \< q$ holds. Thus, we have that $I(p) = I(q)$ and,
for each $a \in Act$ and $p' \in P$ with $p \stackrel{a}{\rightarrow} p'$,
there exists some $q' \in P$ such that $q \stackrel{a}{\rightarrow} q'$
and $p' \< q'$.
We also observe that $\B^-(p) = \B^-(q)$, since $I(p) = I(q)$.
In order to show that $\B(p) \subseteq \B(q)$, we proceed by induction on the depth of $p$.
If $I(p) = \emptyset$, then $I(q) = \emptyset$ as well. Thus, we have that
$\B^+(p) = \B^+(q) = \{ \true \}$, and the thesis follows.
Otherwise ($I(p) \neq \emptyset$), let us consider a formula
$\phi \in \B(p)$.
If $\phi \in \B^-(p)$, then the claim follows from $\B^-(p) = \B^-(q) \subseteq
\B(q)$.
If $\phi = \true$, then, by definition of $\B^+$, we have $\phi \in \B^+(q)
\subseteq \B(q)$.
Finally, if $\phi = \exist{a}\varphi \in \B^+(p)$, then, by definition of
$\B^+$, there exist $p' \in P$, with $p\stackrel{a}{\rightarrow}p'$, such that
$\varphi=\bigwedge_{\psi\in \Psi}\psi$ for some $\Psi \subseteq \boundfun(p')$.
This implies the existence of some $q' \in P$ such that $q \stackrel{a}{\rightarrow} q'$
and $p' \< q'$ (and therefore $\Lset(p') \subseteq \Lset(q')$ by Theorem~\ref{theo:branching-semantics}).
By the inductive hypothesis, $\boundfun(p') \subseteq \boundfun(q')$ holds as
well, which means that $\Psi \subseteq \boundfun(q')$. Hence, we have that
$\exist{a}\varphi \in \B^+(q) \subseteq \B(q)$.
%\qed
%%
%%
\end{proof}

\newcounter{lemanticharready}
\setcounter{lemanticharready}{\value{lemma}}
\newcommand{\lemanticharready}{
 Let $X \in \bspectrumSet$.
 For each $p \in P$ and $\chiX{\genericSemanticsX}(p)$ characteristic within
 $\Lset_\genericSemanticsX$ for $p$, there exists a formula in
 $\Lset_\genericSemanticsX$, denoted by $\barchiX{\genericSemanticsX}(p)$, such
 that either
 \begin{itemize}
 \item $\sat{\bar\chi(p)}=P \setminus \sat{\chi(p)}$, or
 \item
%   \begin{inparaenum}[(i)]
% \item
%
   $p \not \in \sat{\barchiX{\genericSemanticsX}(p)}$ and
%
% \item
%
   $\{ p' \in P \mid p' \not \<_\genericSemanticsX p \} \subseteq \sat{\barchiX{\genericSemanticsX}(p)} $.
%   \end{inparaenum}
 \end{itemize}
}
\begin{lemma}\label{lem:anti-char_branching}
  \lemanticharready
\end{lemma}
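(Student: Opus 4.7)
The plan is to split the six cases into a trivial one and five structurally similar ones. For $X = \bisim$, the language $\Lset_\bisim$ includes unconstrained negation, so setting $\barchiBS(p) := \neg \chiBS(p)$ immediately yields $\sat{\barchiBS(p)} = P \setminus \sat{\chiBS(p)}$, realising the first alternative. For the remaining $X \in \{\simulation, \completeSim, \readySim, \traceSim, \twoSim\}$, I would establish the second alternative by induction on $\depth(p)$. Since $P$ consists of finite loop-free labelled transition systems, the recursive unfolding of $\barchiX{X}(p)$ in Table~\ref{tab:syntax_monotonicity_finite_char_branching_spectrum} terminates, and a syntactic check confirms that the resulting expression lies in $\Lset_X$. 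The base case $I(p) = \emptyset$ is handled by direct inspection: $p$ satisfies no disjunct of $\barchiX{X}(p)$ (the simulation-style disjuncts all begin with $\exist{a}$ while $p$ is a leaf, and the semantics-specific disjuncts are either absent or vacuous), whereas any $p'$ with $p' \not\<_X p$ must perform some action that $p$ cannot, and is picked up by the simulation disjunct whose inner conjunction is empty and hence equivalent to $\true$.

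For the inductive step the two required conjuncts are treated separately. To show $p \notin \sat{\barchiX{X}(p)}$, I would inspect each disjunct: the simulation disjunct $\exist{a} \bigwedge_{p \stackrel{a}{\rightarrow} p'} \barchiX{X}(p')$ cannot hold at $p$, since any witnessing $a$-successor $p''$ of $p$ would itself satisfy $\barchiX{X}(p'')$, contradicting the inductive hypothesis; the semantics-specific disjuncts are ruled out directly (e.g.\ $\zeroFormula$ when $I(p) \neq \emptyset$ for $\completeSim$, or $\univer{a}\false$ with $a \in I(p)$ for $\readySim$). Conversely, to prove $p' \not\<_X p \Rightarrow p' \in \sat{\barchiX{X}(p)}$, I would unfold Definition~\ref{def:branching-semantics} and case-split on the failing clause. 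If the simulation clause fails, there is $p' \stackrel{a}{\rightarrow} p''$ with $p'' \not\<_X q$ for every $a$-successor $q$ of $p$, so by the inductive hypothesis $p'' \in \bigcap_q \sat{\barchiX{X}(q)}$ and hence $p'$ satisfies the simulation disjunct for $a$. If a non-simulation clause fails, $p'$ falls into a semantics-specific disjunct: for $\readySim$, an $a \in I(p) \setminus I(p')$ gives $p' \in \sat{\univer{a}\false}$, while an $a \in I(p') \setminus I(p)$ is swept up by the simulation disjunct for $a$, whose inner conjunction is empty; analogous bookkeeping handles $\completeSim$ and $\traceSim$.

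The hardest part, I expect, is the $\twoSim$ case, where the auxiliary formula $\bar\Phi(p)$ is itself recursively defined and must be shown to capture exactly the failure of the nested condition $q \<_\simulation p$ that is added to simulation in the definition of $\<_\twoSim$. I plan to address this with a side induction on $\depth(p)$ proving simultaneously that $p \notin \sat{\bar\Phi(p)}$ and that $q \not\<_\simulation p$ implies $q \in \sat{\bar\Phi(p)}$, by mirroring the recursive structure of $\bar\Phi$ and of the simulation clause. A similar but more intricate trace-by-trace argument is required for $\traceSim$, where the formula explicitly references $\traces(p)$ and the ``bad'' extensions of traces of $p$ have to be translated into the appropriate $\exist{\tau a}\true$ and $\univer{\tau a}\false$ disjuncts.
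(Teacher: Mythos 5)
Your overall strategy coincides with the paper's: $\barchiBS(p)=\neg\chiBS(p)$ settles bisimulation via the first alternative, and the remaining five cases are handled by induction on $\depth(p)$, proving $p\notin\sat{\bar\chi(p)}$ and the coverage property separately, with the simulation disjunct absorbing failures of the transfer clause and the semantics-specific disjuncts absorbing failures of the extra condition. This matches the paper's proof (spelled out in the appendix), including your treatment of the base case, of \readySim, and of \completeSim.

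There is, however, a genuine error in your plan for the case you rightly single out as the hardest, \twoSim. The side claim you propose to prove for $\bar\Phi$ --- that $q\not\<_{\simulation} p$ implies $q\in\sat{\bar\Phi(p)}$ --- is false, and it is also not the claim you need. Unfolding Definition~\ref{def:branching-semantics} at the pair $(q,p)$, the second clause of $q\<_{\twoSim} p$ is $p\<_{\simulation} q$, so its failure is $p\not\<_{\simulation} q$, not $q\not\<_{\simulation} p$; the sub-lemma the paper proves by an inner induction is accordingly that $p\not\<_{\simulation} q$ implies $q\in\sat{\bar\Phi(p)}$. To see that your version fails, take $Act\supseteq\{a,b\}$, $p=a\zeroProcess$ and $q=a\zeroProcess+b\zeroProcess$: then $q\not\<_{\simulation} p$, but $\bar\Phi(p)$ is logically equivalent to $\univer{a}\false$ (since $\bar\Phi(\zeroProcess)$ is an empty disjunction, i.e.\ $\false$), and $q\notin\sat{\univer{a}\false}$. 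The attempted induction breaks exactly where one would expect: a move $q\stackrel{b}{\rightarrow}q'$ witnessing $q\not\<_{\simulation}p$ with $b\notin I(p)$ is matched by no disjunct of $\bar\Phi(p)$, because every disjunction in $\bar\Phi(p)$ ranges over $a\in I(p)$. With the quantification corrected (the witnessing move comes from $p$, hence from $I(p)$), the ``mirroring'' induction you describe does go through as in the paper; note also that $\bar\Phi(p)$ need not capture the relevant set \emph{exactly} --- it suffices that it contains $\{q\mid p\not\<_{\simulation}q\}$ while excluding $p$ itself.
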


%\begin{lemma}\label{lem:anti-char_ready}
%% Let $X \in \bspectrumSet \setminus \{\bisim\}$.
%  For each $\chi(p) \in \Lset_{\readySim}$ there exists a formula in
%  $\Lset_{\readySim}$, denoted by $\bar \chi(p)$, such that
% \begin{inparaenum}[(i)]
%  \item $p \not \in \sat{\bar \chi(p)}$ and
%  \item $\{ p' \in P \mid p' \not \<_{\readySim} p \} \subseteq
%  \sat{\bar \chi(p)} $.
% \end{inparaenum}
%
%\end{lemma}

\begin{proof}

  We detail the case of ready simulation only (the proof for the other cases can
  be found in~\ref{sec:full-proof-existence-of-chi}).
%
%\noindent{\bf Case ready simulation: (\readySim).}
For the sake of clarity we recall here the definition of $\barchiRS$
from \tablename~\ref{tab:syntax_monotonicity_finite_char_branching_spectrum}:

{\centering
$\barchiRS(p)= \big(\bigvee_{a\in Act} \exist{a}\bigwedge_{p\stackrel{a}{\rightarrow}p'} \barchiRS(p')\big) \vee \bigvee_{a\in I(p)}\univer{a}\false$.

}

As we did for the proof of the previous lemma, we omit the subscript
$_{\readySim}$ with no risk of ambiguity, e.g., we write $\bar\chi$ (resp.,
$\equiv$) for $\barchiRS$ (resp., $\equiv_{\readySim}$).

Let us first show that for every $p\in P$ we have $p \not \in \sat{\bar \chi(p)}$. We proceed by induction on the depth of $p$. 
\begin{compactitem}
\item $I(p) = \emptyset$: we have $\bar
  \chi(p)=\bigvee_{a\in Act} \exist{a}\true$ (up to logical equivalence), and thus
  $p\notin\sat{\bar \chi(p)}$.

\item $I(p) \neq \emptyset$: obviously, $p\notin\sat{\univer{a}\false}$ holds for every $a\in I(p)$. Moreover, for every $a \in Act$ and every $p\stackrel{a}{\rightarrow}p''$, by the inductive hypothesis, $p'' \notin \sat{\bar \chi(p'')}$. Thus, $p''\notin \sat{\bigwedge_{p\stackrel{a}{\rightarrow}p'}\bar \chi(p')}$ and therefore  $p\notin \sat{\exist{a}\bigwedge_{p\stackrel{a}{\rightarrow}p'}\bar \chi(p')}$ for every $a \in Act$. Hence, we obtain that $p \not \in \sat{\bar \chi(p)}$.
\end{compactitem}

Now, let us show that $\{ p' \in P \mid p' \not \< p \} \subseteq \sat{\bar \chi(p)}$,
that is, $\sat{\bar \chi(p)}$ contains at least
the elements that are either strictly above $p$ or
incomparable with it.
The proof is by induction on the depth of $p$.
\begin{compactitem}
\item $I(p) = \emptyset$: we have that $\{ p' \in P \mid p' \not \< p \}=P\setminus\{p\in P\mid I(p)=\emptyset\}$ because in this case we have that $p \equiv \zeroProcess$ and thus $q\< \zeroProcess$ does not hold for any process $q$ with $I(q)\neq\emptyset$. It is easy to see that $P\setminus\{p\in P\mid I(p)=\emptyset\} \subseteq \sat{\bigvee_{a\in Act} \exist{a}\true}= \sat{\bar \chi(p)}$.

\item $I(p) \neq \emptyset$: let $q\not\< p$. Thus, either $I(q)\neq I(p)$ or there exists some $q'$, with $q\stackrel{a}{\rightarrow}q'$, such that, for every $p'$, $p\stackrel{a}{\rightarrow}p'$ implies $q'\not\< p'$. If it is the case that $I(q)\neq I(p)$, then either $q\in\sat{\exist{a}\true}$ holds for some $a\notin I(p)$, or $q\in\sat{\univer{a}\false}$ for some $a\in I(p)$. In either case, $q\in \sat{\bar \chi(p)}$ holds.

  Otherwise, if there exist $a\in Act$ and $q'\in P$, with $q\stackrel{a}{\rightarrow}q'$, such that $q'\not \< p'$ for every $p\stackrel{a}{\rightarrow}p'$, then, by the inductive hypothesis, $q'\in\sat{\bar \chi(p')}$ for every $p'$ such that $p\stackrel{a}{\rightarrow}p'$. Thus, $q'\in\sat{\bigwedge_{p\stackrel{a}{\rightarrow}p'}\bar \chi(p')}$ and therefore $q\in\sat{\exist{a}\bigwedge_{p\stackrel{a}{\rightarrow}p'}\bar \chi(p')}$. Hence, we conclude $q\in \sat{\bar \chi(p)}$.
  \qedhere
\end{compactitem}
\end{proof}

Finally, the following theorem states the main result of this section.
\begin{theorem}[Characterization by primality for the branching time
  spectrum] \label{theo:char_by_prim_branching_spectrum} Let $X \in \bspectrumSet$
  and $\phi \in \Lset_X$.
 Then, $\phi$ is consistent and prime if and only if $\phi$ is
 characteristic for some $p \in P$.
\end{theorem}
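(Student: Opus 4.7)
The plan is to combine the general framework built up in the paper with the two lemmas that precede the theorem. Specifically, the theorem is essentially a corollary of Theorems~\ref{theo:char_prime} and~\ref{theo:decomp}, once the hypotheses of Corollary~\ref{cor:decomposability_condition_for_spectrum} are verified for each $\Lset_X$ with $X \in \bspectrumSet$.

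For the forward (``only if'') direction, I would simply invoke Theorem~\ref{theo:char_prime}: since $\Lset_X$ characterizes a logical preorder on $P$ (by Theorem~\ref{theo:branching-semantics}), any characteristic formula in $\Lset_X$ is automatically consistent and prime, with no further work needed.

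For the converse (``if'') direction, the strategy is to show that each $\Lset_X$ is decomposable and then apply Theorem~\ref{theo:decomp}. To obtain decomposability, I would verify the two provisos of Corollary~\ref{cor:decomposability_condition_for_spectrum}. First, each $\Lset_X$ (for $X \in \bspectrumSet$) contains the Boolean connective $\wedge$ by Definition~\ref{def:syntax_branching_spectrum}. Second, Lemma~\ref{lem:characterization_branching} gives the finite characterization of $\Lset_X$ by the monotonic function $\B_X$ defined in Table~\ref{tab:syntax_monotonicity_finite_char_branching_spectrum}, supplying proviso~(\ref{item:final_req_characteristic}). Third, Lemma~\ref{lem:anti-char_branching} constructs, for each $p \in P$, the witness formula $\barchiX{X}(p)$ with the properties demanded by proviso~(\ref{item:final_req_chibar})---either $\sat{\barchiX{X}(p)} = P \setminus \sat{\chiX{X}(p)}$ (the bisimulation case, via negation) or $p \notin \sat{\barchiX{X}(p)}$ together with $\{ q \in P \mid \Lset_X(q) \not\subseteq \Lset_X(p) \} \subseteq \sat{\barchiX{X}(p)}$ (the remaining five cases). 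Hence Corollary~\ref{cor:decomposability_condition_for_spectrum} applies to each $\Lset_X$, yielding its decomposability.

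With decomposability in hand, Theorem~\ref{theo:decomp} delivers exactly what is needed: every consistent and prime formula of $\Lset_X$ is characteristic for some $p \in P$. Combining both directions completes the proof. There is essentially no obstacle in this synthesis step: all the technical difficulty has already been absorbed into Lemmas~\ref{lem:characterization_branching} and~\ref{lem:anti-char_branching}, whose proofs must handle the six semantics individually (the main subtleties being the careful definition of $\B^-_X$ for $\traceSim$, $\twoSim$, and $\bisim$, and the recursive construction of $\barchiX{X}(p)$ in those same cases). At the level of the theorem statement itself, the argument is a clean assembly of previously established results.
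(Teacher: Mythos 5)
Your proposal is correct and follows exactly the same route as the paper: the forward direction is Theorem~\ref{theo:char_prime}, and the converse assembles Lemmas~\ref{lem:characterization_branching} and~\ref{lem:anti-char_branching} to discharge the hypotheses of Corollary~\ref{cor:decomposability_condition_for_spectrum}, after which Theorem~\ref{theo:decomp} concludes. Nothing is missing.
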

\begin{proof} The claim immediately follows from Theorem~\ref{theo:char_prime},
  Theorem~\ref{theo:decomp},
  Corollary~\ref{cor:decomposability_condition_for_spectrum},
  Lemma~\ref{lem:characterization_branching}, and Lemma~\ref{lem:anti-char_branching}.
% \qed
 \end{proof}

%%% Local Variables:
%%% mode: latex
%%% TeX-master: "logical_graphical-ext"
%%% End:

\subsection{The linear time spectrum} \label{sec:linear-time-spectrum}
%
%Let $X \in \lspectrumSet = \{ \trace, \completeTrace, \failure, \failureTrace,
%\ready, \readyTrace, \impossibleFuture, \possibleFuture, \impossibleFutureTrace,
%\possibleFutureTrace, \impossibleSimulation, \possibleSimulation,
%\impossibleSimulationTrace, \possibleSimulationTrace \}$.

In this sub-section, we focus on the semantics in \emph{van Glabbeek's linear
  time spectrum},~\cite{dFGPR13,VanGlabbeek01}, and their corresponding logical
formalisms.

To begin with, we define sets $\genericSemanticsX(p)$, for $X \in \lspectrumSet$ (defined on page~\pageref{lspectrumSet})
and $p \in P$ (they were originally defined in~\cite{dFGPR13,VanGlabbeek01}).
Intuitively, $\genericSemanticsX(p)$ (with $\genericSemanticsX \in
\lspectrumSet$) establishes the granularity of \emph{process observations},
and thus the level of detail at which processes are compared.
For example, $\trace(p)$ contains all traces that $p$ is able to perform;
similarly, $\completeTrace(p)$ contains all traces that $p$ is able to
perform and that lead to a process where no action can be performed.
When we move up to semantics in the three diamonds (see
Figure~\ref{fig:VGspectrum}), things get a bit more involved: $\ready(p)$
contains pairs $\langle \tau, Y \rangle$ whose first element is a trace that
takes $p$ to a process $p'$ whose set of \emph{initials} (actions it can
perform) is exactly $Y$; on the contrary, $\langle \tau, Y \rangle \in
\failure(p)$ denotes the fact that $p$ reaches $p'$ through $\tau$ and $p'$ cannot
perform any action in $Y$; $\readyTrace(p)$ contains words over the alphabet
$(Act \cup \powerset{Act} )^*$, that is, $\sigma \in \readyTrace(p)$ is a finite
sequence $\sigma_1 \sigma_2 \ldots \sigma_k$ where each $\sigma_i$ is either an
action (\emph{action elements}) or a set of actions (\emph{set elements}):
action elements identify a trace $\tau$ that $p$ can perform, while set elements
represent initials of (some of) the processes reached while performing $\tau$; finally,
elements of $\failureTrace(p)$ differ from the ones in $\readyTrace(p)$ in that
set elements represent actions that cannot be performed (rather than initials).
Set $X(p)$, where $X$ is a semantics in the 2nd or the 3rd diamond, is defined
analogously to its counterpart in the 1st diamond (see
Figure~\ref{fig:VGspectrum}); the only difference is that set elements carry
information on traces that can/cannot be performed (2nd diamond) or on processes
that are/are not simulated (3rd diamond) rather than on actions that can/cannot
be performed.

The following observation will be useful later on, when we will finitely
characterize the logics characterizing the semantics in the linear time spectrum through some function $\B$.

\begin{remark} \label{rem:bounded-trace-length}
  We can safely reduce to words in $\readyTrace(p)$ and $\failureTrace(p)$ whose
  length is bounded by $2 \cdot \depth(p) + 1$.
  Indeed, the number of action elements occurring in any such word $\sigma$
  cannot be greater than the length of a longest trace $p$ can perform; moreover,
  two consecutive set elements can be suitably merged into one: for instance, word
  $a \{a\} \{b\} \in \failureTrace(p)$ says that $p$ can perform $a$ and reach
  process $p'$, which in turn can perform neither $a$ nor $b$; the same
  information is captured by $a \{a,b \}$, which is also an element of
  $\failureTrace(p)$.

  The same observation applies to corresponding sets for the semantics in the
  2nd and 3rd diamond, i.e., $\possibleFutureTrace(p)$,
  $\impossibleFutureTrace(p)$, $\possibleSimulationTrace(p)$, and
  $\impossibleSimulationTrace(p)$.
\end{remark}

Sets $X(p)$ (for $X \in \lspectrumSet$ and $p \in P$) are formally defined
below.
For every $p \in P$, $\eqClasspbisim$ is the equivalence class of $p$ with
respect to bisimulation equivalence, that is, $\eqClasspbisim = \{ q \in P \mid
q \equiv_{\bisim} p \}$; for every $Q \subseteq P$, we use $\eqClassQbisim$ to
denote the set of equivalence classes of processes in $Q$ with respect to
bisimulation equivalence, that is, $\eqClassQbisim = \{ \eqClasspbisim \mid p
\in Q \}$; moreover, we use $\Sequivclass{p}$ to denote the set of equivalence
classes of processes that are simulated by $p$, that is, $\Sequivclass{p} = \{
\eqClassqbisim \in \eqClassPbisim \mid q \<_\simulation p \}$.
Notice that \Sequivclass{p} is finite for all $p$.

\begin{description}

\item [trace (\trace):] $\trace(p) = \{ \varepsilon \} \cup \{ a\tau \mid
  \exists p' \in P \ . \ p~\stackrel{\aProcess}{\rightarrow}~p' \text{ and }
  \tau \in \trace(p') \}$;\footnote{$\trace(p)$ is defined in the same way as
    $\traces(p)$ (cf. equation~(\ref{eq:trace_definition}) at
    page~\pageref{eq:trace_definition}); we re-define it using a different
    notation to ease the reading and to be uniform with the notation used for
    the other linear time semantics.}

\item [complete trace (\completeTrace):] $\completeTrace(p) = \{ \tau \in
  \trace(p) \mid \exists p' . p \stackrel{\tau}{\rightarrow} p' \text{ and }
  p' \not \rightarrow
  \}$;

\item [ready (\ready):] $\ready(p) = \{ \langle \tau, Y \rangle \in Act^* \times
  \powerset{Act} \mid \ \exists p' \in P \ . \ p~\stackrel{\tau}{\rightarrow}~p'
  \text{ and } I(p') = Y \}$;

\item [ready trace (\readyTrace):] \hspace{1cm}

  \begin{compactitem}
  \item $p \readyarrow{\varepsilon} p$ for all $p \in P$,
  \item $p \readyarrow{Y} p$ for all $p \in P$ and $Y \subseteq Act$ such that
    $I(p) = Y$,
  \item if $p \xrightarrow{a} q$, then $p \readyarrow{a} q$, for all $p,q \in
    P$ and $a \in Act$,
  \item if $p \readyarrow{\sigma} q$ and $q \readyarrow{\rho} r$, then $p
    \readyarrow{\sigma \cdot \rho} r$, for all $p,q,r \in P$ and $\sigma, \rho
    \in (Act \cup \powerset{Act})^*$,
  \item $\readyTrace(p) = \{ \sigma \in (Act \cup \powerset{Act} )^* \mid
    \exists q \in P \ . \ p~\readyarrow{\sigma}~q \}$;
  \end{compactitem}

\item [failure (\failure):] $\failure(p) = \{ \langle \tau, Y \rangle \in Act^*
  \times \powerset{Act} \mid \ \exists p' \in P \ . \
  p~\stackrel{\tau}{\rightarrow}~p' \text{ and } I(p') \cap Y = \emptyset \}$;

\item [failure trace (\failureTrace):] \hspace{1cm}

  \begin{compactitem}
%  \item $p \xrightarrow{X} q$ if and only if $p = q$
%    and $I(p') \cap X = \emptyset$, for every $p,q \in P$;

%  \item {[\emph{failure trace relation}]} $\xrightarrow{\sigma}$ is the reflexive and transitive closure of the \emph{action\text{-}refusal} relation defined as $\{ p \xrightarrow{\sigma} q \mid \sigma \in (Act \cup \powerset{Act}
%    )^* \}$;

  \item $p \refusalarrow{\varepsilon} p$ for all $p \in P$,
  \item $p \refusalarrow{Y} p$ for all $p \in P$ and $Y \subseteq Act$ such that
    $I(p) \cap Y = \emptyset$,
  \item if $p \xrightarrow{a} q$, then $p \refusalarrow{a} q$ for all $p,q \in
    P$ and $a \in Act$,
  \item if $p \refusalarrow{\sigma} q$ and $q \refusalarrow{\rho} r$, then $p
    \refusalarrow{\sigma \cdot \rho} r$ for all $p,q,r \in P$ and $\sigma, \rho
    \in (Act \cup \powerset{Act})^*$,
  \item $\failureTrace(p) = \{ \sigma \in (Act \cup
    \powerset{Act} )^* \mid \exists q \in P \ . \ p~\refusalarrow{\sigma}~q \}$;

  \end{compactitem}

\item [possible future (\possibleFuture):] $\possibleFuture(p) = \{ \langle
  \tau, \Gamma \rangle \in Act^* \times \powerset{Act^*} \mid \ \exists p' \in P \
  . \ p~\stackrel{\tau}{\rightarrow}~p' \text{ and } \trace(p') = \Gamma \}$;

\item [possible-future trace (\possibleFutureTrace):] \hspace{1cm}

  \begin{compactitem}
  \item $p \possiblearrow{\varepsilon} p$ for all $p \in P$,
  \item $p \possiblearrow{\Gamma} p$ for all $p \in P$ and $\Gamma \subseteq
    Act^*$ such that $\trace(p) = \Gamma$,
  \item if $p \xrightarrow{a} q$, then $p \possiblearrow{a} q$, for all $p,q \in
    P$ and $a \in Act$,
  \item if $p \possiblearrow{\sigma} q$ and $q \possiblearrow{\rho} r$, then $p
    \possiblearrow{\sigma \cdot \rho} r$, for all $p,q,r \in P$ and $\sigma,
    \rho \in (Act \cup \powerset{Act^*})^*$,
  \item $\possibleFutureTrace(p) = \{ \sigma \in (Act \cup \powerset{Act^*} )^*
    \mid \exists q \in P \ . \ p~\possiblearrow{\sigma}~q \}$;
  \end{compactitem}

\item [impossible future (\impossibleFuture):] $\impossibleFuture(p) = \{
  \langle \tau, \Gamma \rangle \in Act^* \times \powerset{Act^*} \mid \ \exists p'
  \in P \ . \ p~\stackrel{\tau}{\rightarrow}~p' \text{ and } \trace(p') \cap
  \Gamma = \emptyset \}$;

\item [impossible-future trace (\impossibleFutureTrace):] \hspace{1cm}

  \begin{compactitem}
  \item $p \impossiblearrow{\varepsilon} p$ for all $p \in P$,
  \item $p \impossiblearrow{\Gamma} p$ for all $p \in P$ and $\Gamma \subseteq
    Act^*$ such that $\trace(p) \cap \Gamma = \emptyset$,
  \item if $p \xrightarrow{a} q$, then $p \impossiblearrow{a} q$, for all $p,q
    \in P$ and $a \in Act$,
  \item if $p \impossiblearrow{\sigma} q$ and $q \impossiblearrow{\rho} r$, then
    $p \impossiblearrow{\sigma \cdot \rho} r$, for all $p,q,r \in P$ and $\sigma,
    \rho \in (Act \cup \powerset{Act^*})^*$,
  \item $\impossibleFutureTrace(p) = \{ \sigma \in (Act \cup \powerset{Act^*} )^*
    \mid \exists q \in P \ . \ p~\impossiblearrow{\sigma}~q \}$.
  \end{compactitem}

\item [possible 2-simulation (\possibleSimulation):] $\possibleSimulation(p) =
  \{ \langle \tau, \mathbb P \rangle \in Act^* \times \powerset{\eqClassPbisim}
  \mid \ \exists p' \in P \ . \ p~\stackrel{\tau}{\rightarrow}~p' \text{ and }
  \Sequivclass{p'} = \mathbb P \}$;

\item [possible-2-simulation trace (\possibleSimulationTrace):] \hspace{1cm}

  \begin{compactitem}
  \item $p \possiblesimulationarrow{\varepsilon} p$ for all $p \in P$,
  \item $p \possiblesimulationarrow{\mathbb P} p$ for all $p \in P$ and $\mathbb
    P \subseteq \eqClassPbisim$ such that $\Sequivclass{p} = \mathbb P$,
  \item if $p \xrightarrow{a} q$, then $p \possiblesimulationarrow{a} q$, for
    all $p,q \in P$ and $a \in Act$,
  \item if $p \possiblesimulationarrow{\sigma} q$ and $q
    \possiblesimulationarrow{\rho} r$, then $p \possiblesimulationarrow{\sigma \cdot
      \rho} r$, for all $p,q,r \in P$ and $\sigma, \rho \in (Act \cup
    \powerset{\eqClassPbisim})^*$,
  \item $\possibleSimulationTrace(p) = \{ \sigma \in (Act \cup
    \powerset{\eqClassPbisim} )^* \mid \exists q \in P \ . \
    p~\possiblesimulationarrow{\sigma}~q \}$;
  \end{compactitem}

\item [impossible 2-simulation (\impossibleSimulation):]
  $\impossibleSimulation(p) = \{ \langle \tau, \mathbb P \rangle \in Act^* \times
  \powerset{\eqClassPbisim} \mid \ \exists p' \in P \ . \
  p~\stackrel{\tau}{\rightarrow}~p' \text{ and } \Sequivclass{p'} \cap \mathbb P =
  \emptyset \}$;

\item [impossible-2-simulation trace (\impossibleSimulationTrace):]
  \hspace{1cm}

  \begin{compactitem}
  \item $p \impossiblesimulationarrow{\varepsilon} p$ for all $p \in P$,
  \item $p \impossiblesimulationarrow{\mathbb P} p$ for all $p \in P$ and
    $\mathbb P \subseteq \eqClassPbisim$ such that $\Sequivclass{p} \cap \mathbb P =
    \emptyset$,
  \item if $p \xrightarrow{a} q$, then $p \impossiblesimulationarrow{a} q$, for
    all $p,q \in P$ and $a \in Act$,
  \item if $p \impossiblesimulationarrow{\sigma} q$ and $q
    \impossiblesimulationarrow{\rho} r$, then $p \impossiblesimulationarrow{\sigma
      \cdot \rho} r$, for all $p,q,r \in P$ and $\sigma, \rho \in (Act \cup
    \powerset{\eqClassPbisim})^*$,
  \item $\impossibleSimulationTrace(p) = \{ \sigma \in (Act \cup
    \powerset{\eqClassPbisim} )^* \mid \exists q \in P \ . \
    p~\impossiblesimulationarrow{\sigma}~q \}$.
  \end{compactitem}

\end{description}

\begin{definition}[Linear time semantic relations~\cite{dFGPR13,VanGlabbeek01}]
  \label{def:linear-semantics}
  For each $X \in \lspectrumSet$, $\<_X$ is defined as follows:

  {\centering $p\<_X q \Leftrightarrow X(p) \subseteq X(q)$,

  }

  \noindent
  while $\equiv_X$ is defined as

  {\centering $p \equiv_X q \Leftrightarrow p \<_X q$ and $q \<_X p
    \Leftrightarrow X(p) = X(q)$.

  }

%
%  For each $X \in
%  \lspectrumSet$, $\<_X$ is the largest relation satisfying the following
%  conditions for each $p,q \in P$:
%
%in \tablename~\ref{tab:semantics}.
%
%  \begin{description}
%  \item [trace:] if $X = \trace$, then $p\<_\trace q \Leftrightarrow \traces(p)
%    \subseteq \traces(q)$;
%
%  \item [complete trace:] if $X = \completeTrace$, then $p\<_\completeTrace q
%    \Leftrightarrow \completeTraces(p)
%    \subseteq \completeTraces(q)$;
%
%  \item [ready:] if $X = \ready$, then $p\<_\ready q \Leftrightarrow
%    \readyPairs(p) \subseteq \readyPairs(q)$;
%
%  \item [ready trace:] if $X = \readyTrace$, then $p\<_\readyTrace q
%    \Leftrightarrow \readyTraces(p) \subseteq \readyTraces(q)$;
%
%  \item [failure:] if $X = \failure$, then $p\<_\failure q \Leftrightarrow
%    \failurePairs(p) \subseteq \failurePairs(q)$;
%
%  \item [failure trace:] if $X = \failureTrace$, then $p\<_\failureTrace q
%    \Leftrightarrow \failureTraces(p) \subseteq \failureTraces(q)$;
%
%  \end{description}

\end{definition}

\noindent{\bf Linear time logics.}
Unlike the logics for branching time semantics, the linear time logics do not yield a
chain of strict inclusions; their expressive power is captured by a partial
ordering relation, as shown in \figurename~\ref{fig:VGspectrum} (right-hand
side).

The syntax defining the logics characterizing the semantics in van Glabbeek's
linear time spectrum is as follows.
Let us recall here the meaning of the following abbreviations:
\begin{itemize}
\item $\univer{a} \psi$ stands for $\neg \exist{a} \neg \psi$,
\item $\zeroFormula$ stands for $\bigwedge_{a \in Act} \univer{a} \false$,
\item $\exist{\tau}$ (with $\tau = a_1 a_2 \ldots a_k \in Act^*$) stands for
  $\exist{a_1}\exist{a_2} \ldots \exist{a_k}$, and
\item $\univer{\tau}$ (with $\tau = a_1 a_2 \ldots a_k \in Act^*$) stands for
  $\univer{a_1}\univer{a_2} \ldots \univer{a_k}$.
\end{itemize}
Additionally, for $Y \subseteq Act$ (resp., finite set $\Gamma \subseteq
Act^*$), we use $\existuniver{Y}$ (resp., $\existuniver{\Gamma}$) as an
abbreviation for $\bigwedge_{a \in Y} \exist{a}\true \wedge \bigwedge_{a \in Act
  \setminus Y} \univer{a}\false$ (resp., $\bigwedge_{\tau \in \Gamma}
\exist{\tau}\true \wedge \bigwedge_{\tau \in Act^*_{|_\Gamma} \setminus \Gamma}
\univer{\tau}\false$), where $Act^*_{|_\Gamma} = \{ \tau \in Act^* \mid |\tau|
\leq (\max_{\tau' \in \Gamma} |\tau'| )+ 1 \}$

\begin{definition}[Syntax~\cite{dFGPR13,VanGlabbeek01}]\label{def:syntax_linear_spectrum}
  Let $X \in \lspectrumSet$, $\Lset_X$ is the language defined by the
  corresponding grammar among the following ones:

$\Lset_\trace$:

{\centering
\begin{tabular}{lll}
$\phi_\trace$ & $::=$ & $\true \mid \false \mid \phi_\trace \wedge \phi_\trace
\mid \phi_\trace \vee \phi_\trace \mid \psi_\trace$ \\

$\psi_\trace$ & $::=$ & $\true \mid \exist{a}\psi_\trace$,
\end{tabular}

}

$\Lset_\completeTrace$:

{\centering
\begin{tabular}{lll}
$\phi_\completeTrace$ & $::=$ & $\true \mid \false \mid \phi_\completeTrace \wedge \phi_\completeTrace
\mid \phi_\completeTrace \vee \phi_\completeTrace \mid \psi_\completeTrace$ \\

$\psi_\completeTrace$ & $::=$ & $\true \mid \zeroFormula \mid \exist{a}\psi_\completeTrace$,
\end{tabular}

}

$\Lset_\failure$:

{\centering
\begin{tabular}{lll}
%$\phi_\failure$ & $::=$ & $\true \mid \false \mid \phi_\failure \wedge \phi_\failure
%\mid \phi_\failure \vee \phi_\failure \mid \psi_\failure$ \\
%
%$\psi_\failure$ & $::=$ & $\univer{X}\false \mid \exist{a}\psi_\failure$, \\ \\

$\phi_\failure$ & $::=$ & $\true \mid \false \mid \phi_\failure \wedge \phi_\failure
\mid \phi_\failure \vee \phi_\failure \mid \psi_\failure$ \\

$\psi_\failure$ & $::=$ & $\true \mid \exist{a}\psi_\failure \mid \gamma_\failure$ \\

$\gamma_\failure$ & $::=$ & $\univer{a}\false \mid \gamma_\failure \wedge \gamma_\failure$,

\end{tabular}

}

$\Lset_\ready$:

{\centering
\begin{tabular}{lll}
$\phi_\ready$ & $::=$ & $\true \mid \false \mid \phi_\ready \wedge \phi_\ready
\mid \phi_\ready \vee \phi_\ready \mid \psi_\ready$ \\

$\psi_\ready$ & $::=$ & $\true \mid \existuniver{Y} \mid \exist{a}\psi_\ready$,
\end{tabular}

}

$\Lset_\failureTrace$:

{\centering
\begin{tabular}{lll}
$\phi_\failureTrace$ & $::=$ & $\true \mid \false \mid \phi_\failureTrace \wedge \phi_\failureTrace
\mid \phi_\failureTrace \vee \phi_\failureTrace \mid \psi_\failureTrace$ \\

$\psi_\failureTrace$ & $::=$ & $\true \mid \univer{a}\false \mid \exist{a}\psi_\failureTrace \mid \univer{a}\false \wedge
                                 \psi_\failureTrace$,

\end{tabular}

}

$\Lset_\readyTrace$:

{\centering
\begin{tabular}{lll}
$\phi_\readyTrace$ & $::=$ & $\true \mid \false \mid \phi_\readyTrace \wedge \phi_\readyTrace
\mid \phi_\readyTrace \vee \phi_\readyTrace \mid \psi_\readyTrace$ \\

$\psi_\readyTrace$ & $::=$ & $\true \mid
                        \existuniver{Y} \wedge \psi_\readyTrace \mid
                             \exist{a}\psi_\readyTrace$,

\end{tabular}

}

$\Lset_\impossibleFuture$:

{\centering
\begin{tabular}{lll}
$\phi_\impossibleFuture$ & $::=$ & $\true \mid \false \mid \phi_\impossibleFuture \wedge \phi_\impossibleFuture
\mid \phi_\impossibleFuture \vee \phi_\impossibleFuture \mid \psi_\impossibleFuture$ \\

$\psi_\impossibleFuture$ & $::=$ & $\true \mid
                                   \exist{a}\psi_\impossibleFuture \mid \gamma_\impossibleFuture$ \\

$\gamma_\impossibleFuture$ & $::=$ & $\univer{\tau}\false \mid \gamma_\impossibleFuture \wedge \gamma_\impossibleFuture$,
\end{tabular}

}

$\Lset_\possibleFuture$:

{\centering
\begin{tabular}{lll}

$\phi_\possibleFuture$ & $::=$ & $\true \mid \false \mid \phi_\possibleFuture \wedge \phi_\possibleFuture
\mid \phi_\possibleFuture \vee \phi_\possibleFuture \mid \psi_\possibleFuture$ \\

$\psi_\possibleFuture$ & $::=$ & $\true \mid \existuniver{\Gamma} \mid
                             \exist{a}\psi_\possibleFuture$,

\end{tabular}

}

$\Lset_\impossibleFutureTrace$:

{\centering
\begin{tabular}{lll}

$\phi_\impossibleFutureTrace$ & $::=$ & $\true \mid \false \mid \phi_\impossibleFutureTrace \wedge \phi_\impossibleFutureTrace
\mid \phi_\impossibleFutureTrace \vee \phi_\impossibleFutureTrace \mid \psi_\impossibleFutureTrace$ \\

$\psi_\impossibleFutureTrace$ & $::=$ & $\true \mid \univer{\tau}\false \mid
                             \exist{a}\psi_\impossibleFutureTrace \mid \univer{\tau}\false \wedge
                             \psi_\impossibleFutureTrace$,

\end{tabular}

}

$\Lset_\possibleFutureTrace$:

{\centering
\begin{tabular}{lll}

$\phi_\possibleFutureTrace$ & $::=$ & $\true \mid \false \mid \phi_\possibleFutureTrace \wedge \phi_\possibleFutureTrace
\mid \phi_\possibleFutureTrace \vee \phi_\possibleFutureTrace \mid \psi_\possibleFutureTrace$ \\

$\psi_\possibleFutureTrace$ & $::=$ & $\true \mid \existuniver{\Gamma} \wedge \psi_\possibleFutureTrace \mid
                             \exist{a}\psi_\possibleFutureTrace$,

\end{tabular}

}

$\Lset_\impossibleSimulation$:

{\centering
\begin{tabular}{lll}
$\phi_\impossibleSimulation$ & $::=$ & $\true \mid \false \mid \phi_\impossibleSimulation \wedge \phi_\impossibleSimulation
\mid \phi_\impossibleSimulation \vee \phi_\impossibleSimulation \mid \psi_\impossibleSimulation$ \\

$\psi_\impossibleSimulation$ & $::=$ & $\true \mid
                                   \exist{a}\psi_\impossibleSimulation \mid \gamma_\impossibleSimulation$ \\

$\gamma_\impossibleSimulation$ & $::=$ & $\true \mid \false \mid \univer{a}\gamma_\impossibleSimulation \mid \gamma_\impossibleSimulation \wedge \gamma_\impossibleSimulation \mid \gamma_\impossibleSimulation \vee \gamma_\impossibleSimulation$,
\end{tabular}

}

$\Lset_\possibleSimulation$:

{\centering
\begin{tabular}{lll}

$\phi_\possibleSimulation$ & $::=$ & $\true \mid \false \mid \phi_\possibleSimulation \wedge \phi_\possibleSimulation
\mid \phi_\possibleSimulation \vee \phi_\possibleSimulation \mid \psi_\possibleSimulation$ \\

$\psi_\possibleSimulation$ & $::=$ & $\true \mid
                                     \exist{a}\psi_\possibleSimulation \mid \gamma^\exists_\possibleSimulation \wedge \gamma^\forall_\possibleSimulation$ \\

  $\gamma^\exists_\possibleSimulation$ & $::=$ & $\true \mid \false \mid \exist{a}\gamma^{\exists}_\possibleSimulation \mid \gamma^{\exists}_\possibleSimulation \wedge \gamma^{\exists}_\possibleSimulation \mid \gamma^{\exists}_\possibleSimulation \vee \gamma^{\exists}_\possibleSimulation$ \\

  $\gamma^\forall_\possibleSimulation$ & $::=$ & $\true \mid \false \mid \univer{a}\gamma^{\forall}_\possibleSimulation \mid \gamma^{\forall}_\possibleSimulation \wedge \gamma^{\forall}_\possibleSimulation \mid \gamma^{\forall}_\possibleSimulation \vee \gamma^{\forall}_\possibleSimulation$,

\end{tabular}

}

$\Lset_\impossibleSimulationTrace$:

{\centering
\begin{tabular}{lll}
$\phi_\impossibleSimulation$ & $::=$ & $\true \mid \false \mid \phi_\impossibleSimulation \wedge \phi_\impossibleSimulation
\mid \phi_\impossibleSimulation \vee \phi_\impossibleSimulation \mid \psi_\impossibleSimulation$ \\

$\psi_\impossibleSimulation$ & $::=$ & $\true \mid
                                   \exist{a}\psi_\impossibleSimulation \mid \gamma_\impossibleSimulation \wedge \psi_\impossibleSimulation$ \\

$\gamma_\impossibleSimulation$ & $::=$ & $\true \mid \false \mid \univer{a}\gamma_\impossibleSimulation \mid \gamma_\impossibleSimulation \wedge \gamma_\impossibleSimulation \mid \gamma_\impossibleSimulation \vee \gamma_\impossibleSimulation$,
\end{tabular}

}

$\Lset_\possibleSimulationTrace$:

{\centering
\begin{tabular}{lll}

$\phi_\possibleSimulation$ & $::=$ & $\true \mid \false \mid \phi_\possibleSimulation \wedge \phi_\possibleSimulation
\mid \phi_\possibleSimulation \vee \phi_\possibleSimulation \mid \psi_\possibleSimulation$ \\

$\psi_\possibleSimulation$ & $::=$ & $\true \mid
                                     \exist{a}\psi_\possibleSimulation \mid \gamma^\exists_\possibleSimulation \wedge \gamma^\forall_\possibleSimulation \wedge \psi_\possibleSimulation$ \\

  $\gamma^\exists_\possibleSimulation$ & $::=$ & $\true \mid \false \mid \exist{a}\gamma^{\exists}_\possibleSimulation \mid \gamma^{\exists}_\possibleSimulation \wedge \gamma^{\exists}_\possibleSimulation \mid \gamma^{\exists}_\possibleSimulation \vee \gamma^{\exists}_\possibleSimulation$ \\

  $\gamma^\forall_\possibleSimulation$ & $::=$ & $\true \mid \false \mid \univer{a}\gamma^{\forall}_\possibleSimulation \mid \gamma^{\forall}_\possibleSimulation \wedge \gamma^{\forall}_\possibleSimulation \mid \gamma^{\forall}_\possibleSimulation \vee \gamma^{\forall}_\possibleSimulation$.

\end{tabular}

}

\end{definition}

\begin{remark}
  $\Lset_\ready$ (resp., $\Lset_\readyTrace$, $\Lset_\possibleFuture$,
  $\Lset_\possibleFutureTrace$) is strictly more expressive than $\Lset_\failure$
  (resp., $\Lset_\failureTrace$, $\Lset_\impossibleFuture$,
  $\Lset_\impossibleFutureTrace$).
  However, we notice that, unlike the other cases, the embedding
  of the latter into the former is not succinct in general: translating
  $\Lset_\failure$ formulae (resp., $\Lset_\failureTrace$) into $\Lset_\ready$
  (resp., $\Lset_\readyTrace$) ones might cause an exponential growth (in the size
  of the set of actions) of the formula size; the situation is even worse when
  translating $\Lset_\impossibleFuture$ formulae (resp.,
  $\Lset_\impossibleFutureTrace$) into $\Lset_\possibleFuture$ (resp.,
  $\Lset_\possibleFutureTrace$) ones, which might cause a doubly exponential
  growth (in the size of the set of actions and in the size of the formula).
  For instance, formula $\bigwedge_{a \in Y}\univer{a}\false \in \Lset_\failure$
  (for $Y \subseteq Act$) is translated into formula $\bigvee_{Y' \subseteq Act
    \setminus Y} \existuniver{Y'} = \bigvee_{Y' \subseteq Act \setminus
    Y}(\bigwedge_{a \in Y'} \exist{a}\true \wedge \bigwedge_{a \in Act \setminus Y'}
  \univer{a}\false)\in \Lset_\ready$, whose size is exponential in the size of
  $Act$.
%
%Our decision to explicitly include the terminal symbol $\true$ in the grammars
%rooted in $\psi_\ready$ and $\psi_\readyTrace$ (while it does not occur
%explicitly in the grammars rooted in $\psi_\failure$ and $\psi_\failureTrace$)
%aims precisely at stressing this consideration: without it, capturing basic
%formulae of the kind $\exist{a}\true$ can be done only through exponentially
%long formulae of $\Lset_\ready$ or $\Lset_\readyTrace$.
\end{remark}

\begin{remark}
%
%  \dariomarginpartodo{This should be slightly fixed because $\vee$ and $\wedge$
%    can appear in the scope of $\exist{a}$ (e.g., in
%    $\Lset_{\impossibleSimulation}$).
%    Moreover, $\vee$ is also needed to define formulae in $\B(p)$
%    (Lemma~\ref{lem:characterization_linear}), other than for defining the formula
%    $\bar\chi$ (in Lemma~\ref{lem:anti-char_linear}).}
%
  We notice that we allow for larger languages than the ones proposed
  in~\cite{dFGPR13,VanGlabbeek01}, e.g., we allow for the use of $\wedge$ and
  $\vee$ as outermost operator.
  However, it is not difficult to see that our logics do not distinguish more
  processes than the original ones defined~in \cite{dFGPR13,VanGlabbeek01}.
%
%  We notice that in~\cite{dFGPR13,VanGlabbeek01} the logics characterizing the
%  linear semantics in the spectrum are defined without the Boolean connectives
%  $\wedge$ and $\vee$.
%  Here, we allow for the use of both, but since they cannot appear within the
%  scope of the modal operator $\exist{a}$, it is obvious that $\Lset_\mathsf{T}$
%  does not distinguish more processes than the original logic defined in
%  \cite{VanGlabbeek01}.
%
  We need such extended languages because we use $\vee$ as outermost operator in
  order to define the formula $\bar\chi$ (in Lemma~\ref{lem:anti-char_linear}) and
  we need $\wedge$ to apply
  Corollary~\ref{cor:decomposability_condition_for_spectrum} (in the proof of
  Theorem~\ref{theo:char_by_prim_linear_spectrum}).
\end{remark}

The satisfaction relation and semantics associated to logics for the linear time
semantics are the same as the ones for logics for branching time semantics (see
Definition~\ref{def:semantics_branching}).

The following well-known  theorem is the linear time counterpart of
Theorem~\ref{theo:branching-semantics} for branching time semantics.
It states the relationship between logics and process semantics that allows us
to use our general results about logically characterized semantics.
\begin{theorem}[Logical characterization of linear time semantics~\cite{dFGPR13,VanGlabbeek01}]\label{theo:linear-semantics} 
For each $X\in\lspectrumSet$ and for all $p,q \in P$, $p\<_Xq$ iff $\Lset_X(p)\subseteq\Lset_X(q)$.
\end{theorem}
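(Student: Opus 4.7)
The plan is to prove, for each $X \in \lspectrumSet$, the two implications $p \<_X q \Rightarrow \Lset_X(p) \subseteq \Lset_X(q)$ and $\Lset_X(p) \subseteq \Lset_X(q) \Rightarrow p \<_X q$ separately, following the uniform recipe that is standard for spectrum-style characterization theorems. The underlying idea is that every atomic observation contributing to $X(p)$ admits a formula in $\Lset_X$ as its exact semantic counterpart, and that, conversely, every $\Lset_X$-formula is built from such atomic observation-formulae using the Boolean operators $\wedge$ and $\vee$ and the recursive shape of the $\psi_X$-grammar.

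For the direction $p \<_X q \Rightarrow \Lset_X(p) \subseteq \Lset_X(q)$, I would proceed by structural induction on $\phi \in \Lset_X$. The Boolean and constant cases are immediate. The substantive step is to show, for each $X$, that every $\psi_X$-formula is satisfied by a process $r$ iff a specific element of $X(r)$ is present; the inductive hypothesis then transfers satisfaction from $p$ to $q$ using $X(p) \subseteq X(q)$. Concretely, $\exist{a_1}\cdots\exist{a_n}\true$ holds at $r$ iff $a_1 \cdots a_n \in \trace(r)$; $\exist{\tau}\bigwedge_{a \in Y}\univer{a}\false$ holds at $r$ iff $\langle \tau, Y \rangle \in \failure(r)$; $\exist{\tau}\existuniver{Y}$ holds at $r$ iff $\langle \tau, Y \rangle \in \ready(r)$; the abbreviation $\existuniver{\Gamma}$ plays the analogous role for (im)possible futures; and the recursive $\psi_X$-rules in $\Lset_\readyTrace$, $\Lset_\failureTrace$, $\Lset_\possibleFutureTrace$, $\Lset_\impossibleFutureTrace$ and their $2$-simulation counterparts exactly mirror the inductive rules generating the observation arrows of Section~\ref{sec:linear-time-spectrum}.

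For the converse direction, for every $x \in X(p)$ I would exhibit its canonical witness $\phi_x \in \Lset_X$, satisfying $r \in \sat{\phi_x} \Leftrightarrow x \in X(r)$; these are precisely the atoms identified above. From $\Lset_X(p) \subseteq \Lset_X(q)$ it follows that $\phi_x \in \Lset_X(p)$ implies $\phi_x \in \Lset_X(q)$, whence $x \in X(q)$; hence $X(p) \subseteq X(q)$, i.e., $p \<_X q$.

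The main obstacle I anticipate is the third diamond (possible/impossible $2$-simulation and their trace variants). There, an atomic observation is of the form $\langle \tau, \mathbb{P} \rangle$ with $\mathbb{P}$ a set of bisimulation classes of processes related by simulation to the target, and its encoding requires combining, inside the restricted $\gamma^\exists_\possibleSimulation$, $\gamma^\forall_\possibleSimulation$ and $\gamma_\impossibleSimulation$ fragments, a positive gadget expressing that each representative of a class in $\mathbb{P}$ is simulated with a negative gadget expressing that no process outside $\mathbb{P}$ is. This step uses the characteristic formula for simulation of each representative, which is available from the branching-time development (Section~\ref{sec:branch-time-spectr}). Verifying that the resulting formulae actually fit the prescribed grammars---and in particular that, for the trace variants, set observations can be consistently attached at every point of a trace---is delicate, but once the correspondence between atomic observations and atomic formulae is set up, the induction driving the soundness and completeness arguments is uniform across the spectrum.
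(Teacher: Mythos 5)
There is nothing in the paper to compare your argument against: Theorem~\ref{theo:linear-semantics} is stated as a well-known result, attributed to~\cite{dFGPR13,VanGlabbeek01}, and used as an imported black box without proof (exactly as for its branching-time counterpart, Theorem~\ref{theo:branching-semantics}). Your sketch follows the standard route from those references and is essentially sound, so let me just record how it sits relative to the paper's machinery and where the real work is. The completeness direction you describe --- a canonical witness $\phi_x$ with $r \in \sat{\phi_x} \Leftrightarrow x \in X(r)$ for each observation $x$ --- is precisely the content of the paper's Proposition~\ref{prop:p-sat-formula-x-for-x-in-Xp} (also left unproved there), so you have isolated the right key lemma. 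For the soundness direction, note that the induction cannot chase derivatives: for a linear-time $X$, the hypothesis $X(p) \subseteq X(q)$ supplies no matching $q'$ for a given $p \stackrel{a}{\rightarrow} p'$, so you must do what you in fact say --- prove, by induction on the $\psi_X$-grammar and uniformly in the process $r$, that satisfaction of an entire $\psi_X$-formula at $r$ is equivalent to membership of a single element in $X(r)$, and only then transfer along $X(p) \subseteq X(q)$; the outer $\wedge/\vee$ layer is routine. The one genuinely delicate point is the one you flag: in the third diamond the $\gamma$-fragments are full universal (resp.\ existential) Hennessy--Milner logic, and the equivalence between ``$r'$ satisfies $\gamma$'' and ``$\Sequivclass{r'}$ meets/avoids a fixed finite set $\mathbb P$ of bisimulation classes'' rests on every simulation formula denoting a \emph{finite} union of principal upward closures (finite representability, in the spirit of Corollary~\ref{cor:rep}); an arbitrary $\<_\simulation$-downward-closed set need not have the form $\{ r' \mid \Sequivclass{r'} \cap \mathbb P = \emptyset \}$, so this dependency should be made explicit. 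With that, your outline goes through.
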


We proceed now to prove our characterization by primality result for the above defined
logics.
Following the same approach adopted in the previous section for branching time
semantics, we show that also the logics considered in this section meet the
conditions of Corollary~\ref{cor:decomposability_condition_for_spectrum}, that
is, they are finitely characterized by some monotonic $\B$, and for each
$\chi(p)$ there exists a formula $\bar\chi(p)$ such that $p \notin
\sat{\bar\chi(p)}$ and $\{ q \in P \mid \Lset(q) \not\subseteq \Lset(p) \}
\subseteq \sat{\bar\chi(p)}$.

An elucidation is in order: even though, according to
Corollary~\ref{cor:decomposability_condition_for_spectrum}, logics are required
to feature (arbitrary) disjunction, we have already observed
(Remark~\ref{rem:disjunction} at page~\pageref{rem:disjunction}) that the
Boolean connective $\wedge$ plays a minor role in (the proof of)
Corollary~\ref{cor:decomposability_condition_for_spectrum}, and thus we can use
it to deal with the logics studied in this section, that allow for a limited use
of such a connective.

\begin{table}[t!]

%  \scriptsize
  \footnotesize

  \centering

  \scalebox{.9}{
  $\begin{array}[h]{|l|l|}

%\begin{description}

     \hline

     \multirow{2}{*}{\genericSemanticsX=\trace}

     & \traceFinite(p) = \trace(p) \hspace{25mm} \text{ (as } \trace(p) \text{
       is already finite)} \\

     & \getFormulatrace{\tau} = \exist{\tau}\true \\

\hline

     \multirow{2}{*}{\genericSemanticsX=\completeTrace}

     & \completeTraceFinite(p) = \completeTrace(p) \hspace{22mm} \text{ (as }
       \completeTrace(p) \text{ is already finite)} \\

     & \getFormulacompleteTrace{\tau} = \exist{\tau}\zeroFormula \\

\hline

     \multirow{2}{*}{\genericSemanticsX=\failure}

     & \failureFinite(p) = \failure(p) \hspace{26mm} \text{ (as } \failure(p)
       \text{ is already finite)} \\

     & \getFormulafailure{\langle \tau, Y \rangle} = \exist{\tau} \bigwedge_{a
       \in Y} \univer{a}\false \\

\hline

     \multirow{2}{*}{\genericSemanticsX=\ready}

     & \readyFinite(p) = \ready(p) \hspace{25.5mm} \text{ (as } \ready(p) \text{
       is already finite)} \\

     & \getFormulaready{\langle \tau, Y \rangle} = \exist{\tau} \existuniver{Y}
     \\

\hline

    \multirow{2}{*}[-1em]{\genericSemanticsX=\failureTrace}

     & \failureTraceFinite(p) = \smallset{\failureTrace(p)}{2 \cdot \depth(p) +
       1} \\

     & \getFormulafailureTrace{\sigma} = \left\{
    \begin{array}[h]{ll} \true & \sigma = \varepsilon \\
      \exist{a}\getFormulafailureTrace{\sigma'} & \sigma = a\sigma', a \in Act,
                                                  \sigma' \in (Act \cup \powerset{Act} )^* \\ \bigwedge_{a \in
      Y}{\univer{a}\false} \wedge \getFormulafailureTrace{\sigma'} & \sigma =
                                                                     Y\sigma', Y \in \powerset{Act}, \sigma' \in (Act \cup \powerset{Act} )^*
    \end{array} \right. \\

\hline

    \multirow{2}{*}[-1em]{\genericSemanticsX=\readyTrace}

     & \readyTraceFinite(p) = \smallset{\readyTrace(p)}{2 \cdot \depth(p) + 1}
     \\

     & \getFormulareadyTrace{\sigma} = \left\{
    \begin{array}[h]{ll} \true & \sigma = \varepsilon \\

      \exist{a}\getFormulareadyTrace{\sigma'} & \sigma = a\sigma', a \in Act,
                                                \sigma' \in (Act \cup \powerset{Act} )^* \\

      \existuniver{Y} \wedge \getFormulareadyTrace{\sigma'} & \sigma = Y\sigma',
                                                              Y \in \powerset{Act}, \sigma' \in (Act \cup \powerset{Act} )^*
    \end{array} \right.
  \\

\hline

    \multirow{2}{*}{\genericSemanticsX=\impossibleFuture}

     & \impossibleFutureFinite(p) = \impossibleFuture(p) \cap
       (\smallset{Act^*}{\depth(p)+1} \times \powerset{\smallset{Act^*}{\depth(p)+1}})
%
% \hfill
%       \scalebox{.92}{\ensuremath{\actStarSmallP {=} \{ \tau {\in} Act^* \mid |\tau| {\leq}
%       \depth(p) {+} 1 \}}}
%
 \\

    & \getFormulaimpossibleFuture{\langle \tau, \Gamma \rangle} = \exist{\tau}
      \bigwedge_{\tau' \in \Gamma} \univer{\tau'}\false \\

\hline

    \multirow{2}{*}{\genericSemanticsX=\possibleFuture}

     & \possibleFutureFinite(p) = \possibleFuture(p) \hspace{22.6mm} \text{ (as
       } \possibleFuture(p) \text{ is already finite)} \\

    & \getFormulapossibleFuture{\langle \tau, \Gamma \rangle} = \exist{\tau}
      \existuniver{\Gamma} \\

\hline

    \multirow{2}{*}[-1em]{\genericSemanticsX=\impossibleFutureTrace}

     & \impossibleFutureTraceFinite(p) = \impossibleFutureTrace(p) \cap
       \smallset{[(Act \cup \powerset{\smallset{Act^*}{\depth(p)+1}})^*]}{2 \cdot
       \depth(p) + 1} \\

    & \getFormulaimpossibleFutureTrace{\sigma} = \left\{
    \begin{array}[h]{ll}
      \true & \sigma = \varepsilon \\

      \exist{a}\getFormulaimpossibleFutureTrace{\sigma'} & \sigma = a\sigma', a \in Act, \sigma' \in (Act \cup
                                         \powerset{Act^*} )^* \\

      \bigwedge_{\tau \in \Gamma}{\univer{\tau}\false} \wedge
      \getFormulaimpossibleFutureTrace{\sigma'} & \sigma = \Gamma\sigma', \Gamma \in \powerset{Act^*}, \sigma' \in (Act \cup \powerset{Act^*}
                                )^*
    \end{array}
  \right.
\\

\hline

    \multirow{2}{*}[-1em]{\genericSemanticsX=\possibleFutureTrace}

     & \possibleFutureTraceFinite(p) = \smallset{\possibleFutureTrace(p)}{2
       \cdot \depth(p) + 1} \\

    & \getFormulapossibleFutureTrace{\sigma} = \left\{
    \begin{array}[h]{ll}
      \true & \sigma = \varepsilon \\

      \exist{a}\getFormulapossibleFutureTrace{\sigma'} & \sigma = a\sigma', a \in Act, \sigma' \in (Act \cup
                                        \powerset{Act^*} )^* \\

      \existuniver{\Gamma} \wedge
      \getFormulapossibleFutureTrace{\sigma'} & \sigma = \Gamma\sigma', \Gamma \in \powerset{Act^*}, \sigma' \in (Act \cup \powerset{Act^*}
                                )^*
    \end{array}
  \right.
\\

\hline

    \multirow{2}{*}{\genericSemanticsX=\impossibleSimulation}

     & \impossibleSimulationFinite(p) = \impossibleSimulation(p) \cap
       (\smallset{Act^*}{\depth(p)+1} \times
       \powerset{\eqClassbisim{\smallset{P}{\depth(p)+1}}})
%
% \hfill
%       \scalebox{.92}{\ensuremath{\processSmallP {=} \{ p' {\in} P \mid \depth(p')
%       {\leq} \depth(p){+}1}} \}
%
 \\

     & \getFormulaimpossibleSimulation{\langle \tau, \mathbb P \rangle} =
       \exist{\tau} \bigwedge_{\eqClassbisim{p'} \in \mathbb P} \neg \chiS(p') \\

% \hfill \scalebox{.92}{and \ensuremath{\depth(p') {\leq} \depth(p){+}1 \}}} \\

\hline

    \multirow{3}{*}[-.5em]{\genericSemanticsX=\possibleSimulation}

     & \possibleSimulationFinite(p) = \possibleSimulation(p) \hspace{22.6mm}
       \text{ (as } \possibleSimulation(p) \text{ is already finite)}\\

     & \getFormulapossibleSimulation{\langle \tau, \mathbb P \rangle} =
       \exist{\tau} \alpha(\mathbb P )

\hfill

       \begin{array}[t]{r@{\hspace{1mm}}l@{\hspace{0mm}}l@{\hspace{0mm}}l}
         \text{with}
         & \alpha (\mathbb P) & = & \bigwedge_{\eqClassbisim{p'} \in \mathbb P}
                                    \chiS(p') \wedge \bigvee_{\eqClassbisim{p'} \in \mathbb P} \simulatedby{p'} \\
         \text{and}
         & \simulatedby{p'} & = & \bigwedge_{a \in Act}\univer{a} \bigvee_{p'
                                 \stackrel{a}{\rightarrow} p''}\simulatedby{p''}
       \end{array} \\

\hline

    \multirow{2}{*}[-1em]{\genericSemanticsX=\impossibleSimulationTrace}

     & \impossibleSimulationTraceFinite(p) = \impossibleSimulationTrace(p) \cap
       \smallset{[(Act \cup \powerset{\eqClassbisim{\smallset{P}{\depth(p)+1}}})^*]}{2
       \cdot \depth(p) + 1} \\

    & \getFormulaimpossibleSimulationTrace{\sigma} = \left\{
    \begin{array}[h]{ll}
      \true & \sigma = \varepsilon \\

      \exist{a}\getFormulaimpossibleSimulationTrace{\sigma'} & \sigma = a\sigma', a \in Act, \sigma' \in (Act \cup
                                         \powerset{\eqClassPbisim} )^* \\

      \bigwedge_{\eqClasspbisim \in \mathbb P}{\neg \chiS(p)} \wedge
      \getFormulaimpossibleSimulationTrace{\sigma'} & \sigma = \mathbb P\sigma',
                                                      \mathbb P \in \powerset{\eqClassPbisim}, \sigma' \in (Act \cup \powerset{\eqClassPbisim} )^*
    \end{array}
  \right.
 \\

\hline

    \multirow{2}{*}[-1em]{\genericSemanticsX=\possibleSimulationTrace}

     & \possibleSimulationTraceFinite(p) =
       \smallset{\possibleSimulationTrace(p)}{2 \cdot \depth(p) + 1} \\

    & \getFormulapossibleSimulationTrace{\sigma} = \left\{
    \begin{array}[h]{ll}
      \true & \sigma = \varepsilon \\

      \exist{a}\getFormulapossibleSimulationTrace{\sigma'} & \sigma = a\sigma', a \in Act, \sigma' \in (Act \cup
                                        \powerset{\eqClassPbisim} )^* \\

      \alpha (\mathbb P) \wedge \getFormulapossibleSimulationTrace{\sigma'} &
                                                                              \sigma = \mathbb P\sigma', \mathbb P \in \powerset{\eqClassPbisim}, \sigma' \in
                                                                              (Act \cup \powerset{\eqClassPbisim} )^*

    \end{array}
  \right. \\

\hline

%\end{description}

  \end{array}$
} %end scalebox

  \caption{Instantiations of $\genericSemanticsFiniteX(\cdot)$ and
    \getFormula{\genericSemanticsX}{\cdot} for all $X \in \lspectrumSet$.
    For a set $S$ and $n \in \mathbb N$, we let $\smallset{S}{n} = \{ x \in S
    \mid |x| \leq n \}$ (we adopt the convention that $|x| = \depth(x)$, for $x \in
    P$).}

%  \caption{Possible instantiations of $\genericSemanticsX(p)$ and
%    \getFormula{\genericSemanticsX}{x} for all $X \in \lspectrumSet$.}
  \label{tab:linearB}
\end{table}

In order to finitely characterize the logics for the semantics in the linear
time spectrum, we base our definition of $\B_{\genericSemanticsX}(p)$ on
suitably defined finite versions $\genericSemanticsFiniteX(p)$ of sets
$\genericSemanticsX(p)$ (with $\genericSemanticsX \in \lspectrumSet$ and $p \in
P$) that carry all significant pieces of information, in the sense of
Proposition~\ref{prop:X-and-Xfinite}.
Thus, we define $\B_{\genericSemanticsX}$ as follows: for all
$\genericSemanticsX \in \lspectrumSet$ and $p \in P$

{\centering

  $\B_\genericSemanticsX(p) = \{ \true \} \cup \{
  \getFormula{\genericSemanticsX}{x} \mid x \in \genericSemanticsFiniteX(p) \}$

}

\noindent
where $\getFormula{\genericSemanticsX}{x}$ and $\genericSemanticsFiniteX(p)$ are
instantiated as in Table~\ref{tab:linearB}.
Intuitively, $\getFormula{\genericSemanticsX}{x}$ is a logical characterization
of a process observation $x$ according to semantics $\genericSemanticsX$, as
stated by Proposition~\ref{prop:p-sat-formula-x-for-x-in-Xp} below.
Thus, $\B_\genericSemanticsX(p)$ contains, besides $\true$, a formula for each
element of $\genericSemanticsFiniteX(p)$, meaning that its finiteness
immediately follows from the one of $\genericSemanticsFiniteX(p)$.
The construction of the finite sets $\genericSemanticsFiniteX(p)$ uses the following
notation: $\setWithRelation{S}{\sim}{n} = \{ x \in S \mid |x| \sim n \}$ for
every set $S$, $n \in \mathbb N$, and $\sim {\in}~\{ \leq, = \}$ (we adopt the
convention that $|x| = \depth(x)$, for $x \in P$), and is based on the following
considerations:
\begin{enumerate}[label={\it (\alph*)}]
\item $Act$ is finite;
\item the set of traces that a process can perform is finite;
\item \label{item:finite-set-of-failure-trace} even though the set of traces
  that a process $p$ cannot perform is infinite, it is characterized by the finite
  set of minimal traces that $p$ cannot perform, and the length of the longest
  trace in such a characterizing set is $\depth(p)+1$ (e.g., if a process $p$
  cannot perform $a$, then it clearly cannot perform any trace that starts with
  $a$; even if there are infinitely many such traces, it is enough to keep track
  of the fact that $a$ is a minimal trace that cannot be performed by $p$); thus,
  using the notation introduced above, we can focus on
  $\smallset{Act^*}{\depth(p)+1} = \{ \tau \in Act^* \mid |\tau| \leq \depth(p)+1
  \}$ (rather than full $Act^*$) when we need to characterize traces that $p$ can
  or cannot do;

\item even though the set of processes that are simulated by a process $p$ is
  infinite, it is actually finite up to bisimilarity (i.e., \Sequivclass{p} is
  finite);

\item the set of processes that are not simulated by a process $p$ is infinite;
  however, using an argument similar to the one from
  item~\ref{item:finite-set-of-failure-trace}, it is possible to identify a finite
  (up to bisimilarity) set of minimal (wrt. depth) processes that characterizes
  it;
%
%  and its \emph{width}, defined as:
%
%  {\centering
%
%    $\width(p) = \max_{p\stackrel{\tau}{\rightarrow}p', \tau \in Act^*} | \{
%    (a,p'') \mid a \in Act, p'' \in P, p' \stackrel{a}{\rightarrow}p'' \} |$;
%
%  }
%
  for a process $p$, such a set is defined as

  {\centering

%    $\processSmallP = \{ p' \in P \mid \width(p') \leq \width(p)+1 \text{ and }
%    \depth(p') \leq \depth(p)+1 \}$;
%
    $\smallset{P}{\depth(p)+1} = \{ p' \in P \mid \depth(p') \leq \depth(p)+1
    \}$;

  }

\item \label{item:linear_word_bound} as already observed in Remark~\ref{rem:bounded-trace-length}, as far as
  sets $\failureTrace(p)$, $\readyTrace(p)$, $\impossibleFutureTrace(p)$,
  $\possibleFutureTrace(p)$, $\impossibleSimulationTrace(p)$, and
  $\possibleSimulationTrace(p)$ are concerned, we can restrict ourselves to considering words whose length
  is bounded by $2 \cdot \depth(p) + 1$.
%
%  Indeed, the number of \emph{action symbols} (i.e., symbols belonging to $Act$)
%  occurring in a word cannot be greater than the depth of the process it talks
%  about, say it $p$; moreover, if more than one \emph{non-action symbol} (i.e., a
%  symbol not belonging to $Act$) occur between two consecutive occurrences of
%  action symbols, then the word obtained by replacing all consecutive non-action
%  symbols with their union (notice that non-action symbols are necessarily sets)
%  is semantically equivalent to (i.e., it carries the same information about the
%  process as) the original one, and its length is clearly bounded by $2 \cdot
%  \depth(p) + 1$.
\end{enumerate}

\newcounter{lemcharacterizationLinear}
\setcounter{lemcharacterizationLinear}{\value{lemma}}
\newcommand{\lemcharacterizationLinear}{
  Let $\genericSemanticsX \in \lspectrumSet$. $\Lset_{\genericSemanticsX}$ is
  finitely characterized by $\B_{\genericSemanticsX}$, for some monotonic
  $\B_{\genericSemanticsX}$.
}

The following two results follow from the definition of the semantics
$\genericSemanticsX$ and of \getFormula{\genericSemanticsX}{x} in
Table~\ref{tab:linearB}, and their proofs are omitted.

\begin{proposition} \label{prop:X-and-Xfinite}
  For all $\genericSemanticsX \in \lspectrumSet$ and $p,p' \in P$, the following
  statements are equivalent:

  \begin{enumerate}[label=\it{(\alph*)}]
  \item \label{item:Xp-subset-Xp} $\genericSemanticsX(p) \subseteq
    \genericSemanticsX(p')$,
  \item \label{item:Xfinitep-subset-Xfinitep} $\genericSemanticsFiniteX(p)
    \subseteq \genericSemanticsFiniteX(p')$, and
  \item \label{item:Xfinitep-subset-Xp} $\genericSemanticsFiniteX(p) \subseteq
    \genericSemanticsX(p')$.
  \end{enumerate}
\end{proposition}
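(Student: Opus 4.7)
The plan is to establish the cycle $(a) \Rightarrow (b) \Rightarrow (c) \Rightarrow (a)$. The implication $(b) \Rightarrow (c)$ is immediate, since the construction in Table~\ref{tab:linearB} ensures $\genericSemanticsFiniteX(p') \subseteq \genericSemanticsX(p')$ for every $\genericSemanticsX \in \lspectrumSet$.

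For $(a) \Rightarrow (b)$, I first establish an auxiliary depth-monotonicity fact: semantic inclusion forces depth inclusion, that is, $\genericSemanticsX(p) \subseteq \genericSemanticsX(p')$ implies $\depth(p) \leq \depth(p')$. This follows by taking a longest trace $\tau \in \trace(p)$---which, since $p$ is finite and loop-free, must lead to a deadlock state in $p$---and packaging it into a canonical witness in $\genericSemanticsX(p)$ (e.g., $\tau$ itself for \trace\ and \completeTrace, $\langle \tau, Act \rangle$ for \failure, $\langle \tau, \emptyset\rangle$ for \ready, $\langle \tau, \{\eqClassbisim{\zeroProcess}\}\rangle$ for \possibleSimulation, and analogously for the others). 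Inclusion transports the witness into $\genericSemanticsX(p')$, which in each case forces $\tau \in \trace(p')$, hence $\depth(p) \leq \depth(p')$. Granted this, any $x \in \genericSemanticsFiniteX(p) \subseteq \genericSemanticsX(p) \subseteq \genericSemanticsX(p')$ has size bounded by a function of $\depth(p)$, and hence of $\depth(p')$, so $x$ meets the bound defining $\genericSemanticsFiniteX(p')$.

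For $(c) \Rightarrow (a)$, I split by semantics. In the cases $\genericSemanticsX \in \{\trace, \completeTrace, \failure, \ready, \possibleFuture, \possibleSimulation\}$, where $\genericSemanticsFiniteX = \genericSemanticsX$, the conclusion is immediate. For the remaining semantics, I associate to each $x \in \genericSemanticsX(p)$ some $y \in \genericSemanticsFiniteX(p)$ that \emph{witnesses} $x$ at $p'$, in the sense that $y \in \genericSemanticsX(p')$ forces $x \in \genericSemanticsX(p')$. For the trace-style semantics \failureTrace, \readyTrace, \impossibleFutureTrace, \possibleFutureTrace, \impossibleSimulationTrace, and \possibleSimulationTrace, I invoke Remark~\ref{rem:bounded-trace-length} to merge consecutive set elements into one, producing a bounded-length word $y$ equivalent in effect to $x$. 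For \impossibleFuture, given $\langle \tau, \Gamma \rangle$ witnessed by $p \stackrel{\tau}{\rightarrow} p''$, I shorten each $\tau' \in \Gamma$ to its shortest prefix $\tau'_0$ missing from $\trace(p'')$ (which has length at most $\depth(p)+1$); since traces are prefix-closed, the emptiness of $\trace(q) \cap \Gamma_0$ at the $p'$-side (with $\Gamma_0 = \{\tau'_0 \mid \tau' \in \Gamma\}$) transfers to the original $\Gamma$.

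The delicate case is \impossibleSimulation\ (and, layered with Remark~\ref{rem:bounded-trace-length}, \impossibleSimulationTrace), since forbidden sets $\mathbb{P}$ consist of bisimulation equivalence classes possibly of unbounded depth. The plan here is a two-step bootstrap. Given $\langle \tau, \mathbb{P} \rangle \in \impossibleSimulation(p)$ witnessed by $p \stackrel{\tau}{\rightarrow} p''$, I first instantiate $(c)$ with the bounded element $\langle \tau, \eqClassbisim{\smallset{P}{\depth(p)+1}} \setminus \Sequivclass{p''}\rangle \in \impossibleSimulationFinite(p)$, obtaining a state $q$ with $p' \stackrel{\tau}{\rightarrow} q$ and $\Sequivclass{q} \cap \eqClassbisim{\smallset{P}{\depth(p)+1}} \subseteq \Sequivclass{p''}$. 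I then argue $\depth(q) \leq \depth(p'')$: any trace of $q$ of length $\depth(p'')+1$ would induce a linear bounded-depth process in $\Sequivclass{q}$ yet outside $\Sequivclass{p''}$, contradicting the previous inclusion. With $\depth(q) \leq \depth(p'')$, $\Sequivclass{q}$ consists only of bounded-depth classes, so $\Sequivclass{q} \cap \mathbb{P} = \Sequivclass{q} \cap (\mathbb{P} \cap \eqClassbisim{\smallset{P}{\depth(p)+1}}) = \emptyset$ follows from the preceding inclusion together with $\Sequivclass{p''} \cap \mathbb{P} = \emptyset$. The main obstacle is precisely this depth-bootstrapping on the witness $q$ out of a purely bounded-depth hypothesis.
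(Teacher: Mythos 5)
The paper never actually proves this proposition: it states that the result ``follows from the definition of the semantics $\genericSemanticsX$ and of $\getFormula{\genericSemanticsX}{x}$ in Table~\ref{tab:linearB}'' and explicitly omits the proof, so there is no official argument to compare yours against step by step. Taken on its own terms, your cycle $(a)\Rightarrow(b)\Rightarrow(c)\Rightarrow(a)$ is the right decomposition and the individual steps are sound. The depth-monotonicity fact underlying $(a)\Rightarrow(b)$ is already available as Corollary~\ref{cor:spectrum_depth} (via Definition~\ref{def:linear-semantics}), so you could simply cite it rather than re-derive it through canonical witnesses. The genuinely non-obvious part is $(c)\Rightarrow(a)$ for \impossibleSimulation, and your two-step bootstrap --- instantiate $(c)$ at $\langle\tau,\ \eqClassbisim{\smallset{P}{\depth(p)+1}}\setminus\Sequivclass{p''}\rangle$, then use a linear process of length $\depth(p'')+1\leq\depth(p)+1$ to force $\depth(q)\leq\depth(p'')$, whence $\Sequivclass{q}\subseteq\eqClassbisim{\smallset{P}{\depth(p)+1}}$ and the bounded hypothesis propagates to arbitrary $\mathbb P$ --- is correct and is precisely the detail the paper should have supplied.

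One case needs repair. You file $\impossibleFutureTrace$ among the semantics for which merging consecutive set elements via Remark~\ref{rem:bounded-trace-length} suffices, but by definition $\impossibleFutureTraceFinite(p)$ also restricts the alphabet: every set element must be a subset of $\smallset{Act^*}{\depth(p)+1}$. A merged word $y$ obtained from an arbitrary $\sigma\in\impossibleFutureTrace(p)$ can still carry set elements containing arbitrarily long traces, so $y$ need not belong to $\impossibleFutureTraceFinite(p)$ and clause $(c)$ cannot be applied to it. The fix is the one you already use for \impossibleFuture: in each merged set element $\Gamma_i$, attached to the state $q_i$ of the witnessing run, replace every $\tau'\in\Gamma_i$ by its shortest prefix outside $\trace(q_i)$ (of length at most $\depth(q_i)+1\leq\depth(p)+1$); prefix-closedness of trace sets makes the replacement reversible on the $p'$ side. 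You do flag the analogous layering for $\impossibleSimulationTrace$, so the omission for $\impossibleFutureTrace$ reads as an oversight rather than a wrong idea, but as written that subcase of $(c)\Rightarrow(a)$ does not go through.
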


\begin{proposition} \label{prop:p-sat-formula-x-for-x-in-Xp}
  For all $\genericSemanticsX \in \lspectrumSet$, $p \in P$, and $x \in
  \bigcup_{q \in P} \genericSemanticsFiniteX(q)$, we have:

  {\centering

    $x \in \genericSemanticsX(p)$ if and only if $p \in
    \sat{\getFormula{\genericSemanticsX}{x}}$.

  }

\end{proposition}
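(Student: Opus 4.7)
The proof proceeds by a case analysis on $\genericSemanticsX \in \lspectrumSet$, unfolding in each case the definition of $\getFormula{\genericSemanticsX}{x}$ from Table~\ref{tab:linearB} together with the definition of $\genericSemanticsX(p)$ given earlier in this section. For the ``one-shot'' semantics (namely $\trace$, $\completeTrace$, $\failure$, $\ready$, $\impossibleFuture$, $\possibleFuture$, $\impossibleSimulation$, $\possibleSimulation$), where $x$ is either a trace $\tau$ or a pair of the form $\langle \tau, Z\rangle$, the proof is a direct unfolding: e.g.\ for $\genericSemanticsX = \failure$, we have $p \in \sat{\exist{\tau}\bigwedge_{a \in Y}\univer{a}\false}$ iff there exists $p'$ with $p \stackrel{\tau}{\rightarrow} p'$ and $I(p') \cap Y = \emptyset$, which is exactly $\langle \tau, Y\rangle \in \failure(p)$. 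The same straightforward style works for $\trace$, $\completeTrace$, $\ready$, $\impossibleFuture$, and $\possibleFuture$, thanks to the semantics of the abbreviations $\existuniver{Y}$ and $\existuniver{\Gamma}$, which encode ``$I(p') = Y$'' and ``$\trace(p') = \Gamma$'' respectively.

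For the ``trace'' semantics $\failureTrace$, $\readyTrace$, $\impossibleFutureTrace$, $\possibleFutureTrace$, $\impossibleSimulationTrace$, and $\possibleSimulationTrace$, where $x = \sigma$ is a word in $(Act \cup \powerset{Z})^*$ for an appropriate $Z$, I would proceed by induction on the length of $\sigma$, following the three clauses in the definition of $\getFormula{\genericSemanticsX}{\sigma}$. The base case $\sigma = \varepsilon$ is immediate since $\getFormula{\genericSemanticsX}{\varepsilon} = \true$ and $\varepsilon$ trivially belongs to every set $\genericSemanticsX(p)$. For $\sigma = a\sigma'$ with $a \in Act$, the reachability definition of $p \stackrel{\sigma}{\mapsto}{\cdot}$ (the relation $\refusalarrow{\cdot}$, $\readyarrow{\cdot}$, etc.) reduces to: there exists $q$ with $p \stackrel{a}{\rightarrow} q$ and $q \stackrel{\sigma'}{\mapsto}{\cdot}$, and by the inductive hypothesis this is equivalent to $q \in \sat{\getFormula{\genericSemanticsX}{\sigma'}}$, hence to $p \in \sat{\exist{a}\getFormula{\genericSemanticsX}{\sigma'}}$. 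For $\sigma = Z\sigma'$ with $Z$ a ``set element'', the key observation is that $p \stackrel{\sigma}{\mapsto}{\cdot}$ is equivalent to $p$ itself satisfying the local condition encoded by $Z$ (e.g.\ $I(p) \cap Y = \emptyset$ for failure traces, $I(p) = Y$ for ready traces) and $p \stackrel{\sigma'}{\mapsto}{\cdot}$; and the local condition is precisely what the corresponding conjunct ($\bigwedge_{a \in Y}\univer{a}\false$, or $\existuniver{Y}$, etc.) expresses when read on $p$. The inductive hypothesis closes the case.

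For the two simulation-based trace semantics $\impossibleSimulationTrace$ and $\possibleSimulationTrace$ (and their ``one-shot'' counterparts), the same argument applies, once we use the fact, established in Section~\ref{sec:branch-time-spectr}, that $\chiS(p')$ is characteristic within $\Lset_\simulation$ for $p'$, so that $q \in \sat{\neg \chiS(p')}$ iff $p' \not\<_\simulation q$ iff $\eqClassbisim{p'} \notin \Sequivclass{q}$, and dually for $\chiS$, while the auxiliary formula $\simulatedby{p'}$ defined in Table~\ref{tab:linearB} expresses that the satisfying process is simulated by $p'$. The main obstacle is bookkeeping: there are fourteen semantics and for several of them the word $\sigma$ contains set-valued elements whose shape changes across semantics. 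Most of the real work therefore lies in carefully matching the conjunct produced by $\getFormula{\genericSemanticsX}{\cdot}$ against the side condition in the definition of the corresponding labelled transition $\stackrel{Z}{\mapsto}{\cdot}$, and this is where some care is needed; the inductive step on the structure of $\sigma$ is otherwise routine.
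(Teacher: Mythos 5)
Your argument is correct and is essentially the one the paper intends: the paper explicitly omits the proof of Proposition~\ref{prop:p-sat-formula-x-for-x-in-Xp} as ``following from the definitions,'' and where it does spell out the analogous reasoning (the left-to-right direction, in Lemma~\ref{lem:soundness-of-B-for-linear} of~\ref{app:linear}) it uses exactly your case analysis with induction on the structure of $\sigma$ for the trace-like semantics. The one point to be careful about is that in the step $\sigma = a\sigma'$ the inductive hypothesis is applied to $\sigma'$, which need not satisfy the length bound required for membership in $\genericSemanticsFiniteX(q')$ --- so one should either prove the equivalence for all well-formed observations without the bound (the bound plays no role in the equivalence) or invoke the reduction the paper itself flags in the footnote to Lemma~\ref{lem:soundness-of-B-for-linear}.
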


\begin{lemma} \label{lem:characterization_linear}
  \lemcharacterizationLinear
\end{lemma}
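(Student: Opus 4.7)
The plan is to leverage the two propositions (Propositions~\ref{prop:X-and-Xfinite} and~\ref{prop:p-sat-formula-x-for-x-in-Xp}) that sit immediately before the lemma, together with the logical characterization of the linear time semantics (Theorem~\ref{theo:linear-semantics}), so that the four standard requirements of Definition~\ref{def:characterized} plus monotonicity reduce to essentially book-keeping statements. Fix $\genericSemanticsX \in \lspectrumSet$ and $p \in P$; recall that $\B_{\genericSemanticsX}(p) = \{\true\} \cup \{ \getFormula{\genericSemanticsX}{x} \mid x \in \genericSemanticsFiniteX(p)\}$.

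First, for finiteness of $\B_{\genericSemanticsX}(p)$, I would inspect Table~\ref{tab:linearB} entry by entry and observe that each $\genericSemanticsFiniteX(p)$ is finite: this is immediate when $\genericSemanticsX(p)$ itself is already finite (e.g., $\trace$, $\completeTrace$, $\failure$, $\ready$, $\possibleFuture$, $\possibleSimulation$), and follows from the depth bounds $2\cdot\depth(p)+1$ on word length and $\depth(p)+1$ on trace/process depth (together with finiteness of $Act$, of $\eqClassbisim{\smallset{P}{\depth(p)+1}}$, and of $\powerset{\smallset{Act^*}{\depth(p)+1}}$) in the remaining ``trace-style'' cases. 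For $\emptyset \subset \B_{\genericSemanticsX}(p) \subseteq \Lset_{\genericSemanticsX}(p)$, the strict inclusion is clear because $\true \in \B_{\genericSemanticsX}(p)$, and the inclusion in $\Lset_{\genericSemanticsX}(p)$ follows from Proposition~\ref{prop:p-sat-formula-x-for-x-in-Xp}: for each $x \in \genericSemanticsFiniteX(p) \subseteq \genericSemanticsX(p)$ we have $p \in \sat{\getFormula{\genericSemanticsX}{x}}$.

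Next, for the characterization condition $\bigcap_{\psi\in\B_{\genericSemanticsX}(p)}\sat{\psi}\subseteq\sat{\phi}$ for each $\phi \in \Lset_{\genericSemanticsX}(p)$, I would reason uniformly as follows. Suppose $q \in \bigcap_{\psi\in\B_{\genericSemanticsX}(p)}\sat{\psi}$; then $q \in \sat{\getFormula{\genericSemanticsX}{x}}$ for every $x \in \genericSemanticsFiniteX(p)$, and Proposition~\ref{prop:p-sat-formula-x-for-x-in-Xp} gives $x \in \genericSemanticsX(q)$, that is, $\genericSemanticsFiniteX(p) \subseteq \genericSemanticsX(q)$. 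By Proposition~\ref{prop:X-and-Xfinite} (equivalence of clauses~\ref{item:Xfinitep-subset-Xp} and~\ref{item:Xp-subset-Xp}), this upgrades to $\genericSemanticsX(p) \subseteq \genericSemanticsX(q)$, which by Definition~\ref{def:linear-semantics} means $p \<_\genericSemanticsX q$, and then by Theorem~\ref{theo:linear-semantics} gives $\Lset_{\genericSemanticsX}(p) \subseteq \Lset_{\genericSemanticsX}(q)$; in particular $\phi \in \Lset_{\genericSemanticsX}(q)$, i.e., $q \in \sat{\phi}$.

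Finally, monotonicity is essentially a one-line chain: if $\Lset_{\genericSemanticsX}(p) \subseteq \Lset_{\genericSemanticsX}(q)$, then by Theorem~\ref{theo:linear-semantics} $p \<_\genericSemanticsX q$, hence $\genericSemanticsX(p) \subseteq \genericSemanticsX(q)$, hence by Proposition~\ref{prop:X-and-Xfinite} $\genericSemanticsFiniteX(p) \subseteq \genericSemanticsFiniteX(q)$, and therefore $\B_{\genericSemanticsX}(p) \subseteq \B_{\genericSemanticsX}(q)$ directly from the definition of $\B_{\genericSemanticsX}$. The main obstacle, as indicated by the structure of the argument, is entirely absorbed into the two preceding propositions; once those are in place the lemma unfolds uniformly across all fourteen semantics in $\lspectrumSet$, avoiding the case-by-case induction on formula structure that was needed in Lemma~\ref{lem:characterization_branching}. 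The only delicate checking remaining concerns the finite-depth encodings in Table~\ref{tab:linearB} — specifically the bound $2\cdot\depth(p)+1$ for the trace-style semantics, which I would justify by appealing to Remark~\ref{rem:bounded-trace-length} and its extension to the 2nd and 3rd diamond.
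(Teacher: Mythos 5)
Your proposal is correct and follows essentially the same route as the paper: all four conditions of Definition~\ref{def:characterized} plus monotonicity are reduced to Propositions~\ref{prop:X-and-Xfinite} and~\ref{prop:p-sat-formula-x-for-x-in-Xp} together with Theorem~\ref{theo:linear-semantics}; the only cosmetic difference is that you inline the arguments that the paper defers to appendix lemmas (the soundness of $\B$ and the fact that $\bigwedge_{\psi\in\B(p)}\psi$ is characteristic for $p$), and for monotonicity you attribute the step from $\genericSemanticsX(p)\subseteq\genericSemanticsX(q)$ to $\genericSemanticsFiniteX(p)\subseteq\genericSemanticsFiniteX(q)$ directly to Proposition~\ref{prop:X-and-Xfinite}, where the paper makes the underlying depth bound explicit via Corollary~\ref{cor:spectrum_depth}.
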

\begin{proof}
  For every $\genericSemanticsX \in \lspectrumSet$ and $p \in P$, we show that,
  \begin{enumerate}[label=(\emph{\roman*})]
  \item \label{item:linearBsoundness} $\emptyset \subset
    \B_{\genericSemanticsX}(p)\subseteq\Lset_{\genericSemanticsX}(p)$,
  \item \label{item:linearBentailment} for each
    $\phi\in\Lset_{\genericSemanticsX}(p)$, it holds that
    $\bigcap_{\psi\in\B_{\genericSemanticsX}(p)}\sat{\psi}\subseteq\sat{\phi}$,
  \item \label{item:linearBfiniteness} $\B_{\genericSemanticsX}(p)$ is finite,
    and
  \item \label{item:linearBmonotonicity} for each $q\in P$, if
    $\Lset_{\genericSemanticsX}(p)\subseteq\Lset_{\genericSemanticsX}(q)$ then
    $\B_{\genericSemanticsX}(p)\subseteq\B_{\genericSemanticsX}(q)$. 
  \end{enumerate}

  Let $p \in P$ and $\genericSemanticsX \in \lspectrumSet$.

  Property~\ref{item:linearBfiniteness} immediately follows from the definition
  of $\B_{\genericSemanticsX}(p)$, since the set $\genericSemanticsFiniteX(p)$ is
  finite.

  As for property~\ref{item:linearBsoundness}, we observe that $\true \in
  \B_\genericSemanticsX(p)$, and thus it suffices to prove
  $\B_{\genericSemanticsX}(p)\subseteq\Lset_{\genericSemanticsX}(p)$.
  To this end, we need to show that $\getFormula{\genericSemanticsX}{x} \in
  \Lset_{\genericSemanticsX}(p)$ for every $x \in \genericSemanticsFiniteX(p)$.
  Intuitively, this is very easy to see: an element $x$ of
  $\genericSemanticsFiniteX(p)$ carries information on how $p$ can/cannot evolve;
  the formula \getFormula{\genericSemanticsX}{x} express exactly the same information
  about $p$, and thus it is satisfied by $p$.
  We refer the reader to~\ref{app:linear},
  Lemma~\ref{lem:soundness-of-B-for-linear}, for a formal proof.

  Property~\ref{item:linearBentailment} follows from the observation that
  $\bigwedge_{\psi \in \B(p)} \psi$ is characteristic for $p$ within
  $\Lset_{\genericSemanticsX}$ (a detailed proof of this claim is given
  in~\ref{app:linear}, Lemma~\ref{lem:linear-B-is-characteristic}).
  Indeed, let $\phi \in \Lset_{\genericSemanticsX}(p)$.
  By the definition of characteristic formula
  (Definition~\ref{def:characteristic_formula}), we have $\sat{\bigwedge_{\psi \in
      \B_{\genericSemanticsX}(p)} \psi} = \{ q \in P \mid
  \Lset_{\genericSemanticsX}(p) \subseteq \Lset_{\genericSemanticsX}(q) \}$, and
  thus we have $\bigcap_{\psi \in \B_{\genericSemanticsX}(p)}\sat{\psi} =
  \sat{\bigwedge_{\psi \in \B_{\genericSemanticsX}(p)} \psi} = \{ q \in P \mid
  \Lset_{\genericSemanticsX}(p) \subseteq \Lset_{\genericSemanticsX}(q) \}
  \subseteq \sat{\phi}$ since $\sat{\phi}$ is upwards closed by
  Proposition~\ref{prop:upward_closure}(\ref{item:upclose}).

  Finally, in order to prove property~\ref{item:linearBmonotonicity}, let $q \in
  P$ be such that $\Lset_{\genericSemanticsX}(p) \subseteq
  \Lset_{\genericSemanticsX}(q)$.
  By Theorem~\ref{theo:linear-semantics}, we have $p \<_{\genericSemanticsX} q$.
  By Corollary~\ref{cor:spectrum_depth} in~\ref{app:linear}, we know that
  $\depth(p)\leq\depth(q)$.
  Using this property, it is easy to see that $\genericSemanticsX(p) \subseteq
  \genericSemanticsX(q)$ implies $\genericSemanticsFiniteX(p) \subseteq
  \genericSemanticsFiniteX(q)$.
  Then, we have $p \<_{\genericSemanticsX} q \Leftrightarrow
  \genericSemanticsX(p) \subseteq \genericSemanticsX(q) \Rightarrow
  \genericSemanticsFiniteX(p) \subseteq \genericSemanticsFiniteX(q) \Rightarrow
  \B_{\genericSemanticsX}(p) \subseteq \B_{\genericSemanticsX}(q)$.
%
%\qed
\end{proof}

\begin{lemma} \label{lem:anti-char_linear}
  Let $\genericSemanticsX \in \lspectrumSet$.
  For each $p \in P$ and $\chiX{\genericSemanticsX}(p)$ characteristic within
  $\Lset_X$ for $p$, there exists a formula in $\Lset_{\genericSemanticsX}$,
  denoted by $\barchiX{\genericSemanticsX}(p)$, such that
  \begin{inparaenum}[(i)]
  \item $p \not \in \sat{\barchiX{\genericSemanticsX}(p)}$ and
  \item $\{ p' \in P \mid p' \not \<_{\genericSemanticsX} p \} \subseteq
    \sat{\barchiX{\genericSemanticsX}(p)} $.
  \end{inparaenum}
\end{lemma}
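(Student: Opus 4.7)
The plan is, for each $X \in \lspectrumSet$, to explicitly construct $\barchiX{X}(p)$ as a finite disjunction of formulas of $\Lset_X$, each expressing an ``anti-observation'' of $p$, that is, some $x \notin X(p)$. The basic tool is Proposition~\ref{prop:p-sat-formula-x-for-x-in-Xp}, which guarantees that $\getFormula{X}{x}$ is satisfied by $q$ exactly when $x \in X(q)$.

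Concretely, for each $X$ one identifies a finite set $A_X(p)$ of anti-observations of $p$ and sets
\[
\barchiX{X}(p) \;=\; \bigvee_{x \in A_X(p)} \getFormula{X}{x},
\]
occasionally enriched with disjuncts of a closely related shape when needed to stay inside the chosen fragment (as for $\completeTrace$, where disjuncts $\exist{\tau}\true$ are used alongside $\exist{\tau}\zeroFormula$ since $A_\completeTrace(p)$ must cover both traces $p$ cannot perform and traces $p$ can perform but on which $p$ never deadlocks). For example, for $X = \trace$ one takes $A_\trace(p) = \smallset{Act^*}{\depth(p)+1} \setminus \trace(p)$. For the six ``trace'' semantics whose observations are words over alphabets of the form $(Act \cup \powerset{\cdot})^*$, Remark~\ref{rem:bounded-trace-length} supports restricting to words of length at most $2\depth(p)+1$. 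For the $2$-simulation family, one uses observations built on bounded-depth representatives, exploiting the fact that $\Sequivclass{p'}$ is finite for every $p'$. Each resulting $\barchiX{X}(p)$ lies in $\Lset_X$ since every linear-time logic in Definition~\ref{def:syntax_linear_spectrum} admits $\vee$ at the outermost level and each $\getFormula{X}{x}$ is, by construction, in $\Lset_X$.

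Property~(i), namely $p \notin \sat{\barchiX{X}(p)}$, is then immediate: by Proposition~\ref{prop:p-sat-formula-x-for-x-in-Xp}, every disjunct $\getFormula{X}{x}$ witnesses some $x \notin X(p)$ and is hence false at $p$. Property~(ii) is the substantive one: given $p'$ with $p' \not\<_X p$, Theorem~\ref{theo:linear-semantics} yields some $x \in X(p') \setminus X(p)$, and it remains to show that such an $x$ can always be chosen within $A_X(p)$, that is, of bounded ``size''. The argument proceeds by case analysis on $X$: one shows that, if any witness exists, a short one does, of length at most $\depth(p)+1$ for trace-like observations, of length at most $2\depth(p)+1$ for the word-like ones (cf.\ Remark~\ref{rem:bounded-trace-length}), and with bounded-depth representative processes for the $2$-simulation family. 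The main obstacle is not conceptual but the sheer bookkeeping of performing this case analysis uniformly over the fourteen semantics in $\lspectrumSet$, each with its own notion of observation and its own syntactic restrictions in $\Lset_X$; particular care is required for the logics that cannot negate characteristic simulation formulas (e.g., $\impossibleFuture$, $\impossibleFutureTrace$, $\impossibleSimulation$, $\impossibleSimulationTrace$), where one must verify that every disjunct produced really lies inside the corresponding fragment of Definition~\ref{def:syntax_linear_spectrum}.
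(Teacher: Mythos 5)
Your overall architecture matches the paper's: $\barchiX{\genericSemanticsX}(p)$ is a disjunction of formulae $\getFormula{\genericSemanticsX}{x}$ over anti-observations $x \notin \genericSemanticsX(p)$, property (i) follows from Proposition~\ref{prop:p-sat-formula-x-for-x-in-Xp}, and property (ii) reduces to exhibiting a witness observation in $\genericSemanticsX(p') \setminus \genericSemanticsX(p)$ that the disjunction actually mentions. The gap lies in the step you yourself call the substantive one: the claim that whenever a witness exists, one of size bounded in terms of $p$ exists, is false for several of the fourteen semantics, not only for $\completeTrace$. Take $Act=\{a\}$, $p = aa\zeroProcess$, $p' = a^{100}\zeroProcess$ and $X = \possibleFuture$: every pair $\langle\tau,\Gamma\rangle \in \possibleFuture(p')\setminus\possibleFuture(p)$ has either $|\tau| > \depth(p)+1$ or a second component $\Gamma = \trace(p'')$ containing traces of length greater than $\depth(p)+1$, so no bounded witness exists and a disjunction indexed only by bounded anti-observations misses $p'$ entirely. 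The same phenomenon occurs for $\possibleFutureTrace$, $\possibleSimulation$ and $\possibleSimulationTrace$, whose set elements record \emph{exact} trace sets or exact sets of simulated classes of possibly deep processes; your per-semantics ``short witness'' case analysis cannot be completed as stated there.

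The missing idea is the paper's uniform extra disjunct $\bigvee_{\tau \in \equalset{Act^*}{\depth(p)+1}}\exist{\tau}\true$, added to $\barchiX{\genericSemanticsX}(p)$ for \emph{every} $X \in \lspectrumSet$ (it lies in every $\Lset_X$). It is never satisfied by $p$, since $p$ has no trace of length $\depth(p)+1$, so (i) is preserved; and it is satisfied by every process of depth greater than $\depth(p)$, which disposes of all such $p'$ at once. For the remaining $p'$ with $\depth(p')\leq\depth(p)$, every element of $\genericSemanticsFiniteX(p')$ is automatically of bounded size, so Proposition~\ref{prop:X-and-Xfinite} (the contrapositive of \ref{item:Xfinitep-subset-Xp}~$\Rightarrow$~\ref{item:Xp-subset-Xp}) directly yields a witness $x \in \genericSemanticsFiniteX(p')\setminus \genericSemanticsX(p)$ lying in the anti-observation set $\big(\bigcup_{\eqClassbisim{q} \in \eqClassbisim{\smallset{P}{\depth(p)}}}\genericSemanticsFiniteX(q)\big)\setminus \genericSemanticsX(p)$, with no per-semantics analysis at all. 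Your $\exist{\tau}\true$ enrichment for $\completeTrace$ is a special case of exactly this device; promoting it to a uniform component of the construction both closes the gap and collapses the fourteen case analyses into a single argument.
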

\begin{proof}
  First, we define, for $\genericSemanticsX \in \lspectrumSet$,

  {\centering

    $\genericSemanticsFiniteXbar(p) = \big( \bigcup_{\eqClassbisim{p'} \in
      \eqClassbisim{\smallset{P}{\depth(p)}}} \genericSemanticsFiniteX(p') \big)
    \setminus \genericSemanticsX(p)$,

  }

  \noindent
  and then we define $\barchiX{\genericSemanticsX}$ as follows: for each $p \in
  P$

  {\centering

    $\barchiX{\genericSemanticsX}(p) = \bigvee_{x \in
      \genericSemanticsFiniteXbar(p)} \getFormula{\genericSemanticsX}{x} \vee
    \bigvee_{\substack{\tau \in \equalset{Act^*}{\depth(p)+1}}}
    \exist{\tau}\true$

  }

  Let $p \in P$ and $\genericSemanticsX \in \lspectrumSet$.
  We have to show that
  \begin{inparaenum}[(i)]
  \item \label{item:proof-of-lem:anti-char_linear-p-not-in-barchi} $p \not \in
    \sat{\barchiX{\genericSemanticsX}(p)}$ and
  \item \label{item:proof-of-lem:anti-char_linear-barchi-condition} $\{ p' \in P
    \mid p' \not \<_{\genericSemanticsX} p \} \subseteq
    \sat{\barchiX{\genericSemanticsX}(p)} $.
  \end{inparaenum}

  In order to prove~(\ref{item:proof-of-lem:anti-char_linear-p-not-in-barchi}),
  assume, towards a contradiction, that $p \in
  \sat{\barchiX{\genericSemanticsX}(p)}$.
  Then, we distinguish two possibilities:
  \begin{enumerate}
  \item if $p \in \sat{\bigvee_{\substack{\tau \in
          \equalset{Act^*}{\depth(p)+1}}} \exist{\tau}\true}$, then there is $\tau \in
    \trace(p)$ whose length is greater than the depth of $p$, which is a
    contradiction;
  \item if $p \in \sat{\getFormula{\genericSemanticsX}{x}}$ for some $x \in
    \genericSemanticsFiniteXbar(p)$, then, by
    Proposition~\ref{prop:p-sat-formula-x-for-x-in-Xp}, we have that $x \in
    \genericSemanticsX(p)$, which is in contradiction with $x$ being an element of
    $\genericSemanticsFiniteXbar(p)$.
  \end{enumerate}

  In order to prove~(\ref{item:proof-of-lem:anti-char_linear-barchi-condition}),
  let $p' \in P$ be such that $p' \not\<_{\genericSemanticsX} p$.
  If $\depth(p') > \depth(p)$, then there is a trace $\tau \in \trace(p')$ with
  length $\depth(p)+1$, and thus $p' \in \sat{\bigvee_{\substack{\tau \in
        \equalset{Act^*}{\depth(p)+1}}} \exist{\tau}\true} \subseteq
  \sat{\barchiX{\genericSemanticsX}(p)}$.
  Otherwise ($\depth(p') \leq \depth(p)$), we proceed as follows.
  By Definition~\ref{def:linear-semantics}, $\genericSemanticsX(p')
  \not\subseteq \genericSemanticsX(p)$, and, by
  Proposition~\ref{prop:X-and-Xfinite} (not \ref{item:Xp-subset-Xp} $\Rightarrow$
  not \ref{item:Xfinitep-subset-Xp}), we have $\genericSemanticsFiniteX(p')
  \not\subseteq \genericSemanticsX(p)$.
  Thus, there is some $x \in \genericSemanticsFiniteX(p') \setminus
  \genericSemanticsX(p)$, which means that $x \in \genericSemanticsFiniteXbar(p)$.
  By Proposition~\ref{prop:p-sat-formula-x-for-x-in-Xp}, $x \in
  \genericSemanticsFiniteX(p') \subseteq \genericSemanticsX(p')$ implies $p' \in
  \sat{\getFormula{\genericSemanticsX}{x}}$ and, since $x \in
  \genericSemanticsFiniteXbar(p)$, we have $p' \in
  \sat{\barchiX{\genericSemanticsX}(p)}$.
%
%  \qed
\end{proof}

Finally, the following theorem states the main result of this section.
\begin{theorem}[Characterization by primality for the linear time
  spectrum] \label{theo:char_by_prim_linear_spectrum} Let $X \in \lspectrumSet$
  and $\phi \in \Lset_X$.
 Then, $\phi$ is consistent and prime if and only if $\phi$ is
 characteristic for some $p \in P$.
\end{theorem}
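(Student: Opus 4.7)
The plan is to proceed exactly as in the proof of Theorem~\ref{theo:char_by_prim_branching_spectrum}: assemble the already-established pieces of the general machinery to conclude decomposability of $\Lset_X$ for every $X \in \lspectrumSet$, and then use the chain of implications between decomposability, primality, and characteristicity.

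More precisely, the forward direction (characteristic implies consistent and prime) is immediate from Theorem~\ref{theo:char_prime}, which holds for any logically characterized preorder and requires no property specific to the linear time spectrum. For the converse direction (consistent and prime implies characteristic for some $p \in P$), by Theorem~\ref{theo:decomp} it suffices to show that $\Lset_X$ is decomposable for every $X \in \lspectrumSet$, and for this we appeal to Corollary~\ref{cor:decomposability_condition_for_spectrum}. Its two hypotheses are established precisely by the two lemmas of this section: Lemma~\ref{lem:characterization_linear} gives finite characterization by a monotonic $\B_X$ (condition~(\ref{item:final_req_characteristic})), and Lemma~\ref{lem:anti-char_linear} provides, for each $\chi_X(p)$, the formula $\bar\chi_X(p)$ with the properties required by condition~(\ref{item:final_req_chibar}).

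The main subtlety — and the only place where the argument does not literally copy the branching time case — is that Corollary~\ref{cor:decomposability_condition_for_spectrum} is stated under the assumption that the logic features the Boolean connective $\wedge$, and the linear time logics of Definition~\ref{def:syntax_linear_spectrum} admit only a restricted use of $\wedge$ (it does not appear unrestrictedly under every existential modality). As anticipated in Remark~\ref{rem:disjunction} at page~\pageref{rem:disjunction}, however, the role of $\wedge$ in the proof of Proposition~\ref{prop:characteristic}, and hence in Corollary~\ref{cor:decomposability_condition_for_spectrum}, is confined to forming the conjunction of the formulae in $\B(p)$; inspection of the shape of $\B_X(p)$ in Table~\ref{tab:linearB} shows that all the conjunctions actually used to build $\chi_X(p)=\bigwedge_{\psi\in\B_X(p)}\psi$ and the disjunction used in $\bar\chi_X(p)$ fall within the syntactic fragment permitted by each of the grammars in Definition~\ref{def:syntax_linear_spectrum}, so the corollary applies verbatim.

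The conclusion then follows by chaining the results: Lemma~\ref{lem:characterization_linear} and Lemma~\ref{lem:anti-char_linear} together with Corollary~\ref{cor:decomposability_condition_for_spectrum} (applied with the observation above) yield decomposability of $\Lset_X$; Theorem~\ref{theo:decomp} then gives the converse implication, and combining it with Theorem~\ref{theo:char_prime} we obtain the desired biconditional. The heavy lifting has already been done in Lemmas~\ref{lem:characterization_linear} and~\ref{lem:anti-char_linear}, so the proof of the theorem itself is a short assembly argument; the only point requiring a sentence of care is the syntactic justification that the witnessing formulae $\chi_X(p)$ and $\bar\chi_X(p)$ really live in the correspondingly restricted language $\Lset_X$.
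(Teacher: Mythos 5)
Your proposal matches the paper's proof exactly: the forward direction is Theorem~\ref{theo:char_prime}, and the converse chains Lemma~\ref{lem:characterization_linear} and Lemma~\ref{lem:anti-char_linear} through Corollary~\ref{cor:decomposability_condition_for_spectrum} and Theorem~\ref{theo:decomp}. Your extra care about the restricted use of $\wedge$ in the linear time grammars is also the same point the paper makes via Remark~\ref{rem:disjunction} in the discussion preceding the lemmas, so nothing is missing.
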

\begin{proof}
  The claim immediately follows from Theorem~\ref{theo:char_prime},
  Theorem~\ref{theo:decomp},
  Corollary~\ref{cor:decomposability_condition_for_spectrum},
  Lemma~\ref{lem:characterization_linear}, and Lemma~\ref{lem:anti-char_linear}.
% \qed
 \end{proof}

%%% Local Variables:
%%% mode: latex
%%% TeX-master: "logical_graphical-ext"
%%% End:

\section{Conclusions and future directions} \label{sec:conclusions}

In this paper, we have provided general sufficient conditions guaranteeing that formulae for which model checking can be reduced to equivalence/preorder
checking, that is, the characteristic formulae, are exactly the consistent and prime ones.
We have applied our framework to show that characteristic formulae are exactly
the consistent and prime ones when the set of processes is finite, as well as
for modal refinement semantics~\cite{BoudolL1992}, covariant-contravariant
semantics~\cite{AcetoEtAl11b} (the result was known for these last
two semantics), and all the semantics in van Glabbeek's
spectrum~\cite{VanGlabbeek01} and those considered in~\cite{dFGPR13}.
% 
% 
% We have applied our framework to show that characteristic formulae are exactly the consistent and prime 
% ones when the set of processes is finite, for all the semantics in van Glabbeek's branching-time spectrum as well as for conformance simulation~\cite{FabregasEtAl10-logics}.
% 
% 
Our results indicate that the ``characterization by primality'' result, first proved by Boudol and 
Larsen~\cite{BoudolL1992} in the context of the modal logic that characterizes modal refinement over modal transition systems, holds in a wide variety of 
settings in concurrency theory. We feel, therefore, that this study reinforces the view that there is a very close connection between the behavioural and 
logical view of processes: not only do the logics characterize processes up to the chosen notion of behavioural relation, but processes characterize all 
the prime and consistent formulae.
%
%In fact, as in the specific settings considered in~\cite{AcetoEtAl11b,BoudolL1992}, sets of processes characterize {\em 
%all} the formulae in the logics covered by our general framework for characterizations by primality. (See the full version of the 
%paper for details.)
%

%%% Local Variables:
%%% mode: latex
%%% TeX-master: "logical_graphical-ext"
%%% End:

\subsection{Applying the theory to conformance simulation} \label{sec:conformance}

As a future work, we would like to provide a characterization by primality for
the logics characterizing \emph{conformance simulation}
(\conformance)~\cite{FabregasEtAl10-logics}.
Here, we give evidence (Proposition~\ref{prop:no_characterization_conformance}
below) that Corollary~\ref{cor:decomposability_condition_for_spectrum} cannot be
used to show the decomposability of the corresponding logic
$\Lset_{\conformance}$ (defined below).
However, we are confident that the alternative path to decomposability we
provide (through
Proposition~\ref{prop:decomposability-for-finite-modal-covariant}) might serve
the purpose.

Conformance simulation~\cite{FabregasEtAl10-logics} is defined as the
largest relation $\<_{\conformance}$ satisfying:

{\centering

  $\begin{array}{llll}
     p\<_\conformance q
     & \Leftrightarrow
     & I(p)\subseteq I(q) \text{ and}
     & \text{for all } q',a \text{ such that } q\stackrel{a}{\rightarrow} q'
       \text{ and } p\stackrel{a}{\rightarrow} \\
     &
     &
     & \text{there exists } p' \text{ such that } p\stackrel{a}{\rightarrow}p'
       \text{ and } p'\<_\conformance q'.

   \end{array}$

}

The logic $\Lset_\conformance$, characterizing conformance simulation,
is interpreted over set of processes $P$, defined as in
Section~\ref{page:process-definition} on page~\pageref{page:process-definition}.
Its language includes the Boolean constants $\true$ and $\false$, the Boolean
connectives $\wedge$ and $\vee$, as well as the modality $\conf{a}$ ($a \in
Act$), whose semantic interpretation is as follows:

{\centering

  $p \in \sat{\conf{a} \phi}$ iff $p\stackrel{a}{\rightarrow}$ and $p' \in
  \sat{\phi}$ for all $p'$ such that $p \stackrel{a}{\rightarrow} p'$.

}

\noindent The semantic clauses for the other operators are given in Definition~\ref{def:semantics_branching}.

Analogously to the case of the semantics in the linear time-branching time
spectrum, the logic $\Lset_\conformance$ captures exactly conformance simulation, as stated by the following result.

\begin{theorem}[Logical characterization of conformance simulation~\cite{FabregasEtAl10-logics}]\label{theo:conformance-semantics}
  For all $p,q \in P$, $p\<_\conformance q$ iff
  $\Lset_\conformance(p)\subseteq\Lset_\conformance(q)$.
\end{theorem}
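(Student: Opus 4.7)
My plan is to prove the two implications by standard Hennessy--Milner style arguments, adapted to the peculiarities of the conformance simulation clause (namely, the asymmetric condition $I(p) \subseteq I(q)$ together with the conditional matching of $q$'s transitions only when $p$ can perform the action).

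For the forward direction ($p \<_\conformance q$ implies $\Lset_\conformance(p)\subseteq\Lset_\conformance(q)$), I would proceed by structural induction on a formula $\phi\in\Lset_\conformance(p)$. The cases for $\true$, $\false$, $\wedge$, $\vee$ are immediate. The only interesting case is $\phi = \conf{a}\psi$. From $p\in\sat{\conf{a}\psi}$ we have $a\in I(p)$ and $p'\in\sat{\psi}$ for every $a$-successor $p'$ of $p$. Since $p\<_\conformance q$ gives $I(p)\subseteq I(q)$, we obtain $q\stackrel{a}{\rightarrow}$. For any $q'$ with $q\stackrel{a}{\rightarrow}q'$, since also $p\stackrel{a}{\rightarrow}$, the second clause of conformance simulation yields some $p'$ with $p\stackrel{a}{\rightarrow}p'$ and $p'\<_\conformance q'$. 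Then $\psi\in\Lset_\conformance(p')$ (because $p'\in\sat{\psi}$), so by the induction hypothesis $\psi\in\Lset_\conformance(q')$, i.e., $q'\in\sat{\psi}$. Hence $q\in\sat{\conf{a}\psi}$.

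For the converse direction, I would show that the relation
\[
R = \{(p,q)\in P\times P \mid \Lset_\conformance(p)\subseteq\Lset_\conformance(q)\}
\]
is a conformance simulation. For the first requirement, given $a\in I(p)$, the formula $\conf{a}\true$ is satisfied by $p$ and thus belongs to $\Lset_\conformance(p)\subseteq\Lset_\conformance(q)$, which forces $a\in I(q)$. For the second requirement, assume $p\stackrel{a}{\rightarrow}$ and $q\stackrel{a}{\rightarrow}q'$, and suppose toward a contradiction that for every $a$-successor $p_i$ of $p$ the inclusion $\Lset_\conformance(p_i)\subseteq\Lset_\conformance(q')$ fails. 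Then for each such $p_i$ there exists $\phi_i\in\Lset_\conformance(p_i)\setminus\Lset_\conformance(q')$. Since every process in $P$ is finitely branching (the grammar in Section~\ref{sec:spectrum} produces finite, loop-free LTSs), there are only finitely many successors $p_1,\dots,p_n$, so $\phi=\bigvee_{i=1}^{n}\phi_i$ is a well-formed formula of $\Lset_\conformance$. Each $a$-successor of $p$ satisfies $\phi$, and since $p\stackrel{a}{\rightarrow}$, we have $\conf{a}\phi\in\Lset_\conformance(p)\subseteq\Lset_\conformance(q)$. Applied to the $a$-successor $q'$ of $q$, this yields $q'\in\sat{\phi}$, so $q'\in\sat{\phi_i}$ for some $i$, contradicting $\phi_i\notin\Lset_\conformance(q')$. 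Therefore some $p'$ with $p\stackrel{a}{\rightarrow}p'$ and $(p',q')\in R$ must exist.

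The main obstacle, and the point that distinguishes this case slightly from the standard simulation characterization, is managing the hypothesis $p\stackrel{a}{\rightarrow}$ that appears in the second clause of the conformance simulation definition; this matches exactly the semantics of $\conf{a}$, which requires the action to be enabled. Provided one recognizes that $\conf{a}\phi$ simultaneously encodes ``$a$ is enabled at $p$'' and ``all $a$-successors satisfy $\phi$'', both directions go through cleanly. The only assumption needed beyond the basic setup is finite branching, which holds for the processes defined on page~\pageref{page:process-definition} and guarantees that the finite disjunction used in the contradiction argument is a legitimate formula of $\Lset_\conformance$.
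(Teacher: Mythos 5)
Your proof is correct. Note that the paper does not actually prove Theorem~\ref{theo:conformance-semantics}: it is imported as a known result from~\cite{FabregasEtAl10-logics}, so there is no in-paper argument to compare against. Your two-directional Hennessy--Milner-style argument --- structural induction on formulae for the forward direction, and showing that the relation $\{(p,q) \mid \Lset_\conformance(p)\subseteq\Lset_\conformance(q)\}$ is itself a conformance simulation for the converse --- is the standard route, and it correctly handles the one delicate point: the modality $\conf{a}$ simultaneously asserts that $a$ is enabled and that all $a$-successors satisfy the argument formula, which is precisely what is needed both to exploit $I(p)\subseteq I(q)$ in the forward direction and to discharge the side condition $p\stackrel{a}{\rightarrow}$ in the coinductive step of the converse; the finite-branching assumption you invoke to form the disjunction $\bigvee_{i=1}^{n}\phi_i$ is indeed guaranteed by the process grammar of Section~\ref{sec:spectrum}.
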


The next result shows that
Corollary~\ref{cor:decomposability_condition_for_spectrum} cannot be used to
show the decomposability of logic $\Lset_{\conformance}$
and thus to prove its characterization by primality.

\begin{proposition}\label{prop:no_characterization_conformance}
  $\Lset_{\conformance}$ is not finitely characterized by $\B$, for
  any monotonic $\B$ (see Definition~\ref{def:characterized}).
%
%  There is no monotonic $\B$ that finitely characterizes $\Lset_\conformance$.
\end{proposition}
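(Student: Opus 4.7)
The plan is to exhibit a process $q$ together with an infinite sequence $p^{(1)}, p^{(2)}, \ldots$ of processes below $q$, strictly increasing in $\<_\conformance$ (that is, $p^{(n)} \<_\conformance p^{(n+1)}$ but $p^{(n+1)} \not\<_\conformance p^{(n)}$), and to derive a contradiction from the monotonicity of $\B$ and the finiteness of $\B(q)$.

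Fix two distinct actions $a, b \in Act$ (the claim is meaningful only for $|Act| \geq 2$) and let $a^k\zeroProcess$ denote the process obtained by prefixing $k$ copies of $a$ to $\zeroProcess$. A routine induction on $n$, directly from the definition of $\<_\conformance$, yields the auxiliary fact that $a^n\zeroProcess \<_\conformance a^m\zeroProcess$ if and only if $n \leq m$. Now set $s = a\zeroProcess + b\zeroProcess$, $q = a s$, and, for each $n \geq 1$, $p^{(n)} = a s + a(a^n\zeroProcess)$. The argument rests on three verifications:
\begin{compactenum}[(i)]
\item $p^{(n)} \<_\conformance q$: both processes have initials $\{a\}$, and $q$'s single $a$-transition (to $s$) is matched by $p^{(n)} \stackrel{a}{\rightarrow} s$;
\item $p^{(n)} \<_\conformance p^{(n+1)}$: for $p^{(n+1)} \stackrel{a}{\rightarrow} s$, match via $s$; for $p^{(n+1)} \stackrel{a}{\rightarrow} a^{n+1}\zeroProcess$, match via $p^{(n)} \stackrel{a}{\rightarrow} a^n\zeroProcess$, invoking the auxiliary fact;
\item $p^{(n+1)} \not\<_\conformance p^{(n)}$: the transition $p^{(n)} \stackrel{a}{\rightarrow} a^n\zeroProcess$ has no matching candidate among the $a$-successors of $p^{(n+1)}$, namely $s$ (which fails because $\{a,b\} \not\subseteq \{a\}$) and $a^{n+1}\zeroProcess$ (which fails by the auxiliary fact, since $n+1 \not\leq n$).
\end{compactenum}

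Once (i)--(iii) are established, the remainder is immediate. Items (ii) and (iii), through Theorem~\ref{theo:conformance-semantics}, give $\Lset_\conformance(p^{(n)}) \subsetneq \Lset_\conformance(p^{(n+1)})$, and item (i) gives $\Lset_\conformance(p^{(n)}) \subseteq \Lset_\conformance(q)$. Suppose, towards a contradiction, that some monotonic $\B$ finitely characterizes $\Lset_\conformance$. Then monotonicity yields $\B(p^{(n)}) \subseteq \B(p^{(n+1)}) \subseteq \B(q)$, while the contrapositive of Proposition~\ref{prop:characterized2} combined with the strict $\Lset_\conformance$-inclusion forces $\B(p^{(n)}) \neq \B(p^{(n+1)})$, hence $\B(p^{(n)}) \subsetneq \B(p^{(n+1)})$. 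This produces an infinite strictly ascending chain of subsets of the finite set $\B(q)$, which is impossible.

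The main obstacle is the design of the construction itself. The asymmetric definition of $\<_\conformance$ mixes a ready-like condition $I(p) \subseteq I(q)$ with a must-match condition imposed only on shared actions, and one has to shape the extra $a$-successor $a^n\zeroProcess$ of $p^{(n)}$ so that it is simultaneously $\<_\conformance$-admissible as a match for $p^{(n+1)}$'s new transition (for (ii)) and un-matchable by any $a$-successor of $p^{(n+1)}$ (for (iii)). The presence of the second action $b$ inside $s$ is essential: without it, $s$ could serve as a catch-all matcher for all $a^k$-transitions, collapsing the whole chain into a single $\Lset_\conformance$-equivalence class; this is why the argument genuinely uses $|Act| \geq 2$.
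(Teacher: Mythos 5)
Your proof is correct, and it rests on the same underlying mechanism as the paper's: exhibit an infinite strictly monotone $\<_\conformance$-chain, then use monotonicity of $\B$ together with (the contrapositive of) Proposition~\ref{prop:characterized2} to turn the strict inclusions $\Lset_\conformance(p^{(n)}) \subsetneq \Lset_\conformance(p^{(n+1)})$ into strict inclusions $\B(p^{(n)}) \subsetneq \B(p^{(n+1)})$, contradicting finiteness. The only real difference is the orientation of the chain and the witnessing processes. The paper takes $p_k = \sum_{i=1}^{k} a^i b\zeroProcess$ and shows $p_{k+1} \strict_\conformance p_k$, i.e.\ a strictly \emph{descending} chain; since all the sets $\B(p_k)$ are then nested inside $\B(p_1)$, that single set is forced to be infinite and no separate bounding process is needed. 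You instead build a strictly \emph{ascending} chain $p^{(1)} \strict_\conformance p^{(2)} \strict_\conformance \cdots$, which is why you must also supply the upper bound $q = a(a\zeroProcess + b\zeroProcess)$ so that the growing sets $\B(p^{(n)})$ are trapped inside the single finite set $\B(q)$; your verifications (i)--(iii) and the auxiliary fact $a^n\zeroProcess \<_\conformance a^m\zeroProcess$ iff $n \leq m$ all check out against the definition of $\<_\conformance$ (in particular, $s$ cannot match any $a^k\zeroProcess$ because $\{a,b\} \not\subseteq \{a\}$, which is exactly the role $b$ plays). Both arguments use two actions, so your caveat about $|Act| \geq 2$ is consistent with the paper's own construction. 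Either version is a complete proof; the descending variant is marginally shorter because the bounding process comes for free.
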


\begin{proof}
  Suppose, towards a contradiction, that $\Lset_{\conformance}$ is finitely
  characterized by $\B$, for some monotonic $\B$, and let us define the following
  processes, for $k >0$: $p_k= \sum_{i=1}^{k}a^ib \zeroProcess$.

  Since $I(p_k)=I(p_{k+1})$ and $p_k$ is a subtree of $p_{k+1}$, it is clear
  that $p_{k+1}\<_\conformance p_{k}$ holds, for all $k$ (see the picture below to
  verify the claim for $k=1$, i.e., $p_2 \<_{\conformance} p_1$).

{\centering
\begin{tikzpicture}[>=stealth',shorten >=1pt,auto]
\matrix [matrix of math nodes, column sep={.8cm,between origins},row sep={1cm,between origins}]
{
% row 1 
& \node (C0) {p_2}; &
& &  \node (D0) {p_1};
 \\
% row 2
\node (C11) {\cdot}; & & \node (C12) {\cdot}; &
\node (sim) {\<_\conformance}; & \node (D1) {\cdot}; \\
% row 3
\node (C21) {\cdot}; & & \node (C22) {\cdot}; &
 & \node (D2) {\cdot};  \\
% row 4
& & \node (C3) {\cdot}; &
 &  \\
};
\begin{scope}[every node/.style={font=\small\itshape}]
\path   (C0)   edge        node [left,above]  {a} (C11)
		(C0)   edge        node [right,above] {a} (C12)
		(C11)  edge        node [left]        {b} (C21)
		(C12)  edge        node [right]       {a} (C22)
		(C22)  edge        node [right]       {b} (C3)
		(D0)   edge        node    {a} (D1)
		(D1)   edge        node    {b} (D2);
\end{scope}
\end{tikzpicture}

}

On the other hand, $p_{k}\not\<_\conformance p_{k+1}$, because the branch
$a^{k+1}b\zeroProcess$ cannot be matched by any branch in $p_k$, where the
longest sequence of consecutive $a$'s is $a^{k}$.
%
% reaching a point where we have to check if $b\zeroProcess\<_\conformance ab\zeroProcess$, which is false.
%
Thus, we have a non-well-founded sequence of strictly decreasing processes
$\ldots\strict_\conformance p_{k+1}\strict_\conformance
p_{k}\ldots\strict_\conformance p_{1}$.

For all $k$, $p_{k+1}\strict_\conformance p_{k}$ implies
$\Lset_{\conformance}(p_{k+1}) \subset \Lset_{\conformance}(p_{k})$, which, by
monotonicity of \B (Definition~\ref{def:characterized}), in turn implies
$\B(p_{k+1}) \subset \B(p_{k})$.
Thus, there exist infinitely many formulae $\psi_1, \psi_2, \ldots, \psi_k,
\ldots$ such that $\psi_{k} \in \B(p_{k}) \setminus \B(p_{k+1})$ for all $k$.
We show that the formulae $\{ \psi_k \}_{k \geq 1}$ are pairwise different, that
is, $\psi_k \neq \psi_j$ for each $k \neq j$.
To this end, let us suppose, towards a contradiction, that there exist $j,k$
such that $j < k$ and $\psi_k = \psi_j$.
By construction, we have that $\psi_j \in \B(p_j) \setminus \B(p_{j+1})$
and $\psi_k = \psi_j \in \B(p_k) \setminus \B(p_{k+1})$.
Since $j+1 \leq k$, we have that $\B(p_k) \subseteq \B(p_{j+1})$,
which means that $\psi_k = \psi_j \in \B(p_{j+1})$, leading to a contradiction.

Therefore the sequence $\{ \psi_k \}_{k \geq 1}$ includes infinitely many different formulae.
Since $\B(p_1)$ contains them all, the hypothesis of finiteness
of $\B(p_1)$ is contradicted, and the thesis follows.
% 
% Now, let $\phi_k=\conf{a}(\conf{b}\true\vee\bigvee_{i=1}^{k-1}\conf{a^{i}}\conf{b}\true)$, we have $p_i\in\sat{\phi_k}$ for all $i\leq k$ and $p_j\notin\sat{\phi_k}$ for all $j> k$, thus $\phi_k\in\B(p_i)$ for each $i\leq k$. Since $\B$ is monotonic this formula should appear also in $B(p_i)$ with $i\leq k$. But $k$ it is not bounded, which means that they are infinitely many different formulae $\phi_k$. This contradicts the finiteness of $\B(p)$.
%\qed
\end{proof}

\paragraph{Acknowledgements}{} We thank the anonymous referees for their
comments, which led to improvements in the paper.

\section*{References}

%\bibliography{biblio}

\newpage

\appendix
\section{Proofs for semantics in van Glabbeek's branching time spectrum}
\label{app:branching}

\subsection{Finite characterization} \label{sec:full-proof-finite-char}

We prove here that logics in van Glabbeek's branching-time spectrum
are finitely characterized by some monotonic $\B$
(condition~(\ref{item:final_req_characteristic})
in Corollary~\ref{cor:decomposability_condition_for_spectrum}).

\setcounter{resume}{\value{lemma}}
\setcounter{lemma}{\value{lemcharacterizationready}}
\begin{lemma}
 \lemcharacterizationready
\end{lemma}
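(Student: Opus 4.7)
The plan is to replicate the four-step argument used for ready simulation for each remaining semantics $X \in \{\simulation, \completeSim, \traceSim, \twoSim, \bisim\}$, establishing in turn that (i) $\emptyset \subsetneq \B_X(p) \subseteq \Lset_X(p)$, (ii) $\bigcap_{\psi \in \B_X(p)} \sat{\psi} \subseteq \sat{\phi}$ for each $\phi \in \Lset_X(p)$, (iii) $\B_X(p)$ is finite for every $p$, and (iv) $\B_X$ is monotonic. Since $\B_X(p) = \B^+_X(p) \cup \B^-_X(p)$ and the positive part $\B^+_X(p)$ is defined uniformly across $X \in \bspectrumSet$, the handling of $\B^+_X(p)$ is essentially the same as in the ready simulation case: finiteness and soundness $\B^+_X(p)\subseteq \Lset_X(p)$ are both proved by induction on $\depth(p)$, exploiting finiteness of $Act$ and the inductive finiteness of $\B_X(p')$ for $p \stackrel{a}{\rightarrow} p'$. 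For the entailment property (ii), structural induction on $\phi \in \Lset_X(p)$ goes through verbatim for the cases $\true$, $\phi_1 \wedge \phi_2$, $\phi_1 \vee \phi_2$, and $\exist{a}\varphi$; only the universal/negative base cases depend on the specific $X$.

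The case-by-case work thus reduces to analysing $\B^-_X(p)$ in each logic. For simulation this is trivial since $\B^-_\simulation(p) = \emptyset$. For complete simulation, $\B^-_\completeSim(p)$ is either empty or the singleton $\{\zeroFormula\}$, hence finite; soundness ($\zeroFormula \in \Lset_\completeSim(p)$ when $I(p)=\emptyset$) is immediate from the semantics, and the entailment of any universal subformula $\zeroFormula$ in $\Lset_\completeSim(p)$ follows because such a formula forces $I(p)=\emptyset$ and is then itself in $\B^-_\completeSim(p)$. For trace simulation, $\B^-_\traceSim(p) = \{\univer{\tau a}\false \mid \tau \in \traces(p), \tau a \notin \traces(p)\}$ is finite because $\traces(p)$ and $Act$ are both finite, and entailment for universal formulae $\univer{\sigma}\false$ in $\Lset_\traceSim(p)$ follows by decomposing $\sigma = \tau a$ along the minimal blocking prefix and observing that the corresponding $\univer{\tau a}\false$ is in $\B^-_\traceSim(p)$ and implies $\univer{\sigma}\false$.

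The harder cases are $\twoSim$ and $\bisim$, where $\B^-_X(p)$ is itself defined as a subset of $\Lset_X(p)$, so membership in $\Lset_X(p)$ is built in and soundness is automatic. The main obstacle here will be finiteness of $\B^-_X(p)$: in both cases $\B^-_X(p)$ consists of formulae of shape $\univer{a}\varphi$ where $\varphi$ is a disjunction over a bounded set (respectively $\maxsucc(p,a)$, a set of $\<_\simulation$-maximal $a$-successors of $p$, and the $a$-successors of $p$ themselves) of conjunctions over $\B_X(p')$ or $\B^-_X(p')$. Finiteness therefore follows by induction on depth, combined with the inductive finiteness of $\B_X(p')$ and of the relevant successor sets. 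Entailment property (ii) for universal subformulae in these logics is the main technical point: given $\univer{a}\varphi \in \Lset_X(p)$ one must construct a formula in $\B^-_X(p)$ whose meaning is included in $\sat{\univer{a}\varphi}$; this uses the fact that $\bigwedge_{\psi \in \B_X(p')} \psi$ is characteristic for $p'$ (Proposition~\ref{prop:characteristic}, which holds by induction), so that forcing this conjunction on every $a$-successor of $q$ pins down that successor up to $\equiv_X$ and forces $\varphi$ to hold there.

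Finally, monotonicity of $\B_X$ is proved by induction on $\depth(p)$, using Theorem~\ref{theo:branching-semantics} to translate $\Lset_X(p) \subseteq \Lset_X(q)$ into $p \<_X q$ and then invoking the defining clauses of $\<_X$ (Definition~\ref{def:branching-semantics}) to match $a$-successors of $p$ with $a$-successors of $q$. For $\B^-_X$, monotonicity is immediate in the simulation case, follows from $I(p)=\emptyset \Leftrightarrow I(q)=\emptyset$ for $\completeSim$, from $\traces(p)=\traces(q)$ for $\traceSim$, and, for $\twoSim$ and $\bisim$, from the fact that $\<_X$ preserves the relevant successor structure up to the appropriate equivalence, so that a witnessing disjunct $\bigvee_{p' \in \maxsucc(p,a)}(\cdot)$ can be rewritten as $\bigvee_{q' \in \maxsucc(q,a)}(\cdot)$ using the inductive hypothesis. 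I expect the trickiest bookkeeping to be in the $\twoSim$ case, because the definition of $\maxsucc(p,a)$ mixes $\<_\simulation$ with $\<_\twoSim$, so monotonicity needs that $\<_\twoSim \subseteq \<_\simulation$ (which is immediate from Definition~\ref{def:branching-semantics}) in order to ensure that the $\<_\simulation$-maximal successors are preserved when moving from $p$ to $q$.
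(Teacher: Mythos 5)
Your overall plan---factor out the uniform treatment of $\B^+_X$, prove the four properties (soundness, entailment, finiteness, monotonicity), and reduce the case analysis to $\B^-_X$---is exactly how the paper organizes its full proof, and your accounts of $\simulation$, $\completeSim$, $\traceSim$ and $\bisim$ match the paper's arguments. The gap is in the entailment step for $\twoSim$. Your recipe for the universal case---place $\bigwedge_{\psi\in\B_X(p')}\psi$ under $\univer{a}$, disjoined over the $a$-successors $p'$ of $p$, and use that this conjunction is characteristic for $p'$ so as to ``pin down'' each $a$-successor of $q$ up to $\equiv_X$---is the right move for $\bisim$, but it fails for $\twoSim$: the resulting formula is neither in $\Lset_\twoSim$ (its universal fragment is $\neg\phi_\simulation$, and $\bigwedge_{\psi\in\B_\twoSim(p')}\psi$ is not the negation of a simulation formula, since $\B^+_\twoSim(p')$ consists of $\exist{a}$-formulae) nor an element of $\B^-_\twoSim(p)$ as defined in Table~\ref{tab:syntax_monotonicity_finite_char_branching_spectrum}, which only admits $\univer{a}\bigvee_{p'\in\maxsucc(p,a)}\bigwedge_{\psi\in\B^-_\twoSim(p')}\psi$. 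Justifying that this weaker witness still entails the target formula requires ideas absent from your sketch: the conjunction $\bigwedge_{\psi\in\B^-_\twoSim(p')}\psi$ does not pin a process down up to $\equiv_\twoSim$ but only cuts out a set containing the $\<_\simulation$-downward closure of $p'$, so one must combine this with the fact that negated simulation formulae are preserved downwards along $\<_\simulation$, and with the fact that every $a$-successor of $p$ lies $\<_\simulation$-below some member of $\maxsucc(p,a)$. Moreover, the universal case has to be run as an inner induction on the structure of the simulation formula under the negation (pushing $\neg$ through $\wedge$, $\vee$ and $\exist{a}$), not as a single $\univer{a}\varphi$ clause.

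Two smaller points. For monotonicity of $\B^-_\twoSim$ the inclusion $\<_\twoSim\subseteq\<_\simulation$ is not enough: you also need clause $(ii)$ of Definition~\ref{def:branching-semantics}, which gives $q\<_\simulation p$ directly, so that $p\<_\twoSim q$ yields $p\equiv_\simulation q$; the paper then proves separately that $\equiv_\simulation$-equivalent processes have $\equiv_\simulation$-matching $\maxsucc$ sets and hence equal $\B^-_\twoSim$. And your appeal to Proposition~\ref{prop:characteristic} ``by induction'' presupposes the full entailment property at the successors $p'$, so the induction must be set up lexicographically (first on $\depth(p)$, then on formula structure); the paper sidesteps this by applying the structural induction hypothesis to the subformula at $p'$ directly.
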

\setcounter{lemma}{\value{resume}}

\begin{proof}

  We consider each semantics $\genericSemanticsX \in \bspectrumSet$ and we show
  that $\B_{\genericSemanticsX}$ (as defined in
  Table~\ref{tab:syntax_monotonicity_finite_char_branching_spectrum}) is such
  that for each $p\in P$:
  \begin{enumerate}[label=(\emph{\roman*}),ref=\emph{\roman*}]
  \item \label{item:BsoundnessApp} $\emptyset \subset
    \B_{\genericSemanticsX}(p)\subseteq\Lset_{\genericSemanticsX}(p)$,
  \item \label{item:BentailmentApp} for each
    $\phi\in\Lset_{\genericSemanticsX}(p)$, it holds
    $\bigcap_{\psi\in\B_{\genericSemanticsX}(p)}\sat{\psi}\subseteq\sat{\phi}$,
  \item \label{item:BfinitenessApp} $\B_{\genericSemanticsX}(p)$ is finite, and
  \item \label{item:BmonotonicityApp} for each $q\in P$, if
    $\Lset_{\genericSemanticsX}(p)\subseteq\Lset_{\genericSemanticsX}(q)$ then
    $\B_{\genericSemanticsX}(p)\subseteq\B_{\genericSemanticsX}(q)$. 
  \end{enumerate}

  Before considering each semantics separately, we notice that, for every
  $\genericSemanticsX \in \bspectrumSet$, we have that $\true \in
  \B_{\genericSemanticsX}(p)$, and thus $\emptyset \subset
  \B_{\genericSemanticsX}(p)$.
  Consequently, we can focus on showing that
  $\B_{\genericSemanticsX}(p)\subseteq\Lset_{\genericSemanticsX}(p)$ holds when
  proving property~\eqref{item:BsoundnessApp}.

  In addition, we find it convenient to partially factorize the proofs of
  properties~\eqref{item:BsoundnessApp} and~\eqref{item:BfinitenessApp}.
  To this end, we show that, for every $\genericSemanticsX \in \bspectrumSet$,

  \noindent
  \begin{align}
    \label{eq:BsoundnessApp-factorized} \tag{\ref*{item:BsoundnessApp}$'$}
    & \text{if $\B_{\genericSemanticsX}^-(p) \subseteq
      \Lset_{\genericSemanticsX}(p)$ for all $p \in P$, then
      $\B_{\genericSemanticsX}(p) \subseteq \Lset_{\genericSemanticsX}(p)$ for all $p
      \in P$, and} \\
    \label{eq:BfinitenessApp-factorized} \tag{\ref*{item:BfinitenessApp}$'$}
    & \text{if $\B_{\genericSemanticsX}^-(p)$ is finite for all $p \in P$, then
      $\B_{\genericSemanticsX}(p)$ is finite for all $p \in P$.}
  \end{align}
  Let $X \in \bspectrumSet$.

  First, let us assume that $\B_{\genericSemanticsX}^-(p) \subseteq
  \Lset_{\genericSemanticsX}(p)$ holds for all $p \in P$; we show that
  $\B_{\genericSemanticsX}(p) \subseteq \Lset_{\genericSemanticsX}(p)$ holds for
  all $p \in P$ as well, by induction on the depth of $p$.
  When $I(p) = \emptyset$ (base case), we have
  $\B_{\genericSemanticsX}^+(p)=\{\true\} \subseteq
  \Lset_{\genericSemanticsX}(p)$, and therefore $\B_{\genericSemanticsX}(p)
  \subseteq \Lset_{\genericSemanticsX}(p)$ holds as well.
  To deal with the inductive step ($I(p) \neq \emptyset$), let
  $\phi\in\B_{\genericSemanticsX}^+(p)$.
  If $\phi = \true$, then $\phi \in \Lset_{\genericSemanticsX}(p)$, and we are
  done.
  Assume $\phi = \exist{a}\bigwedge_{\psi\in\Psi}\psi$, where $a \in Act$ and
  $\Psi \subseteq \B_{\genericSemanticsX}(p')$ for some $p'$ such that
  $p\stackrel{a}{\rightarrow}p'$.
  By the inductive hypothesis, we have that
  $\B_{\genericSemanticsX}(p')\subseteq\Lset_{\genericSemanticsX}(p')$, meaning
  that $p' \in \sat{\psi}$ for all $\psi \in \Psi$.
  Since $p\stackrel{a}{\rightarrow}p'$, we have that
  $p\in\sat{\exist{a}\bigwedge_{\psi\in\Psi}\psi}$, which amounts to $\phi \in
  \Lset_{\genericSemanticsX}(p)$.

  Now, let us assume $\B_{\genericSemanticsX}^-(p)$ to be finite for every $p
  \in P$; we show that also $\B_{\genericSemanticsX}(p)$ is finite for every $p
  \in P$, by induction on the depth of $p$.
  When $I(p) = \emptyset$ (base case), $\B_{\genericSemanticsX}(p)= \{ \true \}
  \cup \B_{\genericSemanticsX}^-(p)$, which is clearly finite.
  Let us deal now with the inductive step ($I(p)\neq\emptyset$).
  By the construction of $\B_{\genericSemanticsX}^+(p)$, a formula belongs to
  $\B_{\genericSemanticsX}^+(p)$ if, and only if, it is either \true or
  $\exist{a}\varphi$, where $a \in Act$ and $\varphi=\bigwedge_{\psi\in\Psi}\psi$,
  for some $\Psi \subseteq \B_{\genericSemanticsX}(p')$ and some $p'$ such that
  $p\stackrel{a}{\rightarrow}p'$.
  By the inductive hypothesis, $\B_{\genericSemanticsX}(p')$ is finite.
  Since $Act$ is also finite and processes are finitely branching, there are
  only finitely many such formulae $\exist{a}\varphi$, meaning that
  $\B_{\genericSemanticsX}^+(p)$ is finite.
  Therefore $\boundfun_{\genericSemanticsX}(p) =
  \boundfun_{\genericSemanticsX}^+(p)\cup \boundfun_{\genericSemanticsX}^-$ is
  finite as well.

  As a consequence of~\eqref{eq:BsoundnessApp-factorized}, showing that
  $\B_{\genericSemanticsX}^-(p) \subseteq \Lset_{\genericSemanticsX}(p)$ holds for
  all $p \in P$ is enough to prove property~\eqref{item:BsoundnessApp}; similarly,
  thanks to~\eqref{eq:BfinitenessApp-factorized}, showing finiteness of $\B^-(p)$
  for all $p \in P$ is enough to prove property~\eqref{item:BfinitenessApp}.
  We will use~\eqref{eq:BsoundnessApp-factorized} and
  \eqref{eq:BfinitenessApp-factorized} to prove, respectively,
  properties~\eqref{item:BsoundnessApp} and~\eqref{item:BfinitenessApp} for all
  semantics.
%  but bisimulation one (for this last semantics we will provide
%  independent proofs).

  To keep the notation light, we often omit the subscripts identifying the
  semantics as it is clear from the context, e.g., when working out the case of
  simulation semantics, we write $\Lset$, $\B$, and $p\< q$ instead of
  $\Lset_{\simulation}$, $\B_{\simulation}$, and $p\<_{\simulation} q$,
  respectively, and similarly for the other semantics.

\noindent{\bf Case simulation (\simulation).}
For the sake of clarity we recall from
\tablename~\ref{tab:syntax_monotonicity_finite_char_branching_spectrum} that
$\B$ is defined as

{\centering
$\boundfun(p)=\{\true\}\cup\{\exist{a}\varphi\mid a\in Act, \varphi=\bigwedge_{\psi\in\Psi}\psi, \Psi\subseteq\boundfun(p'), p\stackrel{a}{\rightarrow}p'\}$.

}

Since $\B^-(p) = \emptyset$ for all $p \in P$,
properties~\eqref{item:BsoundnessApp} and~\eqref{item:BfinitenessApp}
immediately follow from~\eqref{eq:BsoundnessApp-factorized}
and~\eqref{eq:BfinitenessApp-factorized}, respectively.
Notice also that property~\eqref{item:BfinitenessApp} implies that $\B$ is well
defined.

In order to prove property~\eqref{item:BentailmentApp}, we let $\phi \in
\Lset(p)$, for a generic $p \in P$, and we proceed by induction on the structure
of $\phi$ (notice that we can ignore the case $\phi = \false$, as $\phi \in
\Lset(p)$ implies $\phi \neq \false$).
\begin{itemize}
\item $\phi=\true$: the claim follows trivially.

\item $\phi= \varphi_1 \vee \varphi_2$: it holds that $\varphi_i\in\Lset(p)$ for
  some $i\in \{ 1,2 \}$. By the inductive hypothesis, we have that
  $\bigcap_{\psi\in\B(p)}\sat{\psi}\subseteq \sat{\varphi_i}$ and, since
  $\sat{\varphi_i}\subseteq\sat{\phi}$, we obtain the claim.

\item $\phi= \varphi_1 \wedge \varphi_2$: it holds that $\varphi_i\in\Lset(p)$
  for all $i\in \{ 1,2 \}$.
  By the inductive hypothesis, we have that
  $\bigcap_{\psi\in\B(p)}\sat{\psi}\subseteq \sat{\varphi_i}$ for all $i \in \{
  1,2 \}$.
  This implies that $\bigcap_{\psi \in \B(p)} \sat{\psi} \subseteq
  \sat{\varphi_1} \cap \sat{\varphi_2} = \sat{\phi}$.

\item $\phi=\exist{a}\varphi$: by definition we have that $\varphi\in\Lset(p')$
  for some $p\stackrel{a}{\rightarrow}p'$.
  By the inductive hypothesis, we have that
  $\bigcap_{\psi\in\B(p')}\sat{\psi}\subseteq \sat{\varphi}$.
  We define $\zeta=\exist{a}\bigwedge_{\psi\in \B(p')}\psi$.
  Clearly, $\zeta$ belongs to $\B^+(p)$ (by construction---notice that $\zeta$
  is well defined due to the finiteness of $\B(p')$) and
  $\sat{\zeta}\subseteq\sat{\phi}$ (because $\bigcap_{\psi\in\B(p')}\sat{\psi}
  \subseteq \sat{\varphi}$).
  Hence, $\bigcap_{\psi\in\B(p)}\sat{\psi}\subseteq \sat{\zeta} \subseteq
  \sat{\phi}$ holds.
\end{itemize}

Finally, we show that $\B$ is monotonic
(property~\eqref{item:BmonotonicityApp}).
Consider $p,q\in P$, with $\Lset(p)\subseteq\Lset(q)$.
We want to show that $\phi\in\B(p)$ implies $\phi\in\B(q)$, for each $\phi$.
Firstly, we observe that, by $\Lset(p)\subseteq\Lset(q)$ and
Theorem~\ref{theo:branching-semantics}, $p \< q$ holds. Thus, for each $a \in
Act$ and $p' \in P$ with $p \stackrel{a}{\rightarrow} p'$, there exists some $q'
\in P$ such that $q \stackrel{a}{\rightarrow} q'$ and $p' \< q'$.
We also observe that $\B^-(p) = \B^-(q) = \emptyset$.
In order to show that $\B(p) \subseteq \B(q)$, we proceed by induction on the depth of $p$.
If $I(p) = \emptyset$, then $\B^+(p) = \{ \true \} \subseteq \B^+(q)$, and the
thesis follows.
Otherwise ($I(p) \neq \emptyset$), let us consider a formula
$\phi \in \B(p)$.
If $\phi \in \B^-(p)$, then the claim follows from $\B^-(p) = \B^-(q) \subseteq
\B(q)$.
If $\phi = \true$, then, by definition of $\B^+$, we have $\phi \in \B^+(q)
\subseteq \B(q)$.
Finally, if $\phi = \exist{a}\varphi \in \B^+(p)$, then, by definition of
$\B^+$, there exist $p' \in P$, with $p\stackrel{a}{\rightarrow}p'$, such that
$\varphi=\bigwedge_{\psi\in \Psi}\psi$ for some $\Psi \subseteq \boundfun(p')$.
This implies the existence of some $q' \in P$ such that $q \stackrel{a}{\rightarrow} q'$
and $p' \< q'$ (and therefore $\Lset(p') \subseteq \Lset(q')$ by Theorem~\ref{theo:branching-semantics}).
By the inductive hypothesis, $\boundfun(p') \subseteq \boundfun(q')$ holds as
well, which means that $\Psi \subseteq \boundfun(q')$. Hence, we have that
$\exist{a}\varphi \in \B^+(q) \subseteq \B(q)$.

\noindent{\bf Case complete simulation (\completeSim).}
For the sake of clarity we recall from \tablename~\ref{tab:syntax_monotonicity_finite_char_branching_spectrum}
that $\B$ is defined as
$\boundfun(p) = \boundfun^+(p)\cup \boundfun^-(p)$, where
\begin{itemize}

\item $\boundfun^+(p)=\{\true\}\cup\{\exist{a}\varphi\mid a\in Act,
  \varphi=\bigwedge_{\psi\in\Psi}\psi, \Psi\subseteq\boundfun(p'),
  p\stackrel{a}{\rightarrow}p'\}$, and
\item $\boundfun^-(p)=\{\zeroFormula\mid p\stackrel{a}{\not\rightarrow}, \forall a \in Act\}$.
\end{itemize}

For all $p \in P$, if $I(p) = \emptyset$, then $\B^-(p) = \{ \zeroFormula \}$
and $\zeroFormula \in \Lset(p)$, otherwise $\B^-(p) = \emptyset$; thus,
$\B^-(p)$ is finite and $\B^-(p) \subseteq \Lset(p)$.
Hence, properties~\eqref{item:BsoundnessApp} and~\eqref{item:BfinitenessApp}
immediately follow from~\eqref{eq:BsoundnessApp-factorized}
and~\eqref{eq:BfinitenessApp-factorized}, respectively.
Notice also that property~\eqref{item:BfinitenessApp} implies that $\B$ is well
defined.

In order to prove property~\eqref{item:BentailmentApp}, we let $\phi \in
\Lset(p)$, for a generic $p \in P$, and we proceed by induction on the structure
of $\phi$.
Apart from a new base case ($\phi = \zeroFormula$, which we deal with below),
the proof is the same as in the case of simulation semantics above.
\begin{itemize}
\item $\phi=\zeroFormula$: it is enough to observe that $\phi\in\B(p)$, which
  implies $\bigcap_{\psi\in\B(p)} \sat{\psi} \subseteq \sat{\phi}$.
\end{itemize}

Finally, we show that $\B$ is monotonic
(property~\eqref{item:BmonotonicityApp}).
Consider $p,q\in P$, with $\Lset(p)\subseteq\Lset(q)$.
We want to show that $\phi\in\B(p)$ implies $\phi\in\B(q)$, for each $\phi$.
Firstly, we observe that, by $\Lset(p)\subseteq\Lset(q)$ and
Theorem~\ref{theo:branching-semantics}, $p \< q$ holds. Thus, $I(p) = \emptyset$
if and only if $I(q) = \emptyset$, which implies $\B^-(p) = \B^-(q)$; moreover,
we have that for each $a \in Act$ and $p' \in P$ with $p
\stackrel{a}{\rightarrow} p'$, there exists some $q' \in P$ such that $q
\stackrel{a}{\rightarrow} q'$ and $p' \< q'$.
In order to show that $\B(p) \subseteq \B(q)$, we proceed by induction on the
depth of $p$.
If $I(p) = \emptyset$, then $I(q) = \emptyset$ as well. Thus, we have that
$\B^+(p) = \B^+(q) = \{ \true \}$, and the thesis follows.
Otherwise ($I(p) \neq \emptyset$), let us consider a formula $\phi \in \B(p)$.
If $\phi \in \B^-(p)$, then the claim follows from $\B^-(p) = \B^-(q) \subseteq
\B(q)$.
If $\phi = \true$, then, by definition of $\B^+$, we have $\phi \in \B^+(q)
\subseteq \B(q)$.
Finally, if $\phi = \exist{a}\varphi \in \B^+(p)$, then, by definition of
$\B^+$, there exist $p' \in P$, with $p\stackrel{a}{\rightarrow}p'$, such that
$\varphi=\bigwedge_{\psi\in \Psi}\psi$ for some $\Psi \subseteq \boundfun(p')$.
This implies the existence of some $q' \in P$ such that $q
\stackrel{a}{\rightarrow} q'$ and $p' \< q'$ (and therefore $\Lset(p') \subseteq
\Lset(q')$ by Theorem~\ref{theo:branching-semantics}).
By the inductive hypothesis, $\boundfun(p') \subseteq \boundfun(q')$ holds as
well, which means that $\Psi \subseteq \boundfun(q')$. Hence, we have that
$\exist{a}\varphi \in \B^+(q) \subseteq \B(q)$.

\noindent{\bf Case ready simulation (\readySim).}
See the proof of Lemma~\ref{lem:characterization_branching} at
page~\pageref{lem:characterization_branching}.

\noindent{\bf Case trace simulation (\traceSim).}
For the sake of clarity we recall from \tablename~\ref{tab:syntax_monotonicity_finite_char_branching_spectrum}
that $\B$ is defined as
$\boundfun(p) = \boundfun^+(p)\cup \boundfun^-(p)$, where
\begin{itemize}
\item $\boundfun^+(p)=\{\true\}\cup\{\exist{a}\varphi\mid a\in Act,
  \varphi=\bigwedge_{\psi\in\Psi}\psi, \Psi\subseteq\boundfun(p'),
  p\stackrel{a}{\rightarrow}p'\}$, and

\item $\B^-(p) = \{\univer{\tau a}\false\mid \tau\in \traces(p), a \in Act, \tau a \notin \traces(p) \}$.

\end{itemize}

It is easy to see that, for all $p$,
\begin{inparaenum}[$(a)$]
\item $\B^-(p)$ is finite, because $Act$ is finite and so is $\traces(p)$, and
\item $\B^-(p) \subseteq \Lset(p)$ trivially holds, by definition of $\B^-(p)$.
\end{inparaenum}
Hence, properties~\eqref{item:BsoundnessApp} and~\eqref{item:BfinitenessApp}
immediately follow from~\eqref{eq:BsoundnessApp-factorized}
and~\eqref{eq:BfinitenessApp-factorized}, respectively.
Notice also that property~\eqref{item:BfinitenessApp} implies that $\B$ is well
defined.

In order to prove property~\eqref{item:BentailmentApp}, we let $\phi \in
\Lset(p)$, for a generic $p \in P$, and we proceed by induction on the structure
of $\phi$.
Inductive steps work as in previous cases; the following additional base case,
with respect to previous cases, must be considered.
\begin{itemize}
\item $\phi=\univer{\tau a}\false$: if $\phi\in\B^-(p)$, the property holds
  trivially.
  Otherwise, $\phi\notin\B^-(p)$ implies $\tau \notin \traces(p)$ (notice that
  $\tau a \notin \traces(p)$: indeed, if $\tau a \in \traces(p)$ then
  $\phi\notin\Lset(p)$, thus raising a contradiction).
  There must be two prefixes $\tau' = a_1 \ldots a_k$ and $\tau'' = a_1 \ldots
  a_k a_{k+1}$ of $\tau$ (i.e., $\tau'$ is the largest proper prefix of
  $\tau''$---possibly $\tau' = \varepsilon$ and/or $\tau'' = \tau$) such that
  $p\stackrel{\tau'}{\longrightarrow}$ and $p\stackrel{\tau''}{\not\rightarrow}$.
  The formula $\psi=\univer{\tau''}\false$ is such that $\psi \in \B^-(p)$ and
  $\sat{\psi}\subseteq\sat{\phi}$.
  The thesis follows immediately.
\end{itemize}

Finally, we show that $\B$ is monotonic
(property~\eqref{item:BmonotonicityApp}).
Consider $p,q\in P$, with $\Lset(p)\subseteq\Lset(q)$.
We want to show that $\phi\in\B(p)$ implies $\phi\in\B(q)$, for each $\phi$.
Firstly, we observe that, by $\Lset(p)\subseteq\Lset(q)$ and
Theorem~\ref{theo:branching-semantics}, $p \< q$ holds. Thus, we have that
$\traces(p) = \traces(q)$, which implies $\B^-(p) = \B^-(q)$; moreover, we have
that for each $a \in Act$ and $p' \in P$ with $p \stackrel{a}{\rightarrow} p'$,
there exists some $q' \in P$ such that $q \stackrel{a}{\rightarrow} q'$ and $p'
\< q'$.
In order to show that $\B(p) \subseteq \B(q)$, we proceed by induction on the
depth of $p$.
If $I(p) = \emptyset$, then $I(q) = \emptyset$ as well. Thus, we have that
$\B^+(p) = \B^+(q) = \{ \true \}$, and the thesis follows.
Otherwise ($I(p) \neq \emptyset$), let us consider a formula $\phi \in \B(p)$.
If $\phi \in \B^-(p)$, then the claim follows from $\B^-(p) = \B^-(q) \subseteq
\B(q)$.
If $\phi = \true$, then, by definition of $\B^+$, we have $\phi \in \B^+(q)
\subseteq \B(q)$.
Finally, if $\phi = \exist{a}\varphi \in \B^+(p)$, then, by definition of
$\B^+$, there exist $p' \in P$, with $p\stackrel{a}{\rightarrow}p'$, such that
$\varphi=\bigwedge_{\psi\in \Psi}\psi$ for some $\Psi \subseteq \boundfun(p')$.
This implies the existence of some $q' \in P$ such that $q
\stackrel{a}{\rightarrow} q'$ and $p' \< q'$ (and therefore $\Lset(p') \subseteq
\Lset(q')$ by Theorem~\ref{theo:branching-semantics}).
By the inductive hypothesis, $\boundfun(p') \subseteq \boundfun(q')$ holds as
well, which means that $\Psi \subseteq \boundfun(q')$. Hence, we have that
$\exist{a}\varphi \in \B^+(q) \subseteq \B(q)$.

\noindent{\bf Case 2-nested simulation (\twoSim).}
For the sake of clarity we recall from \tablename~\ref{tab:syntax_monotonicity_finite_char_branching_spectrum}
that $\B$ is defined as
$\boundfun(p) = \boundfun^+(p)\cup \boundfun^-(p)$, where
\begin{compactitem}
\item $\boundfun^+(p)=\{\true\}\cup\{\exist{a}\varphi\mid a\in Act,
  \varphi=\bigwedge_{\psi\in\Psi}\psi, \Psi\subseteq\boundfun(p'),
  p\stackrel{a}{\rightarrow}p'\}$, and

\item $\B^-(p)= \{\univer{a}\varphi\in\Lset(p)\mid a \in Act,
  \varphi=\bigvee_{p' \in \maxsucc(p,a)} \bigwedge_{\psi \in \B^-(p')}\psi
  \}$,

%\item $\B^-(p)= \{\univer{a}\varphi\in\Lset(p)\mid a\in Act, \varphi=\bigvee_{\psi\in\Psi}\psi,\Psi\subseteq\bigcup_{p\stackrel{a}{\rightarrow}p'}\B^-(p')\}$.
\end{compactitem}
where $\maxsucc(p,a) = \{ p' \in P \mid p\stackrel{a}{\rightarrow}p' \text{ and
} \nexists p'' .
          p\stackrel{a}{\rightarrow}p'' \text{ and } p' <_{\simulation} p'' \}$.

By definition, $\B^-(p) \subseteq \Lset(p)$ for every $p \in P$; moreover, it is
easy to verify that $\B^-(p)$ is finite, by induction on the depth on $p$ and by
recalling that $Act$ is finite and processes are finitely branching.
Hence, properties~\eqref{item:BsoundnessApp} and~\eqref{item:BfinitenessApp}
immediately follow from~\eqref{eq:BsoundnessApp-factorized}
and~\eqref{eq:BfinitenessApp-factorized}, respectively.
Notice also that property~\eqref{item:BfinitenessApp} implies that $\B$ is well
defined.

In order to prove property~\eqref{item:BentailmentApp}, we let $\phi \in
\Lset(p)$, for a generic $p \in P$, and we proceed, as usual, by induction on
the structure of $\phi$.
With respect to the previous semantics, we have to consider new formulae of the
form $\neg \varphi$.
They are treated as an additional base case, which, in turn, is handled by means
of (inner) induction on the structure of $\varphi$.
\begin{itemize}
\item $\phi=\true$ (base case): the claim follows trivially.

\item $\phi=\neg \varphi$ (base case): according to the definition of the syntax
  of $\Lset_{\twoSim}$, $\varphi$ must belong to $\Lset_{\simulation}$.
  We show that $\bigcap_{\psi \in \B_{\twoSim}(p)} \sat{\psi} \subseteq
  \sat{\neg \varphi}$ holds for all $\varphi \in \Lset_{\simulation}$.
  We proceed by (inner) induction on the structure of $\varphi$ (notice that we
  can ignore the case $\varphi = \true$, as $\phi \in \Lset(p)$ implies $\varphi
  \neq \true$).
  \begin{itemize}
  \item $\varphi=\false$: the claim follows trivially.

  \item $\varphi=\varphi_1 \wedge \varphi_2$: then, $\neg \varphi$ is logically
    equivalent to $\neg \varphi_1 \vee \neg \varphi_2$.
    Thus, $\phi \in \Lset(p)$ implies $\neg \varphi_i \in \Lset(p)$ for some $i
    \in \{ 1,2 \}$.
    By inductive hypothesis, $\bigcap_{\psi \in \B(p)} \sat{\psi} \subseteq
    \sat{\neg \varphi_i}$ and, since $\sat{\neg \varphi_i} \subseteq \sat{\neg
      \varphi}$, the thesis follows.

  \item $\varphi=\varphi_1 \vee \varphi_2$: then, $\neg \varphi$ is logically
    equivalent to $\neg \varphi_1 \wedge \neg \varphi_2$.
    Thus, $\phi \in \Lset(p)$ implies $\neg \varphi_i \in \Lset(p)$ for every $i
    \in \{ 1,2 \}$.
    By inductive hypothesis, $\bigcap_{\psi \in \B(p)} \sat{\psi} \subseteq
    \sat{\neg \varphi_i}$ for every $i \in \{ 1,2 \}$, which means that
    $\bigcap_{\psi \in \B(p)} \sat{\psi} \subseteq \sat{\neg \varphi_1} \cap
    \sat{\neg \varphi_2} = \sat{\neg \varphi}$, hence the thesis.

  \item $\varphi=\exist{a} \varphi_1$: then, $\neg \varphi$ is logically
    equivalent to $\univer{a} \neg \varphi_1$.
    Thus, $\phi \in \Lset(p)$ implies $\neg \varphi_1 \in \Lset(p')$ for every
    $p' \in P$ such that $p\stackrel{a}{\rightarrow}p'$.
    By the inductive hypothesis, $\bigcap_{\psi \in \B(p')} \sat{\psi} \subseteq
    \sat{\neg \varphi_1}$ for every $p' \in P$ such that
    $p\stackrel{a}{\rightarrow}p'$, from which it follows $\bigcup_{p' \in
      \maxsucc(p,a)}\bigcap_{\psi \in \B(p')} \sat{\psi} \subseteq \sat{\neg
      \varphi_1}$.
    We define $\zeta = \univer{a}\bigvee_{p' \in \maxsucc(p,a)} \bigwedge_{\psi
    \in \B^-(p')} \psi$ (notice that $\zeta = \univer{a}\false$ if
  $p\stackrel{a}{\not\rightarrow}$).
  If is not difficult to show, by induction on the depth of $p$, that, for all
  $p,q \in P$, if $p \<_{\simulation} q$ then $p \in \sat{\psi}$ for all $\psi \in
  \B^-(q)$.
  Moreover, we observe that for every $p'$ such that
  $p\stackrel{a}{\rightarrow}p'$, there is $q \in \maxsucc(p,a)$ with $p'
  \<_{\simulation} q$.
  Therefore, $\zeta \in \Lset(p)$, because for all $p'$ with
  $p\stackrel{a}{\rightarrow}p'$ we have that $p' \in \sat{\bigwedge_{\psi \in
      \B^-(q)} \psi}$, where $q$ is such that $q \in \maxsucc(p,a)$ and $p'
  \<_{\simulation} q$.
    Moreover, we have that $\zeta \in \B^-(p)$ (by definition of $\B^-(p)$) and
    therefore $\bigcap_{\psi \in \B(p)} \sat{\psi} \subseteq \sat{\zeta}$.
    By $\bigcup_{p' \in \maxsucc(p,a)}\bigcap_{\psi \in \B(p')}
    \sat{\psi} \subseteq \sat{\neg \varphi_1}$, we have $\sat{\zeta} \subseteq
    \sat{\univer{a} \neg \varphi_1} = \sat{\neg \varphi}$, and the thesis
    immediately follows.
  \end{itemize}

\item $\phi \in \{ \varphi_1 \vee \varphi_2, \varphi_1 \wedge \varphi_2,
  \exist{a}\varphi \}$ (inductive step): the proof is exactly the same as in the
  previous cases.

\end{itemize}

Before showing that $\B$ is monotonic (property~\eqref{item:BmonotonicityApp}),
we prove, as a preliminary result, that if $p \equiv_{\simulation} q$ then
$\B^-(p) = \B^-(q)$.
To this end, we first prove that if $p \equiv_{\simulation} q$, then for every
$a \in Act$ and $p' \in \maxsucc(p,a)$ there is $q' \in \maxsucc(q,a)$ such
that $p' \equiv_{\simulation} q'$.
(Notice that, since $\equiv_{\simulation}$ is symmetric, this also implies that
for every $a \in Act$ and $q' \in \maxsucc(q,a)$ there is $p' \in
\maxsucc(p,a)$ such that $p' \equiv_{\simulation} q'$.)
Let $p,q \in P$ be such that $p \equiv_{\simulation} q$, and let $a \in Act$ and
$p' \in \maxsucc(p,a)$.
By $p \<_{\simulation} q$, there is $q'$ such that
$q\stackrel{a}{\rightarrow}q'$ and $p' \<_{\simulation} q'$.
We show that $q' \in \maxsucc(q,a)$ and $p' \equiv_{\simulation} q'$.
In order to show the former, assume, towards a contradiction, that $q' \notin
\maxsucc(q,a)$; thus, there is $q''$ such that $q\stackrel{a}{\rightarrow}q''$
and $q' <_{\simulation} q''$.
By $q \<_{\simulation} p$, we have that there is $p''$ such that
$p\stackrel{a}{\rightarrow}p''$ and $q'' \<_{\simulation} p''$.
Thus, $p''$ is such that $p\stackrel{a}{\rightarrow}p''$ and $p' <_{\simulation}
p''$, thus getting a contradiction with the hypothesis that $p' \in
\maxsucc(p,a)$. Therefore, $q' \in \maxsucc(q,a)$.
Now, in order to show that $p' \equiv_{\simulation} q'$, it is enough to prove
that $q' \<_{\simulation} p'$ (as we already know that $p' \<_{\simulation}
q'$).
Assume, towards a contradiction, that $q' \not\<_{\simulation} p'$ (i.e., $p'
<_{\simulation} q'$).
By $q \<_{\simulation} p$, there is $p''$ such that
$p\stackrel{a}{\rightarrow}p''$ and $q' \<_{\simulation} p''$.
Therefore, $p' <_{\simulation} p''$, thus getting a contradiction with the
hypothesis that $p' \in \maxsucc(p,a)$. Therefore, $q' \<_{\simulation} p'$,
which means $q' \equiv_{\simulation} p'$.
We turn to proving that if $p \equiv_{\simulation} q$ then $\B^-(p) = \B^-(q)$.
By the symmetry of $\equiv_{\simulation}$, it is enough to prove that $\B^-(p)
\subseteq \B^-(q)$.
We proceed by induction on the depth of $p$.
If $I(p) = \emptyset$ (base case), then $I(q) = \emptyset$ as well, and $\B^-(p)
= \B^-(q) = \{ \univer{a} \false \mid a \in Act \}$.
If $I(p) \neq \emptyset$ (inductive step), then let $\univer{a}\varphi \in
\B^-(p)$.
We distinguish two cases.
\begin{itemize}
\item If $a \not \in I(p)$, then $\varphi = \false$ and, by $q \<_{\simulation}
  p$, $a \not \in I(q)$, which implies $\univer{a}\varphi \in \B^-(q)$.

\item If $a \in I(p)$, then $\varphi = \bigvee_{p' \in \maxsucc(p,a)}
  \bigwedge_{\psi \in \B^-(p')} \psi$.
  By $p \equiv_{\simulation} q$, we have that for every $p' \in \maxsucc(p,a)$
  there is $q' \in \maxsucc(q,a)$ with $p' \equiv_{\simulation} q'$, which, by
  inductive hypothesis, implies $\B^-(p') = \B^-(q')$.
  The converse direction holds as well, that is, for every $q' \in
  \maxsucc(q,a)$ there is $p' \in \maxsucc(p,a)$ with $p' \equiv_{\simulation}
  q'$, which, by inductive hypothesis, implies $\B^-(p') = \B^-(q')$.
  Consequently, $\univer{a}\varphi \in \B^-(q)$, and we are done.
\end{itemize}

We are now ready to show that $\B$ is monotonic
(property~\eqref{item:BmonotonicityApp}).
Consider $p,q\in P$, with $\Lset(p)\subseteq\Lset(q)$.
We want to show that $\phi\in\B(p)$ implies $\phi\in\B(q)$, for each $\phi$.
Firstly, we observe that, by $\Lset(p)\subseteq\Lset(q)$ and
Theorem~\ref{theo:branching-semantics}, $p \<_{\twoSim} q$ holds. Thus, we have
that $p \equiv_{\simulation} q$, which implies $\B^-(p) = \B^-(q)$ (thanks to
the previous result); moreover, we have that for each $a \in Act$ and $p' \in P$
with $p \stackrel{a}{\rightarrow} p'$, there exists some $q' \in P$ such that $q
\stackrel{a}{\rightarrow} q'$ and $p' \<_{\twoSim} q'$.
In order to show that $\B(p) \subseteq \B(q)$, we proceed by induction on the
depth of $p$.
If $I(p) = \emptyset$, then $I(q) = \emptyset$ as well. Thus, we have that
$\B^+(p) = \B^+(q) = \{ \true \}$, and the thesis follows.
Otherwise ($I(p) \neq \emptyset$), let us consider a formula $\phi \in \B(p)$.
If $\phi \in \B^-(p)$, then the claim follows from $\B^-(p) = \B^-(q) \subseteq
\B(q)$.
If $\phi = \true$, then, by definition of $\B^+$, we have $\phi \in \B^+(q)
\subseteq \B(q)$.
Finally, if $\phi = \exist{a}\varphi \in \B^+(p)$, then, by definition of
$\B^+$, there exist $p' \in P$, with $p\stackrel{a}{\rightarrow}p'$, such that
$\varphi=\bigwedge_{\psi\in \Psi}\psi$ for some $\Psi \subseteq \boundfun(p')$.
This implies the existence of some $q' \in P$ such that $q
\stackrel{a}{\rightarrow} q'$ and $p' \<_{\twoSim} q'$ (and therefore $\Lset(p') \subseteq
\Lset(q')$ by Theorem~\ref{theo:branching-semantics}).
By the inductive hypothesis, $\boundfun(p') \subseteq \boundfun(q')$ holds as
well, which means that $\Psi \subseteq \boundfun(q')$. Hence, we have that
$\exist{a}\varphi \in \B^+(q) \subseteq \B(q)$.

\noindent{\bf Case bisimulation (\bisim).}
For the sake of clarity we recall from \tablename~\ref{tab:syntax_monotonicity_finite_char_branching_spectrum}
that $\B$ is defined as
$\boundfun^+(p)\cup \boundfun^-(p)$, where
\begin{compactitem}
\item $\boundfun^+(p)=\{\true\}\cup\{\exist{a}\varphi\mid a\in Act,
  \varphi=\bigwedge_{\psi\in\Psi}\psi, \Psi\subseteq\boundfun(p'),
  p\stackrel{a}{\rightarrow}p'\}$, and

\item $\B^-(p)= \{\univer{a}\varphi \in \Lset(p) \mid a \in Act, \varphi=
  \bigvee_{p\stackrel{a}{\rightarrow}p'} \bigwedge_{\psi \in \B(p')} \psi \}$.

\end{compactitem}

By definition, $\B^-(p) \subseteq \Lset(p)$ for every $p \in P$; moreover, it is
easy to verify that $\B^-(p)$ is finite, by induction on the depth on $p$ and by
recalling that $Act$ is finite and processes are finitely branching.
Hence, properties~\eqref{item:BsoundnessApp} and~\eqref{item:BfinitenessApp}
immediately follow from~\eqref{eq:BsoundnessApp-factorized}
and~\eqref{eq:BfinitenessApp-factorized}, respectively.
Notice also that property~\eqref{item:BfinitenessApp} implies that $\B$ is well
defined.

In order to prove property~\eqref{item:BentailmentApp}, we let $\phi \in
\Lset(p)$, for a generic $p \in P$, and we proceed, as usual, by induction on
the structure of $\phi$.
With respect to the previous semantics, bisimulation one is characterized by the
use of free negation (formula of the form $\neg \varphi$, for every $\varphi \in
\Lset$).
By using the fact that $\neg \exist{a}\varphi$ is equivalent to $\univer{a} \neg
\varphi$, it suffices to show how to deal with formulae in the form $\univer{a}
\varphi$, for every $\varphi \in \Lset$; the other cases are dealt with as in
the previous cases.

\begin{compactitem}

\item $\phi=\univer{a}\varphi$: by $\phi \in \Lset(p)$, we have that
  $\varphi\in\Lset(p')$ for all $p'$ such that $p\stackrel{a}{\rightarrow}p'$.
%
%  We distinguish two cases.
%  If $p\stackrel{a}{\not\rightarrow}$, then $\univer{a}\false \in \B^-(p)
%  \subseteq \B(p)$.
%  By $\sat{\false} \subseteq \sat{\varphi}$ for every $\varphi \in \Lset$, we
%  have $\sat{\univer{a}\false} \subseteq \sat{\phi}$.
%  Thus, we have $\bigcap_{\psi\in\B(p)}\sat{\psi} \subseteq
%  \sat{\univer{a}\false} \subseteq \sat{\phi}$.
%  If $p\stackrel{a}{\rightarrow}$, then,
%
  By the inductive hypothesis, $\bigcap_{\psi\in\B(p')}\sat{\psi} \subseteq
  \sat{\varphi}$, for each $p'$ such that $p\stackrel{a}{\rightarrow}p'$, from
  which it follows $\bigcup_{p\stackrel{a}{\rightarrow}p'}\bigcap_{\psi \in \B(p')}
  \sat{\psi} \subseteq \sat{\varphi}$.
  We define $\zeta = \univer{a}\bigvee_{p \stackrel{a}{\rightarrow} p'}
  \bigwedge_{\psi \in \B(p')} \psi$ (notice that $\zeta = \univer{a}\false$ if
  $p\stackrel{a}{\not\rightarrow}$).
  Clearly, for all $p'$ such that $p\stackrel{a}{\not\rightarrow} p'$, it holds
  $p' \in \sat{\psi}$ for all $\psi \in \B(p')$.
  Therefore, $p \in \sat{\zeta}$, which means $\zeta \in \Lset(p)$.
  Moreover, we have that $\zeta \in \B^-(p)$ (by definition of $\B^-(p)$) and
  therefore $\bigcap_{\psi \in \B(p)} \sat{\psi} \subseteq \sat{\zeta}$.
  By $\bigcup_{p\stackrel{a}{\rightarrow}p'}\bigcap_{\psi \in \B(p')} \sat{\psi}
  \subseteq \sat{\varphi}$, we have $\sat{\zeta} \subseteq \sat{\univer{a}
    \varphi} = \sat{\phi}$, and the thesis immediately follows.

\end{compactitem}

Finally, we show that $\B$ is monotonic
(property~\eqref{item:BmonotonicityApp}).
Consider $p,q\in P$, with $\Lset(p)\subseteq\Lset(q)$.
We want to show that $\B(p) \subseteq \B(q)$.
Firstly, we observe that, by $\Lset(p)\subseteq\Lset(q)$ and
Theorem~\ref{theo:branching-semantics}, $p \< q$ holds. Moreover, by the
definition of bisimulation semantics, $p \< q$ implies $q \< p$, and thus $p
\equiv q$.
It is straightforward to show, by induction on the depth of $p$, that if $p
\equiv q$, then $\B(p) = \B(q)$, hence the thesis.
%\qed
\end{proof}

% As a corollary of the previous theorem (and of
% Proposition~\ref{prop:characteristic}), we can state the following results:
%
%conclude that every process
%has a characteristic formula. Such a property will be used in the next
%sub-section to prove the decomposability of the considered formalisms.
% \begin{corollary}\label{cor:characterized-spectrum}
% Let $X \in \spectrumSet$.
% \begin{enumerate}[label=(\roman*)]
% \item\label{item:cor:characteristic-spectrum} Each $p \in P_X$ has a characteristic formula in $\Lset_X$.
% 
% \item\label{item:cor:inf-des-seq-spectrum} For each $\phi\in\Lset$ and each $q \in \sat{\phi}$,
% there exists $p \in \sat{\phi}$ such that $\Lset(p) \subseteq \Lset(q)$ and
% $p$ is minimal in $\sat{\phi}$.
% 
% \item\label{item:cor:rep-spectrum} Each $\phi\in\Lset$ is represented (uniquely up to equivalence) by a set $\rep(\phi)$.
% \end{enumerate} 
% \end{corollary}

% \begin{proof}
% The claim immediately follows from Proposition~\ref{prop:characteristic}, Proposition~\ref{prop:inf_des_seq}, Corollary~\ref{cor:rep} and Theorem~\ref{theo:characterized}.
% \qed
% \end{proof}

% \subsection{Decomposability of the formalisms}

\subsection{Existence of \texorpdfstring{$\bar \chi(\cdot)$}{bar chi(.)}} \label{sec:full-proof-existence-of-chi}

%\bar \chi(\cdot)

In what follows, we show that it is possible to build, for each
$\chi(p)$, a formula $\bar\chi(p)$, with the properties described in 
Corollary~\ref{cor:decomposability_condition_for_spectrum}(\ref{item:final_req_chibar}).

\setcounter{resume}{\value{lemma}}
\setcounter{lemma}{\value{lemanticharready}}
\begin{lemma}
  \lemanticharready
\end{lemma}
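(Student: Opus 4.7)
The plan is to prove the lemma by a case analysis on $X \in \bspectrumSet$, using the specific constructions of $\bar\chi_X(p)$ collected in Table~\ref{tab:syntax_monotonicity_finite_char_branching_spectrum}. The bisimulation case is immediate: since $\Lset_{\bisim}$ has negation, taking $\bar\chi_{\bisim}(p) = \neg \chi_{\bisim}(p)$ satisfies the first clause of the disjunction, namely $\sat{\bar\chi(p)} = P \setminus \sat{\chi(p)}$. All the remaining cases fall under the second clause and follow the same two-step inductive pattern already illustrated in the excerpt's proof of the ready simulation case, so my strategy is to adapt that template, changing only the universal disjunct that encodes the "linear" constraint specific to each semantics.

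For each $X \in \{\simulation, \completeSim, \readySim, \traceSim, \twoSim\}$ I would proceed by induction on $\depth(p)$ to establish simultaneously: (i) $p \notin \sat{\bar\chi_X(p)}$, and (ii) $\{q \in P \mid q \not\<_X p\} \subseteq \sat{\bar\chi_X(p)}$. For (i), the "universal part" of $\bar\chi_X(p)$ explicitly witnesses a behaviour that $p$ does not exhibit (for instance $\zeroFormula$ when $I(p) \neq \emptyset$ for $\completeSim$, or $\univer{a}\false$ for some $a \in I(p)$ for $\readySim$, or $\exist{\tau a}\true$ for a non-trace-extension for $\traceSim$), while the "existential part" $\bigvee_{a \in Act} \exist{a} \bigwedge_{p \stackrel{a}{\rightarrow} p'} \bar\chi_X(p')$ cannot hold at $p$ because that would require some $a$-successor $p'$ of $p$ to satisfy $\bar\chi_X(p')$, contradicting the induction hypothesis.

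For (ii), I would exploit the negation of the defining characterization of $\<_X$ (Definition~\ref{def:branching-semantics}): if $q \not\<_X p$, then either $q$ violates the linear constraint of the semantics (initials differ for $\readySim$, trace sets differ for $\traceSim$, $p \not\<_{\simulation} q$ for $\twoSim$, etc.), in which case $q$ satisfies the universal disjunct of $\bar\chi_X(p)$ by direct inspection; or else there is an action $a$ and a transition $q \stackrel{a}{\rightarrow} q'$ such that $q' \not\<_X p'$ for every $p'$ with $p \stackrel{a}{\rightarrow} p'$, in which case the induction hypothesis gives $q' \in \sat{\bar\chi_X(p')}$ for each such $p'$, yielding $q \in \sat{\exist{a} \bigwedge_{p \stackrel{a}{\rightarrow} p'} \bar\chi_X(p')}$.

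The main obstacle is the case of 2-nested simulation, where the universal constraint is $q \<_{\simulation} p$, and this cannot be encoded by a single simple formula like $\univer{a}\false$. Instead, the table uses the auxiliary formula $\bar\Phi(p) = \bigvee_{a \in I(p)} \univer{a}\false \vee \bigvee_{a \in I(p)} \bigvee_{p \stackrel{a}{\rightarrow} p'} \univer{a}\bar\Phi(p')$, intended to describe precisely the processes $q$ with $q \not\<_{\simulation} p$. I plan to handle this through an auxiliary nested induction on $\depth(p)$ establishing that $p \notin \sat{\bar\Phi(p)}$ and that every $q$ with $q \not\<_{\simulation} p$ satisfies $\bar\Phi(p)$; this self-contained lemma can then be plugged into the main induction for $\twoSim$, combining with the existential disjunct to cover also those $q$ for which $q \<_{\simulation} p$ holds while simulation in the other direction fails on some successor.
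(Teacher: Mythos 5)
Your plan follows the paper's proof essentially step for step: the same case analysis over $\bspectrumSet$, the same treatment of $\bisim$ via $\neg\chi(p)$, and for the remaining semantics the same induction on $\depth(p)$ establishing $p \notin \sat{\bar\chi(p)}$ and $\{q \mid q \not\<_X p\} \subseteq \sat{\bar\chi(p)}$ by splitting $q \not\<_X p$ into a violation of the ``extra'' condition (handled by the universal disjunct) versus a successor mismatch (handled by the existential disjunct and the induction hypothesis).

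One point needs correcting before the $\twoSim$ case goes through: you state the auxiliary claim for $\bar\Phi(p)$ with the simulation preorder in the wrong direction. Unfolding Definition~\ref{def:branching-semantics} for the pair $(q,p)$, clause $(ii)$ of $q \<_{\twoSim} p$ reads $p \<_{\simulation} q$, so the processes that $\bar\Phi(p)$ must capture are those $q$ with $p \not\<_{\simulation} q$ (the ones that fail to \emph{simulate} $p$), not those with $q \not\<_{\simulation} p$ --- you in fact write the correct condition in your second paragraph but the opposite one in the third. As literally stated the auxiliary lemma is false: for $p = a\zeroProcess$ and $q = a\zeroProcess + b\zeroProcess$ we have $q \not\<_{\simulation} p$, yet $\bar\Phi(p)$ reduces to $\univer{a}\false$, which $q$ does not satisfy. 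With the direction fixed, the nested induction (which the paper runs on $\depth(q)$, though $\depth(p)$ works equally well) closes exactly as you describe, and the rest of the argument is sound.
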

\setcounter{lemma}{\value{resume}}

\begin{proof}

  We consider each semantics $X \in \bspectrumSet$ in turn.
  As in the previous section, for the sake of a lighter notation, we often omit
  the subscripts identifying the semantics as it is clear from the context, e.g.,
  we write $\Lset$, $\bar\chi$, and $p\not \< q$ instead of $\Lset_{\simulation}$,
  $\bar\chi_{\simulation}$, and $p\not \<_{\simulation} q$, respectively.

\noindent{\bf Case simulation (\simulation).}
For the sake of clarity we recall here the definition of $\bar \chi$
from \tablename~\ref{tab:syntax_monotonicity_finite_char_branching_spectrum}:

{\centering
$\bar \chi(p)= \bigvee_{a\in Act} \exist{a}\bigwedge_{p\stackrel{a}{\rightarrow}p'} \bar\chi(p')$.

}

Let us first show that for every $p\in P$ we have $p \not \in \sat{\bar \chi(p)}$. We proceed by induction on the depth of $p$. 
\begin{compactitem}
\item $I(p) = \emptyset$: we have $\bar \chi(p)=\bigvee_{a\in Act} \exist{a}\true$, and thus $p\notin\sat{\bar \chi(p)}$.

\item $I(p) \neq \emptyset$: obviously, $p\notin\sat{\exist{a}\true}$ holds for every $a\notin I(p)$. Now, for every $p\stackrel{a}{\rightarrow}p'$, by induction hypothesis, $p'\notin \sat{\bar \chi(p')}$. Thus, $p'\notin \sat{\bigwedge_{p\stackrel{a}{\rightarrow}p'}\bar \chi(p')}$ and therefore  $p\notin \sat{\exist{a}\bigwedge_{p\stackrel{a}{\rightarrow}p'}\bar \chi(p')}$. Hence, we obtain that $p \not \in \sat{\bar \chi(p)}$.
\end{compactitem}

Now, let us show that $\{ p' \in P \mid p' \not \< p \} \subseteq \sat{\bar \chi(p)}$.
The proof is by induction on the depth of $p$.
\begin{compactitem}
\item $I(p) = \emptyset$: we have that $\{ p' \in P \mid p' \not \< p \}=P\setminus\{p'\in P\mid I(p')=\emptyset\}$. It is easy to see that $P\setminus\{p'\in P\mid I(p')=\emptyset\} = \sat{\bigvee_{a\in Act} \exist{a}\true}= \sat{\bar \chi(p)}$.

\item $I(p) \neq \emptyset$: let $q\not\< p$. Thus there exists some $q'$, with $q\stackrel{a}{\rightarrow}q'$, such that, for every $p'$, $p\stackrel{a}{\rightarrow}p'$ implies $q'\not\< p'$. Then, by inductive hypothesis, $q'\in\sat{\bar \chi(p')}$ for every $p'$ such that $p\stackrel{a}{\rightarrow}p'$. Thus, $q'\in\sat{\bigwedge_{p\stackrel{a}{\rightarrow}p'}\bar \chi(p')}$ and therefore $q\in\sat{\exist{a}\bigwedge_{p\stackrel{a}{\rightarrow}p'}\bar \chi(p')}$. Hence, we conclude $q\in \sat{\bar \chi(p)}$.
\end{compactitem}

 \medskip

\noindent{\bf Case complete simulation (\completeSim).}
For the sake of clarity we recall here the definition of $\bar \chi$
from \tablename~\ref{tab:syntax_monotonicity_finite_char_branching_spectrum}:

{\centering
$
\begin{array}{lll@{\hspace{20mm}}l}
\bar \chi(p) & = & \big( \bigvee_{a\in Act} \exist{a}\bigwedge_{p\stackrel{a}{\rightarrow}p'} \bar\chi(p') \big)\vee \zeroFormula & \text{if $I(p)\neq\emptyset$}\\

\bar \chi(p) & = & \bigvee_{a\in Act} \exist{a}\bigwedge_{p\stackrel{a}{\rightarrow}p'} \bar\chi(p') & \text{if $I(p)=\emptyset$}
\end{array}$

}

Let us first show that for every $p\in P$ we have $p \not \in \sat{\bar \chi(p)}$. We proceed by induction on the depth of $p$. 
\begin{compactitem}
\item $I(p) = \emptyset$: we have $\bar \chi(p)=\bigvee_{a\in Act} \exist{a}\true$, and obviously $p\notin\sat{\bar \chi(p)}$.

\item $I(p) \neq \emptyset$: obviously, $p\notin\sat{\zeroFormula}$ and $p\notin\sat{\exist{a}\true}$ for every $a\notin I(p)$. Now, for every $p\stackrel{a}{\rightarrow}p_1$, by the
inductive hypothesis, $p_1\notin \sat{\bar \chi(p_1)}$. Thus, $p_1\notin \sat{\bigwedge_{p\stackrel{a}{\rightarrow}p'}\bar \chi(p')}$ and 
therefore  $p\notin \sat{\exist{a}\bigwedge_{p\stackrel{a}{\rightarrow}p'}\bar \chi(p')}$. Hence, we obtain that $p \not \in \sat{\bar \chi(p)}$.
\end{compactitem}

Now, let us show that $\{ p' \in P \mid p' \not \< p \} \subseteq \sat{\bar \chi(p)}$. The proof is by induction on the depth of $p$.
\begin{compactitem}
\item $I(p) = \emptyset$: we have that $\{ p' \in P \mid p' \not \< p \}=P\setminus\{p'\in P\mid I(p')=\emptyset\}$ because, when $I(p) = \emptyset$, $q \not\< p$ holds exactly for processes $q$ with $I(q)\neq\emptyset$. It is easy to see that $P\setminus\{p'\in P\mid I(p')=\emptyset\} = \sat{\bigvee_{a\in Act} \exist{a}\true}= \sat{\bar \chi(p)}$.

\item $I(p) \neq \emptyset$: let $q\not\< p$. Thus, either $I(q)=\emptyset$ and $I(p)\neq\emptyset$ (or analogously $I(p)=\emptyset$ and $I(q)\neq\emptyset$), or there exists some $q'$, with $q\stackrel{a}{\rightarrow}q'$, such that, for every $p'$, $p\stackrel{a}{\rightarrow}p'$ implies $q'\not\< p'$. If it is the case that $I(q)=\emptyset$ and $I(p)\neq\emptyset$ (respectively, $I(p)=\emptyset$ and $I(q)\neq\emptyset$), then $q\in\sat{\zeroFormula}$ (respectively, $q\in\sat{\exist{a}\true}$ holds for some $a\notin I(p)$). In any case, $q\in \sat{\bar \chi(p)}$ holds.

Otherwise, if there exist $a\in Act$ and $q'\in P$, with $q\stackrel{a}{\rightarrow}q'$, such that $q'\not \< p'$ for every $p\stackrel{a}{\rightarrow}p'$, then, by the inductive hypothesis, $q'\in\sat{\bar \chi(p')}$ for every $p'$. Thus, $q'\in\sat{\bigwedge_{p\stackrel{a}{\rightarrow}p'}\bar \chi(p')}$ and therefore $q\in\sat{\exist{a}\bigwedge_{p\stackrel{a}{\rightarrow}p'}\bar \chi(p')}$. Hence, we conclude $q\in \sat{\bar \chi(p)}$.
\end{compactitem}

\medskip

\noindent{\bf Case ready simulation: (\readySim).}
See the proof of Lemma~\ref{lem:anti-char_branching} at
page~\pageref{lem:anti-char_branching}.
\\

\noindent{\bf Case trace simulation (\traceSim).}
For the sake of clarity we recall here the definition of $\bar \chi$
from \tablename~\ref{tab:syntax_monotonicity_finite_char_branching_spectrum}:

{\centering
$\bar\chi(p)= \big( \bigvee_{a\in act} \exist{a}\bigwedge_{p\stackrel{a}{\rightarrow}p'} \bar\chi(p') \big) \vee
     \bigvee_{\tau \in \traces(p), \tau a \notin \traces(p)}\exist{\tau a}\true \vee \bigvee_{p\stackrel{\tau a}{\rightarrow}p'}\univer{\tau a}\false$.

}

Let us first show that for every $p\in P$ we have $p \not \in \sat{\bar \chi(p)}$. We proceed by induction on the depth of $p$. 
\begin{compactitem}
\item $I(p) = \emptyset$: we have $\bar \chi(p)=\bigvee_{a\in Act} \exist{a}\true$, and thus $p\notin\sat{\bar \chi(p)}$.

\item $I(p) \neq \emptyset$: obviously, $p\notin\sat{\exist{\tau a}\true}$ holds for every $\tau\in \traces(p)$ and $a\in Act$ such that $\tau a \notin \traces(p)$. Moreover, $p\notin\sat{\univer{\tau a}\false}$ holds for every $\tau a\in \traces(p)$. Now, for every $p\stackrel{a}{\rightarrow}p_1$, by the inductive hypothesis, $p_1\notin \sat{\bar \chi(p_1)}$. Thus, $p_1\notin \sat{\bigwedge_{p\stackrel{a}{\rightarrow}p'}\bar \chi(p')}$ and therefore  
$p\notin \sat{\exist{a}\bigwedge_{p\stackrel{a}{\rightarrow}p'}\bar \chi(p')}$. Hence, we obtain that $p \not \in \sat{\bar \chi(p)}$.
\end{compactitem}

Now, let us show that $\{ p' \in P \mid p' \not \< p \} \subseteq \sat{\bar \chi(p)}$,
that is, $\sat{\bar \chi(p)}$ contains at least
the elements that are either strictly above $p$ or
incomparable with it.
The proof is by induction on the depth of $p$.
\begin{compactitem}
\item $I(p) = \emptyset$: we have that $\{ p' \in P \mid p' \not \< p \}=P\setminus\{p'\in P\mid I(p')=\emptyset\}$ because, when $I(p) = \emptyset$, $q \not\< p$ holds exactly for processes $q$ with $I(q)\neq\emptyset$. It is easy to see that $P\setminus\{p'\in P\mid I(p')=\emptyset\} = \sat{\bigvee_{a\in Act} \exist{a}\true}= \sat{\bar \chi(p)}$.

\item $I(p) \neq \emptyset$: let $q\not\< p$. Thus, either $\traces(q)\neq \traces(p)$ or there exists some $q'$, with $q\stackrel{a}{\rightarrow}q'$, such that, for every $p'$, $p\stackrel{a}{\rightarrow}p'$ implies $q'\not\< p'$. If it is the case that $\traces(q)\neq \traces(p)$, then either $q\in\sat{\exist{\tau a}\true}$ for some $\tau a$ such that $\tau \in \traces(p)$ and $\tau a\notin \traces(p)$, or $q\in\sat{\univer{\tau a}\false}$ for some $\tau a\in \traces(p)$. In either case, $q\in \sat{\bar \chi(p)}$ holds.

If, on the other hand, $\traces(p) = \traces(q)$, then there exist $a\in Act$ and $q'\in P$, with $q\stackrel{a}{\rightarrow}q'$, such that $q'\not \< p'$ for every $p\stackrel{a}{\rightarrow}p'$, then, by the inductive hypothesis, $q'\in\sat{\bar \chi(p')}$ for every $p'$. Thus, $q'\in\sat{\bigwedge_{p\stackrel{a}{\rightarrow}p'}\bar \chi(p')}$ and therefore $q\in\sat{\exist{a}\bigwedge_{p\stackrel{a}{\rightarrow}p'}\bar \chi(p')}$ (notice that $a \in I(p)$, due to $a \in I(q)$ and $\traces(p) = \traces(q)$). Hence, we conclude $q\in \sat{\bar \chi(p)}$.
\end{compactitem}

\medskip

\noindent{\bf Case 2-Nested simulation (\twoSim).}
For the sake of clarity we recall here the definition of $\bar \chi$
from \tablename~\ref{tab:syntax_monotonicity_finite_char_branching_spectrum}:

{\centering
$
\begin{array}{lll}
\bar \chi(p) & = & \big( \bigvee_{a\in Act} \exist{a}\bigwedge_{p\stackrel{a}{\rightarrow}p'} \bar\chi(p') \big) \vee \bar\Phi(p) \\

\bar \Phi(p) & = & \bigvee_{a\in I(p)}\univer{a}\false \vee \bigvee_{a\in I(p)} \bigvee_{p\stackrel{a}{\rightarrow}p'} \univer{a}\bar\Phi(p')
\end{array}$

}

Let us first show that for every $p\in P$ we have $p \not \in \sat{\bar \chi(p)}$. We proceed by induction on the depth of $p$. 
\begin{compactitem}
\item $I(p) = \emptyset$: we have $\bar \chi(p)=\bigvee_{a\in Act} \exist{a}\true$, and thus $p\notin\sat{\bar \chi(p)}$.

\item $I(p) \neq \emptyset$: obviously, $p\notin\sat{\exist{a}\true}$ holds for every $a\notin I(p)$. Next, for every $p\stackrel{a}{\rightarrow}p'$, by the inductive hypothesis, $p'\notin \sat{\bar \chi(p')}$. Thus, $p\notin \sat{\exist{a}\bar \chi(p')}$, for each $p'$ such that $p\stackrel{a}{\rightarrow}p'$. Therefore $p\notin \sat{\exist{a}\bigwedge_{p\stackrel{a}{\rightarrow}p'}\bar\chi(p')}$.
We are left with showing that $p \notin \sat{\bar\Phi(p)}$.
To this end, observe that $p\notin\sat{\univer{a}\false}$ for all $a\in I(p)$; moreover, for all $p\stackrel{a}{\rightarrow}p'$, since $p'\notin \sat{\bar \chi(p')}$ holds by the inductive hypothesis, in particular we have $p'\notin \sat{\bar \Phi(p')}$. Thus, $p\notin \sat{\univer{a}\bar \Phi(p')}$ for each $p'$ such that $p\stackrel{a}{\rightarrow}p'$ and $p\notin \sat{\bigvee_{p\stackrel{a}{\rightarrow}p'}\univer{a}\bar\Phi(p')}$. Hence, we obtain that $p \not \in \sat{\bar \chi(p)}$.
\end{compactitem}

Now, let us show that $\{ p' \in P \mid p' \not \< p \} \subseteq \sat{\bar \chi(p)}$.
The proof is by induction on the depth of $p$.
\begin{compactitem}
\item $I(p) = \emptyset$: we have that $\{ p' \in P \mid p' \not \< p \}=P\setminus\{p'\in P\mid I(p')=\emptyset\}$. It is easy to see that $P\setminus\{p'\in P\mid I(p')=\emptyset\} = \sat{\bigvee_{a\in Act} \exist\true}= \sat{\bar \chi(p)}$.

\item $I(p) \neq \emptyset$: let $q\not\<_{\twoSim} p$. Thus, by definition, either there exists $a\in Act$ and $q'$, with $q\stackrel{a}{\rightarrow}q'$, such that, for every $p'$, $p\stackrel{a}{\rightarrow}p'$ implies $q'\not\<_{\twoSim} p'$, or $p\not\<_\simulation q$.

First, if there exist $a\in Act$ and $q'\in P$, with $q\stackrel{a}{\rightarrow}q'$, such that $q'\not \<_{\twoSim} p'$ for every $p\stackrel{a}{\rightarrow}p'$, by the inductive hypothesis, $q'\in\sat{\bar \chi(p')}$ for every $p'$ such that $p\stackrel{a}{\rightarrow}p'$. Thus, $q'\in\sat{\bigwedge_{p\stackrel{a}{\rightarrow}p'}\bar \chi(p')}$ and therefore $q\in\sat{\exist{a}\bigwedge_{p\stackrel{a}{\rightarrow}p'}\bar \chi(p')}$.

Otherwise, if $p\not\<_\simulation q$, then there exist $a\in Act$ and $p'$ with $p\stackrel{a}{\rightarrow}p'$, such that $p'\not \<_\simulation q'$ for every $q\stackrel{a}{\rightarrow}q'$. We show by induction on the depth of $q$ that if $p\not\<_\simulation q$ then $q\in\bar\Phi(p)$. If $I(q)=\emptyset$, then $q\in\sat{\univer{a}\false}$ for some $a\in I(p)$ (notice that $I(p) \neq \emptyset$ by assumption) and thus $q\in\bar\Phi(p)$. Now, when $I(q)\neq\emptyset$, by the inductive hypothesis, $q'\in\sat{\bar\Phi(p')}$ for every $q\stackrel{a}{\rightarrow}q'$. Thus $q\in\sat{\univer{a}\bar\Phi(p')}$, which implies that $q\in\sat{\bar\Phi(p)}$.
Since $p\not\<_\simulation q$ by assumption, we have that
$q\in\sat{\bar\Phi(p)}$. Therefore $q\in\sat{\bar\chi(p)}$ holds by
definition of $\bar\chi(p)$, and we are done.

\medskip

\noindent{\bf Case bisimulation (\bisim).}
We simply define $\bar\chi(p)=\neg\chi(p)$ ($\chi(p)$ exists thanks to
Proposition~\ref{prop:characteristic}) and $\sat{\bar\chi(p)}=P \setminus
\sat{\chi(p)}$ holds trivially.
\qedhere
\end{compactitem}
\end{proof}

%%% Local Variables:
%%% mode: latex
%%% TeX-master: "logical_graphical-ext"
%%% End:

\section{Proofs for semantics in van Glabbeek's linear time spectrum}
\label{app:linear}

\begin{lemma}\label{lem:sim_depth}
For all $p, q\in P$ such that $p\<_\trace q$, we have $\depth(p)\leq\depth(q)$.
\end{lemma}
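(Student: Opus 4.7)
The proof plan is short because the statement essentially unfolds the definitions. The key idea is that $\depth(p)$ is witnessed by a specific trace in $\traces(p)$, and every trace of $p$ must be a trace of $q$ by the assumption.

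First, I would invoke Theorem~\ref{theo:linear-semantics} specialised to $X = \trace$ to rewrite the hypothesis $p \<_\trace q$ as the set inclusion $\trace(p) \subseteq \trace(q)$. Recalling that the set $\trace(p)$ from Section~\ref{sec:linear-time-spectrum} coincides with $\traces(p)$ as defined on page~\pageref{page:process-definition} (this is noted in the footnote of the definition of $\trace$), and that $\depth(p)$ is by definition the length of a longest element of $\traces(p)$, I can pick a trace $\tau \in \traces(p)$ with $|\tau| = \depth(p)$; since $P$ contains only finite loop-free processes, such a $\tau$ exists.

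Then, from $\tau \in \trace(p) \subseteq \trace(q) = \traces(q)$, I conclude that $\tau$ is also a trace of $q$, and therefore $\depth(q) \geq |\tau| = \depth(p)$, which is exactly the desired inequality.

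There is no real obstacle here: the argument is a one-line definition chase once the logical characterisation of $\<_\trace$ via $\trace(\cdot) \subseteq \trace(\cdot)$ is in place. The only point requiring a bit of care is making sure that $\trace(p)$ and $\traces(p)$ are identified so that the notion of $\depth$, defined via $\traces$, can be related directly to the inclusion of $\trace$-sets; this is immediate from the definitions.
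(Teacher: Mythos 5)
Your proof is correct and follows essentially the same route as the paper's: $\depth(p)$ is witnessed by a longest trace of $p$, which by $\trace(p)\subseteq\trace(q)$ is also a trace of $q$. One small slip: the inclusion $\trace(p)\subseteq\trace(q)$ is not obtained via Theorem~\ref{theo:linear-semantics} (which relates $\<_\trace$ to the inclusion of the \emph{logical} theories $\Lset_\trace(p)\subseteq\Lset_\trace(q)$) but is simply the definition of $\<_\trace$ given in Definition~\ref{def:linear-semantics}, so the citation should point there instead.
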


\begin{proof}
  The claim immediately follows from the definition of $\depth(p)$ (i.e.,
  $\depth(p)$ is the length of a longest trace in $\trace(p)$) and the fact that
  $p\<_\trace q$ if and only if $\trace(p) \subseteq \trace(q)$.
\end{proof}

\begin{corollary}\label{cor:spectrum_depth}
Let $X \in \lspectrumSet$. For all $p, q\in P$ such that $p\<_X q$, we have $\depth(p)\leq\depth(q)$.
\end{corollary}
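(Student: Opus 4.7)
The plan is to reduce the corollary to Lemma~\ref{lem:sim_depth} by showing that, for every $X \in \lspectrumSet$, the relation $\<_X$ refines $\<_{\trace}$, i.e.\ $p \<_X q$ implies $p \<_{\trace} q$. This is the well-known fact that trace semantics sits at the bottom of van Glabbeek's linear-time spectrum (see \figurename~\ref{fig:VGspectrum}), and once established, Lemma~\ref{lem:sim_depth} immediately yields $\depth(p) \leq \depth(q)$.

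To establish the reduction uniformly, the idea is that every trace $\tau \in \trace(p)$ can be recovered from at least one $X$-observation in $X(p)$, so inclusion $X(p) \subseteq X(q)$ forces a matching observation in $X(q)$ from which $\tau \in \trace(q)$ can be read off. I would organize the case analysis in three groups. First, for the pair-based semantics ($X \in \{\failure, \ready, \impossibleFuture, \possibleFuture, \impossibleSimulation, \possibleSimulation\}$), given $\tau \in \trace(p)$ pick any $p'$ with $p \stackrel{\tau}{\rightarrow} p'$; then the pair $\langle \tau, Z \rangle$ is in $X(p)$ for a suitable witness $Z$ (e.g.\ $Z = \emptyset$ for $\failure$ and $\impossibleFuture$; $Z = I(p')$ for $\ready$; $Z = \trace(p')$ for $\possibleFuture$; $Z = \Sequivclass{p'}$ for $\possibleSimulation$; $Z = \emptyset$ for $\impossibleSimulation$), hence by $X(p) \subseteq X(q)$ there is some $q'$ with $q \stackrel{\tau}{\rightarrow} q'$, giving $\tau \in \trace(q)$.

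Second, for the extended-word semantics ($X \in \{\failureTrace, \readyTrace, \impossibleFutureTrace, \possibleFutureTrace, \impossibleSimulationTrace, \possibleSimulationTrace\}$), every pure-action word $\tau = a_1 \cdots a_n \in \trace(p)$ already belongs to $X(p)$, because the refusal/ready/possible/impossible arrows used in the definition of $X$ all contain the ordinary transition $\xrightarrow{a}$ as the special case where no set element is read. Thus $\tau \in X(p) \subseteq X(q)$ collapses to $q \xrightarrow{\tau} q'$ for some $q'$, yielding $\tau \in \trace(q)$. Third, for the case $X = \completeTrace$ one cannot encode $\tau$ directly, so instead one uses finiteness and loop-freeness to extend $\tau$ to some $\tau\sigma \in \completeTrace(p) \subseteq \completeTrace(q)$, and then appeals to the prefix-closure of $\trace(q)$.

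The main (minor) obstacle will be the \completeTrace\ case, since complete traces are not themselves prefix-closed; the rest of the case analysis is essentially bookkeeping. Once all cases are handled, the corollary follows from Lemma~\ref{lem:sim_depth} applied to $p \<_{\trace} q$.
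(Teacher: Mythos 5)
Your proposal is correct and follows essentially the same route as the paper: the corollary is obtained from Lemma~\ref{lem:sim_depth} via the fact that every preorder in \lspectrumSet refines $\<_\trace$. The paper leaves that refinement implicit (it is part of the structure of the spectrum in \figurename~\ref{fig:VGspectrum}), whereas you verify it explicitly case by case, including the correct workaround for \completeTrace via extension to a complete trace and prefix-closure of $\trace(q)$.
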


\begin{lemma}\label{lem:soundness-of-B-for-linear}
  $\B_{\genericSemanticsX}(p)\subseteq\Lset_{\genericSemanticsX}(p)$ holds for
  every $\genericSemanticsX \in \lspectrumSet$ and $p \in P$.
\end{lemma}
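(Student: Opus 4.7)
My plan is to reduce this lemma to a direct application of Proposition~\ref{prop:p-sat-formula-x-for-x-in-Xp} together with an inspection of Table~\ref{tab:linearB}. By the definition of $\B_{\genericSemanticsX}(p)$, every element is either $\true$ or of the form $\getFormula{\genericSemanticsX}{x}$ for some $x \in \genericSemanticsFiniteX(p)$, so I would split the argument into these two cases uniformly over $\genericSemanticsX \in \lspectrumSet$.

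First, observing that $\true$ is satisfied by every process and therefore $\true \in \Lset_{\genericSemanticsX}(p)$ disposes of the $\true$ clause. For the remaining elements, I would fix $x \in \genericSemanticsFiniteX(p)$ and note, by direct inspection of the second column of Table~\ref{tab:linearB}, that in every single case $\genericSemanticsFiniteX(p)$ is defined as a subset of $\genericSemanticsX(p)$ (sometimes $\genericSemanticsFiniteX(p) = \genericSemanticsX(p)$ as for $\genericSemanticsX \in \{\trace,\completeTrace,\failure,\ready,\possibleFuture,\possibleSimulation\}$, and otherwise $\genericSemanticsFiniteX(p)$ is the intersection of $\genericSemanticsX(p)$ with some length/depth-bounded universe). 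Hence $x \in \genericSemanticsX(p)$, and Proposition~\ref{prop:p-sat-formula-x-for-x-in-Xp} immediately yields $p \in \sat{\getFormula{\genericSemanticsX}{x}}$, that is, $\getFormula{\genericSemanticsX}{x} \in \Lset_{\genericSemanticsX}(p)$.

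The only real work hides inside the appeal to Proposition~\ref{prop:p-sat-formula-x-for-x-in-Xp}. Since that proposition is stated in the body of the paper with its proof omitted, I would either cite it as given or, if a self-contained argument is preferred for the appendix, unfold it here by case analysis on $\genericSemanticsX$: for the simpler linear semantics ($\trace$, $\completeTrace$, $\failure$, $\ready$, $\impossibleFuture$, $\possibleFuture$) the claim is a direct semantic reading of the defining formula, while for the trace-based semantics ($\failureTrace$, $\readyTrace$, $\impossibleFutureTrace$, $\possibleFutureTrace$, $\impossibleSimulationTrace$, $\possibleSimulationTrace$) an additional induction on the length of the word $\sigma \in \genericSemanticsFiniteX(p)$ is needed, following exactly the recursive pattern in Table~\ref{tab:linearB}; the inductive step for a leading action element $a\sigma'$ uses the fact that $p \refusalarrow{a\sigma'} q$ (resp.\ $\readyarrow{}$, $\possiblearrow{}$, etc.) implies $p \xrightarrow{a} p''$ with $p'' \refusalarrow{\sigma'} q$, and the inductive step for a leading set element $Y\sigma'$ uses that the set element constrains $p$ itself and the remainder $\sigma'$ still belongs to $\genericSemanticsFiniteX(p)$.

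The main obstacle is purely bookkeeping in this second, optional unfolding: making sure that, case by case, the recursive definition of $\getFormula{\genericSemanticsX}{\sigma}$ matches the stepwise definition of the arrow relation $\refusalarrow{}$, $\readyarrow{}$, $\impossiblearrow{}$, $\possiblearrow{}$, $\impossiblesimulationarrow{}$, or $\possiblesimulationarrow{}$, and that the set elements (sets of actions, sets of traces, or sets of bisimulation classes) are translated correctly into the corresponding subformula (e.g.\ $\bigwedge_{a\in Y}\univer{a}\false$, $\existuniver{Y}$, $\bigwedge_{\tau\in\Gamma}\univer{\tau}\false$, $\existuniver{\Gamma}$, $\bigwedge_{\eqClasspbisim\in\mathbb P}\neg\chiS(p)$, or $\alpha(\mathbb P)$, respectively). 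Once this matching is verified one semantics at a time, the lemma follows uniformly.
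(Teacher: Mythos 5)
Your proposal is correct, but the two routes you offer have very different status. The one-line reduction via Proposition~\ref{prop:p-sat-formula-x-for-x-in-Xp} is logically sound (indeed $\genericSemanticsFiniteX(p)\subseteq\genericSemanticsX(p)$ in every row of Table~\ref{tab:linearB}, so the proposition instantly gives $p\in\sat{\getFormula{\genericSemanticsX}{x}}$), but it is circular in spirit: the lemma is precisely an instance of the left-to-right direction of that proposition, whose proof the paper omits, and the main text points to this appendix lemma as the place where ``a formal proof'' is supplied. So what you call the ``optional unfolding'' is in fact the mandatory content, and it is exactly what the paper does: a case analysis over the fourteen semantics, with a direct semantic reading of $\getFormula{\genericSemanticsX}{x}$ for the non-trace-based observations and an induction on the word $\sigma$ for $\failureTrace$, $\readyTrace$, and their second- and third-diamond analogues, splitting on whether $\sigma$ starts with an action element or a set element just as you describe. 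Two points you gloss over: first, in the inductive step for $\sigma=a\sigma'$ one needs $\sigma'$ to belong to $\failureTraceFinite(p')$, i.e., to respect the length bound $2\cdot\depth(p')+1$, which requires a small extra argument (the paper handles this in a footnote); second, the $\possibleSimulation$ and $\impossibleSimulation$ cases are not pure bookkeeping --- they need the facts that $\Sequivclass{q}\cap\mathbb P=\emptyset$ forces $q\notin\sat{\chiS(p')}$ for all $\eqClassbisim{p'}\in\mathbb P$, and that $q\in\sat{\simulatedby{q}}$ with $\eqClassbisim{q}\in\mathbb P$, respectively. With those details filled in, your unfolded argument coincides with the paper's proof.
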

\begin{proof}
  Let $\genericSemanticsX \in \lspectrumSet$, $p \in P$, and $\phi \in
  \B_{\genericSemanticsX}(p)$.
  If $\phi = \true$, then the thesis follows trivially.
  If $\phi = \getFormula{\genericSemanticsX}{x}$ for some $x \in
  \genericSemanticsFiniteX(p)$, then we distinguish the following cases.

  \begin{itemize}
  \item If $\genericSemanticsX = \trace$, then $\phi = \exist{\tau}\true$ for
    $\tau \in \traceFinite(p) \subseteq \trace(p)$.
    Since $\tau \in \trace(p)$, it clearly holds $p \in \sat{\phi}$, which means
    $\phi \in \Lset_{\trace}(p)$.

  \item If $\genericSemanticsX = \completeTrace$, then $\phi =
    \exist{\tau}\zeroFormula$ for $\tau \in \completeTraceFinite(p) \subseteq
    \completeTrace(p)$.
    Since $\tau \in \completeTrace(p)$, it clearly holds $p \in \sat{\phi}$,
    which means $\phi \in \Lset_{\completeTrace}(p)$.

  \item If $\genericSemanticsX = \failure$, then $\phi = \exist{\tau}
    \bigwedge_{a \in Y} \univer{a}\false$ for $\langle \tau, Y \rangle \in
    \failureFinite(p) \subseteq \failure(p)$.
    From the definition of $\failure(p)$, it immediately follows $p \in
    \sat{\phi}$, and we are done.

  \item If $\genericSemanticsX = \ready$, then $\phi = \exist{\tau}
    \existuniver{Y}$ for $\langle \tau, Y \rangle \in \readyFinite(p) \subseteq
    \ready(p)$.
    From the definition of $\ready(p)$, it immediately follows $p \in
    \sat{\phi}$, and we are done.

  \item If $\genericSemanticsX = \failureTrace$, then let $\phi =
    \getFormula{\failureTrace}{\sigma}$ for $\sigma \in \failureTraceFinite(p)$; we
    proceed by induction on $\sigma$.
    \begin{itemize}
    \item If $\sigma = \varepsilon$, then $\phi = \true$, and the thesis
      follows trivially;
    \item If $\sigma = a\sigma'$, with $a \in Act$ and $\sigma' \in
      \failureTraceFinite(p')$ for some $p'$ such that $p \stackrel{a}{\rightarrow}
      p'$, then $\phi = \exist{a}\getFormula{\failureTrace}{\sigma'}$ and
      $\getFormula{\failureTrace}{\sigma'} \in \B_{\failureTrace}(p')$\footnote{Notice
        that, while it is clear that $\sigma' \in \failureTrace(p')$, we are not
        guaranteed in general that $|\sigma'| \leq 2 \cdot \depth(p')$ holds (which is
        needed for $\sigma'$ to belong to $\failureTraceFinite(p')$).
        However, it can be shown that we can always reduce to the case where
        this holds (see item~\ref{item:linear_word_bound} at
        page~\pageref{item:linear_word_bound} for a detailed explanation).}.
      By the inductive hypothesis, we have that
      $\getFormula{\failureTrace}{\sigma'} \in \Lset_{\failureTrace}(p')$, that is,
      $p' \in \sat{\getFormula{\failureTrace}{\sigma'}}$.
      It follows that $p \in \sat{\phi}$, which equals to $\phi \in
      \Lset_{\failureTrace}(p)$.
    \item If $\sigma = Y\sigma'$, with $Y \in \powerset{Act}$ and $\sigma' \in
      \failureTraceFinite(p)$, then $\phi = \bigwedge_{a \in Y}{\univer{a}\false}
      \wedge \getFormulafailureTrace{\sigma'}$.
      By $Y\sigma' \in \failureTrace(p')$, we have that $p \in \sat{\bigwedge_{a
          \in Y}{\univer{a}\false}}$.
      By $\sigma' \in \failureTraceFinite(p)$, it holds
      $\getFormula{\failureTrace}{\sigma'} \in \B_{\failureTrace}(p)$ and thus, by the
      inductive hypothesis, $\getFormula{\failureTrace}{\sigma'} \in
      \Lset_{\failureTrace}(p)$, that is, $p \in
      \sat{\getFormula{\failureTrace}{\sigma'}}$.
      It follows that $p \in \sat{\phi}$, which equals to $\phi \in
      \Lset_{\failureTrace}(p)$.

    \end{itemize}

  \item If $\genericSemanticsX \in \{ \readyTrace, \impossibleFutureTrace,
    \possibleFutureTrace, \impossibleSimulationTrace, \possibleSimulationTrace \}$,
    then the proof uses the same inductive argument as in the previous case, and
    thus we omit the details.

  \item If $\genericSemanticsX = \impossibleFuture$, then $\phi = \exist{\tau}
    \bigwedge_{\tau' \in \Gamma} \univer{\tau'}\false$ for $\langle \tau, \Gamma
    \rangle \in \impossibleFutureFinite(p) \subseteq \impossibleFuture(p)$.
    From the definition of $\impossibleFuture(p)$, it immediately follows that
    $p \in \sat{\phi}$, and we are done.

  \item If $\genericSemanticsX = \possibleFuture$, then $\phi = \exist{\tau}
    \existuniver{\Gamma}$ for $\langle \tau, \Gamma
    \rangle \in \possibleFutureFinite(p) \subseteq \possibleFuture(p)$.
    From the definition of $\possibleFuture(p)$, it immediately follows that $p
    \in \sat{\phi}$, and we are done.

  \item If $\genericSemanticsX = \impossibleSimulation$, then $\phi =
    \exist{\tau} \bigwedge_{\eqClassbisim{p'} \in \mathbb P} \neg \chiS(p')$ for
    $\langle \tau, \mathbb P \rangle \in \impossibleSimulationFinite(p) \subseteq
    \impossibleSimulation(p)$.
    By the definition of $\impossibleSimulation(p)$, $\langle \tau, \mathbb P
    \rangle \in \impossibleSimulation(p)$ implies that there is some $q$ such that
    $p \stackrel{\tau}{\rightarrow} q$ and $\Sequivclass{q} \cap \mathbb P =
    \emptyset$.
    From $\Sequivclass{q} \cap \mathbb P = \emptyset$, it follows that $p'
    \not\<_{\simulation} q$ for all $\eqClassbisim{p'} \in \mathbb P$, which means
    that $q \notin \sat{\chiS(p')}$ for all $\eqClassbisim{p'} \in \mathbb P$.
    Therefore, $p \in \sat{\phi}$, and we are done.

  \item If $\genericSemanticsX = \possibleSimulation$, then $\phi = \exist{\tau}
    \alpha(\mathbb P )$, with

    {\centering

      $\alpha (\mathbb P) = \bigwedge_{\eqClassbisim{p'} \in \mathbb P}
      \chiS(p') \wedge \bigvee_{\eqClassbisim{p'} \in \mathbb P} \simulatedby{p'}$

    }

    and

    {\centering

      $\simulatedby{p'} = \bigwedge_{a \in Act}\univer{a} \bigvee_{p'
        \stackrel{a}{\rightarrow} p''}\simulatedby{p''}$.

    }

    By the definition of $\possibleSimulation(p)$, $\langle \tau, \mathbb P
    \rangle \in \possibleSimulation(p)$ implies that there is some $q$ such that $p
    \stackrel{\tau}{\rightarrow} q$ and $\Sequivclass{q} = \mathbb P$.
    It follows that $p' \<_{\simulation} q$ for all $\eqClassbisim{p'} \in
    \mathbb P$, which means that $q \in \sat{\chiS(p')}$ for all $\eqClassbisim{p'}
    \in \mathbb P$.
    Moreover, by $\Sequivclass{q} = \mathbb P$, we have that $\eqClassbisim{q}
    \in \mathbb P$.
    Since $q \in \simulatedby{q}$ clearly holds, we conclude $p \in \sat{\phi}$,
    and we are done.
\qedhere
  \end{itemize}
\end{proof}

\begin{lemma}\label{lem:linear-B-is-characteristic}
  Formula $\bigwedge_{\psi \in \B_{\genericSemanticsX}(p)} \psi$ is
  characteristic for $p$ within $\Lset_{\genericSemanticsX}$, for all
  $\genericSemanticsX \in \lspectrumSet$ and $p \in P$.
\end{lemma}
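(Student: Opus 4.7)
The plan is to unfold the definition of characteristic formula (Definition~\ref{def:characteristic_formula}) and reduce the claim to a statement relating the set $\B_{\genericSemanticsX}(p)$ to the logical preorder $\Lset_{\genericSemanticsX}(p) \subseteq \Lset_{\genericSemanticsX}(q)$. Concretely, using that satisfaction distributes over conjunction, one has $q \in \sat{\bigwedge_{\psi \in \B_{\genericSemanticsX}(p)} \psi}$ if and only if $\B_{\genericSemanticsX}(p) \subseteq \Lset_{\genericSemanticsX}(q)$. Hence it suffices to prove, for every $q \in P$, the equivalence
\[
\B_{\genericSemanticsX}(p) \subseteq \Lset_{\genericSemanticsX}(q) \;\;\Longleftrightarrow\;\; \Lset_{\genericSemanticsX}(p) \subseteq \Lset_{\genericSemanticsX}(q).
\]

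For the backward direction, I would simply invoke Lemma~\ref{lem:soundness-of-B-for-linear}, which gives $\B_{\genericSemanticsX}(p) \subseteq \Lset_{\genericSemanticsX}(p)$, so the inclusion $\Lset_{\genericSemanticsX}(p) \subseteq \Lset_{\genericSemanticsX}(q)$ trivially propagates. The forward direction is where the combinatorial content is, and it is the main (but routine) step: assuming $\B_{\genericSemanticsX}(p) \subseteq \Lset_{\genericSemanticsX}(q)$, every formula $\getFormula{\genericSemanticsX}{x}$ with $x \in \genericSemanticsFiniteX(p)$ is satisfied by $q$; by Proposition~\ref{prop:p-sat-formula-x-for-x-in-Xp} this yields $x \in \genericSemanticsX(q)$ for all $x \in \genericSemanticsFiniteX(p)$, i.e., $\genericSemanticsFiniteX(p) \subseteq \genericSemanticsX(q)$. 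Applying Proposition~\ref{prop:X-and-Xfinite} (the implication \ref{item:Xfinitep-subset-Xp} $\Rightarrow$ \ref{item:Xp-subset-Xp}) then gives $\genericSemanticsX(p) \subseteq \genericSemanticsX(q)$, which by Definition~\ref{def:linear-semantics} means $p \<_{\genericSemanticsX} q$, and finally by Theorem~\ref{theo:linear-semantics} this is exactly $\Lset_{\genericSemanticsX}(p) \subseteq \Lset_{\genericSemanticsX}(q)$.

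The main obstacle is really hidden in the supporting results already proved (or stated): Proposition~\ref{prop:p-sat-formula-x-for-x-in-Xp} does the work of translating observations $x \in \genericSemanticsFiniteX(p)$ into satisfiable formulae (and back), and Proposition~\ref{prop:X-and-Xfinite} guarantees that the finite truncations $\genericSemanticsFiniteX(p)$ are ``rich enough'' to recover the full semantic set $\genericSemanticsX(p)$. Once those are in place, the proof of the lemma is a short chain of equivalences, and nothing further is needed beyond Lemma~\ref{lem:soundness-of-B-for-linear} and Theorem~\ref{theo:linear-semantics}. No induction on the semantics $\genericSemanticsX$ or on the structure of $p$ is needed at this level, since the case analysis has been pushed into those auxiliary results.
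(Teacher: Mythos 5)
Your proof is correct and follows essentially the same route as the paper's: the substantive direction (from $\B_{\genericSemanticsX}(p)\subseteq\Lset_{\genericSemanticsX}(q)$ to $\Lset_{\genericSemanticsX}(p)\subseteq\Lset_{\genericSemanticsX}(q)$) is the same chain through Proposition~\ref{prop:p-sat-formula-x-for-x-in-Xp}, Proposition~\ref{prop:X-and-Xfinite} and Theorem~\ref{theo:linear-semantics}. The only (harmless) divergence is in the easy direction, where you shortcut via Lemma~\ref{lem:soundness-of-B-for-linear} ($\B_{\genericSemanticsX}(p)\subseteq\Lset_{\genericSemanticsX}(p)$), whereas the paper re-runs the $\genericSemanticsFiniteX$ machinery via Propositions~\ref{prop:X-and-Xfinite} and~\ref{prop:p-sat-formula-x-for-x-in-Xp}; both work.
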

\begin{proof}
  According to
  Proposition~\ref{prop:general_properties_of_formulae}(\ref{item:char1}),
  Theorem~\ref{theo:linear-semantics}, and Definition~\ref{def:linear-semantics},
  it suffices to show, for all $\genericSemanticsX \in \lspectrumSet$ and $p \in
  P$, that $\sat{\bigwedge_{\psi \in \B_{\genericSemanticsX}(p)} \psi} = \{ p' \in
  P \mid \genericSemanticsX(p)\subseteq\genericSemanticsX(p') \}$.

  In order to show the inclusion $\{ p' \in P \mid
  \genericSemanticsX(p)\subseteq\genericSemanticsX(p') \} \subseteq
  \sat{\bigwedge_{\psi \in \B_{\genericSemanticsX}(p)} \psi}$, let $p' \in P$ be
  such that $\genericSemanticsX(p)\subseteq\genericSemanticsX(p')$.
  By Proposition~\ref{prop:X-and-Xfinite} (\ref{item:Xp-subset-Xp} $\Rightarrow$
  \ref{item:Xfinitep-subset-Xfinitep}), we have
  $\genericSemanticsFiniteX(p)\subseteq\genericSemanticsFiniteX(p')$.
  We show that $p' \in \sat{\psi}$ for all $\psi \in
  \B_{\genericSemanticsX}(p)$.
  To this end, consider a generic element $\psi$ of
  $\B_{\genericSemanticsX}(p)$, which means that $\psi =
  \getFormula{\genericSemanticsX}{x}$ for some $x \in \genericSemanticsFiniteX(p)
  \subseteq \genericSemanticsX(p) \subseteq \genericSemanticsX(p')$.
  By Proposition~\ref{prop:p-sat-formula-x-for-x-in-Xp} (left-to-right
  direction), we have $p' \in \sat{\getFormula{\genericSemanticsX}{x}} =
  \sat{\psi}$.
  Since $\psi$ is a generic element of $\B_{\genericSemanticsX}(p)$, the claim
  follows.

  Let us turn now to the converse inclusion.
  Let $p' \in P$ be such that $p' \in \sat{\psi}$ for all $\psi \in
  \B_{\genericSemanticsX}(p)$, which means $p' \in
  \sat{\getFormula{\genericSemanticsX}{x}}$ for all $x \in
  \genericSemanticsFiniteX(p)$.
  By Proposition~\ref{prop:p-sat-formula-x-for-x-in-Xp} (right-to-left
  direction), it holds that $x \in \genericSemanticsX(p')$ for all $x \in
  \genericSemanticsFiniteX(p)$, that is, $\genericSemanticsFiniteX(p) \subseteq
  \genericSemanticsX(p')$.
  By Proposition~\ref{prop:X-and-Xfinite} (\ref{item:Xfinitep-subset-Xp}
  $\Rightarrow$ \ref{item:Xp-subset-Xp}), we have $\genericSemanticsX(p) \subseteq
  \genericSemanticsX(p')$, and we are done.
%
%\qed
\end{proof}

%%% Local Variables:
%%% mode: latex
%%% TeX-master: "logical_graphical-ext"
%%% End:

\end{document}